\documentclass{jocg}

\usepackage{amsmath,amsthm,amssymb}
\usepackage{verbatim}
\usepackage{color}
\usepackage{url}
\usepackage{enumitem}
\usepackage{subcaption}
\usepackage[pdftex]{graphicx}
\usepackage{verbatim}
\usepackage{ifthen}
\usepackage{comment}

\theoremstyle{plain}
\newtheorem{theorem}{Theorem}[section]
\newtheorem{lemma}[theorem]{Lemma}

\newtheorem{construction}[theorem]{Construction}

\newtheorem*{question}{Open Question}
\newtheorem*{theorem*}{Theorem}

\theoremstyle{definition}
\newtheorem{definition}[theorem]{Definition}
\newtheorem{convention}[theorem]{Convention}
\newtheorem*{note}{Note}

\DeclareTextFontCommand{\term}{\color{Plum}\em}

\usepackage{ifthen}

\newcommand\headingboldmath[2]{#1[{#2}]{\boldmath #2}}

\renewcommand\epsilon{\varepsilon}
\newcommand\eps{\epsilon}

\renewcommand\phi{\varphi}

\usepackage[e]{esvect}
\newcommand\widevec[1]{\vv{#1}}

\newcommand\inv{^{-1}}

\newcommand\bR{\mathbb{R}}

\newcommand\bQ{\mathbb{Q}}
\newcommand\bZ{\mathbb{Z}}
\newcommand\bN{\mathbb{N}}

\newcommand\cL{\mathcal{L}}
\newcommand\cT{\mathcal{T}}
\newcommand\cG{\mathcal{G}}
\newcommand\cM{\mathcal{M}}
\newcommand\cA{\mathcal{A}}

\newcommand\cP{\mathcal{P}}
\newcommand\cE{\mathcal{E}}

\newenvironment{descriptionflush}{\begin{description}[leftmargin=0cm,listparindent=\parindent,style=unboxed,itemsep=6pt]}{\end{description}} %

\DeclareMathOperator\Conf{Conf}
\DeclareMathOperator\NXConf{NXConf}
\DeclareMathOperator\Offset{Offset}

\DeclareMathOperator\Con{Con}
\DeclareMathOperator\RigidCon{RigidCon}

\DeclareMathOperator\SliceCon{SliceCon}
\DeclareMathOperator\AngleCon{AngleCon}
\DeclareMathOperator\NXCon{NX}
\DeclareMathOperator\Rot{Rot}
\DeclareMathOperator\poly{poly}
\DeclareMathOperator\Rect{Rect}
\DeclareMathOperator\Coeffs{Coeffs}
\DeclareMathOperator\area{area}

\newcommand\probname[1]{\ensuremath{\mathsf{#1}}}
\newcommand\HtwoN{\probname{H_2N}}
\newcommand\CommonZero{\probname{CommonZero}}
\newcommand\Stretchability{\probname{Stretchability}}
\newcommand\PointConfiguration{\probname{PointConfiguration}}
\newcommand\ETR{\probname{ETR}}
\newcommand\SAT{\probname{SAT}}

\newcommand\subscriptfont[1]{\text{\rm #1}}
\newcommand\Ucopy{U_{\subscriptfont{copy}}}
\newcommand\Ucross{U_{\subscriptfont{cross}}}
\newcommand\Uangular{U_{\subscriptfont{angular}}}
\newcommand\Uvecrot{U_{\subscriptfont{rot}}}
\newcommand\Uangleavg{U_{\angle\subscriptfont{avg}}}
\newcommand\Uvecavg{U_{\widevec{\subscriptfont{avg}}}}
\newcommand\Uanglesum{U_{\angle\subscriptfont{sum}}}
\newcommand\Uvecsum{U_{\widevec{\subscriptfont{sum}}}}

\newcommand\Ustart{U_{\subscriptfont{start}}}
\newcommand\Uveccreate{U_{\subscriptfont{create}}}

\newcommand\cLhook{\cL_{\subscriptfont{hook}}}

\newcommand\cLcopy{\cL_{\subscriptfont{copy}}}
\newcommand\cLcross{\cL_{\subscriptfont{cross}}}
\newcommand\cLangleavg{\cL_{\angle\subscriptfont{avg}}}
\newcommand\cLanglesum{\cL_{\angle\subscriptfont{sum}}}
\newcommand\cLangular{\cL_{\subscriptfont{angular}}}

\newcommand\cLstart{\cL_{\subscriptfont{start}}}
\newcommand\cLvecavg{\cL_{\widevec{\subscriptfont{avg}}}}
\newcommand\cLvecsum{\cL_{\widevec{\subscriptfont{sum}}}}
\newcommand\cLvecrot{\cL_{\subscriptfont{rot}}}
\newcommand\cLveccreate{\cL_{\subscriptfont{create}}}

\newcommand\cLend{\cL_{\subscriptfont{end}}}
\newcommand\cLendcrossing{\cL_{\subscriptfont{X-end}}}

\newcommand\cLslice{\cL_{\angle\subscriptfont{slice}}}
\newcommand\cLanglerestrictor{\cL_{\angle\subscriptfont{restrict}}}

\newcommand\cLparallel{\cL_{\subscriptfont{parallel}}}

\newcommand\Gcell{G_{\subscriptfont{cell}}}

\newcommand\ifnotempty[2]{\ifthenelse{\equal{#1}{}}{}{#2}}

\newcommand\mysubfigure[5][]{%
  \begin{subfigure}[{#1}]{#2}%
    \centering%
    #3%
    \ifnotempty{#4}{\caption{#4}}%
    \ifnotempty{#5}{\label{#5}}%
  \end{subfigure}%
}

\title{%
  \MakeUppercase{Who Needs Crossings?: Noncrossing Linkages are Universal, and Deciding (Global) Rigidity is Hard}%
    \thanks{A preliminary version of this paper appeared at the 32nd International Symposium on Computational Geometry, June 2016.
    Originally invited to the corresponding special issue,
    but the preparation of the final paper was delayed.}
}

\author{%
  Zachary~Abel,%
  \thanks{%
    \affil{MIT Department of Electrical Engineering and Computer Science, 50 Vassar St., Cambridge, MA 02139, USA, \email{zabel@mit.edu}. Partially supported by an NSF Graduate Research Fellowship.}
  }\,
  Erik~D.~Demaine,%
  \thanks{
    \affil{MIT Computer Science and Artificial Intelligence Laboratory,
      32 Vassar St., Cambridge, MA 02139, USA,}
    \email{\{edemaine,mdemaine,seisenst,jaysonl,neboat\}@mit.edu}%
  }\,
  Martin~L.~Demaine,\footnotemark[3]\,
  Sarah~Eisenstat,\footnotemark[3]\,
  Jayson~Lynch,\footnotemark[3]\,
  and Tao~B.~Schardl.\footnotemark[3]
}

\begin{document}
\maketitle

\begin{abstract}
  We exactly settle the complexity of graph realization, graph rigidity, and graph global rigidity as applied to three types of graphs: ``globally noncrossing'' graphs, which avoid crossings in all of their configurations; matchstick graphs, with unit-length edges and where only noncrossing configurations are considered; and unrestricted graphs (crossings allowed) with unit edge lengths (or in the global rigidity case, edge lengths in $\{1,2\}$). We show that all nine of these questions are complete for the class $\exists\bR$, defined by the Existential Theory of the Reals, or its complement $\forall\bR$; in particular, each problem is (co)NP-hard.

  One of these nine results---that realization of unit-distance graphs is $\exists\bR$-complete---was shown previously by Schaefer (2013), but the other eight are new. We strengthen several prior results. Matchstick graph realization was known to be NP-hard (Eades \& Wormald 1990, or Cabello et al.\ 2007), but its membership in NP remained open; we show it is complete for the (possibly) larger class $\exists\bR$. Global rigidity of graphs with edge lengths in $\{1,2\}$ was known to be coNP-hard (Saxe 1979); we show it is $\forall\bR$-complete.

  The majority of the paper is devoted to proving an analog of Kempe's Universality Theorem---informally, ``there is a linkage to sign your name''---for globally noncrossing linkages. In particular, we show that any polynomial curve $\phi(x,y)=0$ can be traced by a noncrossing linkage, settling an open problem from 2004. More generally, we show that the regions in the plane that may be traced by a noncrossing linkage are precisely the compact semialgebraic regions (plus the trivial case of the entire plane). Thus, no drawing power is lost by restricting to noncrossing linkages. We prove analogous results for matchstick linkages and unit-distance linkages as well.

\end{abstract}

\section{Introduction}
\label{sec:intro}

\begin{table}
  \centering
  \tabcolsep=0.73\tabcolsep
  \def\rowantiskip{\vspace{-0.5ex}}
  \let\oldexists=\exists
  \let\oldforall=\forall
  \def\pmbexists{\pmb{\oldexists}}
  \def\pmbforall{\pmb{\oldforall}}
  \def\pmbbR{\pmb{\mathbb{R}}}
  \def\NEW#1{\textbf{\let\exists=\pmbexists\let\forall=\pmbforall\let\bR=\pmbbR #1}}
  \begin{tabular}{l|c|c|c|c}
    \bf Graph type & \bf Realization & \bf Rigidity & \bf Global rigidity & \bf Universality
    \\ \hline
    General              & $\exists\bR$-complete
                         & $\forall\bR$-complete
                         & \NEW{$\forall\bR$-complete}
                         & Compact
    \rowantiskip\\
                         & \cite{Schaefer-2013}
                         & \cite{Schaefer-2013}
                         & (CoNP-hard \cite{Saxe-1979})
                         & semialg. \cite{King-1999}
    \\
    \hline
    Globally noncrossing & \NEW{$\exists\bR$-complete}
                         & \NEW{$\forall\bR$-complete}
                         & \NEW{$\forall\bR$-complete}
                         & \NEW{Compact}
    \rowantiskip\\
    \em (no configs.\ cross) & 
                         &
                         &
                         & \NEW{semialg.}
    \\
    \hline
    Matchstick graph     & \NEW{$\exists\bR$-complete}
                         & \NEW{$\forall\bR$-complete}
                         & \NEW{$\forall\bR$-complete}
                         & \NEW{Bounded}
    \rowantiskip\\
    \em (unit + noncrossing) & (NP-hard \cite{Eades-Wormald-1990})
                         &
                         &
                         & \NEW{semialg.}
    \\
    \hline
    Unit edge lengths    & $\exists\bR$-complete
                         & \NEW{$\forall\bR$-complete}
                         & Open (do they
                         & \NEW{Compact}
    \rowantiskip\\
    \em (allowing crossings) & \cite{Schaefer-2013}
                         &
                         & even exist?)
                         & \NEW{semialg.}
    \\
    \hline
    Edge lengths in $\{1,2\}$ & $\exists\bR$-complete
                         & \NEW{$\forall\bR$-complete}
                         & \NEW{$\forall\bR$-complete}
                         & \NEW{Compact}
    \rowantiskip\\
    \em (allowing crossings) & \cite{Schaefer-2013}
                         & 
                         & (CoNP-hard \cite{Saxe-1979})
                         & \NEW{semialg.}
    \\
    \hline
  \end{tabular}
  \caption[Summary of our results (bold) compared with earlier results (cited).]{Summary of our results (bold) compared with earlier results (cited).
    The rows give the special types of graphs considered.
    The middle three columns give complexity results for the three natural
    decision problems about graph embedding; all completeness results are
    strong.
    The rightmost column gives the exact characterization of drawable sets.}
  \label{tab:results}
\end{table}

The rise of the steam engine in the mid-1700s led to an active study of
\emph{mechanical linkages}, typically made from rigid bars connected together
at hinges.  For example, steam engines need to convert the linear motion of
a piston into the circular motion of a wheel, a problem solved approximately
by Watt's parallel motion linkage (1784) and exactly by Peaucellier's inversor (1864)
\cite[Section~3.1]{Demaine-O'Rourke-2007}.
These and other linkages are featured in an 1877 book called
\emph{How to Draw a Straight Line} \cite{Kempe-1877} by Alfred Bray Kempe---a
barrister and amateur mathematician in London, perhaps most famous for his
false ``proof'' of the Four-Color Theorem \cite{Kempe-1879} that nonetheless
introduced key ideas used in the correct proofs of today
\cite{Appel-Haken-1977,Robertson-Sanders-Seymour-1997}.

\paragraph{Kempe's Universality Theorem.}
Kempe investigated far beyond drawing a straight line by turning a circular crank.
In 1876, he claimed a universality result, now known as Kempe's Universality
Theorem: every polynomial curve $\phi(x,y) = 0$ can be traced by a vertex of
a 2D linkage \cite{Kempe-1876}.  Unfortunately, his ``proof'' was again flawed:
the linkage he constructs indeed traces the intended curve, but also traces
finitely many unintended additional curves.  Fortunately, his idea was spot on.

Many researchers have since solidified and/or strengthened Kempe's Universality
Theorem \cite{Kapovich-Millson-2002,Jordan-Steiner-1999,King-1999,Abbott-2008,Schaefer-2013}.
In particular, small modifications to Kempe's gadgets lead
to a working proof \cite[Section~3.2]{Abbott-2008,Demaine-O'Rourke-2007}.
Furthermore, the regions of the plane drawable by a 2D linkage
(other than the entire plane $\bR^2$) are exactly
compact semialgebraic regions%
\footnote{A compact planar region is \term{semialgebraic} if it can be
  obtained as the intersection and/or union of finitely many basic sets defined by
  polynomial inequalities $p(x, y) \geq 0$.}
\cite{King-1999,Abbott-2008}.
By carefully constructing these linkages to have rational coordinates,
Abbott, Barton, and Demaine~\cite{Abbott-2008} showed how to reduce the problem of testing isolatedness of a point in an algebraic set%
\footnote{A set $S \subseteq \mathbb R^d$ is \term{algebraic} if it can be
  written as the set of solutions to a polynomial equation
  $p(x_1, \dots, x_d) = 0$.  (Algebraic sets are closed under intersection
  and finite union.)}
to testing
rigidity of a linkage.  Isolatedness was proved coNP-hard \cite{Koiran-2000}
and then $\forall \bR$-complete%
\footnote{The class $\forall \bR = \text{co-}\exists \bR$ consists
  of decision problems whose complement (inverting yes/no instances) belong to
  $\exists \bR$.
  The class $\exists \bR$ refers to the problems (Karp) reducible to
  the \term{existential theory of the reals}
  ($\exists x_1 : \cdots \exists x_n : \pi(x_1,\dots,x_n)$ for an arithmetic predicate $\pi : \bR^n \to \{\text{true},\text{false}\}$), which is somewhere between NP
  and PSPACE (by \cite{Canny-1988-pspace}).  The classic example of an
  $\exists \bR$-complete problem is pseudoline stretchability
  \cite{Mnev-1988}. The classes $\exists\bR$ and $\forall\bR$ are discussed more thoroughly in Section~\ref{sec:exists-r}.}
\cite{Schaefer-2013}; thus linkage rigidity is $\forall \bR$-complete.

\paragraph{Our results: no crossings.}
See Table~\ref{tab:results} for a summary of our results in comparison to
past results.  Notably,
all known linkage constructions for Kempe's Universality Theorem (and its
various strengthenings) critically need to allow the bars to cross each other.
In practice, certain crossings can be made physically possible,
by placing bars in multiple parallel planes and constructing vertices as
vertical pins.  Without extreme care, however, bars can still be blocked by
other pins, and it seems difficult to guarantee crossing avoidance
for complex linkages.
Beyond these practical issues, it is natural to wonder whether allowing
bars to cross is necessary to achieve linkage universality.
Don Shimamoto first posed this problem in April 2004, and it was highlighted as
a key open problem in the first chapter of \emph{Geometric Folding Algorithms}
\cite{Demaine-O'Rourke-2007}.

We solve this open problem by strengthening most of the results mentioned above
to work for \term{globally noncrossing graphs}, that is, graphs plus
edge-length constraints that alone force all configurations to be
(strictly) noncrossing.%
\footnote{Thus, the noncrossing constraint can be thought of as being
  ``required'' or not of a configuration; in either case, the configurations
  (even those reachable by discontinuous motions) will be noncrossing.}
In particular, we prove the following universality and complexity results.
\begin{enumerate}[listparindent=\parindent]
\item The planar regions drawable by globally noncrossing linkages
      are exactly ($\bR^2$ and) the compact semialgebraic regions
      (Theorem~\ref{thm:noncrossing-universality}),
      settling Shimamoto's 2004 open problem.
\item Testing whether a globally noncrossing graph with constant-sized integer edge lengths has any valid configurations
      is $\exists \bR$-complete (Theorem~\ref{thm:noncrossing-realizability}).
\item Testing rigidity is strongly\footnote{A problem is said to be \emph{strongly} hard if it remains hard when all integers in the input are specified in unary rather than binary. In other words, all integers are polynomially bounded in the size of the input rather than exponentially bounded. All of our hardness results show strong hardness when applicable.} $\forall \bR$-complete even for
      globally noncrossing graphs with constant-sized integer edge lengths that are drawn with integer vertex coordinates (Theorem~\ref{thm:noncrossing-rigidity}).
\item Testing global rigidity (uniqueness of a given embedding) is
      strongly $\forall \bR$-complete even for globally noncrossing graphs with constant-sized integer edge lengths that are
      drawn with integer vertex coordinates
      (Theorem~\ref{thm:noncrossing-global-rigidity}).
\end{enumerate}
Our techniques are quite general and give us results for two other restricted
forms of graphs as well.  First, \term{matchstick graphs} are graphs with
\emph{unit} edge-length constraints, and where only (strictly) noncrossing
configurations are considered valid.  We prove in Section~\ref{sec:matchstick} the following universality and complexity results:
\begin{enumerate}[resume,listparindent=\parindent]
\item The planar regions drawable by matchstick graphs
      are exactly ($\bR^2$ and) the bounded semialgebraic regions (Theorem~\ref{thm:matchstick-universality-full}).
  Notably, unlike all other models considered, matchstick graphs enable
  the representation of open boundaries in addition to closed (compact)
  boundaries.
\item Deciding whether an abstract graph can be draw as a matchstick graph is $\exists\bR$-complete (Theorem~\ref{thm:matchstick-realizability}).
  This result strengthens a 25-year-old NP-hardness result
  \cite{Eades-Wormald-1990,Cabello-Demaine-Rote-2007},
  and settles an open question of \cite{Schaefer-2013}.
\item Testing rigidity or global rigidity of a matchstick graph is
  strongly $\forall\bR$-complete (Theorems~\ref{thm:matchstick-rigidity} and~\ref{thm:matchstick-global-rigidity}).
\end{enumerate}
Second, we consider restrictions on edge lengths to be either all equal (unit)
or all in $\{1,2\}$, but at the price of allowing crossing configurations.
Recognizing unit-distance graphs is already known to be $\exists\bR$-complete
\cite{Schaefer-2013}.  We prove in Section~\ref{sec:unit} the
following additional universality and complexity results:
\begin{enumerate}[resume,listparindent=\parindent]
\item The planar regions drawable by unit-edge-length linkages
      are exactly the compact semialgebraic regions (and $\bR^2$) (Theorem~\ref{thm:unit-universality}),
      proving a conjecture of Schaefer \cite{Schaefer-2013}.
\item Testing rigidity of unit-edge-length graphs is strongly
      $\forall\bR$-complete (Theorem~\ref{thm:unit-rigidity}),
      proving a conjecture of Schaefer \cite{Schaefer-2013}.
\item Testing global rigidity of graphs with edge lengths in $\{1,2\}$
      is strongly $\forall\bR$-complete (Theorem~\ref{thm:unit-global-rigidity}).
  This result strengthens a 35-year-old strong-coNP-hardness result
  for the same scenario~\cite{Saxe-1979}.
  While it would be nice to
  strengthen this result to
  unit edge lengths, we have been unable to find even a single globally
  rigid equilateral linkage larger than a triangle.
\end{enumerate}

We introduce several techniques to make noncrossing linkages manageable
in this setting.  In Section~\ref{sec:defining-extended-linkages} we define
\emph{extended linkages} to allow additional joint types, in particular,
requiring angles between pairs of bars to stay within specified intervals.
Section~\ref{sec:detailed-overview} then shows how to draw a polynomial curve and
obtain Kempe's Universality Theorem with these powerful linkages while
avoiding crossings, by following the spirit of Kempe's original construction
but with specially designed modular gadgets to guarantee no crossings
between (or within) the gadgets. We simulate extended linkage with linkages that have chosen subgraphs marked as rigid. In turn, in  Sections~\ref{sec:noncrossing}--\ref{sec:matchstick}, we simulate these ``partially rigidified'' linkages with the three desired linkage types: globally noncrossing, unit-distance or $\{1,2\}$-distance, and matchstick.

\section{Description of the Main Construction}
\label{sec:main-theorem}

The heart of this paper is a single, somewhat intricate linkage construction. In this section, we describe and discuss the properties of this construction in detail, after building up the necessary terminology.

\subsection{Definitions: Linkages and Graphs}
\label{sec:linkage-definitions}

Unless otherwise specified, all graphs $G = (V(G), E(G), \ell)$ in this text are connected, edge-weighted with positive edge lengths $\ell(e)>0$, and contain no self-loops. We first recall and establish notation for our primary objects of study, linkages.

\begin{definition}[Linkages]
  An \term{abstract linkage}, or simply a \term{linkage}, is a triple $\cL = (G,W,P)$ consisting of a weighted graph $G = (V(G),E(G),\ell)$ together with a choice of \term{pin locations} $P(w)\in\bR^2$ for vertices $w$ in a chosen subset $W\subset V(G)$ of \term{pinned vertices}.
\end{definition}

\begin{definition}[Linkage Configurations]
  A \term{configuration} of a linkage $\cL = (G,W,P)$ is an assignment of vertex locations $C:V(G)\to\bR^2$ respecting the edge-length and pin assignments: $|C(u)-C(v)| = \ell(u v)$ for each edge $u v$, and $C(w) = P(w)$ for each pinned vertex $w\in W$.
  Two configurations are \term{congruent} if they differ only by a Euclidean transformation, i.e., a translation, rotation, and possibly reflection.

  The \term{configuration space} $\Conf(\cL)\subseteq (\bR^2)^{|V(G)|}$ is the set of all configurations; it is a closed, algebraic subset of $(\bR^2)^{|V(G)|}$. (Be warned, however, that some modified types of linkages to follow, notably \emph{NX-constrained} linkages or \emph{extended linkages}, will have only \emph{semi}algebraic configuration spaces.)

  An abstract linkage is called \term{configurable} or \term{realizable} if its configuration space is nonempty, and a \term{configured linkage} is a linkage together with a chosen \term{initial configuration} $C_0\in\Conf(\cL)$. Configured linkage $(\cL,C_0)$ is \term{rigid} if there is no nontrivial continuous deformation of $\cL$, i.e., $C_0$ is isolated in $\Conf(\cL)$. Similarly, $(\cL,C_0)$ is \term{globally rigid} if $C_0$ is the \emph{only} configuration in $\Conf(\cL)$.
\end{definition}

\begin{convention}
  As a convenient abuse of notation, we often write $v$ instead of $C(v)$ when configuration $C$ is understood from context.
\end{convention}

We consider \term{abstract graphs} and \term{configured graphs} as abstract or configured linkages without pins ($P = \emptyset$), with one key difference: rigidity and global rigidity for graphs are more liberally defined to allow Euclidean motions.

\begin{definition}[Graph Rigidity and Global Rigidity]
  A configured graph $(G,C_0)$ is \term{rigid} if the only continuous deformations of $(G,C_0)$ are rigid Euclidean motions, i.e., if there is a neighborhood of $C_0$ in $\Conf(G)$ consisting only of configurations congruent to $C_0$. Likewise, $(G,C_0)$ is \term{globally rigid} if all configurations are congruent to $C_0$.
\end{definition}

As defined above, configurations may have coincident vertices,
vertices in the middle of edges, and/or properly crossing edges.
The following notion forbids these undesirable features:

\begin{definition}[Noncrossing Configurations] \label{def:noncrossing}
  A configuration $C$ of a linkage $\cL$ is \term{noncrossing} if distinct edges intersect only at common endpoints: for any pair of incident edges $u v \ne u v' \in E(G)$ sharing a vertex $u$, the segments $C(u)C(v)$ and $C(u)C(v')$ intersect only at $C(u)$ in $\bR^2$, and for any pair $u v, u' v'$ of disjoint edges in $G$, segments $C(u)C(v)$ and $C(u')C(v')$ are disjoint in $\bR^2$.

  Let $\NXConf(\cL)\subseteq\Conf(\cL)$ denote the subset of noncrossing configurations. Then we say $\cL$ is \term{globally noncrossing} if all of its configurations are noncrossing, i.e., $\NXConf(\cL) = \Conf(\cL)$. If $C$ is a noncrossing configuration, the \term{minimum feature size} of $C$ is the shortest distance from a segment $C(u)C(v)$ to a point $C(w)$ for $u v \in E(G)$ and $w\in V(G)\setminus\{u,v\}$. If linkage $\cL$ is globally noncrossing, its \term{global minimum feature size} is defined as the infimum of the minimum feature size of its configurations. If $\Conf(\cL)$ is compact, this infimum is achieved and is strictly greater than~$0$. (As a special case, if $\cL$ is not realizable, then it is vacuously globally noncrossing and its global minimum feature size is $+\infty$.)
\end{definition}

\begin{definition}[Combinatorial Embeddings, Corners]
  \label{def:combinatorial-embedding}
  A \term{combinatorial embedding} $\sigma$ of a graph $G$ consists of a cyclic ordering $\sigma_v$ of $v$'s incident edges for each vertex $v\in V(G)$. A noncrossing configuration $C$ of $G$ \term{agrees with $\sigma$} if the counterclockwise cyclic ordering of segments $C(v)C(w)$ around $C(v)$ matches $\sigma_v$ for each vertex $v$.

  Whenever edge $v u$ is followed by $v w$ in $\sigma_v$, the two-edge path $\Lambda = u v w$ is a \term{corner} of $\sigma$ at $v$; in the special case where $v$ is incident to exactly one edge $v u$, there is a single corner, $u v u$, at $v$.
  If $C$ is a noncrossing configuration of $G$ agreeing with $\sigma$, the \term{angles} of $C$ at $v$ are the angles $\angle C(\Lambda) := \angle C(u)C(v)C(w)$ for each corner $\Lambda = u v w$. When there is only one corner $u v u$ at $v$, we define $\angle C(u)C(v)C(u)$ to have measure $360^\circ$. With this convention, the angles of $C$ at $v$ add to $360^\circ$ no matter the degree of $v$. (Recall that all graphs are connected, so $\deg(v) \ge 1$.)
\end{definition}

\subsection{Constrained Linkages}
\label{sec:constrained-linkages}

We will make use of a number of special-purpose ``constraints'' or ``annotations'' that may be attached to linkages to artificially modify their behavior, such as ``rigid constraints'' that  ``rigidify'' a subgraph into a chosen configuration while allowing the rest of the linkage to move freely. These annotations do not affect the linkage itself; instead, they merely indicate which configurations of the linkage they consider acceptable.
The language of constraints allows us to separate a desired \emph{effect} from the \emph{implementation} or \emph{construction} that enforces that effect. For example, Sections~\ref{sec:noncrossing} through~\ref{sec:matchstick} develop three different techniques to force subgraphs to remain rigid---three different ``implementations'' of the rigidifying constraint---allowing a majority of the work, namely the Main Theorem (Theorem~\ref{thm:main-theorem}), to be reused in all three contexts. We now define constraints in general, and the rigid constraint in particular, more formally.

\begin{definition}
  \label{def:constraint}
  A \term{constraint}~$\Con$ on an abstract linkage~$\cL$ is specified by a subset of the configuration space, $\Con\subseteq\Conf(\cL)$, and we say the configurations~$C\in \Con$ \term{satisfy} constraint~$\Con$. A \term{constrained linkage}~$\cL$ is an abstract linkage~$\cL_0$ together with a finite set~$K$ of constraints on~$\cL_0$, and the \term{constrained configuration space} is defined as $\Conf(\cL) := \Conf(\cL_0)\cap\bigcap_{\Con\in K} \Con$. In other words, constrained linkage~$\cL$ simply ignores any configurations of~$\cL_0$ that don't satisfy all of its constraints.

  All terms discussed in Section~\ref{sec:linkage-definitions}---realizability, rigidity, global rigidity, etc.---apply equally well to constrained linkages via their \emph{constrained} configuration space.
\end{definition}

\begin{definition}
  A \term{rigid constraint} $\RigidCon_\cL(H,C_H)$ on a linkage $\cL = (G,W,P)$ is specified by a connected subgraph $H\subseteq G$ (whose edge lengths match those of $G$) together with a configuration $C_H$ of $H$. A configuration $C\in\Conf(\cL)$ satisfies the rigid constraint when $C$ induces a configuration $C|_H$ on $H$ that is congruent to the given $C_H$, i.e., differs only by a (possibly orientation-reversing) Euclidean transformation. When a constrained linkage~$\cM$ possesses constraint $\RigidCon_\cM(H,C_H)$, we say $(H,C_H)$ is a \term{rigidified subgraph} of $\cM$. A constrained linkage all of whose constraints are rigid constraints is called a \term{partially rigidified linkage}.
\end{definition}

Other constraint types---the noncrossing constraint, angle constraint, and sliceform constraint---will be introduced as they are needed, in Sections~\ref{sec:matchstick} and~\ref{sec:defining-extended-linkages}.

\subsection{Drawing with (Constrained) Linkages}
\label{sec:drawing-with-linkages}

\begin{definition}[Linkage Trace and Drawing]
  \label{def:trace}
  For a (possibly constrained) linkage $\cL$ and a tuple $X = (v_1,\ldots,v_k)$ of distinct vertices of $\cL$, the \term{trace} of $X$ is defined as the image $\pi_X(\Conf(\cL))\subset(\bR^2)^k$, where $\pi_X$ is the projection map sending a configuration $C\in\Conf(\cL)$ to $\pi_X(C) := (C(v_1),\ldots,C(v_k))$. This trace is semialgebraic, and if $\Conf(\cL)$ is compact, the trace is also compact. A linkage $(\cL,X)$ is said to \term{draw} its trace, and a set $R\subseteq(\bR^2)^k$ is \term{drawable (by a linkage)} if it can be expressed as the trace of some $k$ vertices of a linkage.
\end{definition}

The term ``draw'' is somewhat misleading, because a linkage might not be able to ``draw'' its entire trace through a single continuous motion. For example, some linkages have a disconnected trace, such as the linkage $\cT$ formed by an equilateral triangle $abc$ pinned at $a$ and $c$, so that $X=\{b\}$ draws two separate points corresponding to the two possible triangle orientations. Even some linkages with connected traces cannot draw their trace with a continuous motion: for example, if we modify $\cT$ by adding a longer edge $b d$, then $d$ draws two intersecting circles, even though it can only draw one circle at a time in a continuous motion (depending on the orientation of triangle $abc$). %

Because of the complicated relationship between a linkage and its trace, some types of drawings have been singled out as particularly nice:

\begin{definition}[Liftable Drawing]
  Say $(\cL,X)$ draws its trace \term{liftably} if the map $\pi_X$ has the \term{path lifting property}: for any configuration $C\in\Conf(\cL)$ and path $\gamma:[0,1]\to\pi_X(\Conf(\cL))$ in the trace starting at $\gamma(0)=\pi_X(C)$, there is a path $\overline \gamma:[0,1]\to\Conf(\cL)$ starting at $\overline \gamma(0) = C$ and lifting $\gamma$, meaning $\gamma = \pi_X\circ\overline\gamma$.
\end{definition}

For example, node $d$ of linkage $\cT$ above does not draw its trace \emph{liftably}, because $d$ cannot move from one circle to the other through a continuous motion of $\cT$. By contrast, node $c$ \emph{does} draw liftably (vacuously, as there are no nontrivial paths in $c$'s trace), even though $c$'s trace is disconnected.

Liftable drawing was introduced in~\cite{Abbott-2008} (under the name ``continuous'' drawing) for its usefulness in studying linkage rigidity. Indeed, if $(\cL,X)$ draws liftably and a point $p\in\pi_X(\Conf(\cL))$ is \emph{not} isolated in the trace, then any configuration $C$ with $\pi_X(C) = p$ is \emph{not} rigid, because a nontrivial continuous path beginning at $p$ can be lifted to a nontrivial motion beginning at $C$.

Though we have changed the name from ``continuous'' to ``liftable'' for greater clarity between \emph{continuous} motions and \emph{liftable} drawings, the two notions are closely related: indeed, liftability implies that each \emph{connected component} of the trace can be drawn with a continuous motion. Note, however, that the implication does not go the other way. For example, the central vertex $v$ in Watt's famous linkage traces a lemniscate, and can do so with a continuous motion. However,  a path through the trace that ``turns a corner'' at the central intersection point cannot be lifted to a motion of the full linkage, so $v$ does not draw its lemniscate liftably.

The complementary notion is \emph{rigid} drawing:

\begin{definition}[Rigid Drawing]
  Say $(\cL,X)$ draws its trace \term{rigidly} if the map $\pi_X$ has finite fibers, i.e., for any $p\in\pi_X(\Conf(\cL))$, there are only finitely many configurations $C\in\Conf(\cL)$ with $\pi_X(C) = p$.
\end{definition}

For example, node $d$ of linkage $\cT$ draws its trace rigidly, because there are at most two configurations corresponding to each possible position of $d$. By contrast, node $c$ does \emph{not} draw rigidly, because infinitely many configurations exist for each position of $c$.

Rigid drawing was also introduced in~\cite{Abbott-2008}, for similar reasons: if $(\cL,X)$ draws \emph{rigidly} and if $p$ is isolated in $\pi_X(\Conf(\cL))$, then any configuration $C$ with $\pi_X(C) = p$ is rigid, because the discrete set $\pi_X\inv(p)$ contains no nonconstant continuous paths.

As in~\cite{Abbott-2008}, we are especially interested in cases where $(\cL,X)$ draws both liftably and rigidly, as that creates a strong correlation between properties of the \emph{trace} and the linkage's own rigidity or flexibility. We also introduce a stronger notion:

\begin{definition}[Perfect Drawing]
  If the map $\pi_X$ is a \emph{homeomorphism} between $\Conf(\cL)$ and the trace, we say $(\cL,X)$ draws its trace \term{perfectly}.
\end{definition}

Perfect drawing easily implies liftable and rigid drawing, but it is even more restrictive. It is especially useful for \emph{parametrizing} a linkage's configuration space: if $(\cL,X)$ draws perfectly, each configuration of $\cL$ is uniquely and continuously determined solely from the locations of vertices in $X$.

Finally, we will often wish to create a linkage that ``behaves like'' a related linkage, which we formalize as follows:

\begin{definition}[Linkage Simulation]
  When $(\cL,X)$ draws precisely the full configuration space $\Conf(\cM)$ of another linkage~$\cM$, we say that $(\cL,X)$ \term{simulates $\cM$}. It may \term{liftably, rigidly, or perfectly simulate~$\cM$} if it draws $\Conf(\cM)$ in the corresponding manner.
\end{definition}

\subsection{Specification of the Main Theorem}
\label{sec:main-theorem-spec}

For a collection $F = \{f_1,\ldots,f_s\}$ of polynomials in $\bR[x_1,y_1,\ldots,x_m,y_m] = \bR[\widevec{xy}]$, the \term{algebraic set} (or \term{algebraic variety}) defined by $F$ is the set of common zeros,
\begin{equation*}
  Z(F) := \{\widevec{xy}\in\bR^{2m}\mid f_1(\widevec{xy}) = \cdots = f_s(\widevec{xy}) = 0\}.
\end{equation*}
The primary technical construction in this paper builds a globally noncrossing, partially rigidified linkage~$\cL(F)$ that draws the algebraic set $Z(f_1,\ldots,f_s)\subseteq \bR^{2m}$, or at least a large enough piece thereof, up to a translation of $\bR^{2m}$. This translation is necessary: without it, some algebraic sets would require the drawing vertices in $X$ to collocate in some configurations, precluding the possibility of global noncrossing. For example, two distinct vertices that draw the torus $S^1\times S^1$, where $S^1 = \{(x,y)\in\bR^2\mid x^2+y^2=1\}$, would be forced to intersect some of the time (e.g., at $(1,0)$). Allowing translation, this locus is easily drawable by two unit-length edges, each pinned by one endpoint at $(0,0)$ and $(3,0)$, respectively.

We are now prepared to precisely specify the properties of this construction, from which the results listed in Table~\ref{tab:results} follow as corollaries. We thoroughly detail these properties here, so that the corollaries may be derived solely from Theorem~\ref{thm:main-theorem}'s statement without referring to the specifics of its proof (with one small exception, discussed in Section~\ref{sec:matchstick-universality-full}). This also allows for maximal reuse: the commonalities in our arguments for our three linkage contexts---unconstrained globally noncrossing linkages in Section~\ref{sec:noncrossing}, unit-distance linkages in Section~\ref{sec:unit}, and matchstick linkages in Section~\ref{sec:matchstick}---have been unified and generalized into Theorem~\ref{thm:main-theorem}, so only features unique to each context need to be discussed in Sections~\ref{sec:noncrossing}--\ref{sec:matchstick}.

The Main Theorem is divided into three parts because it must be used in subtly different ways by the four types of results we seek. Hardness of realizability requires a polynomial-time construction of an abstract linkage that draws $Z(f_1,\ldots,f_s)$ \emph{without knowing whether the resulting configuration space is empty}, whereas proving hardness of rigidity and global rigidity requires the polynomial-time construction of a linkage \emph{together with a known configuration}. We thus separate these into different Parts of Theorem~\ref{thm:main-theorem} with slightly different assumptions about the input polynomials $f_j$ (Part~II for realizability, Part~III for rigidity and global rigidity). When proving universality, we must prove \emph{existence} of a linkage to draw any compact semialgebraic set, but the coefficients of the polynomials defining this set may be irrational or non-algebraic, as might the edge lengths and coordinates of the resulting linkage, so we isolate this in Part~I, away from algorithmic and efficiency concerns.

If the input set of polynomials $F$ has $|F|=s$ real polynomials, each of total degree $d$ in the $2m$ variables $(x_1,y_1,\ldots,x_m,y_m)$, then the number of coefficients is $s\cdot \binom{2m+d}{d}$ (in dense representation). The number of vertices and edges of the resulting linkage will be bounded by a polynomial in the related quantities $m^d$, $d^d$, and~$s$. We do not attempt bounds that are finer tuned depending on whether $F$ is sparse or somehow otherwise simpler than parameters $m,d,s$ might indicate. Indeed, sparse polynomial input does not lead to noticeable efficiency gains with our algorithm, due to a change of coordinates (detailed in Section~\ref{sec:detailed-overview}) that can turn sparse polynomials into dense ones.

Similarly, numerator bounds on edge lengths and coordinates are written in terms of~$M$, an upper bound on the absolute value of coefficients of the input polynomials. We wish to emphasize that $M$ bounds the number of \emph{unary} digits of coefficients, not binary digits. Nowhere in this paper do we measure the number of binary digits of an integer: the \emph{magnitude}, \emph{absolute value}, or \emph{size} of an integer always refers to its length in unary.

\begin{theorem}
  \label{thm:main-theorem}
  
  \begin{descriptionflush}
    
  \item[Part I.]
    Take as input a collection of polynomials $F = \{f_1,\ldots,f_s\}$, each in $\bR[x_1,y_1,\ldots,x_m,y_m]$ with total degree at most $d$. Then we may construct a partially rigidified linkage $\cL = \cL(F)$ that draws, up to translation, a bounded portion of the algebraic set $Z(F)$: specifically, there is a translation $T$ on $\bR^{2m}$ and a subset $X$ of $m$ vertices of $\cL$ such that
    \begin{equation*}
      T(Z(F)\cap[-1,1]^{2m}) \subseteq \pi_X(\Conf(\cL)) \subseteq T(Z(F)).
    \end{equation*}
    Furthermore, there is a constant integer $D$ depending only on $s,m,d$ (but independent of the coefficients in input~$F$) such that:
    \begin{enumerate}
    \item\label{thmpart:rigid-and-liftable}
      Vertex list $X$ draws this trace liftably and rigidly.
    \item\label{thmpart:num-vertices}
      The number of vertices and edges in $\cL$ is $O(\poly(m^d,d^d,s))$.
    \item\label{thmpart:gmfs}
      Each edge of $\cL$ has length at least $1/D$, and $\cL$ is globally noncrossing with global minimum feature size at least $1/D$.
    \item\label{thmpart:orthogonal-trees}
      For each constraint $\RigidCon_\cL(H,C_H)$ on $\cL$, $H$ is a tree that connects to $G\setminus H$ only at leaves of $H$, and configuration $C_H$ has all edges parallel to the $x$- or $y$-axes. Tree $H$ has at least three noncollinear vertices, and it does not flip: for each configuration $C$ of $\cL$, configuration $C_H$ and the induced configuration $C|_H$ have the same orientation. Each edge of $G$ is contained in at most one rigidified subtree~$(H,C_H)$.
    \item\label{thmpart:corners}
      There is a combinatorial embedding $\sigma$ of $G$ such that every configuration $C\in\Conf(\cL)$ is noncrossing and agrees with~$\sigma$. Furthermore, for any vertex $v$ of degree at least $2$ that is not internal to any rigidified tree, each corner $\Lambda$ at $v$ has angle $\angle C(\Lambda)\in(60^\circ,240^\circ)$.
    \item\label{thmpart:pins}
      Linkage $\cL$ has precisely $|P|=3$ pinned vertices, which belong to a single rigidified tree $(H,C_H)$ and are not collinear in $C_H$.
    \end{enumerate}
    
  \item[Part II.]
    Suppose polynomials $f_j$ have integer coefficients, each bounded in absolute value by~$M$. Then we may bound the complexity of $\cL$ as follows:
    \begin{enumerate}[resume]
    \item\label{thmpart:rational-edge-lengths}
      All edge lengths in $\cL$ belong to $\frac{1}{D}\cdot\bZ$ and have size $O(\poly(m^d,d^d,s,M))$. Edges not contained in a rigidified subtree have lengths at most $D$.
    \item\label{thmpart:poly-time}
      Constrained linkage~$\cL$, the set~$X$ of vertices, translation~$T$, and combinatorial embedding $\sigma$ may be constructed from~$F$ deterministically in time $O(\poly(m^d,d^d,s,M))$.
    \end{enumerate}

  \item[Part III.]
    Finally, if the polynomials $f_j$ each satisfy $f_j(\vec 0) = 0$, we may additionally compute an initial configuration $C_0$ satisfying:

    \begin{enumerate}[resume]
    \item\label{thmpart:rational-coords}
      All coordinates of $C_0$ belong to $\frac{1}{D}\cdot\bZ$ and have absolute value $O(\poly(m^d,d^d,s,M))$.
    \item\label{thmpart:uniqueness}
      $C_0$ is the \emph{only} configuration of $\cL$ that projects to $T(\vec 0) \in \pi_X(\Conf(\cL))$.
    \item\label{thmpart:orthogonal-induced-trees}
      For each rigidified subtree $(H,C_H)$, $C_0$ induces a configuration of $H$ in which all edges are parallel to the $x$- or $y$-axes. (Edges not in any rigidified subtree need not be axis-aligned.)
    \item\label{thmpart:poly-time-config}
      $C_0$ may also be computed deterministically in time $O(\poly(m^d,d^d,s,M))$.
    \end{enumerate}
    
  \end{descriptionflush}
\end{theorem}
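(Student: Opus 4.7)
The plan is to realize $\cL(F)$ as a grid of constant-size modular gadgets, each a small extended linkage (in the sense of Section~\ref{sec:defining-extended-linkages}) encased in a rigidified axis-aligned frame. This structure simultaneously provides the algebraic reach needed to draw $Z(F)$ and the structural features demanded by items~\ref{thmpart:orthogonal-trees}--\ref{thmpart:pins}. The construction follows the spirit of Kempe's 1876 argument, but with gadgets redesigned from scratch so that crossings cannot occur and so that the rigid skeleton of each gadget is an orthogonal tree.

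First, I would normalize $F$ by a rational affine change of coordinates, scaling the piece to be drawn into a fixed bounded domain $[-1,1]^{2m}$ and encoding each real quantity $t\in[-1,1]$ as a unit vector $(\cos\theta,\sin\theta)$ with $t=\cos\theta$. Expanding each $f_j$ into signed monomials and applying the identities $\cos(\alpha\pm\beta)=\cos\alpha\cos\beta\mp\sin\alpha\sin\beta$ reduces the evaluation of all $f_j$ at $X$ to a straight-line program over a small instruction set: copy a value, route two values past each other (``cross''), add or average two angles, sum or average two vectors, rotate a unit vector by a fixed angle, create an initial unit vector, and finally assert that a computed value equals zero. Each primitive becomes one of the modular gadgets reserved in the preamble ($\Ucopy$, $\Ucross$, $\Uangular$, $\Uangleavg$, $\Uvecavg$, $\Uvecrot$, $\Uveccreate$, $\Ustart$, $\Uend$, and their siblings).

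Each gadget lives in a unit ``cell'' whose four sides form an axis-aligned rigidified tree: an orthogonal $\Rect$-frame that serves both as scaffolding and as the shared interface with neighbouring cells. Inside the cell, a small flexible subgraph together with its angle constraints \emph{perfectly simulates} the intended algebraic relation from input vertices on one side of the frame to output vertices on another. The angle constraints are chosen to lie strictly inside $(60^\circ,240^\circ)$, forbidding the near-folded configurations in which a flexible bar could escape its cell or collide with the scaffolding; combined with the rigid frame this yields a uniform global minimum feature size $\ge 1/D$ for some $D=D(s,m,d)$, and it gives a global combinatorial embedding $\sigma$ that can be read off cell by cell (items~\ref{thmpart:gmfs} and~\ref{thmpart:corners}). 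Composing the gadgets according to the straight-line program yields a linkage $\cL$ of total size $O(\poly(m^d,d^d,s))$ (item~\ref{thmpart:num-vertices}), pinned by three non-collinear corners of one scaffold tree (item~\ref{thmpart:pins}), whose projection $\pi_X$ onto the $m$ designated ``wire'' vertices is a homeomorphism onto $T(Z(F)\cap[-1,1]^{2m})$ and hence draws liftably and rigidly (item~\ref{thmpart:rigid-and-liftable}).

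For Part~II, I would use only edge lengths in $\frac{1}{D}\bZ$ with numerators polynomial in $M$, inherited from the scaled coefficients of $F$, and observe that gadget layout is a deterministic polynomial-time procedure from~$F$. For Part~III, the hypothesis $f_j(\vec 0)=0$ lets me set every encoded value to $0$, so every unit-vector parameter is axis-aligned and every angle parameter is a multiple of $90^\circ$; the canonical initial configuration $C_0$ is then entirely axis-aligned with rational polynomially-bounded coordinates, and perfect simulation of each gadget forces $C_0$ to be the unique preimage of $T(\vec 0)$ under $\pi_X$, giving items~\ref{thmpart:rational-coords}--\ref{thmpart:poly-time-config}. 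The main obstacle is essentially the whole thesis of the paper: designing each gadget's interior so that, \emph{across its entire range of input values}, every flexible bar stays inside its cell and avoids collision, while still faithfully realizing the intended algebraic relation and keeping every angle strictly inside $(60^\circ,240^\circ)$. Past universality constructions achieve the algebraic relation by allowing the flexible parts to swing freely and hence to cross; here the motion must instead be confined geometrically by angle constraints and tree-shaped scaffolds, and this careful confinement is what the remaining sections must build up explicitly.
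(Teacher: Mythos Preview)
Your outline captures the high-level Kempe-style strategy correctly---angular encoding, a grid of modular cell gadgets computing a straight-line program, and a rigid background scaffold---and this matches the paper's construction of the intermediate \emph{extended} linkage $\cE(F)$. But there is a genuine gap: you never convert from an extended linkage to a \emph{partially rigidified} one. The theorem asks for $\cL$ whose only constraints are $\RigidCon(H,C_H)$, yet your gadgets explicitly carry angle constraints (``a small flexible subgraph together with its angle constraints''). Angle constraints are precisely what prevents parallelogram flips and keeps bars inside their cells, but they are not permitted in the final object. The paper handles this in a separate, nontrivial step (Section~\ref{sec:implementing-with-partially-rigidified}): every edge of $\cE'(F)$ becomes its own rigidified orthogonal tree $T_{uv}$, and at each vertex the incident trees are joined by an \emph{Angle Restrictor Gadget} (Lemma~\ref{lem:angle-restrictor-gadget}), a small partially rigidified mechanism whose one free angle is mechanically confined to exactly $[\pi/2-\gamma,\pi/2+\gamma]$ for $\gamma\in\{0,\delta,\eps\}$. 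This is where the rigidified trees of item~\ref{thmpart:orthogonal-trees} actually come from; they are not the cell frames you describe, but one tree per edge of the underlying extended linkage.

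Two smaller issues are consequences of this gap. First, information between adjacent cells in the paper passes through \emph{sliceform} vertices on the shared grid edge, and these too must be eliminated (by the Sliceform Gadget, Lemma~\ref{lem:sliceform-gadget}) before the conversion to partially rigidified form. Second, your claim that $\pi_X$ is a homeomorphism is too strong for the final $\cL$: the Angle Restrictor Gadget draws only liftably and rigidly (it has four configurations for generic intermediate angles), so $\cL$ simulates $\cE$ liftably and rigidly but not perfectly. Uniqueness at $C_0$ (item~\ref{thmpart:uniqueness}) therefore does not follow from ``perfect simulation of each gadget''; it requires the engineered property that the Angle Restrictor Gadget has a \emph{unique} configuration at the neutral angle $\pi/2$, which is the reason for that gadget's more elaborate design.
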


\begin{convention}[Orientation]
  Property~\ref{thmpart:orthogonal-trees} above refers to the \emph{orientation} of a configured tree. In this paper, the word \emph{orientation} always refers to handedness, not to the angle or direction of an object. So a reflection reverses orientation; a rotation changes angle but preserves orientation.
\end{convention}

\subsection{Roadmap}
\label{sec:roadmap}

The rest of the paper is organized as follows: After gathering a few preliminary facts about real (semi)algebraic sets and the class $\exists\bR$ in Section~\ref{sec:preliminaries}, we use the Main Theorem (Theorem~\ref{thm:main-theorem}) to prove that graph realizability, rigidity, and global rigidity are $\exists\bR$-complete or $\forall\bR$-complete, and that linkages are universal at drawing semialgebraic sets, in each of three separate contexts: for globally noncrossing graphs/linkages in Section~\ref{sec:noncrossing}, for unit-distance (or $\{1,2\}$-distance) graphs/linkages in Section~\ref{sec:unit}, and for matchstick graphs/linkages in Section~\ref{sec:matchstick}. After all of that, in Section~\ref{sec:main-construction}, we finally provide the gory details of the Main Construction itself.

\headingboldmath
\section{Preliminaries on Semialgebraic Sets and $\exists\bR$}
\label{sec:preliminaries}

\subsection{Semialgebraic Sets as Projections of Algebraic Sets}
\label{sec:semialgebraic-sets}

Here we prove some elementary facts about representing certain semialgebraic sets as projections of (semi)algebraic sets having specific forms, which will be used in proofs of linkage universality (not in complexity results). For a general introduction to real semialgebraic geometry, see~\cite{bochnak-coste-roy}.

A \term{basic semialgebraic set} in $\bR^k$ is a set of the form $\{\vec x\in\bR^k\mid f_1(\vec x) = \cdots = f_s(\vec x) = 0, g_1(\vec x) > 0,\ldots, g_r(\vec x) > 0\}$ for polynomials $f_i$ and $g_j$. Any \term{semialgebraic set} can be written as a finite union of basic semialgebraic sets~\cite[Prop.~2.1.8]{bochnak-coste-roy}; in fact, this is one of several equivalent ways to define semialgebraic sets.

\begin{lemma}
  \label{lem:semialg-projection}
  Any bounded semialgebraic set $R\subset\bR^k$ can be expressed as the projection of some bounded basic semialgebraic set onto the first $k$ coordinates.
\end{lemma}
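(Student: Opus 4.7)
The plan is to write $R$ as a finite union of basic semialgebraic pieces and then collapse the disjunction into a single basic set by adjoining $\{0,1\}$-valued ``selector'' variables. First I would invoke the standard representation (cited just above from~\cite{bochnak-coste-roy}) to express $R = \bigcup_{i=1}^n B_i$, where each basic piece has the form $B_i = \{x : f_{i,1}(x) = \cdots = f_{i,s_i}(x) = 0,\; g_{i,1}(x) > 0, \ldots, g_{i,r_i}(x) > 0\}$. The main obstacle will be folding ``$\exists i$ such that $x \in B_i$'' into a single conjunction of polynomial equalities and strict inequalities \emph{while keeping the preimage bounded}. The usual trick of replacing a constraint $g > 0$ by the equality $g y^2 = 1$ is unacceptable here: $y$ would blow up whenever $x$ approaches a point of $R$ with $g(x)$ near $0$, so the preimage would in general be unbounded.

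To avoid that, I would introduce just $n$ new real variables $t_1, \ldots, t_n$ and define $S \subseteq \mathbb{R}^{k+n}$ by the equalities $t_i^2 - t_i = 0$ (for each $i$), $\sum_i t_i - 1 = 0$, and $t_i \cdot f_{i,j}(x) = 0$ (for each $i,j$), together with the strict inequalities $t_i \cdot g_{i,k}(x) + (1 - t_i) > 0$ (for each $i,k$). The first two families of equalities pin $t$ to the $n$ standard basis vectors of $\mathbb{R}^n$, so for every $(x,t)\in S$ there is a unique ``active'' index $i^*$ with $t_{i^*}=1$ and $t_j = 0$ for $j \ne i^*$. The third family of equalities reduces to $f_{i^*,j}(x) = 0$ at the active index and is vacuous elsewhere. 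The key trick is the padding term $(1 - t_i)$ in the strict inequalities: when $t_i = 1$ the inequality becomes $g_{i,k}(x) > 0$, and when $t_i = 0$ it becomes the trivial $1 > 0$. Thus a single polynomial strict inequality simultaneously enforces the $B_{i^*}$-constraints and is vacuously satisfied by every inactive branch, avoiding any need for additional blow-up-prone slack variables.

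Finally I would verify $\pi(S) = R$ and boundedness directly. Every $(x,t)\in S$ satisfies $x \in B_{i^*} \subseteq R$ by the analysis above, while conversely any $x\in B_i$ admits the witness $t_i = 1$, $t_j = 0$ for $j\ne i$, establishing $\pi(S) = R$. Since $t \in \{0,1\}^n$ for every $(x,t)\in S$ and $x \in \pi(S) = R$ is bounded by hypothesis, $S \subseteq R \times \{0,1\}^n$ lies in a bounded subset of $\mathbb{R}^{k+n}$, so $S$ is the desired bounded basic semialgebraic preimage.
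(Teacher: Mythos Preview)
Your proof is correct, but it takes a different route from the paper's. The paper also decomposes $R=\bigcup_{j=1}^t R_j$ into nonempty basic pieces, but instead of selector variables it adjoins a full witness point $\vec x_j\in\bR^k$ for \emph{each} piece and imposes the single equation
\[
\prod_{j=1}^t |\vec x-\vec x_j|^2=0,
\]
together with the defining constraints $\vec x_j\in R_j$. The product forces $\vec x$ to coincide with some $\vec x_j$, hence $\vec x\in R_j\subseteq R$; boundedness is immediate because each $\vec x_j$ ranges over the bounded set $R_j$. Your Boolean-selector trick with the $(1-t_i)$ padding is a nice alternative: it uses only $n$ extra coordinates instead of $kt$, does not require the pieces to be nonempty, and sidesteps any auxiliary-variable blow-up in a single stroke. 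The paper's version, on the other hand, is essentially a one-line argument once the product equation is written down, and it avoids having to verify branch-by-branch that inactive constraints are vacuous. Both constructions yield a bounded basic preimage for the same reason---the extra coordinates are confined to an explicitly bounded set ($\{0,1\}^n$ for you, $\prod_j R_j$ for the paper).
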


\begin{proof}
  We may assume $R$ is nonempty. Any semialgebraic set can be written as a finite union of basic semialgebraic sets~\cite[Prop.~2.1.8]{bochnak-coste-roy}, so write $R = \bigcup_{j=1}^t R_j$, where each $R_j$ is a nonempty basic semialgebraic set. Let $\vec x, \vec x_1,\ldots, \vec x_t$ each denote a variable point in $\bR^k$, and define $R'\subset(\bR^k)^{t+1}$ as the \emph{basic} semialgebraic set defined by the conditions $\vec x_j\in R_j$ for $1\le j\le t$, as well as
  \begin{equation*}
    f(\vec x, \vec x_1, \ldots, \vec x_t) := \prod_{j=1}^t |\vec x - \vec x_j|^2 = 0.
  \end{equation*}
  This last equation exactly stipulates that $\vec x = \vec x_j$ for some $1\le j\le t$, so $\pi(R') = \bigcup_{j=1}^t R_j = R$, where $\pi$ denotes the projection onto the coordinates of $\vec x$. Set $R'$ is bounded because each $R_j$ is bounded.
\end{proof}

\begin{lemma}
  \label{lem:semialg-projection-ne0}
  Any bounded semialgebraic set $R\subset\bR^k$ can be expressed as the projection of a bounded set of the form $\{\vec x\in\bR^m\mid f_1(\vec x) = \cdots = f_s(\vec x) = 0, g_1(\vec x) \ne 0, \ldots, g_r(\vec x)\ne 0\}$, for polynomials $f_i$ and $g_j$.
\end{lemma}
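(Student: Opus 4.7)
The plan is to start from Lemma~\ref{lem:semialg-projection}, which already represents $R$ as the projection of a bounded basic semialgebraic set, and then systematically eliminate the strict inequalities $g_j > 0$ by introducing auxiliary square-root variables, trading each strict inequality $g_j > 0$ for the pair (equation $y_j^2 = g_j$) and (non-equation $g_j \ne 0$).

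First, I would apply Lemma~\ref{lem:semialg-projection} to write $R = \pi(S)$, where $S \subset \bR^m$ is a bounded basic semialgebraic set defined by $f_1 = \cdots = f_s = 0$ together with strict inequalities $g_1 > 0, \ldots, g_r > 0$, and $\pi$ is the projection onto the first $k$ coordinates. Next, for each $j$, I introduce a new variable $y_j$ and define $S' \subset \bR^{m+r}$ by the equations $f_i(\vec x) = 0$ for all $i$, the equations $y_j^2 - g_j(\vec x) = 0$ for all $j$, and the non-equations $g_j(\vec x) \ne 0$ for all $j$. The key equivalence is that $g_j(\vec x) > 0$ if and only if there exists $y_j \in \bR$ with $y_j^2 = g_j(\vec x)$ and $g_j(\vec x) \ne 0$: the forward direction is witnessed by $y_j = \sqrt{g_j(\vec x)}$, and the reverse follows because $y_j^2 \ge 0$ combined with $g_j(\vec x) \ne 0$ forces $g_j(\vec x) > 0$. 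Hence the projection of $S'$ onto the first $k$ coordinates (first dropping the $y_j$-coordinates to recover $S$, then applying $\pi$) is exactly $R$.

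It remains to check that $S'$ is bounded. Since $S$ is bounded and each $g_j$ is a polynomial, each $g_j$ attains some upper bound $B_j$ on $S$; the defining equation $y_j^2 = g_j(\vec x)$ then forces $|y_j| \le \sqrt{B_j}$ on $S'$, so $S'$ is contained in a product of bounded sets and is therefore bounded.

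I do not anticipate a genuine obstacle here. The only subtle point is that the ``natural'' trick of writing $g_j > 0$ as $\exists y_j\colon y_j^2 \cdot g_j = 1$ would eliminate strict inequalities entirely (at the cost of introducing only equations), but it does not preserve boundedness when $g_j$ is close to $0$; the square-root substitution above is the correct one because it keeps the new variables small. This is exactly why the lemma allows the slightly weaker form $g_j \ne 0$ rather than demanding only equations.
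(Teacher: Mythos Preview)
Your proof is correct and essentially identical to the paper's: both start from Lemma~\ref{lem:semialg-projection}, replace each strict inequality $g_j>0$ by the pair ($y_j^2=g_j$, $\ne 0$), and check boundedness via a bound on $g_j$ over the bounded base set. The only cosmetic differences are that the paper writes the non-equality as $a_j\ne 0$ rather than $g_j\ne 0$ (equivalent under $a_j^2=g_j$) and pre-scales the $g_j$ to land in $[-1,1]$ rather than invoking a supremum $B_j$ as you do.
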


\begin{proof}
  By Lemma~\ref{lem:semialg-projection}, we may assume $R$ is a basic semialgebraic set, $R = \{\vec x\in\bR^k\mid p_1(\vec x) = \cdots = p_s(\vec x), q_1(\vec x) > 0, \ldots, q_r(\vec x) > 0\}$. Because $R$ is bounded, by scaling the polynomials if necessary, we may assume $|p_i(\vec x)|\le 1$ and $|q_j(\vec x)|\le 1$ for all $\vec x\in R$. Now introduce new real variables $\vec a = (a_1,\ldots,a_r)$ and define
  \begin{multline*}
    R' = \{(\vec x, \vec a)\in\bR^{k+r}\mid
    p_1(\vec x) = \cdots = p_s(\vec x) = 0,\\
    q_1(\vec x) = a_1^2, a_1\ne 0, \ldots,
    q_r(\vec x) = a_r^2, a_r\ne 0\}.
  \end{multline*}
  This set is contained in $R\times [-1,1]^r$ and is therefore bounded, and $R$ is the projection of $R'$ onto the coordinates of $\vec x$.
\end{proof}

A \term{basic closed semialgebraic set} has the form $\{\vec x\in\bR^k\mid f_1(\vec x) \ge 0, \ldots, f_s(\vec x)\ge 0\}$; if this set is also bounded, we call it a \term{basic compact semialgebraic set}.

\begin{lemma}
  \label{lem:compact-semialg-projection}
  Any compact semialgebraic set $R\subset\bR^k$ can be expressed as the projection of some compact algebraic set onto the first $k$ coordinates.
\end{lemma}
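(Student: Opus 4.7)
The plan is to combine two familiar tricks: the ``slack variable'' trick $g(\vec x) \ge 0 \iff \exists\, a : g(\vec x) = a^2$, which converts closed inequalities into polynomial equations, and the ``product of squared distances'' trick from Lemma~\ref{lem:semialg-projection}, which converts a finite union into a single polynomial equation. Both tricks preserve compactness in our setting.

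First I would reduce to the basic closed case. Since $R$ is closed semialgebraic, a standard decomposition result (see \cite[Thm.~2.7.2]{bochnak-coste-roy}) writes $R = \bigcup_{j=1}^t R_j$ where each $R_j = \{\vec x \in \bR^k \mid g_{j,1}(\vec x) \ge 0, \ldots, g_{j,r_j}(\vec x) \ge 0\}$ is basic closed semialgebraic. Pick $N$ large enough that $R \subseteq \{\|\vec x\|^2 \le N\}$, and replace each $R_j$ by $R_j \cap \{\|\vec x\|^2 \le N\}$; this adds one more inequality $N - \|\vec x\|^2 \ge 0$ and preserves the union, while making each $R_j$ itself basic \emph{compact} semialgebraic.

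Next I would express each $R_j$ as the projection of a compact algebraic set. For each inequality $g_{j,i}(\vec x) \ge 0$, introduce a slack variable $a_{j,i}$, and define
\begin{equation*}
  S_j \;=\; \{(\vec x, a_{j,1},\ldots,a_{j,r_j}) \in \bR^{k+r_j} \mid g_{j,i}(\vec x) = a_{j,i}^2 \text{ for } 1\le i\le r_j\}.
\end{equation*}
Then $S_j$ is algebraic (cut out by polynomial equations) and its projection onto the first $k$ coordinates is exactly $R_j$. Because $R_j$ is bounded, each $g_{j,i}$ is bounded on $R_j$, so each slack variable $a_{j,i}$ is bounded as well; hence $S_j$ is closed and bounded, i.e., compact.

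Finally I would glue the $S_j$ together in the style of Lemma~\ref{lem:semialg-projection}. Let $\vec x \in \bR^k$ be a new variable, and consider
\begin{equation*}
  R' \;=\; \Bigl\{\bigl(\vec x, (\vec x_1, \vec a_1),\ldots,(\vec x_t, \vec a_t)\bigr) \;\Big|\; (\vec x_j, \vec a_j) \in S_j \text{ for all } j, \ \prod_{j=1}^t |\vec x - \vec x_j|^2 = 0\Bigr\}.
\end{equation*}
Every defining condition is a polynomial equation, so $R'$ is algebraic. Each $S_j$ is compact and the last equation forces $\vec x = \vec x_j$ for some $j$, so $\vec x$ is bounded by the common bound on the $\vec x_j$'s; thus $R'$ is closed and bounded, hence compact. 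The projection of $R'$ onto the coordinates of $\vec x$ is $\bigcup_{j=1}^t R_j = R$, completing the proof.

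The main obstacle is purely bookkeeping: ensuring that every conversion stays within \emph{algebraic} sets (pure equalities) rather than slipping into semialgebraic territory, and that compactness is preserved at each combining step. The slack-variable trick hinges on the boundedness achieved in the first reduction, which is why intersecting with the ball $\{\|\vec x\|^2 \le N\}$ up front is essential.
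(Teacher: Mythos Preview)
Your proof is correct and follows essentially the same approach as the paper: decompose via \cite[Thm.~2.7.2]{bochnak-coste-roy}, use the product-of-squared-distances trick from Lemma~\ref{lem:semialg-projection} to handle the union, and use slack variables $g=a^2$ to convert nonstrict inequalities into equations. The paper applies these in a slightly different order (it first combines into a single basic compact set, then introduces slack variables), but the ingredients are identical.

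One small omission: your gluing step requires each $S_j$ to be nonempty, since $R'$ demands a point $(\vec x_j,\vec a_j)\in S_j$ for \emph{every} $j$. If some $R_j$ (hence $S_j$) is empty, $R'$ collapses to the empty set even when $R$ is not. Just discard the empty $R_j$'s up front (and handle $R=\varnothing$ trivially), exactly as the paper does in the proof of Lemma~\ref{lem:semialg-projection}.
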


\begin{proof}
  Compact semialgebraic set $R$ can be written as a finite union of basic compact semialgebraic sets~\cite[Thm.~2.7.2]{bochnak-coste-roy}, and the same proof used in Lemma~\ref{lem:semialg-projection} shows that $R$ can be written as a coordinate projection of a basic compact semialgebraic set. (The condition $f = 0$ can be expressed as $f\ge 0$ and $-f\ge 0$.) So it suffices to show that a compact \emph{basic} semialgebraic set is a coordinate projection of a compact algebraic set.

  Assume $R$ is a compact basic semialgebraic set, $$R=\{\vec x\in\bR^n\mid f_1(\vec x) \ge 0,\ldots,f_s(\vec x)\ge 0\}.$$
  Choose new variables $\vec y\in\bR^s$,
  and define $R' = \{(\vec x,\vec y)\mid f_i(\vec x) = y_i^2\text{ for }1\le i\le s\}$. If $\pi$ is the projection onto coordinates of $\vec x$, then one may check that $\pi(R') = R$. Region $R'$ is bounded because $R$ itself is bounded, so this completes the proof. 
\end{proof}

Semialgebraic sets are closed under projection \cite[Thm.~2.2.1]{bochnak-coste-roy} (a form of the Tarski--Seidenberg principle), as are compact semialgebraic sets.  The lemmas above imply that algebraic sets and basic semialgebraic sets are not closed under projection.

\subsection{Existential Theory of the Reals}
\label{sec:exists-r}

The class $\exists\bR$ is the complexity class consisting of all problems polynomially (Karp) reducible to the Existential Theory of the Reals (\ETR), which is the language consisting of true formulas of the form $(\exists x_1,\ldots,x_n\in\bR)\phi(x_1,\ldots,x_n)$, where $\phi$ is a quantifier-free predicate over real variables $x_1,\ldots,x_n$ using arithmetic symbols $+$, $-$, $\times$, $0$, and $1$, logical predicates $<$, $\le$, and $=$, and boolean operators $\wedge$, $\vee$ and $\neg$. It is known that $\text{NP}\subseteq\exists\bR\subseteq\text{PSPACE}$, though neither inclusion is known to be strict. The first inclusion is simple, as \SAT{} may be encoded with polynomial constraints, so any problem complete for $\exists\bR$ is automatically NP-hard. The second inclusion is a nontrivial theorem of Canny~\cite{Canny-1988-pspace} and is the tightest known upper bound on the hardness of~$\exists\bR$.

We now prove the $\exists\bR$ hardness of two problems used in the forthcoming reductions. The problem $\HtwoN$ (probably short for \emph{Hilbert's homogeneous Nullstellensatz}~\cite{Schaefer-2013}) asks whether homogeneous polynomials $f_1,\ldots,f_s\in\bZ[x_1,\ldots,x_k]$ have a nonzero common root in $\bR^k$. It was introduced in~\cite{Koiran-2000} and was shown to be $\exists\bR$-complete in~\cite{Schaefer-2013}, even when all polynomials have degree $4$. We wish to reduce from a stronger version of this problem where, additionally, all coefficients are in $\{0,\pm 1,\pm 2\}$. To prove hardness of this problem, only a slight modification of Schaefer's original argument is necessary. Our starting point is a strengthening of the well-known $\CommonZero$ problem:

\begin{lemma}\label{lem:deg-2-hardness}
  The $\CommonZero$ problem---determining whether integer-coefficient polynomials $f_1,\ldots,f_s\in\bZ[x_1,\ldots,x_n]$ have a common solution in $\bR^n$---is $\exists\bR$-complete, even when the polynomials must have total degree at most $2$ and coefficients in $\{-1,0,1\}$.
\end{lemma}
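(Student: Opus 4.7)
The plan is to reduce from the unrestricted $\CommonZero$ problem, whose $\exists\bR$-completeness is standard (it is essentially $\ETR$ restricted to systems of polynomial equations). Given an arbitrary instance $f_1 = \cdots = f_s = 0$ over $\bZ[x_1,\dots,x_n]$ with integer coefficients of magnitude at most $M$ and total degree at most $d$, I would produce an equivalent instance whose polynomials all have total degree at most $2$ and coefficients in $\{-1,0,1\}$ via two successive transformations.

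First comes \emph{degree reduction}. For each monomial $c\cdot x_{i_1}x_{i_2}\cdots x_{i_k}$ appearing in some $f_j$, introduce fresh variables $y_1,\dots,y_{k-1}$ together with the quadratic constraints $y_1 - x_{i_1}x_{i_2} = 0$ and $y_\ell - y_{\ell-1} x_{i_{\ell+1}} = 0$ for $2\le\ell\le k-1$. Each such constraint involves three distinct variables with coefficients in $\{-1,1\}$, and substituting $y_{k-1}$ for the monomial rewrites $f_j$ as a linear combination of the new auxiliary variables (carrying the original, possibly large, integer coefficients). After this step, only the linearized rewrites of the original $f_j$ may still violate the coefficient restriction.

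Next comes \emph{coefficient reduction}. Let $L := \lceil \log_2 M\rceil$. For each variable $z$ appearing in a linearized $f_j$, build a doubling chain of fresh variables $z = z^{(0)}, z^{(1)},\dots,z^{(L)}$ together with auxiliary duplicates $u_1,\dots,u_L$ and the linear constraints $u_i - z^{(i-1)} = 0$ and $z^{(i)} - z^{(i-1)} - u_i = 0$. Each such constraint involves three distinct variables with coefficients in $\{-1,1\}$, and together they force $z^{(i)} = 2^i z$. Given the binary expansion $|c| = \sum_{i=0}^L b_i 2^i$ of an integer coefficient $c$, the term $cz$ may be rewritten as $\pm\sum_i b_i z^{(i)}$, with sign determined by that of $c$. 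Summing these rewritings over all terms of a linearized $f_j$ produces a linear polynomial in which each chain variable $z^{(i)}$ appears at most once, so every coefficient lies in $\{-1,0,1\}$ (each chain is private to one original variable and one power of $2$).

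The main thing to be careful about is ensuring the coefficient reduction does not \emph{itself} reintroduce forbidden coefficients: naively encoding $z^{(i)} = 2z^{(i-1)}$ as $z^{(i)} - z^{(i-1)} - z^{(i-1)} = 0$ yields a polynomial whose coefficient of $z^{(i-1)}$ equals $-2$, outside $\{-1,0,1\}$; introducing the separate copy $u_i$ is exactly what repairs this. Correctness is then immediate in both directions: a common real zero of the original system extends uniquely to one of the transformed system by assigning each auxiliary variable the intermediate product or doubled value it represents, and any common real zero of the transformed system restricts to a zero of the original. Finally, the construction introduces $O(d)$ new variables per monomial and $O(\log M)$ new variables per chain, so it runs in polynomial time in the binary input size, completing the reduction.
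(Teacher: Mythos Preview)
Your argument is correct and takes a genuinely different route from the paper. The paper reduces directly from $\PointConfiguration$ (equivalently, $\Stretchability$): the signed-area determinant $\det(p_i,p_j,p_k)$ is already a degree-$2$ polynomial in the coordinate variables with coefficients $\pm 1$, and strict orientation constraints are encoded by adjoining $\det=a^2$ and $ab=1$, both degree~$2$ with coefficients in $\{-1,0,1\}$. So the restricted form falls out for free from the geometry of the source problem. By contrast, you reduce from unrestricted $\CommonZero$ via two generic syntactic passes (Tseitin-style degree reduction followed by binary doubling chains for coefficients). Your approach is more mechanical but also more portable: it shows that \emph{any} polynomial system can be massaged into this form, not just those arising from stretchability. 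The paper's approach is shorter and avoids the auxiliary-variable bookkeeping entirely.

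One small omission to patch: your coefficient-reduction step handles ``each variable $z$ appearing in a linearized $f_j$'' but says nothing about the constant term $c_0$ of $f_j$, which may still have $|c_0|>1$. The standard fix is to introduce a single new variable $w$ with the constraint $w-1=0$ (coefficients $\{1,-1\}$), replace $c_0$ by $c_0w$, and run a doubling chain on $w$ as well. With that in place, your reduction is complete.
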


\begin{proof}
  The problem is a subproblem of \ETR{}, so it is certainly in $\exists\bR$. For hardness, we reduce from the \PointConfiguration{} problem: given a choice of orientation (clockwise, counterclockwise, or collinear) for each triple $1\le i < j < k \le n$, determine whether there exists a configuration of points $p_1,\ldots,p_n$ in the plane so that each triangle $p_ip_jp_k$ has the specified configuration. This is equivalent (in fact, projectively dual) to the more well-known \Stretchability{} problem, shown to be $\exists\bR$-complete by Mn\"{e}v~\cite{Mnev-1988}; see also~\cite{shor}.

  If we write $p_i = (p_{i,1},p_{i,2})$ in coordinates, then the triangle's signed area is given by
  \begin{equation}\label{eqn:area-def}
    2\cdot\area(p_ip_jp_k) = \det(p_i,p_j,p_k) := \det\begin{pmatrix}p_{i,1} & p_{i,2} & 1\\p_{j,1} & p_{j,2} & 1\\p_{k,1} & p_{k,2} & 1\end{pmatrix},
  \end{equation}
  which is a homogeneous, degree $2$ polynomial in the coordinate variables with coefficients $\pm 1$. For the triples with collinear configuration, we simply include the polynomial $\det(p_i,p_j,p_k)=0$. For those with counterclockwise orientation, include the polynomials $\det(p_i,p_j,p_k)=a_{i,j,k}^2$ and $a_{i,j,k}b_{i,j,k}=1$, for new variables $a_{i,j,k}$ and $b_{i,j,k}$. Similarly, for clockwise triples, include $\det(p_i,p_j,p_k)=-a_{i,j,k}^2$ and $a_{i,j,k}b_{i,j,k}=1$. It is clear that the point configuration is realizable if and only if the resulting system of polynomials with coefficients in $\{-1,0,1\}$ and degree at most $2$ has a real solution.
\end{proof}

\begin{theorem}
  \label{thm:htwon'}
  The problem $\HtwoN$ remains $\exists\bR$-complete even when the input polynomials have degree~$4$ and all coefficients lie in $\{0,\pm1,\pm2\}$.
\end{theorem}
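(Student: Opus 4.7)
The plan is to reduce from the strengthened $\CommonZero$ provided by Lemma~\ref{lem:deg-2-hardness} to $\HtwoN$, closely following Schaefer's~\cite{Schaefer-2013} reduction but tracking the coefficients carefully to maintain the bound $\{0, \pm 1, \pm 2\}$. Given a $\CommonZero$ instance $f_1, \ldots, f_s \in \bZ[x_1, \ldots, x_n]$ with each $f_i$ of degree $2$ and coefficients in $\{-1, 0, 1\}$, I would first introduce a new homogenizing variable $x_0$ and form the degree-$2$ homogenizations $F_i^{\mathrm{hom}}(x_0, x_1, \ldots, x_n) := \sum_{|\alpha|\le 2} c_\alpha\, x_0^{2-|\alpha|} x^\alpha$, which inherit coefficients in $\{-1, 0, 1\}$. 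Then I would promote to degree $4$ by setting $G_i := x_0^2 \cdot F_i^{\mathrm{hom}}$: these are homogeneous degree-$4$ polynomials, still with coefficients in $\{-1, 0, 1\}$, and for any point with $x_0 \ne 0$, $G_i = 0$ is equivalent to $f_i(x_1/x_0, \ldots, x_n/x_0) = 0$.

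The crucial step is to add one or more auxiliary homogeneous degree-$4$ polynomials that force $x_0 \ne 0$ on every nonzero common solution, while keeping coefficients in $\{0, \pm 1, \pm 2\}$. A natural candidate is a squared ``sphere'' constraint such as $P_0 := (x_0^2 - x_1^2 - \cdots - x_n^2 - y^2)^2$ using one new slack variable $y$; a direct expansion shows its coefficients lie in $\{-2, -1, 1, 2\}$, and the equation $P_0 = 0$ enforces $x_0^2 = \sum_j x_j^2 + y^2$, so that $x_0 = 0$ implies all other coordinates vanish. Consequently every nonzero common zero of $\{G_i\}\cup\{P_0\}$ must satisfy $x_0 \ne 0$, which then dehomogenizes to a bona-fide $\CommonZero$ solution.

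I expect the main obstacle to be the forward direction: verifying that every $\CommonZero$ solution $(a_1, \ldots, a_n)$ lifts to a nonzero common zero of $\{G_i\} \cup \{P_0\}$. The scaled lift $(x_0, x_1, \ldots, x_n, y) = (c, c a_1, \ldots, c a_n, y)$ requires $y^2 = c^2(1 - \sum a_j^2)$, which is feasible only when $\sum a_j^2 \le 1$. I would address this by prepending a ``bounding'' modification to the reduction of Lemma~\ref{lem:deg-2-hardness}: rescale the point-configuration coordinates and replace the ``$ab = 1$'' orientation gadget with a bounded substitute (for instance, introducing auxiliary sphere-like relations with coefficients still in $\{-1, 0, 1\}$ that confine the orientation witnesses to a bounded interval) so that every solution of the modified $\CommonZero$ lies in the unit ball. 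After this preliminary bounding step, the backward direction is immediate from $P_0$, and the coefficient bound $\{0, \pm 1, \pm 2\}$ is preserved throughout by construction, completing the reduction and showing $\exists\bR$-completeness of the restricted $\HtwoN$ problem.
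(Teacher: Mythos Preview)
Your plan correctly identifies the central obstacle, but the proposed remedy cannot work. You suggest replacing the ``$ab=1$'' gadget in Lemma~\ref{lem:deg-2-hardness} with a ``bounded substitute'' so that satisfiable instances always admit a solution in the unit ball. No such substitute exists: suppose $S\subseteq\bR\times\bR^{k}$ is the zero set of any polynomial gadget whose projection to the first coordinate is exactly $\{a : a\ne 0\}$, and suppose every $a\ne 0$ had a witness lying in some fixed closed ball $B\subset\bR^{k}$. Taking $a_n\to 0$ with witnesses $w_n\in B$ and passing to a convergent subsequence yields a limit point $(0,w^*)\in S$ (since $S$ is closed), contradicting that $0$ is not in the projection. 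Thus witnesses must escape every fixed ball as $a\to 0$. Because Mn\"ev universality allows $\PointConfiguration$ instances in which every realization forces some nonzero determinant to be arbitrarily close to~$0$, no polynomial-time choice of ball suffices, and your forward direction fails for such instances.

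The paper sidesteps this entirely by not reconstructing the transformation: its proof simply observes that Schaefer's reduction \cite[Lem.~3.9, Cor.~3.10]{Schaefer-2013}, applied verbatim to the degree-$2$, $\{0,\pm1\}$-coefficient polynomials supplied by Lemma~\ref{lem:deg-2-hardness}, outputs homogeneous degree-$4$ polynomials with coefficients in $\{0,\pm1,\pm2\}$. To make your argument self-contained you would need to consult Schaefer's actual construction, which handles the passage to homogeneity by a mechanism different from your $P_0$.
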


\begin{proof}
  The proof by Schaefer~\cite[Lem.~3.9, Cor.~3.10]{Schaefer-2013} for the hardness of $\HtwoN$ with degree $4$ starts with a collection of polynomials of degree at most $2$ and transforms them into a collection of homogeneous degree $4$ polynomials that have a nontrivial solution if and only if the original collection has any solution. These transformations, when applied without modification to polynomials with coefficients in $\{0,\pm1\}$, return a collection of homogeneous, degree $4$ polynomials whose coefficients lie in $\{0,\pm1,\pm2\}$. So this result follows from Schaefer's proof paired with Lemma~\ref{lem:deg-2-hardness}.
\end{proof}

We can also provide another strengthening of the $\CommonZero$ problem:

\begin{theorem} \label{thm:common-zero}
  The $\CommonZero$ problem is $\exists\bR$-complete, even when the given polynomials $f_1,\ldots,f_s\in\bZ[x_1,\ldots,x_m]$ have total degree at most~$4$, all coefficients are in $\{0,\pm1,\pm2\}$, and all common zeros, if any, are promised to lie in $[-1,1]^m$.
\end{theorem}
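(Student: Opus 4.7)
The plan is to reduce directly from the strengthened $\HtwoN$ problem of Theorem~\ref{thm:htwon'} by a simple ``sphericalization'' trick. Membership in $\exists\bR$ is immediate since this restricted $\CommonZero$ is a syntactic subproblem of \ETR{}, so I focus on hardness.

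Given an $\HtwoN$ instance consisting of homogeneous polynomials $f_1,\ldots,f_s\in\bZ[x_1,\ldots,x_k]$, each of degree exactly $4$ with coefficients in $\{0,\pm1,\pm2\}$, I would output the $\CommonZero$ instance consisting of $f_1,\ldots,f_s$ together with the single additional polynomial
\begin{equation*}
  g(x_1,\ldots,x_k) := x_1^2 + x_2^2 + \cdots + x_k^2 - 1.
\end{equation*}
Set $m := k$. The polynomial $g$ has total degree $2\le 4$ and coefficients in $\{-1,0,1\}\subseteq\{0,\pm1,\pm2\}$, so the output still satisfies the degree and coefficient restrictions stated in the theorem. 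Any common zero of the augmented system lies on the unit sphere $\{|\vec x|=1\}\subseteq[-1,1]^m$, so the promise is satisfied \emph{for free} (not merely as a promise).

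For correctness, if $\vec x\ne\vec 0$ is a common real root of $f_1,\ldots,f_s$, then by homogeneity of degree $4$, $f_j(\lambda\vec x) = \lambda^4 f_j(\vec x) = 0$ for every $\lambda\in\bR$; choosing $\lambda = 1/|\vec x|$ produces a common root of the $f_j$'s lying on the unit sphere, which is thus a common root of $\{f_1,\ldots,f_s,g\}$. Conversely, any common root $\vec x$ of the augmented system satisfies $|\vec x|=1$, so $\vec x\ne\vec 0$ and it witnesses a nontrivial common root of the original $\HtwoN$ instance. The transformation is clearly polynomial-time.

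There is no real obstacle here; the only thing to double-check is that all three imposed constraints (degree $\le 4$, coefficients in $\{0,\pm1,\pm2\}$, and roots confined to $[-1,1]^m$) are simultaneously preserved by the added polynomial $g$, which they are. In particular, the confinement to $[-1,1]^m$ is automatic rather than something we must engineer by further rescaling, because $g=0$ already forces $|x_i|\le 1$ coordinatewise.
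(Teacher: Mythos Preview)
Your proof is correct and essentially identical to the paper's: both reduce from the strengthened $\HtwoN$ of Theorem~\ref{thm:htwon'} by adjoining the single polynomial $g = x_1^2+\cdots+x_m^2-1$, then use homogeneity to normalize any nontrivial root onto the unit sphere. The paper's argument and yours differ only in phrasing.
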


\begin{proof}
  We reduce from $\HtwoN$, as strengthened in Theorem~\ref{thm:htwon'}: we are given an instance $F = \{f_1,\ldots,f_s\}$ of $\HtwoN$ with degree $4$ polynomials whose coefficients lie in $\{0,\pm1,\pm2\}$. Return the instance $F' = F\cup\{g\}$ of $\CommonZero$, where $g(x_1,\ldots,x_m) = x_1^2 + \cdots + x_m^2 - 1$. Polynomial $g$ guarantees that $Z(F')\subset \overline B(\vec 0,1)\subset [-1,1]^m$, and all coefficients in $F'$ are still in $\{0,\pm1,\pm2\}$, so $F'$ has the required form.

  If $Z(F)$ contains only $\vec 0$, then $Z(F')$ is empty. On the other hand, if $\vec a\in Z(F)\setminus\{\vec 0\}$, then $\vec a / |\vec a|\in Z(F')$, so the reduction is correct.
\end{proof}

\section{Globally Noncrossing Graphs and Linkages}
\label{sec:noncrossing}

Using the statement of the Main Theorem (Theorem~\ref{thm:main-theorem}) and the hardness results from Section~\ref{sec:exists-r}, we show here that deciding realizability, rigidity, or global rigidity of a globally noncrossing graph/linkage is $\exists\bR$-complete or $\forall\bR$-complete, and that globally noncrossing linkages can draw any compact semialgebraic set in the plane.

\subsection{Rigidifying Polygons}

To simulate a rigidified tree, we will construct a globally rigid graph in the shape of a slight thickening of the tree. To that end, we provide here a general method that constructs a globally rigid triangulation, with Steiner points, of any simple polygon.

\begin{lemma}
  \label{lem:rigidify-quad}
  Any simple quadrilateral $A = A_1A_2A_3A_4$ has a globally rigid triangulation with four triangles and one Steiner point as in Figure~\ref{fig:rigidify-quad}.
\end{lemma}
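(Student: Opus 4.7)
The plan is to triangulate the quadrilateral $A = A_1A_2A_3A_4$ by placing a Steiner point $S$ in the interior of the kernel of $A$ and adding the four spoke edges $SA_i$; this yields four triangles $SA_iA_{i+1}$ (indices mod $4$) and realizes the wheel graph $W_4$ on $5$ vertices and $8$ edges. The kernel of any simple quadrilateral is nonempty---it is all of $A$ when $A$ is convex, and a nonempty open region near the unique reflex vertex otherwise---so a valid $S$ exists, and I would moreover choose $S$ generically within the kernel, avoiding the finitely many codimension-one ``coincidence'' loci identified below.

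For infinitesimal rigidity, $W_4$ is a textbook example of a generically globally rigid graph: it is $3$-connected (removing any two vertices leaves it connected), and a quick Laman count verifies that deleting any single edge leaves a minimally rigid $7$-edge graph on $5$ vertices, so $W_4$ is redundantly rigid. By the Jackson--Jord\'an theorem, $W_4$ is therefore generically globally rigid in the plane, and a generic $S$ makes our specific realization a regular point.

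The core of the proof is verifying global rigidity at our specific $S$ directly. Let $C'$ be any configuration of $W_4$ with the same edge lengths. By a Euclidean motion, normalize so that $S$ sits at the origin and $A_1$ lies on the positive $x$-axis in both $C'$ and the original $C$. Each $A_i$ is then forced onto the circle of radius $|SA_i|$ about the origin, and the edge length $|A_iA_{i+1}|$ determines the magnitude $\alpha_i := \angle A_iSA_{i+1} \in (0,\pi)$ while leaving the signed direction $\varepsilon_i \in \{+1,-1\}$ free. The closure condition on $C'$ reads $\sum_{i=1}^{4} \varepsilon_i \alpha_i \equiv 0 \pmod{2\pi}$. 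The original configuration realizes the sign pattern $(+,+,+,+)$ with sum $2\pi$, and $(-,-,-,-)$ sums to $-2\pi$ and corresponds to reflecting $C$ across the $x$-axis---hence is congruent to $C$. Each of the remaining fourteen sign patterns imposes a nontrivial $\bZ$-linear relation $\sum_i \varepsilon_i \alpha_i \in \{0, \pm 2\pi\}$ among the $\alpha_i$, and these are exactly the loci that the choice of $S$ was required to avoid.

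The main obstacle is the bookkeeping in the last step: verifying that each of those fourteen excluded relations is genuinely codimension one as $S$ varies within the kernel, so they can all be dodged simultaneously. A short enumeration should show each forbidden pattern collapses to an easily described degeneracy---a zero or straight angle at $S$, or collinearity of three of the five points---all codimension-one conditions on $S \in \bR^2$; since the kernel contains a $2$-dimensional open set, such an $S$ exists.
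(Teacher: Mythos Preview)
Your argument is correct in substance and arrives at the same key criterion as the paper, but packages it differently. Your sign-pattern analysis shows that an alternative configuration exists precisely when some proper nonempty subset $I\subset\{1,2,3,4\}$ of the central angles satisfies $\sum_{i\in I}\alpha_i=\pi$; this is exactly Kawasaki's flat-foldability criterion for a degree-$4$ vertex, which the paper simply invokes by name. So the combinatorial core is the same. One small inaccuracy: the forbidden locus $\alpha_1+\alpha_3=\pi$ (equivalently $\alpha_2+\alpha_4=\pi$) is \emph{not} a collinearity condition on $S$ and the $A_i$; it is a genuine curve in the kernel. Your codimension-$1$ claim still holds---each equation is a nonconstant real-analytic condition on~$S$---but your stated geometric description of the loci is incomplete. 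The appeal to Jackson--Jord\'an for $W_4$ is also unnecessary, since your direct enumeration already establishes global rigidity at the chosen point.

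The paper takes a more constructive route: rather than arguing by genericity, it chooses the Steiner point $Q$ explicitly near a boundary edge (after relabeling so that any reflex vertex is opposite), so that one central angle $\angle A_1QA_2$ is close to $180^\circ$ while each of the other three lies strictly between $180^\circ-\angle A_1QA_2$ and $180^\circ$. This inequality immediately rules out every subset summing to $180^\circ$, verifying Kawasaki's criterion by hand with no dimension counting. What this buys is an explicit, uniform recipe that works for every simple quadrilateral without needing to check that the forbidden curves do not cover the kernel; your approach buys a cleaner conceptual picture (the sign-pattern enumeration) at the cost of leaving that last verification as an exercise.
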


\begin{proof}
  If any of the four angles of $A$ is $180^\circ$ or greater, relabel so that this angle is at $A_3$. This means any point in the interior of $A$ within some distance $d > 0$ from $A_1$ is visible to all four of $A$'s vertices inside the quadrilateral.

  Let $P$ be the point on side $A_1A_2$ at distance $d/2$ from $A_1$, so that the three angles $\angle A_2PA_3$, $\angle A_3PA_4$ and $\angle A_4PA_1$ lie strictly between $0^\circ$ and $180^\circ$. By continuity, there is a point $Q$ near $P$ in the interior of $A$ such that $\angle A_1QA_2$ is close to $180^\circ$ and the three angles $\angle A_2QA_3$, $\angle A_3QA_4$ and $\angle A_4QA_1$ lie strictly between $180^\circ - \angle A_1QA_2$ and $180^\circ$. We may also assume $Q$ is close to $A_1$ and is therefore visible to all four vertices of $A$.

  We claim that the triangulation $T$ using Steiner point~$Q$ is globally rigid. If not, then by Kawasaki's criterion~\cite{kawasaki1989relation} for flat-foldable single vertex crease patterns, it must be the case that some subset of the four angles at $Q$ sum to $180^\circ$. But $Q$ was chosen to ensure that this condition is false, since adding $\angle A_1QA_2$ to any of the other three angles results in more than $180^\circ$.
\end{proof}

\begin{figure}[hbt]
  \centering
  \mysubfigure
  {.48\textwidth}
  {\includegraphics{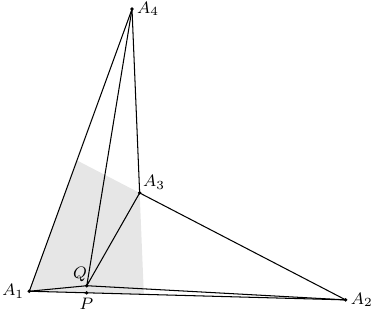}}
  {All points in the gray region are visible to the four vertices of quadrilateral $A$. Any point $Q$ close enough to $P$ renders this $4$-triangle triangulation of $A$ globally rigid.}
  {fig:rigidify-quad}
  \hfill
  \mysubfigure
  {.48\textwidth}
  {\includegraphics{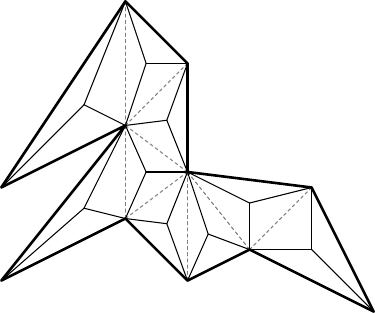}}
  {Rigidifying each quadrilateral in this decomposition using Lemma~\ref{lem:rigidify-quad} rigidifies the entire graph, as shown in Lemma~\ref{lem:rigidifying-polygons}.}
  {fig:rigidify-poly}
  \caption{Rigidifying a simple quadrilateral (left) and a general simple polygon (right) by a triangulation with Steiner points, as in Lemmas~\ref{lem:rigidify-quad} and~\ref{lem:rigidifying-polygons}.}
\end{figure}

\begin{lemma}
  \label{lem:rigidifying-polygons}
  If $A$ is a simple $n$-sided polygon, and points $P_1,\ldots,P_m$ interior to $A$ are specified, we may construct a triangulation of $A$ with $2n+6m-5$ Steiner points that is globally rigid as a configured graph and has a Steiner point at each of $P_1,\ldots,P_m$.
\end{lemma}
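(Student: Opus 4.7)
The plan is to decompose $A$ into quadrilaterals whose vertex set contains every $P_i$, apply Lemma~\ref{lem:rigidify-quad} to each quadrilateral, and show the union inherits global rigidity. First I would triangulate $A$ using $\{P_1,\ldots,P_m\}$ as interior vertices (possibly adding some parity Steiner vertices on the boundary or interior), producing roughly $n+2m-2$ triangles by Euler's formula; then pair adjacent triangles across shared edges to form quadrilaterals; then apply Lemma~\ref{lem:rigidify-quad} to each quadrilateral, inserting one interior Steiner per quad together with the associated four-triangle sub-triangulation. The resulting graph is a triangulation of $A$ in which each $P_i$ appears as a vertex, hence as a Steiner point.

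The central obstacle is global rigidity of the combined graph. Global rigidity is not preserved under gluing of two globally rigid pieces that share only a single edge (two vertices), because a reflection across the shared edge produces another valid configuration with the same edge lengths. I would overcome this by extending the Kawasaki-based argument from the proof of Lemma~\ref{lem:rigidify-quad} globally: any alternative configuration of the full triangulation must differ from the original by a family of triangle flips across shared interior edges, and at every interior vertex such a flip requires some subset of the incident angles to sum to exactly $180^\circ$. By choosing each added Steiner in sufficiently generic position (as in the proof of Lemma~\ref{lem:rigidify-quad}), I can ensure that at every interior vertex of the full triangulation---including the $P_i$'s, the per-quad Steiners, and any seam Steiners---no subset of incident angles sums to $180^\circ$; this rules out all nontrivial reconfigurations. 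An alternative, more combinatorial route is induction on the quads: order the attachment so that each newly glued quad shares at least three non-collinear vertices with the already-attached region (e.g., by inserting extra Steiners on shared seams), and use that three noncollinear points determine a Euclidean isometry uniquely, so the two congruences must agree.

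Finally, the Steiner count $2n+6m-5$ is verified by summing the contributions from each construction step: the $m$ specified points $P_i$, the additional Steiners added during triangulation and pairing (including boundary vertices introduced for parity or to enforce the three-vertex-overlap condition on seams), and the one interior Steiner per quadrilateral contributed by Lemma~\ref{lem:rigidify-quad}. The delicate part is arranging the decomposition so these contributions fit exactly within the stated budget, but a careful census following the outline above yields $2n+6m-5$.
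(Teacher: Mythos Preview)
Your outline has the right ingredients—quad decomposition, Lemma~\ref{lem:rigidify-quad}, and Kawasaki's criterion—but you are missing the key construction step that makes both the global-rigidity argument and the Steiner count go through cleanly. The paper does \emph{not} pair the triangles of the initial triangulation $T_1$ directly into quads. Instead it first subdivides every triangle of $T_1$ at its centroid (producing $T_2$), and only then deletes the $n+3m-3$ \emph{original} interior edges to obtain a mixed triangle/quad decomposition $T_3$. The point is that each centroid now has degree exactly~$3$ in $T_3$, so after rigidifying each quad via Lemma~\ref{lem:rigidify-quad}, Kawasaki's criterion at a degree-$3$ vertex (three medians, no two collinear) immediately rules out any nontrivial fold around that centroid. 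Because centroids of adjacent original triangles share an entire rigidified quad—a two-dimensional piece with three noncollinear vertices, not merely an edge—rigidity propagates through the whole assembly by connectivity. No Kawasaki argument is ever needed at the $P_i$.

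Your direct-pairing route runs into exactly the obstacle you flag: you would need the ``no angle subset sums to $180^\circ$'' condition at \emph{every} interior vertex, including the prescribed points $P_i$. But the $P_i$ are fixed, and the angles there are determined by your chosen triangulation edges together with the per-quad Steiners; you do not argue that this freedom always suffices (for instance, a subset of the triangulation angles at some $P_i$ might already sum to $180^\circ$ before any Steiners are placed). The centroid trick sidesteps this entirely. It also makes the Steiner count exact: $m$ prescribed points, $n+2m-2$ centroids, and $n+3m-3$ per-quad Steiners from Lemma~\ref{lem:rigidify-quad}, totaling $2n+6m-5$. Your outline's census does not produce this number—direct pairing yields roughly $(n+2m-2)/2$ quads, and the unspecified ``parity'' and ``seam'' Steiners cannot account for the difference.
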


\begin{proof}
  Let $T_1$ be a triangulation of $A$ with Steiner points at $P_1,\ldots,P_m$; there are $n+2m-2$ triangles and $n+3m-3$ interior edges in this decomposition. Subdivide each triangle at its centroid to obtain triangulation $T_2$, and delete the original $n+3m-3$~interior edges from $T_2$ to obtain a subdivision $T_3$ of polygon $A$ into triangles and quadrilaterals, as illustrated in Figure~\ref{fig:rigidify-poly} (for the special case $m=0$). Finally, apply Lemma~\ref{lem:rigidify-quad} to each quadrilateral in $T_3$ to obtain the final triangulation $T$. The three triangles or quadrilaterals meeting at each centroid are individually globally rigid, so the union of these three pieces is globally rigid again by Kawasaki's criterion: three creases around a vertex, no two collinear, are insufficient for a nontrivial single-vertex flat folding. Applying this reasoning around each centroid shows that the entire triangulation $T$ is globally rigid, as desired.
\end{proof}

This Lemma will be used when simulating rigidified subtrees in Section~\ref{sec:noncrossing-simulation}.

\subsection{Rigidifying Polyominoes}

When the partially rigidified tree $(H,C_H)$ that we wish to simulate has \emph{integer} coordinates, we will use a more refined grid which uses only rational coordinates and small rational edge lengths. To accomplish this, we show in this section that any polyomino, after scaling up by a factor of $1440$, can be rigidified with only integer coordinates and constant-sized integer edge lengths. (A \term{polyomino} is a polygon with connected interior formed as the union of a finite set of squares in the standard unit-square tiling of~$\bR^2$.) We begin by rigidifying a single square:

\begin{figure}[hbt]
  \centering
  \begin{subfigure}[c]{.48\textwidth}
    \centering
    \includegraphics{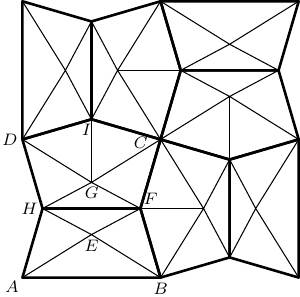}
    \caption{
      A globally rigid graph $\Gcell$ with integer coordinates and integer edge lengths in the shape of a $1440\times 1440$ square with indents.
      \label{fig:rigid-monomino}
    }
  \end{subfigure}
  \hfill
  \begin{subfigure}[c]{.48\textwidth}
    \centering
    \includegraphics{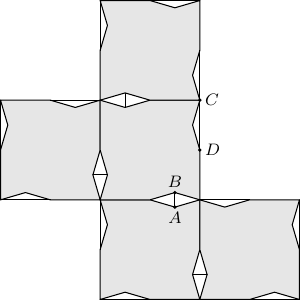}
    \caption{
      Multiple copies of $\Gcell$ can be joined into a globally rigid polyomino.
      \label{fig:thicken-with-polyomino}
    }
  \end{subfigure}
  \caption{Any polyomino made of $1440\times 1440$ squares can be turned into a globally rigid graph with integer coordinates and integer edge lengths.}
  \label{fig:ridigify-polyominoes}
\end{figure}

\begin{lemma}
  \label{lem:rigidify-square}
  The configured graph~$\Gcell$ shown in Figure~\ref{fig:rigid-monomino}, which has the shape of a $1440\times 1440$ square with small \term{indents} on the edges, is globally rigid. The vertex coordinates and edge lengths are all integers.
\end{lemma}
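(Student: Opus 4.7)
The plan is to give an explicit description of the graph $\Gcell$ pictured in Figure~\ref{fig:rigid-monomino}, then verify the two required properties---integer edge lengths and global rigidity---separately.

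First I would enumerate the vertices of $\Gcell$: the four outer corners of the $1440\times 1440$ square, the extra boundary vertices on each side that form the indents (four shallow notches of some small integer depth, one per side, positioned so that adjacent cells can interlock as in Figure~\ref{fig:thicken-with-polyomino}), and a handful of interior Steiner points placed at carefully chosen lattice positions. I would then list the edges drawn in the figure, obtaining a triangulation of the indented square. Because every vertex lies at an integer point, verifying that each edge has integer length reduces to checking, edge by edge, that its horizontal and vertical displacements $(a,b)$ give an integer $\sqrt{a^2+b^2}$. This is exactly where the factor $1440 = 2^5\cdot 3^2\cdot 5$ is useful: it is simultaneously a multiple of $3$, $4$, $5$, $12$, $15$, and $8$, so the Pythagorean triples $(3,4,5)$, $(5,12,13)$, and $(8,15,17)$ can all be realized at convenient scales inside the cell and reused for the interior diagonals and for the edges that cut across the indents. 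I would present a short table of the relevant $(a,b,\sqrt{a^2+b^2})$ triples to confirm integrality in one sweep.

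Next I would establish global rigidity. The construction is a triangulation of a simple polygon, so I can invoke the same strategy as in Lemmas~\ref{lem:rigidify-quad} and~\ref{lem:rigidifying-polygons}: every individual triangle is globally rigid (side lengths determine it up to reflection), and it suffices to check that reflecting any single triangle across one of its edges would conflict with the combinatorial data at an interior Steiner point. By Kawasaki's criterion~\cite{kawasaki1989relation}, such a local flip would require some subset of the angles meeting at that Steiner point to sum to exactly $180^\circ$. I would therefore enumerate the interior Steiner points of $\Gcell$, list the integer-coordinate triangles incident to each, and for each one check (using the explicit coordinates) that no subset of the incident angles sums to $180^\circ$. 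Because the Steiner points have been positioned in generic-looking lattice locations rather than on any axis of symmetry of their stars, these checks come out cleanly; verifying them amounts to computing a few dot products and cross products of integer vectors.

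The main obstacle is the delicate dual requirement: the Steiner points must have integer coordinates whose distances to all their neighbors are integers (forcing them to live on a sparse sublattice of near-Pythagorean configurations), \emph{and} the angles they create must avoid the $180^\circ$-sum condition of Kawasaki. In the freely-movable setting of Lemma~\ref{lem:rigidify-quad} this was achieved by a small continuous perturbation, a move unavailable on the integer lattice. The remedy is the generous scale factor $1440$: it leaves enough room to pick each Steiner point from among many Pythagorean-compatible lattice positions, so that we can satisfy the angle condition by choice rather than by perturbation. Once the explicit coordinates have been fixed and the two checks above are done, the lemma follows.
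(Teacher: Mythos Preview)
Your plan follows the same template as the paper---explicit integer coordinates, Pythagorean-triple edge-length checks, then Kawasaki's criterion---and that is indeed how the paper proceeds. Two differences are worth flagging. First, the paper exploits a $90^\circ$ rotational symmetry of $\Gcell$ about its center, so only one quadrant's vertices and edge lengths need to be written down and verified; this is a significant labor-saving device you should adopt. (The triples that actually appear, incidentally, are scalings of $(7,24,25)$, $(8,15,17)$, and $(28,45,53)$, not quite the list you guessed.)

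Second, and more substantively, your global-rigidity argument has a gap. You assert that it suffices to check, at each interior Steiner point, that no subset of the incident angles sums to $180^\circ$. Kawasaki's criterion does characterize when a \emph{single}-vertex crease pattern flat-folds, and this is exactly what proves global rigidity for a region triangulated with \emph{one} interior Steiner point (as in Lemma~\ref{lem:rigidify-quad}). But $\Gcell$ has several interior vertices, and passing the Kawasaki test at each one separately does not by itself rule out alternative embeddings---an alternative configuration need not arise from reflecting a single triangle. The paper handles this by a hierarchical argument: it first isolates a pentagon with one Steiner point $G$ and a quadrilateral with one Steiner point $E$, proves each globally rigid via Kawasaki, and then observes that these rigid pieces (and their rotational copies) meet only at degree-$3$ junctions such as $F$, where three noncollinear creases never admit a nontrivial flat fold. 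Your plan should incorporate this inductive gluing structure rather than a flat per-vertex check.
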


\begin{proof}
  To specify $\Gcell$ graph in more detail, the labeled vertices have coordinates
  \begin{gather*}
    A = (0,0), \qquad  B=(720,0), \qquad  C = (720,720), \qquad  D=(0,720), \qquad  E=(360,224),\\
    F=(615,360), \qquad  G=(360,496), \qquad  H=(105,360), \qquad  I=(360,825).
  \end{gather*}
  In particular, pentagon $HFCID$ and quadrilateral $ABFH$ have a vertical line of symmetry, and $AEB$ and $DGC$ are reflections of each other through $HF$. The edge lengths are
  \begin{gather*}
    AB=720,\qquad AE=424,\qquad AH=375,\qquad EF=289,\qquad GI=329,\qquad HF=510.
  \end{gather*}
  The rest of the coordinates may be computed by the $90^\circ$-degree rotational symmetry around $C$, and all distinct edge lengths are listed above. Note that $A,E,F$ are not quite collinear, and similarly for $H,G,C$.

  To show that $\Gcell$ is globally rigid, we again make repeated use of Kawasaki's criterion. First, the five triangles forming pentagon $HFCID$ (with Steiner point $G$) form a globally rigid subgraph: indeed it may be checked that $\angle HGF + \angle FGC = 180^\circ + \arcsin\frac{60}{901} > 180^\circ$ and $\angle HGF + \angle CGI = 180^\circ + \arcsin\frac{525}{15317} > 180^\circ$, so any subset of the five angles at $G$ that includes $\angle HGF$ cannot add to exactly $180^\circ$. The four triangles forming quadrilateral $ABFH$ form a globally rigid subgraph as well, because the crease pattern around $E$ is congruent to a subset of the crease pattern around $G$. The rotationally symmetric copies of pentagon $HFCID$ and $ABFG$ are likewise globally rigid. Finally, the quadrilateral and two pentagons meeting at~$F$ are globally rigid together because degree-$3$ crease patterns have no nontrivial flat foldings, and applying this reasoning four times around the square shows that all of $\Gcell$ is globally rigid.
\end{proof}

\begin{lemma}
  \label{lem:rigidifying-polyominoes}
  Any polyomino $P$ made of $1440\times 1440$ squares can be triangulated (allowing Steiner points and edge subdivision) into a globally rigid triangulation that has integer coordinates and constant-sized integer edge lengths.
\end{lemma}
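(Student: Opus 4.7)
The plan is to tile the polyomino $P$ with translated copies of the rigid cell $\Gcell$ from Lemma~\ref{lem:rigidify-square} and add a constant number of \emph{locking} edges across each shared boundary so that the union is globally rigid. First I would place a translated copy $\Gcell^{(i)}$ at each $1440\times 1440$ unit square of $P$, identifying along each shared edge the three collinear boundary vertices contributed by both cells (the two corners and the midpoint). Since $\Gcell$ has integer coordinates and integer edge lengths, the resulting graph inherits integer coordinates, and all edge lengths remain bounded by a universal constant.

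The heart of the argument is verifying global rigidity. Each cell $\Gcell^{(i)}$ is individually globally rigid, so in any configuration of the combined graph the restriction to $\Gcell^{(i)}$ must be a Euclidean image $\phi_i\circ C_i^{\mathrm{orig}}$ of the original. For two adjacent cells sharing three collinear vertices, the relative transformation $\phi_i^{-1}\circ\phi_j$ must fix those three collinear points, leaving only two possibilities: the identity or the reflection across the shared edge line. To rule out the reflection, I would add a short locking edge across each interior cell boundary, connecting an indent-tip vertex of one cell to a matching indent-tip vertex of its neighbor. For two horizontally adjacent cells, for instance, the indent-tip $(1216,360)$ of one cell (the $90^\circ$-rotation of $E$ around $C$) lies at horizontal distance $329$ from the indent-tip $(1545,360)$ of the right neighbor (the translate of $H$)---exactly matching the existing $\Gcell$ edge length $GI=329$---so the locking edge has integer endpoints and integer length. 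A reflection of the neighbor across $x=1440$ would send $(1545,360)$ to $(1335,360)$, shortening the locking edge to length $119\neq 329$ and violating the constraint. Thus $\phi_i=\phi_j$ for every adjacent pair, and induction along a spanning tree of the cell adjacency graph forces a single global Euclidean transformation, establishing global rigidity of the whole construction.

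The main obstacle will be integrating the locking edges into a valid planar triangulation of $P$ and handling horizontal and vertical adjacencies uniformly. The latter is facilitated by the $90^\circ$ rotational symmetry of $\Gcell$, which makes vertically adjacent cells look exactly like horizontally adjacent ones after rotating by $90^\circ$, so the same locking-edge construction applies verbatim (with swapped roles of $x$ and $y$). For the former, I may need to insert a small constant number of Steiner points per shared interior edge---for instance, subdividing the locking edge at its midpoint, or adding a thin quadrilateral-to-triangulation rigidified via Lemma~\ref{lem:rigidify-quad}---so that the locking edges fit into the triangulation without crossings. Such a refinement preserves all the desired properties: integer vertex coordinates, constant-sized integer edge lengths, and a total vertex count linear in the number of cells of $P$.
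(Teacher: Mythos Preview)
Your high-level framework matches the paper's---tile $P$ with copies of $\Gcell$ and add short locking edges across each shared boundary---but the execution has a genuine gap. The crucial step you omit is that the paper \emph{reflects every other cell}: because $\Gcell$ has $90^\circ$ rotational symmetry but not reflective symmetry, the indent on the right side of a cell lies in the $y\in[720,1440]$ half of that edge, while the indent on the left side of a \emph{translated} neighbor lies in the $y\in[0,720]$ half. Under pure translation the two indents do not meet; instead you get two separate triangular pockets along every interior boundary, and no single short segment can sit in an ``aligned indent'' the way the paper's length-$210$ edge does. With alternating reflection the two indent triangles share a common base and their apexes $(1335,1080)$ and $(1545,1080)$ are exactly $210$ apart, which is where the paper's locking edge lives.

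Your specific locking edge also fails as written. The point $(1216,360)$ is the rotation of the \emph{interior Steiner point}~$E$, not an indent vertex; it sits strictly inside the rotated quadrilateral $A'B'F'H'$ on the right of cell~1, whose boundary edge $A'B'$ (the rotation of $AB$) runs from $(1440,0)$ to $(1440,720)$. Your horizontal segment from $(1216,360)$ to $(1545,360)$ crosses that edge at $(1440,360)$, so it is not a legal triangulation edge. Your second endpoint $(1545,360)$ \emph{is} the translated indent apex~$H$, so the intuition is half right, but the edge cannot be drawn without crossing, and the two misaligned pockets still need to be filled and argued rigid. The alternating-orientation trick resolves all of these issues simultaneously.
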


\begin{proof}
  Place a copy of $\Gcell$ in each cell of polyomino $P$, swapping orientation for every other cell so that adjacent cells have aligned indents. A short edge of length $210$ within the indents---such as edge $A B$ in Figure~\ref{fig:thicken-with-polyomino}---renders each pair of adjacent cells globally rigid. Because $P$ is a polyomino which is connected via its edges, the whole assembly is thus globally rigid. Finally, indents along $P$'s boundary can be covered with edges of length $360$, as in Figure~\ref{fig:thicken-with-polyomino}.
\end{proof}

We show in the next section how to use this method to simulate integer-length rigidified subtrees.

\subsection{Simulation with Globally Noncrossing Linkages}
\label{sec:noncrossing-simulation}

\begin{construction}
  \label{con:noncrossing-main-theorem}
  Use notation as in Theorem~\ref{thm:main-theorem}: we are given a collection of polynomials $F = \{f_1,\ldots,f_s\}$ in $2m$ variables, each of total degree at most $d$, from which Theorem~\ref{thm:main-theorem} constructs a partially rigidified linkage $\cL = \cL(F)$.

  Under Part~I of Theorem~\ref{thm:main-theorem} (i.e., with no additional assumptions about $F$), we may construct a globally noncrossing linkage~$\cM = \cM(F)$ without constraints that perfectly simulates the constrained linkage~$\cL$.

  If the Part~II assumption holds, meaning the $f_j$ have integer coefficients with absolute value at most $M$, then $\cM$ may be constructed from $\cL$ in time $O(\poly(m^d,d^d,s,M))$, and furthermore, each edge length of $\cM$ is rational, with size at most $D$ and denominator dividing $28800D$.

  Finally, if the Part~III assumption also holds, i.e., each $f_j$ satisfies $f_j(\vec 0) = 0$ resulting in a configuration $C_0$ of $\cL$, then the corresponding configuration of $\cM$ has rational coordinates of magnitude $O(\poly(m^d,d^d,s,M))$ with denominators dividing $28800D$.
\end{construction}

\begin{figure}[h]
  \centering
  \mysubfigure
  {.48\textwidth}
  {\includegraphics{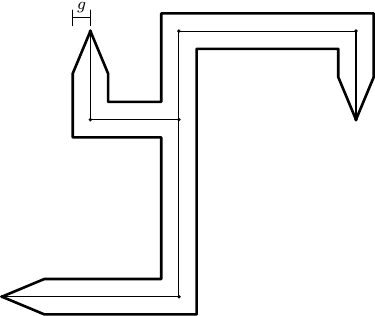}}
  {Thickening a partially rigidified subtree $(H,C_H)$ by radius $g$, with wedges at each leaf.}
  {fig:thicken-poly}%
  \hfill
  \mysubfigure
  {.48\textwidth}
  {\includegraphics{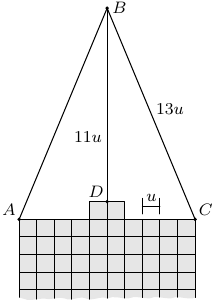}}
  {Rigidifying the thickened tree using a polyomino and $5$-$12$-$13$ triangles at the leaves.}
  {fig:thicken-poly-closeup}%
  \caption{Thickening a rigidified subtree $(H,C_H)$ into a globally rigid polygon.}
  \label{fig:simulate-globally-noncrossing}
\end{figure}

\begin{proof}

  Consider one of the rigid constraints $\RigidCon(H,C_H)$ on $\cL$, and draw a polygon $P$ that thickens $(H,C_H)$ by distance $g := 1/(4D)$ in each direction, with angled wedges smaller than $60^\circ$ at each leaf vertex, as shown in Figure~\ref{fig:thicken-poly}. Now apply the construction of Lemma~\ref{lem:rigidifying-polygons} to this polygon $P$ to obtain a globally rigid triangulation $T$ that has a Steiner point at each interior vertex of $C_H$, and replace subgraph $H$ of $\cL$ with this triangulation $T$. Because tree $(H,C_H)$ has at least three noncollinear vertices (Property~\ref{thmpart:orthogonal-trees}), globally rigid triangulation $T$ perfectly simulates the constraint $\RigidCon(H,C_H)$: configurations of $T$ are in bijection with configurations of rigidified tree $(H,C_H)$ when considered in isolation. (The Steiner points chosen for $T$ are there to satisfy the definition of \emph{simulate}: $T$ needs a vertex corresponding to each vertex of $H$.) Let $\cM$ be the linkage that results after performing this replacement for each rigid constraint on $\cL$. One rigidified tree was originally pinned in $\cL$ (by Property~\ref{thmpart:pins}); keep these three pins in the corresponding locations in the triangulation $T$ built from this tree. Because each tree is perfectly simulated, $\cM$ perfectly simulates $\cL$.

  To see that $\cM$ is globally noncrossing, note that Property~\ref{thmpart:corners} of Theorem~\ref{thm:main-theorem} guarantees that the $60^\circ$ wedges at leaf nodes do not intersect locally around their shared vertices, and because $g$ is less than half the minimum feature size of any configuration of $\cL$ (by Property~\ref{thmpart:gmfs}), the thickened trees do not intersect elsewhere.

  If Part~II holds, we construct globally rigid graph $T$ from each rigidified tree $(H,C_H)$ more carefully, by building a rigidified polyomino with Lemma~\ref{lem:rigidifying-polyominoes}, as follows. Each edge length in $\cL$ is an integer multiple of~$g$, by Property~\ref{thmpart:rational-edge-lengths}. Thicken each rigidified tree $(H,C_H)$ by distance $g$ into a polygon $P$ as above, where at each leaf of $(H,C_H)$, $P$ forms a wedge with angle $2\arcsin\frac{5}{13} < 60^\circ$. Now we can fill polygon $P$ with a polyomino made from cells of side-length $u = g/5$, where the wedge at each leaf is attached with three edges of length $13u$, $11u$, and $13u$ respectively, as shown in Figure~\ref{fig:thicken-with-polyomino}. By rigidifying the polyomino with Lemma~\ref{lem:rigidifying-polyominoes}, the resulting graph $T$ is a globally rigid thickening of tree $(H,C_H)$ that has edge lengths in $\frac{1}{t}\bZ\cap(0,\frac{1}{2}]$ and rational coordinates in $\frac{1}{t}\bZ$ of magnitude at most $O(\poly(m^d,d^d,s,M))$, where $t := 1440\cdot 5\cdot 4D = 28800D$. All edges of $\cL$ not contained in a rigidified tree, which appear unmodified in $\cM$, already have length at most $D$ by Property~\ref{thmpart:rational-edge-lengths}, proving the Part~II claim.

  Finally, for Part~III, first consider the nodes of $\cL$ that are not interior to a rigidified subtree, i.e., are either a leaf of one or more rigidified trees or are not incidient to any such tree (by Property~\ref{thmpart:orthogonal-trees}). These nodes of $\cL$ have corresponding nodes in $\cM$ configured in the same locations, so these nodes of $\cM$ are configured to have rational coordinates in $\frac{1}{D}\bZ$ by Property~\ref{thmpart:rational-coords}. Pairing this fact with Property~\ref{thmpart:orthogonal-induced-trees} implies that all remaining nodes of $\cM$, which are interior to rigidified thickened trees built from polyominoes, are indeed configured to have rational coordinates in $\frac{1}{t}\bZ$, proving the claim.
\end{proof}

\subsection{Hardness and Universality of Globally Noncrossing Linkages}
\label{sec:noncrossing-results}

We may finally prove the desired hardness and universality results about globally noncrossing graphs and linkages.

\begin{theorem}[Hardness of Globally Noncrossing Realizability]
  \label{thm:noncrossing-realizability}
  Deciding whether a given abstract weighted graph $\cG$ is realizable, even when $\cG$ is promised to be globally noncrossing and to have constant-sized integer edge lengths, is $\exists\bR$-complete.
\end{theorem}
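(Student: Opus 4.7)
My plan splits into $\exists\bR$-membership and $\exists\bR$-hardness. For membership, I would observe that realizability of a weighted graph $\cG = (V, E, \ell)$ is encoded directly as the $\ETR$ sentence
\[
\exists\,\{(x_v, y_v)\}_{v \in V} \subset \bR \colon\ \bigwedge_{uv \in E} (x_u - x_v)^2 + (y_u - y_v)^2 = \ell(uv)^2,
\]
yielding an immediate polynomial-time reduction to the existential theory of the reals.

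For hardness, I would reduce from the strengthened $\CommonZero$ problem of Theorem~\ref{thm:common-zero}: instances are sets $F = \{f_1, \ldots, f_s\}$ of polynomials of total degree at most $4$ in some number $k$ of real variables, with coefficients in $\{0, \pm 1, \pm 2\}$, and whose common zeros are promised to lie in $[-1, 1]^k$. Setting $m = \lceil k/2 \rceil$, I would interpret $F$ as polynomials in the $2m$ variables $(x_1, y_1, \ldots, x_m, y_m)$ of the Main Theorem by letting $2m - k$ of these variables be dummies that do not appear in any $f_j$. Applying Theorem~\ref{thm:main-theorem} Parts~I and~II (which apply since $d = 4$ and $M \le 2$ are constants) produces in polynomial time a partially rigidified linkage $\cL(F)$, a translation $T$, and a vertex list $X$ satisfying
\[
T(Z(F) \cap [-1,1]^{2m})\ \subseteq\ \pi_X(\Conf(\cL(F)))\ \subseteq\ T(Z(F)).
\]

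Next I would invoke Construction~\ref{con:noncrossing-main-theorem} to convert $\cL(F)$ into an unconstrained globally noncrossing linkage $\cM(F)$ that perfectly simulates $\cL(F)$, with edge lengths that are rationals of size at most $D$ whose denominators divide $28800D$. A single global scaling by $28800D$ yields integer edge lengths, and discarding the three pins (which only fix a Euclidean frame) produces the desired abstract weighted graph $\cG(F)$. The correctness chain is then: $\cG(F)$ is realizable iff $\cM(F)$ is, iff $\cL(F)$ is (by perfect simulation), iff $\pi_X(\Conf(\cL(F))) \ne \emptyset$, iff $Z(F) \cap [-1,1]^{2m} \ne \emptyset$ (by the trace inclusions), iff $Z(F) \ne \emptyset$ (since the original zeros lie in $[-1,1]^k$ by promise and the dummy coordinates can be chosen freely in $[-1,1]^{2m-k}$).

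The main obstacle will be Theorem~\ref{thm:main-theorem} itself, the heart of the paper, whose proof is deferred to Section~\ref{sec:main-construction}; once granted, the present reduction is a direct assembly of the Main Theorem, Construction~\ref{con:noncrossing-main-theorem}, and Theorem~\ref{thm:common-zero}. A secondary subtlety is verifying that the final edge lengths are truly \emph{constant-sized} rather than merely polynomially bounded, which hinges on the construction in Section~\ref{sec:main-construction} using only an absolute-constant-sized palette of edge-length primitives (determined solely by the fixed degree $d = 4$) together with the fixed small integer edges produced by Lemmas~\ref{lem:rigidify-square} and~\ref{lem:rigidifying-polyominoes}.
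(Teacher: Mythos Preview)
Your proposal is correct and follows essentially the same approach as the paper: reduce from the strengthened $\CommonZero$ of Theorem~\ref{thm:common-zero}, apply Construction~\ref{con:noncrossing-main-theorem} (which packages the Main Theorem), scale by $28800D$ to clear denominators, and drop the three pins. Your treatment is slightly more explicit than the paper's in two places---padding an odd number of variables up to $2m$ with dummies, and flagging that ``constant-sized'' edge lengths require $D$ to be an absolute constant (which indeed it is, by the explicit value given in Theorem~\ref{thm:main-construction-works})---but the argument is otherwise identical.
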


\begin{proof}
  Membership in $\exists\bR$ is evident. For hardness, we reduce from \CommonZero\ as strengthened in Theorem~\ref{thm:common-zero}. Given an instance $F = \{f_1,\ldots,f_s\}$, which we may assume have $d=4$ and $M=2$, apply Construction~\ref{con:noncrossing-main-theorem} (Part II) to obtain a globally noncrossing linkage $\cM$ whose edge lengths are rational with size at most $D$ and denominators dividing $t := 28800D$. Let $\cG$ be the weighted graph that results by unpinning $\cM$'s three pins; the output of this reduction is the scaled graph $t\cdot\cG$, whose edge lengths are \emph{integers} bounded by $t\cdot D = O(1)$.

  It remains to show why $t\cdot \cG$ is realizable precisely when $Z(F)$ is nonempty. The linkage $\cM$ from Construction~\ref{con:noncrossing-main-theorem} liftably and rigidly draws a translation of $Z(F)$ (because $Z(F)\subseteq[-1,1]^{2m}$), so $F$ has a common root if and only if $\Conf(\cM)$ is nonempty, i.e., if and only if $\cM$ is realizable. By Property~\ref{thmpart:pins}, $\cM$'s pins serve only to prevent rigid transformations and do not affect realizability, so $\cM$ is realizable precisely when $\cG$ is realizable. Finally, scaling by~$t$ does not affect realizability, proving the result.
\end{proof}

\begin{theorem}[Hardness of Noncrossing Rigidity and Global Rigidity]
  \label{thm:noncrossing-rigidity}
  \label{thm:noncrossing-global-rigidity}
  Deciding whether a given configured weighted graph $(\cG,C_0)$ is rigid, when $\cG$ is promised to be globally noncrossing (so in particular, $C_0$ is noncrossing) and $C_0$ has integer coordinates and constant-sized integer edge lengths, is $\forall\bR$-complete. 
  It remains $\forall\bR$-complete if ``rigid'' is replaced by ``globally rigid''.
\end{theorem}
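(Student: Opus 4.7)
The plan is to handle both statements with a single reduction from the complement of the strengthened $\HtwoN$ problem of Theorem~\ref{thm:htwon'}, which is $\forall\bR$-complete even for homogeneous degree-$4$ systems with coefficients in $\{0,\pm1,\pm2\}$. An instance is a set $F = \{f_1,\ldots,f_s\}$ of homogeneous polynomials in $2m$ variables, and the question is whether $Z(F) = \{\vec 0\}$. Membership in $\forall\bR$ is immediate for global rigidity (``for all real configurations $C \in \Conf(\cG)$, $C$ is congruent to $C_0$''), and for rigidity it follows from the $\forall\bR$-completeness of semialgebraic isolatedness established by Schaefer~\cite{Schaefer-2013}.

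For the hardness reduction, note that homogeneity gives $f_j(\vec 0) = 0$ for every $j$, so Parts~II and~III of Theorem~\ref{thm:main-theorem} both apply. I would feed $F$ into Construction~\ref{con:noncrossing-main-theorem} to obtain, in polynomial time, a globally noncrossing linkage $\cM$ and a distinguished configuration $C_0$, with rational edge lengths and coordinates whose denominators all divide $t = 28800D$. Scaling by $t$ clears denominators and produces the output $(\cG, t \cdot C_0)$, where $\cG$ is the underlying weighted graph obtained after discarding the three pins. The pins are safe to drop because they are noncollinear and pin exactly the three-dimensional group of Euclidean motions, whose freedom is restored uniformly by the graph-level definitions of rigidity and global rigidity; hence the (global) rigidity status of $(\cG, t \cdot C_0)$ matches that of the pinned linkage. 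All integers in the output are polynomially bounded in the unary input size, so the reduction is strong.

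Correctness uses two ingredients from the Main Theorem. First, the uniqueness Property~\ref{thmpart:uniqueness} gives $\pi_X^{-1}(T(\vec 0)) = \{C_0\}$: if $Z(F) = \{\vec 0\}$, the trace inclusion $\pi_X(\Conf(\cL)) \subseteq T(Z(F))$ collapses the trace to $\{T(\vec 0)\}$, forcing $\Conf(\cL) = \{C_0\}$ and making $\cG$ globally rigid (hence rigid), since $\cM$ perfectly simulates $\cL$. Conversely, if some $\vec a \in Z(F)$ is nonzero, homogeneity produces a whole line $\{\lambda \vec a : \lambda \in \bR\} \subseteq Z(F)$; for small $\lambda > 0$, $\lambda \vec a \in [-1,1]^{2m}$, so $T(\lambda \vec a)$ lies in the trace, and the liftable-and-rigid drawing Property~\ref{thmpart:rigid-and-liftable} lets us lift the path $\lambda \mapsto T(\lambda \vec a)$ to a nontrivial continuous motion of $\cL$ starting at $C_0$. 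This defeats both rigidity and global rigidity, so both decision questions collapse to ``is $Z(F) = \{\vec 0\}$?''.

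The principal obstacle is essentially hidden inside Theorem~\ref{thm:main-theorem} itself; once its uniqueness, lifting, integer-denominator, and feature-size guarantees are in hand, the argument above is nearly forced. The only genuinely delicate point of the reduction is leveraging homogeneity of $\HtwoN$ inputs: it is what guarantees that \emph{any} nonzero root of $F$, no matter how large, produces tiny nonzero roots inside $[-1,1]^{2m}$, keeping us safely within the region $T(Z(F) \cap [-1,1]^{2m})$ where the trace of $\cL$ is actually guaranteed to contain the relevant points.
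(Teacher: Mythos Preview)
Your proposal is correct and follows essentially the same approach as the paper: reduce from the complement of the strengthened $\HtwoN$ of Theorem~\ref{thm:htwon'}, apply Construction~\ref{con:noncrossing-main-theorem} (Part~III), scale by $28800D$, drop the three pins via Property~\ref{thmpart:pins}, and use homogeneity to get a path of roots through $\vec 0$ whenever a nonzero root exists, lifting it via Property~\ref{thmpart:rigid-and-liftable} while invoking Property~\ref{thmpart:uniqueness} in the other direction. The only cosmetic difference is that the paper phrases membership for rigidity by citing Schaefer's result that linkage rigidity itself is $\forall\bR$-complete rather than isolatedness, but this is immaterial.
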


\begin{proof}
  Schaefer~\cite{Schaefer-2013} has shown that the general linkage rigidity problem is $\forall\bR$ complete and therefore belongs to $\forall\bR$. Linkage global rigidity likewise belongs to $\forall\bR$: it may be expressed in the form \emph{for all valid configurations $C$, $|C(u)-C(v)|=|C_0(u)-C_0(v)|$ for all pairs of (not necessarily adjacent) vertices $u$ and $v$}.
  
  For hardness, we reduce from $\HtwoN$ as strengthened in Theorem~\ref{thm:htwon'}, so suppose we are given a family of homogeneous polynomials $F = \{f_1,\ldots,f_s\}$ of degree $d=4$ in variables $\{x_1,y_1,\ldots,x_m,y_m\}$ with constant-sized integer coefficients. We may use Construction~\ref{con:noncrossing-main-theorem} (Part III) to build a globally noncrossing configured linkage $(\cM,C_0)$ that liftably and rigidly draws a trace $\pi_X(\Conf(\cM))$ satisfying $T(Z(F)\cap[-1,1]^{2m}) \subseteq \pi_X(\Conf(\cM)) \subseteq T(Z(F))$ for some translation $T$.
  The result of this reduction will be the configured graph $\cG$ formed by scaling $(\cM,C_0)$ by $28800D$ and removing the three pins. Configured graph $\cG$ indeed has polynomially-bounded integer coordinates and constant integer lengths.
  
  To verify the validity of this reduction, suppose first that $Z(F)$ contains some nonzero point $\vec a$. Then $Z(F)$ contains the entire path $p\mapsto p\cdot\vec a$ starting at $\vec{0}$, so $\vec 0$ is not isolated in $Z(F)$, i.e., $T(\vec 0)$ is not isolated in $\pi_X(\Conf(\cM))$. Because $\cM$ draws liftably, $(\cM,C_0)$ is not rigid (as a linkage), and therefore $\cG$ is not rigid (as a graph). On the other hand, if $Z(F) = \{\vec 0\}$ then $\pi_X(\Conf(\cM))$ contains only the single point $\pi_X(C_0) = T(\vec 0)$, and by Property~\ref{thmpart:uniqueness} of Theorem~\ref{thm:main-theorem} (uniqueness), it follows that $\Conf(\cM) = \{C_0\}$, i.e., $\cM$ is both rigid and globally rigid as a linkage. By Property~\ref{thmpart:pins}, all configurations of $\cG$ are Euclidean transformations of configurations of $28800D\cdot\cM$, and so $\cG$ is rigid and globally rigid as a graph.
\end{proof}

\begin{theorem}[Universality of Globally Noncrossing Linkages]
  \label{thm:noncrossing-universality}
  The proper subsets $R\subsetneq\bR^2$ that may be drawn by a globally noncrossing linkage are precisely the compact semialgebraic sets.
\end{theorem}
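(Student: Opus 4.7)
The theorem has two directions.

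For necessity, suppose $R \subsetneq \bR^2$ is the trace of a linkage $(\cL,(v))$ where $\cL$ is globally noncrossing. The trace $R = \pi_{(v)}(\Conf(\cL))$ is semialgebraic by Definition~\ref{def:trace} (Tarski--Seidenberg applied to the polynomial equalities defining $\Conf(\cL)$). Furthermore, if $\cL$ had no pinned vertices, translation invariance of the edge-length constraints would make $v$'s trace either empty or all of $\bR^2$. Since $R \subsetneq \bR^2$, either $R = \emptyset$ (trivially compact) or $\cL$ has at least one pinned vertex $w_0$. In the latter case, connectedness of $G$ (our standing assumption) forces every vertex to lie within total-edge-length distance of $w_0$, so $\Conf(\cL)$ is closed and bounded, hence compact, and thus so is its continuous image $R$.

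For sufficiency, the empty case is drawn by any unrealizable globally noncrossing linkage (e.g., a triangle with edge lengths $1,1,3$, vacuously noncrossing), so assume $R \subseteq \bR^2$ is nonempty and compact semialgebraic. Lemma~\ref{lem:compact-semialg-projection} lets me write $R$ as the projection of a compact algebraic set $Z = Z(F) \subseteq \bR^{2m}$ onto the first two coordinates. Because $Z$ is bounded, substituting $f_j(\vec x) \mapsto f_j(K\vec x)$ in each $f_j \in F$ for a sufficiently large $K$ places $Z$ inside $[-1,1]^{2m}$ while rescaling $R$ to $R/K$. Theorem~\ref{thm:main-theorem} Part~I applied to this rescaled $F$ then produces a partially rigidified linkage $\cL(F)$ with drawing vertex list $X=(v_1,\ldots,v_m)$ whose trace satisfies $\pi_X(\Conf(\cL(F))) = T(Z)$ for some translation $T$ of $\bR^{2m}$ (the sandwich in the theorem collapses to equality since $Z \subseteq [-1,1]^{2m}$). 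Construction~\ref{con:noncrossing-main-theorem} Part~I then converts $\cL(F)$ into an unconstrained globally noncrossing linkage $\cM$ perfectly simulating $\cL(F)$, preserving $\pi_X(\Conf(\cM)) = T(Z)$.

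Taking the single-vertex list $(v_1)$, the trace $\pi_{(v_1)}(\Conf(\cM))$ is the projection of $T(Z)$ to the first two coordinates, namely $R/K + \vec{t}$ for some $\vec{t} \in \bR^2$. To draw $R$ itself I uniformly scale the entire linkage $\cM$ by $K$ (multiplying every edge length and pin coordinate by $K$), yielding a linkage whose $(v_1)$-trace is $R + K\vec{t}$, and then translate all three pins by $-K\vec{t}$, shifting every configuration (and every trace) by that vector. Both operations preserve graph combinatorics and the globally noncrossing property, producing a globally noncrossing linkage that draws $R$ exactly.

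The main obstacle is fully absorbed by Theorem~\ref{thm:main-theorem} and the noncrossing simulation in Construction~\ref{con:noncrossing-main-theorem}; granted these, the remainder is routine bookkeeping---representing $R$ as an algebraic projection via Lemma~\ref{lem:compact-semialg-projection}, rescaling into $[-1,1]^{2m}$, and canceling the residual translation via pin shifts. The necessity direction requires only the observation that a pin, combined with connectedness, forces compactness of the configuration space.
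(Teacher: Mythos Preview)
Your proof is correct and follows essentially the same approach as the paper's: both directions match, with Lemma~\ref{lem:compact-semialg-projection}, rescaling into $[-1,1]^{2m}$, Theorem~\ref{thm:main-theorem} Part~I, and Construction~\ref{con:noncrossing-main-theorem} doing the real work. You are simply more explicit than the paper about the final bookkeeping---scaling the linkage back up by $K$ and shifting the pins to cancel the residual translation---steps the paper leaves implicit in its phrase ``by scaling as necessary'' and ``some translation of $R'$.''
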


\begin{proof}
  Given any unconstrained linkage $\cL$ and vertex $v$, the trace of $v$ is either all of $\bR^2$ (if $\cL$ has no pins) or is the projection of compact algebraic set $\Conf(\cL)$ and is therefore compact and semialgebraic. (Recall that $\cL$ is assumed to be connected.)

  Conversely, by Lemma~\ref{lem:compact-semialg-projection}, any compact semialgebraic region $R\subset\bR^2$ may be written as the projection of some basic compact set $R' = Z(f_1,\ldots,f_s)\subset \bR^{2m}$ onto the first two variables, so it suffices (by ignoring all drawing vertices except the first) to show that some translation of $R'$ may be drawn by a globally noncrossing linkage. By scaling as necessary, we may further assume that the compact set~$R'$ lies in the box $[-1,1]^{2m}$. But this now follows directly from Construction~\ref{con:noncrossing-main-theorem}, Part~I.
\end{proof}

\headingboldmath
\section{Unit-Distance and $\{1,2\}$-Distance Graphs and Linkages}
\label{sec:unit}

\begin{definition}[Unit Distance Graphs/Linkages]
  Define an \term{abstract unit-distance graph (or linkage)} as an abstract weighted graph (or abstract linkage) where all edges have weight~$1$; a \term{configured unit-distance graph (or linkage)} additionally comes with such a configuration.
\end{definition}

With our terminology, an \term{abstract unit distance graph/linkage} does not necessarily have any valid configurations, in contrast to the more common usage of the term ``unit-distance graph''. To mitigate confusion with our overloading of this term, we will always refer to a unit-distance graph/linkage as ``abstract'' or ``configured''.

In this section we prove the strong $\exists\bR$-completeness or $\forall\bR$-completeness of realizability, rigidity, and global rigidity for unit-distance (or in the case of global rigidity, $\{1,2\}$-distance) graphs that allow crossings. We also show universality: any compact semialgebraic set in $\bR^2$ can be drawn by a unit-distance linkage. (Unit-distance graphs that do \emph{not} allow crossings, i.e., \emph{matchstick graphs}, are the topic of Section~\ref{sec:matchstick}.)

There are two noteworthy obstacles in these arguments that were not present in the previous section. First, the universality proof involves a new complication, namely, non-algebraic numbers. To illustrate, the circle $C = \{(x,y)\mid x^2+y^2=e^2\}$ (where $e$ is Euler's constant) can be drawn easily by a linkage (using a single edge of length $e$), but simulating such an edge with a unit-distance graph is impossible because $e$ is transcendental. As a workaround, we instead rely on \emph{pins} to introduce non-algebraic values. Indeed, we may slightly generalize curve $C$ by introducing new variables $(a,b)$ and considering the modified curve
\begin{equation*}
  C' = \{((x,y),(a,b))\in\bR^4 \mid x^2+y^2=a^2\}.
\end{equation*}
As $C'$ is now defined by polynomials with \emph{integer} coefficients, the Main Theorem (Part II) applies, and the resulting linkage may be simulated by a unit distance linkage using techniques to be presented below. Finally, with one pin, we may fix the values $a=e$ and $b=0$, which recovers the desired circle $C$. Suitably generalized, this argument can be made to work for arbitrary compact semialgebraic sets; see Theorem~\ref{thm:unit-universality} for details.

For the second obstacle, we were not able to prove $\forall\bR$-completeness of detecting global rigidity of unit-distance graphs. Indeed, we are not aware of the existence of \emph{any} globally rigid unit-distance graphs larger than a triangle!

\begin{question}
  Are there any globally rigid unit-distance graphs with more than $3$ edges?
\end{question}
\noindent If such a graph is found, it is likely that the methods of this paper can turn it into a proof of hardness.

As a consolation prize, we demonstrate $\forall\bR$-completeness of global rigidity for graphs with edge lengths in $\{1,2\}$, an appropriate strengthening of Saxe's result~\cite{Saxe-1979} that global rigidity is coNP-hard for graphs with edge lengths in $\{1,2\}$.

\headingboldmath
\subsection{Simulation with Unit- and $\{1,2\}$-Distance Linkages}
\label{sec:unit-simulation}

Here we show how to simulate Theorem~\ref{thm:main-theorem} using $\{1,2\}$-distance linkages.

\begin{lemma}[Reinforced Segment]
  \label{lem:reinforced-segment}
  A single edge of integer length $n$ is perfectly simulated by a \term{reinforced bar} graph formed by adjoining $n-1$ degenerate $\{1,1,2\}$-sided triangles along unit edges.
\end{lemma}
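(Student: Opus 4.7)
The plan is straightforward: label the vertices of the reinforced bar $v_0, v_1, \ldots, v_n$ in chain order, with unit edges $v_{i-1} v_i$ for $i=1,\ldots,n$ and length-$2$ edges $v_{i-1} v_{i+1}$ for $i=1,\ldots,n-1$, so the $i$-th degenerate triangle is $v_{i-1} v_i v_{i+1}$. I would designate $X = (v_0, v_n)$ and show that the projection $\pi_X : \Conf(\cL_{\subscriptfont{bar}}) \to (\bR^2)^2$ is a homeomorphism onto the configuration space of a single edge of length $n$, i.e., onto $\{(p,q) \in (\bR^2)^2 : |p-q| = n\}$.

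The core step is to exploit the degeneracy of each $\{1,1,2\}$ triangle. In any configuration, the triangle inequality $|v_{i-1}-v_i| + |v_i-v_{i+1}| \ge |v_{i-1}-v_{i+1}|$ holds with equality ($1+1=2$), so $v_i$ must lie on segment $v_{i-1} v_{i+1}$ and in fact at its midpoint. In particular $v_{i-1}, v_i, v_{i+1}$ are collinear. Chaining this through successive triangles (using $v_i \ne v_{i+1}$ to avoid ambiguity of the common line), an easy induction shows all $n+1$ vertices are collinear and the displacement vectors $v_i - v_{i-1}$ are all equal to a common unit vector $\vec{e}$. Hence $v_i = v_0 + i \vec{e}$ for every $i$, and in particular $|v_0 - v_n| = n$, matching the edge-length constraint of the single edge of length $n$.

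For the homeomorphism, I would verify each direction explicitly. Injectivity of $\pi_X$: given $(v_0, v_n)$ with $|v_0 - v_n|=n$, the vector $\vec{e} = (v_n - v_0)/n$ is forced by the analysis above, and then every interior vertex is determined by the closed-form $v_i = (1 - i/n)\, v_0 + (i/n)\, v_n$. Surjectivity onto $\{(p,q) : |p-q|=n\}$: given any such $(p,q)$, defining $v_i := (1-i/n)\, p + (i/n)\, q$ yields a valid configuration since all required edge lengths ($1$ and $2$) are met. Continuity of $\pi_X$ is automatic, and continuity of its inverse follows from the linear formula for $v_i$ in terms of $v_0, v_n$. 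So $\pi_X$ is indeed a homeomorphism, giving a perfect simulation.

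There is no real obstacle; the only thing to be careful about is that ``perfect simulation'' demands a homeomorphism of configuration spaces rather than a mere bijection on traces, which is why I would explicitly exhibit the linear parametrization $v_i = (1 - i/n)\, v_0 + (i/n)\, v_n$ and note that both $\pi_X$ and its inverse are continuous.
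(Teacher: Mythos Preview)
Your proof is correct and is exactly the argument one would expect; the paper's own proof is merely a one-line citation to Saxe~\cite{Saxe-1979}, so you have simply spelled out in full the standard degenerate-triangle collinearity argument that the paper takes for granted.
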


\begin{proof}
  This is a simple extension of a tool used in~\cite[Cor.~4.3]{Saxe-1979}.
\end{proof}

\noindent To rigidify orthogonal trees with $\{1,2\}$-graphs, it suffices to rigidify entire lattice grids:

\begin{lemma}[Reinforced Grid]
  \label{lem:reinforced-grid}
  Let $(G,C)$ be the configured graph whose vertices lie at all integer points in $[0,n]\times[0,n]$ and whose unit-length edges connect vertically and horizontally adjacent vertices in this grid. Then the rigidified graph $\cG := ((G,C),\RigidCon_G(G,C))$, where configuration $C$ is rigidified in its entirety, can be perfectly simulated by an unconstrained $\{1,2\}$-distance graph, called a \term{reinforced grid}.
\end{lemma}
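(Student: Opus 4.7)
The plan is to invoke Lemma~\ref{lem:reinforced-segment} so that edges of any positive integer length may be used (each replaced by a reinforced bar of length-$1$ and length-$2$ edges), and then to rigidify the grid by combining row/column reinforcement with a single length-$5$ Pythagorean diagonal.

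First, for each row and each column, I would add a length-$2$ edge between every pair of grid vertices at distance $2$ along that row or column. Since $1 + 1 = 2$, the triangle inequality becomes equality, forcing consecutive triples of row (or column) vertices to be collinear in every configuration, and iterating makes every row and every column a rigid line segment of length $n$. A short vector-algebra argument then shows that, for $n \ge 2$, all rows must share a common direction $\vec r$ and all columns a common direction $\vec c$: writing $\vec r_j$ for the direction of row $y=j$ and $\vec c_i$ for the direction of column $x=i$, the identity $j \vec c_0 + i \vec r_j = i \vec r_0 + j \vec c_i$ (obtained by equating the two expressions for the position of vertex $(i,j)$) yields $\vec r_j = \vec r_0 + j \vec d$ for a common correction vector $\vec d$, and requiring $|\vec r_0 + j \vec d| = 1$ for $j = 0,1,2$ forces $\vec d = \vec 0$. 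Symmetrically, all columns are parallel. The configuration space at this stage is parameterized by a rigid motion together with the angle $\phi$ between $\vec r$ and $\vec c$; only $\phi$ remains as an internal degree of freedom.

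Second, assuming $n \ge 4$, I would add a reinforced bar of length $5$ between the grid vertices $(0,0)$ and $(3,4)$. Under the above parametrization, this imposes $25 = |3\vec r + 4\vec c|^2 = 25 + 24\,\vec r \cdot \vec c$, forcing $\vec r \perp \vec c$. Combined with the row--column structure, the resulting graph is rigid up to Euclidean motion (including reflection), which exactly matches the configuration space of $\cG$. For the small cases $n < 4$, where the vertex $(3,4)$ is absent, I would first enlarge the grid to size $N = 4$ by introducing Steiner vertices at the missing integer lattice points of $[0,4]^2$, along with unit edges for new horizontal/vertical adjacencies and the same length-$2$ reinforcements; applying the Pythagorean diagonal to this enlarged grid then rigidifies everything, since the original grid vertices sit inside as a rigid sub-configuration.

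Perfect simulation holds because every auxiliary Steiner point---whether an intermediate vertex of a reinforced bar or a grid-extension vertex---has its position uniquely and continuously determined by the positions of the original grid vertices (by collinear equal spacing within reinforced bars, and by lying at a fixed integer offset along a rigid extended row or column). Hence the projection onto the original grid vertices is a homeomorphism from the configuration space of the reinforced grid onto the rigidified configuration space of $\cG$. The main technical subtlety is the parallelism propagation argument: one must verify that reinforcing rows and columns individually already forces all rows and all columns to be mutually parallel, which hinges on the fact that three unit vectors in arithmetic progression $\vec r_0, \vec r_0 + \vec d, \vec r_0 + 2\vec d$ must satisfy $\vec d = \vec 0$; this requires $n \ge 2$ and motivates the extension to $N = 4$ for smaller grids.
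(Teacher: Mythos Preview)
Your proposal is correct and follows essentially the same approach as the paper: reinforce every row and column with length-$2$ edges so each becomes a rigid segment, add a single length-$5$ Pythagorean diagonal to force orthogonality, and handle small $n$ by embedding in a larger grid. The paper's proof differs only in the final rigidity step---rather than your algebraic parallelism argument, it pins three corners by a Euclidean motion, locates the fourth corner via two distance constraints, and rules out the degenerate placement with a single row-length contradiction---but the two arguments are interchangeable.
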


\begin{proof}
  We may assume $n\ge 4$; otherwise, we apply the construction for $n=4$ and restrict attention to the smaller subgrid.
  For each $0\le j,k\le n$, let $v_{j,k}$ be the vertex of $G$ with $C(v_{j,k}) = (j,k)$. By Lemma~\ref{lem:reinforced-segment}, we may add length-$2$ edges $v_{j,k} v_{j+2,k}$ and $v_{j,k} v_{j,k+2}$ to force each row and column of vertices in $G$ to remain straight. Now add one more rigidified bar of length~$5$ connecting $v_{4,0}$ and $v_{0,3}$, which constrains row $k=0$ and row $j=0$ to remain at $90^\circ$ from each other. In fact, this resulting graph $G'$ is the desired graph. Indeed, suppose we have a configuration of $G'$; by a Euclidean motion, we may assume $v_{0,0}$, $v_{n,0}$, and $v_{0,n}$ are configured at $(0,0)$, $(n,0)$, and $(0,n)$ respectively. Because $|v_{n,0}-v_{n,n}| = n$ and $|v_{0,n}-v_{n,n}| = n$ in any configuration, $v_{n,n}$ must rest at $(n,n)$ or $(0,0)$. In the latter case, $v_{n,1}$ rests at $(n-1,0)$, which is not distance~$n$ away from $v_{0,1}$ at $(0,1)$, contradicting Lemma~\ref{lem:reinforced-segment}. So $v_{n,n}$ must indeed lie at $(n,n)$, and the rest of the vertices' locations are then fixed.
\end{proof}

\begin{construction}
  \label{con:unit-main-theorem}
  Use notation as in Theorem~\ref{thm:main-theorem}: we are given a collection of polynomials $F = \{f_1,\ldots,f_s\}$ in $2m$ variables, each of total degree at most $d$, from which Theorem~\ref{thm:main-theorem} constructs a partially rigidified linkage $\cL = \cL(F)$. We make no claims under Part~I alone.

  If the Part~II assumption holds, meaning the $f_j$ have integer coefficients with absolute value at most $M$, we may construct, in $O(\poly(m^d,d^d,s,M))$ time, an abstract linkage $\cM = \cM(F)$ with edge lengths in $\{1,2\}$ that perfectly simulates the scaled linkage $D\cdot\cL(F)$.
  In particular, there is a translation $T$ on $\bR^{2m}$ and
  a subset $X$ of $m$ vertices such that
  $$T(D \cdot (Z(F)\cap[-1,1]^{2m})) \subseteq \pi_X(\Conf(\cM)) \subseteq T(D \cdot Z(F)).$$

  If the Part~III assumption also holds, i.e., each $f_j$ satisfies $f_j(\vec 0) = 0$ which gives rise to a configuration $C_0$ of $\cL$, then the configuration of $\cM$ corresponding to $C_0\in\Conf(\cL)$ has rational coordinates whose numerators have magnitude $O(\poly(m^d,d^d,s,M))$ and whose denominators are at most $D^2$.
\end{construction}

\begin{proof}
  By Property~\ref{thmpart:rational-coords}, all edge lengths of $D\cdot\cL$ are integers. By Properties~\ref{thmpart:num-vertices}, \ref{thmpart:orthogonal-trees} and~\ref{thmpart:rational-edge-lengths}, each rigidified subtree $(H,C_H)$ in $D\cdot \cL$ has integer coordinates whose sizes are bounded by $O(\poly(m^d,d^d,s,M))$.

  Each edge of $D\cdot \cL$ not belonging to any rigidified tree gets replaced with a reinforced segment of appropriate length as in Lemma~\ref{lem:reinforced-segment}. 
  For each rigidified tree $(H,C_H)$ of $D\cdot \cL$, build a reinforced grid of unit squares as in Lemma~\ref{lem:reinforced-grid} large enough to include the coordinates of the vertices of $(H,C_H)$, and replace the tree by this grid;  neighboring edges or trees are attached at the corresponding grid point. The three pins of $D\cdot\cL$ are likewise transferred to their corresponding grid points. Call the resulting linkage $\cM = \cM(F)$. Each reinforced segment of $\cM$ perfectly simulates its edge by Lemma~\ref{lem:reinforced-segment}. Likewise, because each rigidified tree in $D\cdot \cL$ has at least three noncollinear vertices (Property~\ref{thmpart:orthogonal-trees} of the Main Theorem) each reinforced grid perfectly simulates its rigidified tree by Lemma~\ref{lem:reinforced-grid}. So $\cM$ perfectly simulates $\cL$.

  If Part~III holds, then scaled configuration $D\cdot C_0$ of $D\cdot \cL$ has integer coordinates bounded by $O(\poly(m^d,d^d,s,M))$ in magnitude. By Property~\ref{thmpart:orthogonal-induced-trees} and Lemma~\ref{lem:reinforced-grid} (Reinforced Grid), all nodes of $\cM$ in reinforced grids are configured with coordinates in $\frac{1}{5}\cdot\bZ$. It remains to look at the coordinates of the remaining nodes: those on reinforced segments. Consider an edge $e$ of $D\cdot \cL$ \emph{not} contained in a rigidified tree. Edge $e$ has integer length $q \le D^2$, and its endpoints are initially configured at integer coordinates $(a_1,b_1)$ and $(a_2,b_2)$ (by Property~\ref{thmpart:rational-coords}). Then the reinforced bar corresponding to $e$ in $\cM$ has nodes configured at $(a_1,b_1) + \frac{h}{q}(a_2-a_1, b_2-b_1)$ for integers $0\le h\le q$, and these coordinates are rationals of the required form.
\end{proof}

We rely on the full strength of \emph{perfect} simulation in the proof of global rigidity below, but for the other hardness results, liftable and rigid simulation is sufficient. The latter may be achieved with only unit-length edges:

\begin{lemma}
  \label{lem:moser-spindle}
  A single edge of length $2$ can be liftably and rigidly simulated by a unit-distance graph with $19$ edges, formed by joining two copies of Moser's Spindle along a common equilateral triangle. If the edge has its endpoints configured at $(0,0)$ and $(2,0)$, then the configured unit-distance graph simulating this edge has coordinates of the form $a + b\sqrt{3} + c\sqrt{11} + d\sqrt{33}$ for rational numbers $a,b,c,d$.
\end{lemma}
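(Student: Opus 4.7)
The plan is to build the $11$-vertex, $19$-edge graph explicitly and then verify, by direct calculation, that its two distinguished ``endpoint'' vertices are always at distance exactly $2$ and that the coordinates have the claimed form.

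First I would recall that Moser's Spindle is a unit-distance graph on $7$ vertices and $11$ edges, built from two unit rhombi $ABCD$ and $AEFG$ sharing an apex $A$, where each rhombus is the union of two equilateral unit triangles (so the ``short'' diagonals $BD$ and $EG$ are also unit graph edges). The additional unit edge $CF$ forces the angle $\angle CAF$ to satisfy $\cos\angle CAF = 5/6$ (since $|AC| = |AF| = \sqrt{3}$ and $|CF| = 1$, by the law of cosines applied to triangle $ACF$), whence $\sin\angle CAF = \sqrt{11}/6$. This is where the $\sqrt{11}$ in the coordinate claim originates. I would then form the required graph by identifying one unit equilateral triangle (for instance the triangle $ABD$) in each of two copies of the spindle; this reduces the combined edge count to $2\cdot 11 - 3 = 19$ and pins down two natural distinguished vertices (one from each copy) whose mutual distance is forced by the Moser-Spindle angle structure.

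To verify the distance and coordinate claims together, I would fix $p = (0,0)$ and $q = (2,0)$ as the distinguished endpoints and write down explicit coordinates for the remaining nine vertices. Each vertex is reached from an endpoint by a short chain of unit edges involving at most two rotations by $\pm 60^\circ$ (the equilateral-triangle angles giving factors of $\sqrt{3}$) and at most one Moser-Spindle angle $\arccos(5/6)$ (giving factors of $\sqrt{11}$). Hence every coordinate lies in the field $\mathbb{Q}[\sqrt{3},\sqrt{11}] = \mathbb{Q}\cdot 1 + \mathbb{Q}\cdot \sqrt{3} + \mathbb{Q}\cdot\sqrt{11} + \mathbb{Q}\cdot\sqrt{33}$, matching the claim. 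A direct algebraic computation then verifies that all $19$ putative edges have unit length in this configuration, so that a valid unit-distance realization with $|pq| = 2$ exists, and by symmetry of the construction any configuration must have $|pq| = 2$.

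Finally, for the simulation properties: once $p$ and $q$ are placed, each Moser-Spindle subgraph admits only finitely many configurations, since the first rhombus is determined by the shared triangle up to a single reflection, and the second rhombus rotates around $A$ until pinned to finitely many positions by the unit edge $CF$. Thus the projection from the configuration space to the position of $(p,q)$ has finite fibers, giving rigid simulation. Away from a measure-zero locus of degeneracies this projection is a covering, so continuous paths of $(p,q)$ lift to continuous motions of the full graph, giving liftable simulation. The main obstacle is choosing which triangle to identify and which pair of non-shared vertices to designate $(p,q)$ so that the enforced distance really is $2$ (and not some other algebraic number involving $\sqrt{11}$); once the right sharing is picked, the distance $2$ falls out of the law of cosines applied to the compound Moser angles, and the rest is routine verification.
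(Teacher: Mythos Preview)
Your outline matches the paper's route (which just cites Schaefer and says ``direct computation''), and your account of where $\sqrt{3}$ and $\sqrt{11}$ enter is correct. But the step you yourself flag as ``the main obstacle'' is a genuine gap that you do not close. The specific gluing you suggest---identifying the apex triangle $ABD$ in each copy---actually fails: with $A=A'$, $B=B'$, $D=D'$, the far vertex $C'$ of the second copy is forced to coincide with $C$ (it is the unique point other than $A$ at unit distance from both $B$ and $D$, and $C'=A'$ is ruled out by the edge $C'F''$), so the two spindles end up sharing the entire first rhombus, the edge count drops to $17$, and no pair of vertices is pinned at distance~$2$.

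A gluing that works identifies the \emph{outer} triangle $EFG$ of one spindle with the \emph{apex} triangle $A'B'D'$ of the other via $A'=E$, $B'=F$, $D'=G$; the distance-$2$ pair is then $(A,C')$. Crucially, the reason $|AC'|=2$ in \emph{every} configuration is neither ``symmetry'' nor a law-of-cosines calculation but a two-line vector identity: each of the glued rhombi $AEFG$ and $EFC'G$ consists of two unit equilateral triangles on a common base with apexes forced to opposite sides (by the spindle edges $CF$ and $C'F''$, exactly as in the standard Moser argument), so their diagonals bisect, giving $A+F=E+G$ and $E+C'=F+G$. Subtracting yields $C'-A=2(F-E)$, and $EF$ is a unit edge. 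Without this or an equivalent argument, the assertion that ``by symmetry any configuration must have $|pq|=2$'' is unsupported; exhibiting one configuration with $|pq|=2$ does not establish simulation.
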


\begin{proof}
  The first claim follows from \cite[Lemma~3.4]{Schaefer-2013}, while the second may be verified by direct computation.
\end{proof}

\headingboldmath
\subsection{ Hardness and Universality of Unit-Distance and $\{1,2\}$-Distance Graphs}

\begin{theorem}[Hardness of Unit-Distance Realization]
  \label{thm:unit-realizability}
  Deciding whether a given abstract unit-distance graph is realizable is  $\exists\bR$-complete.
\end{theorem}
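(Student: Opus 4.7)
The plan is to handle membership and hardness separately, with the bulk of the work being a reduction that reuses the heavy machinery we have already set up.

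Membership in $\exists\bR$ is immediate: unit-distance realizability is the assertion that there exist real coordinates $(x_v, y_v)$ for each vertex $v$ such that $(x_u-x_v)^2 + (y_u - y_v)^2 = 1$ for every edge $uv$. This is a quantifier-free polynomial formula over the reals preceded by an existential block, hence a direct instance of \ETR{}.

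For hardness, the plan is to mirror the proof of Theorem~\ref{thm:noncrossing-realizability} but substitute Construction~\ref{con:unit-main-theorem} for Construction~\ref{con:noncrossing-main-theorem}, and then eliminate length-$2$ edges with Lemma~\ref{lem:moser-spindle}. Concretely, I would reduce from $\CommonZero$ in the strengthened form of Theorem~\ref{thm:common-zero}, so that the input polynomials $F = \{f_1,\ldots,f_s\}$ have degree at most $4$, coefficients in $\{0,\pm 1,\pm 2\}$, and all zeros lie in $[-1,1]^{2m}$. Apply Construction~\ref{con:unit-main-theorem} (Part~II) to obtain, in polynomial time, an abstract $\{1,2\}$-distance linkage $\cM(F)$ that perfectly simulates $D\cdot\cL(F)$. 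Replace each length-$2$ edge of $\cM(F)$ by a Moser-spindle gadget as in Lemma~\ref{lem:moser-spindle}, and unpin the three pinned vertices; call the resulting abstract unit-distance graph $\cG$. This is the output of the reduction.

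The correctness argument is the composition of three realizability-preserving steps. First, by Property~\ref{thmpart:pins} of Theorem~\ref{thm:main-theorem}, the three pins only pin down a rigidified tree and so serve merely to prevent rigid motions; removing them does not change whether the configuration space is nonempty. Second, the Moser-spindle replacement is realizability-preserving, since each spindle liftably and rigidly simulates a length-$2$ edge and, in particular, admits a configuration whenever its two distinguished endpoints are placed at Euclidean distance $2$. Third, since $\cM(F)$ perfectly simulates $D\cdot\cL(F)$, and since Theorem~\ref{thm:main-theorem} (Part~II) guarantees
\[
  T(Z(F)\cap[-1,1]^{2m}) \subseteq \pi_X(\Conf(\cL(F))) \subseteq T(Z(F)),
\]
the containment $Z(F)\subseteq[-1,1]^{2m}$ from Theorem~\ref{thm:common-zero} forces $\Conf(\cM(F))\neq\emptyset$ iff $Z(F)\neq\emptyset$. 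Chaining these equivalences, $\cG$ is realizable iff $F$ has a common real zero.

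The only mildly delicate point, and what I expect to be the main thing to verify carefully, is the second step: ensuring that the Moser-spindle substitution really does preserve realizability even though the simulation it provides is only liftable and rigid, not perfect. This is not a genuine obstacle, because realizability is a pure nonemptiness statement about configuration spaces, and any simulation (in the sense of Definition~\ref{def:trace}, where $(\cL,X)$ \emph{simulates} $\cM$ means $\pi_X(\Conf(\cL)) = \Conf(\cM)$) automatically transfers nonemptiness. Everything else is bookkeeping already handled by the cited constructions, so no new estimates on edge counts, coordinates, or feature sizes are required.
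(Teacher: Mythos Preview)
Your proposal is correct and follows essentially the same approach as the paper: reduce from the strengthened $\CommonZero$ of Theorem~\ref{thm:common-zero}, apply Construction~\ref{con:unit-main-theorem} (Part~II), unpin, and convert length-$2$ edges to unit edges via Lemma~\ref{lem:moser-spindle}. The paper's proof is terser---it just says ``exactly as in the proof of Theorem~\ref{thm:noncrossing-realizability}, using Construction~\ref{con:unit-main-theorem} in place of Construction~\ref{con:noncrossing-main-theorem}''---and leaves the Moser-spindle step implicit, whereas you spell it out and correctly note that mere simulation (not perfect simulation) already preserves nonemptiness of configuration spaces.
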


\begin{note}
  This was shown by Schaefer~\cite{Schaefer-2013} with a simpler, specialized construction, but we include it here for completeness.
\end{note}

\begin{proof}
  Membership in $\exists\bR$ is evident. Hardness follows by reduction from \CommonZero{} exactly as in the proof of Theorem~\ref{thm:noncrossing-realizability}, using Construction~\ref{con:unit-main-theorem} in place of Construction~\ref{con:noncrossing-main-theorem}.
\end{proof}

\begin{theorem}[Hardness of Unit-Distance Rigidity]
  \label{thm:unit-rigidity}
  The problem of determining whether a configured unit-distance graph with coordinates in $\bQ[\sqrt{3},\sqrt{11}]$ is rigid is $\forall\bR$-complete, even when all coordinates have the form $(a+b\sqrt{3}+c\sqrt{11}+d\sqrt{33})/n$ for integers $a,b,c,d,n$ where $b$, $c$, $d$, and $n$ have size $O(1)$ (but $a$ need not have constant size).
\end{theorem}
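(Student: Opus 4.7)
The plan mirrors the proof of Theorem~\ref{thm:noncrossing-rigidity}, substituting unit-distance simulation via Moser spindles for the noncrossing polyomino thickening. Membership in $\forall\bR$ follows exactly as in Theorem~\ref{thm:noncrossing-rigidity}: rigidity of a configured graph asks whether every nearby configuration in the configuration space is congruent to $C_0$, a statement expressible in the first-order theory of the reals.

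For hardness, I would reduce from $\HtwoN$ as strengthened in Theorem~\ref{thm:htwon'}. Given a homogeneous family $F=\{f_1,\ldots,f_s\}$ of degree-$4$ polynomials with coefficients in $\{0,\pm 1,\pm 2\}$, each $f_j$ automatically satisfies $f_j(\vec 0)=0$, so Construction~\ref{con:unit-main-theorem} (Parts~II and~III) applies in polynomial time to produce a configured $\{1,2\}$-distance linkage $(\cM,C_0)$ that liftably and rigidly simulates $D\cdot\cL(F)$ and satisfies $\pi_X(C_0)=T(\vec 0)$. Then I would replace each length-$2$ edge of $\cM$ by the unit-distance Moser-spindle gadget of Lemma~\ref{lem:moser-spindle}, positioning each gadget via rotation and translation so that its two distinguished vertices align with the endpoints of its length-$2$ edge, and remove the three pins to obtain an abstract unit-distance graph $\cG$ with induced configuration $C_0'$.

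Since liftable and rigid simulation composes, $(\cG,C_0')$ liftably and rigidly draws the same trace as $\cM$ up to Euclidean motion, which is absorbed by pin removal via Property~\ref{thmpart:pins}. Correctness then mirrors Theorem~\ref{thm:noncrossing-rigidity}: if $Z(F)$ contains some $\vec a\ne\vec 0$, homogeneity produces the continuous path $p\mapsto p\vec a$ in $Z(F)$ through $\vec 0$, which lifts to a nontrivial continuous motion of $\cM$ (and hence of $\cG$), so $\cG$ is not rigid; conversely, if $Z(F)=\{\vec 0\}$, Property~\ref{thmpart:uniqueness} forces $\cM$ to have only the configuration $C_0$, and rigid simulation then makes $\cG$ rigid as a graph.

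The main technical obstacle is verifying the coordinate form. By Construction~\ref{con:unit-main-theorem}, every vertex of $\cM$ in $C_0$ has rational coordinates with denominators dividing $D^2$ (a constant from the Main Theorem) and polynomially bounded numerators. Hence each length-$2$ edge of $\cM$ has endpoints differing by a rational vector $(dx,dy)$ with $dx^2+dy^2=4$ and $O(1)$-size denominators, so the rotation $(\cos\phi,\sin\phi)=(dx/2,dy/2)$ aligning the canonical Moser spindle with this edge is rational with $O(1)$-size denominators. The canonical spindle coordinates from Lemma~\ref{lem:moser-spindle} take the form $(a_0+b_0\sqrt 3+c_0\sqrt{11}+d_0\sqrt{33})/K$ for absolute constants $K$, $b_0$, $c_0$, $d_0$; applying such a rational rotation followed by a (potentially large) rational translation produces new coordinates whose $\sqrt 3$, $\sqrt{11}$, $\sqrt{33}$ coefficients remain $O(1)$-size rationals (so that after clearing a common denominator $n=O(1)$, the integers $b,c,d$ are $O(1)$), while the endpoint translation is fully absorbed into the rational part $a$, which is allowed to have polynomial size.
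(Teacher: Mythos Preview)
Your proposal is correct and follows essentially the same approach as the paper: reduce from $\HtwoN$ (degree~$4$, constant coefficients), invoke Construction~\ref{con:unit-main-theorem} together with the Moser-spindle substitution of Lemma~\ref{lem:moser-spindle}, unpin, and verify the coordinate form by analyzing the rational rotation and translation applied to each spindle. Your explicit remark that liftable and rigid simulation compose is a detail the paper leaves implicit, and your coordinate analysis matches the paper's (which writes out the rotation matrix $\frac{1}{2}\bigl(\begin{smallmatrix}q_1-p_1 & -(q_2-p_2)\\ q_2-p_2 & q_1-p_1\end{smallmatrix}\bigr)$ but relies on the same observation that its entries have magnitude at most~$1$ and $O(1)$ denominators).
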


\begin{proof}
  As in Theorem~\ref{thm:noncrossing-rigidity}, this problem lies in $\exists\bR$. Hardness follows by reduction from the complement of $\HtwoN$: given an instance $F = \{f_1,\ldots,f_s\}$ of this problem (which we may assume consists of polynomials of degree~$4$ with constant-sized coefficients by Theorem~\ref{thm:htwon'}), use Construction~\ref{con:unit-main-theorem} and Lemma~\ref{lem:moser-spindle} to build a configured unit-distance linkage~$\cM$ that liftably and rigidly draws a translation of $D\cdot(Z(F)\cap[-1,1]^{2m})$. Then this linkage is rigid if and only if $\vec 0$ is the only common zero of $F$, i.e., $F$ is a ``no'' instance of $\HtwoN$. As in the proof of Theorem~\ref{thm:noncrossing-rigidity}, removing the three pins of $\cM$ results in a unit-distance graph that is rigid if and only if $F$ is a ``no'' instance of \HtwoN{}.

  We must also show that the coordinates of~$\cM$ have the required form.  Construction~\ref{con:unit-main-theorem} guarantees that Lemma~\ref{lem:moser-spindle} is applied only to length-$2$ edges whose endpoints have rational coordinates with denominators bounded by $D^2$. For each such edge, say with endpoints at $p=(p_1,p_2)\in\bQ^2$ and $q=(q_1,q_2)\in\bQ^2$, the gadget drawn to connect $p$ and $q$ may be computed by starting with the gadget of Lemma~\ref{lem:moser-spindle} (which connects $(0,0)$ to $(2,0)$), applying the rotation matrix
  \begin{equation*}
    \frac{1}{2}\begin{pmatrix}
      q_1-p_1 & -(q_2-p_2) \\
      q_2-p_2 & q_1-p_1
    \end{pmatrix},
  \end{equation*}
  and then translating by $(p_1,p_2)$. The entries of the rotation matrix are rationals with denominators at most $2D^2$ and magnitudes at most $1$ (because $|q-p| = 2$), while $p_1$ and $p_2$ have denominators bounded by $D^2$, so the result follows.
\end{proof}

\begin{theorem}[Hardness of $\{1,2\}$-Distance Global Rigidity]
  \label{thm:unit-global-rigidity}
  The problem of deciding whether a given configured $\{1,2\}$-distance graph with coordinates in $\bQ$ is globally rigid is $\forall\bR$-complete, even when all coordinates have denominators of size $O(1)$.
\end{theorem}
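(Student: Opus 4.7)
The plan is to mimic the structure of Theorem~\ref{thm:noncrossing-global-rigidity}, with Construction~\ref{con:unit-main-theorem} playing the role there filled by Construction~\ref{con:noncrossing-main-theorem}. Membership in $\forall\bR$ is standard and follows exactly as in Theorem~\ref{thm:noncrossing-global-rigidity}: global rigidity is expressible as a universal statement requiring, for every valid configuration $C$, that every pairwise distance equals the corresponding distance under $C_0$.

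For hardness, I would reduce from the complement of $\HtwoN$, which is $\forall\bR$-hard even with $d=4$ and coefficient magnitudes $M=2$ by Theorem~\ref{thm:htwon'}. Given a homogeneous input $F=\{f_1,\ldots,f_s\}$, I apply Construction~\ref{con:unit-main-theorem} Part~III (available since homogeneity guarantees $f_j(\vec 0)=0$) to obtain a configured $\{1,2\}$-distance linkage $(\cM,C_0)$ that \emph{perfectly} simulates $D\cdot\cL(F)$. The output of the reduction is the configured graph $\cG$ obtained by removing $\cM$'s three pins. The bounds supplied by the construction ensure edge lengths in $\{1,2\}$ and coordinates whose denominators divide $D^2$, which has size $O(1)$, meeting the stated coordinate conditions.

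The correctness analysis parallels the noncrossing case. First, because the simulation is \emph{perfect}, the natural projection $\Conf(\cM)\to\Conf(\cL)$ is a homeomorphism, so $\cM$ has a unique configuration iff $\cL$ does. If $\vec 0$ is the only common zero of $F$, then using the trace inclusion $T(Z(F)\cap[-1,1]^{2m})\subseteq\pi_X(\Conf(\cL))\subseteq T(Z(F))$ together with Property~\ref{thmpart:uniqueness}, the linkage $\cL$ has only $C_0$ as a configuration, so pinned $\cM$ is globally rigid. Conversely, if $F$ has a nonzero root $\vec a$, then by homogeneity every scalar multiple $\lambda\vec a$ lies in $Z(F)$, so the trace contains more than $T(\vec 0)$, and perfect simulation forces $\cM$ to have multiple configurations. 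Finally, the three pinned vertices are non-collinear by Property~\ref{thmpart:pins}, so any nontrivial Euclidean motion that preserves them is impossible; distinct pinned configurations of $\cM$ therefore remain non-congruent after unpinning, transferring (non-)global-rigidity from pinned $\cM$ to $\cG$ as a graph.

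The main obstacle, compared with Theorem~\ref{thm:unit-rigidity}, is the need for \emph{perfect} rather than merely liftable and rigid simulation. The Moser-spindle gadget of Lemma~\ref{lem:moser-spindle} used in Theorem~\ref{thm:unit-rigidity} does not simulate a length-$2$ edge perfectly, since the spindle itself admits multiple configurations for a fixed pair of endpoints; this blocks the argument for global rigidity even though it suffices for rigidity. This is precisely why the present theorem broadens the edge-length palette from $\{1\}$ to $\{1,2\}$: the reinforced bars and reinforced grids of Lemmas~\ref{lem:reinforced-segment} and~\ref{lem:reinforced-grid} \emph{do} simulate perfectly, and it is their use inside Construction~\ref{con:unit-main-theorem} that makes the global-rigidity reduction go through. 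A unit-length-only analog would require a globally rigid unit-distance graph larger than a triangle, whose existence is the open question raised just before this theorem.
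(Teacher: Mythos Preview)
Your proposal is correct and follows essentially the same approach as the paper: reduce from the complement of $\HtwoN$ via Construction~\ref{con:unit-main-theorem}, and use the fact that the simulation is \emph{perfect} (not merely liftable and rigid) together with Property~\ref{thmpart:uniqueness} to transfer global rigidity, just as in Theorem~\ref{thm:noncrossing-global-rigidity}. Your additional commentary on why the Moser-spindle route fails and why $\{1,2\}$ is needed is accurate and matches the paper's surrounding discussion.
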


\begin{proof}
  Membership in $\forall\bR$ follows as in Theorem~\ref{thm:noncrossing-global-rigidity}. Hardness follows by reduction from the complement of \HtwoN{} just as in the proof of Theorem~\ref{thm:noncrossing-rigidity}, using Construction~\ref{con:unit-main-theorem} instead of Construction~\ref{con:noncrossing-main-theorem}. As in the proof of Theorem~\ref{thm:noncrossing-rigidity}, this makes essential use of the fact that $\cM$ from Construction~\ref{con:unit-main-theorem} simulates the linkage~$\cL$ from the Main Theorem \emph{perfectly}, not just liftably and rigidly.
\end{proof}

As discussed at the start of Section~\ref{sec:unit}, Construction~\ref{con:unit-main-theorem} requires the input polynomials to have integer coefficients, but some compact semialgebraic sets cannot be expressed in this way. We now formalize the workaround described there to prove universality of unit-distance linkages.

\begin{theorem}[Universality of Unit-Distance Linkages]
  \label{thm:unit-universality}
  Any compact semialgebraic set $R\subset\bR^2$ may be drawn by a unit-distance linkage.
\end{theorem}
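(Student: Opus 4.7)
The plan is to reduce to Construction~\ref{con:unit-main-theorem} (Part~II), dealing with the obstacle that a generic compact semialgebraic set has defining polynomials with non-algebraic real coefficients, via the ``pinned auxiliary variable'' trick previewed at the start of Section~\ref{sec:unit}.

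First I apply Lemma~\ref{lem:compact-semialg-projection} to write $R$ as the projection onto the first two coordinates of a compact algebraic set $R' = Z(f_1,\ldots,f_s) \subseteq \bR^{2m}$. By rescaling the $2m-2$ auxiliary coordinates (which are ultimately projected away), I arrange $R' \subseteq [-1,1]^{2m}$. At this stage the $f_j$ have arbitrary real coefficients, so I cannot directly feed them to Construction~\ref{con:unit-main-theorem}, which demands integer coefficients.

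The key step is to promote each real coefficient to an extra variable. After dividing each $f_j$ by its largest absolute coefficient (which preserves $Z(F)$), every coefficient of $F$ has absolute value at most $1$. Let $c_1,\ldots,c_r$ enumerate the real coefficients appearing in $F$, and introduce fresh variable pairs $(a_i,b_i)$. Replacing each occurrence of $c_i$ by $a_i$ yields a new system $\tilde F$ in $\bR[x_1,y_1,\ldots,x_m,y_m,a_1,b_1,\ldots,a_r,b_r]$ whose coefficients all lie in $\{-1,0,1\}$. The slice of $Z(\tilde F)$ where $(a_i,b_i) = (c_i,0)$ for all $i$ is exactly $R' \times \{(c_1,0),\ldots,(c_r,0)\}$, and it projects onto $R$ in the first two coordinates.

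Now I apply Construction~\ref{con:unit-main-theorem} (Part~II) to $\tilde F$, producing a $\{1,2\}$-distance linkage that liftably and rigidly draws a translation of $D\cdot(Z(\tilde F) \cap [-1,1]^{2m+2r})$. Replacing each length-$2$ edge by the Moser-spindle gadget of Lemma~\ref{lem:moser-spindle} converts this into a genuine unit-distance linkage~$\cM$. To restrict to the desired slice, I augment $\cM$ with extra pins on the drawing vertices for $(a_i,b_i)$, fixing them at the (possibly transcendental) locations encoding $a_i = c_i$ and $b_i = 0$. Pin locations are arbitrary real numbers, so non-algebraicity of the $c_i$ is harmless here---this is precisely the advantage over trying to encode those coefficients using unit-distance edges. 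The trace of the drawing vertex for $(x_1,y_1)$ is then a translation of $R$, which I correct by translating all pin locations (both the three from Property~\ref{thmpart:pins} and the new auxiliary ones) by the inverse of the construction's overall translation~$T$.

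The main obstacle is precisely the gap between the real-coefficient polynomials that naturally describe $R$ and the integer-coefficient input demanded by Construction~\ref{con:unit-main-theorem}; everything else is routine once the coefficient-as-pin trick is in place. A minor point to verify is that each rescaled $c_i$, lying in $[-1,1]$, actually falls within the region $T(Z(\tilde F) \cap [-1,1]^{2m+2r})$ that the Main Theorem guarantees the linkage draws, so that the intended slice is genuinely realized rather than only formally described.
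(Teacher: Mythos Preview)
Your overall strategy matches the paper's: promote the real coefficients to variables, apply Construction~\ref{con:unit-main-theorem} with Lemma~\ref{lem:moser-spindle}, then pin the coefficient vertices. But two related details are missing, and neither is fixed by merely translating pins.

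First, rescaling only the $2m-2$ auxiliary coordinates does not force $R'\subseteq[-1,1]^{2m}$: the first two coordinates sit inside $R$, which may be an arbitrary compact set in $\bR^2$ (e.g., a circle of radius $5$). Second, you correctly quote that Construction~\ref{con:unit-main-theorem} draws $T(D\cdot(Z(\tilde F)\cap[-1,1]^{2m+2r}))$, but then conclude the trace of the $(x_1,y_1)$ vertex is a translation of $R$; it is actually a translation of $D\cdot R$. Scaling a unit-distance linkage by $1/D$ is not allowed, so this factor must be removed some other way. The paper handles both issues together: scale \emph{all} of $R'$ down by $1/n$ so it fits in $[-1,1]^{2m}$, and pre-compose the polynomials with multiplication by $D$ (setting $g_j(\widevec{xy}) = f_j(D\cdot\widevec{xy})$) so that $D\cdot Z(g_1,\ldots,g_s)$ equals the scaled-down $R'$ exactly. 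This yields a unit-distance linkage drawing $\tfrac{1}{n}R'$; scaling that linkage up by $n$ gives an integer-edge-length linkage drawing $R'$, which is then re-simulated by a unit-distance linkage via Lemmas~\ref{lem:reinforced-segment} and~\ref{lem:moser-spindle}. Without this scale-down/scale-up maneuver, your argument only works when $R\subseteq[-1,1]^2$ and even then draws $D\cdot R$ rather than $R$.
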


\begin{proof}
  As in the proof of Theorem~\ref{thm:noncrossing-universality} we may write $R$ as the projection onto coordinates $x_1,y_1$ of some compact basic algebraic set
  \begin{equation*}
    R' = Z(f_1,\ldots,f_s), f_j\in\bR[x_1,y_1,\ldots,x_m,y_m],
  \end{equation*}
  and it suffices to show that some translation of $R'$ may be drawn with a unit-distance linkage. In fact, it suffices to show that the scaled set $\frac{1}{n}\cdot R'$ may be drawn (up to translation) by a unit-distance linkage $\cL$, for some $n\in\bN$: indeed, if a unit-distance linkage $\cL$ draws a translation of $\frac{1}{n}\cdot R'$, then $n\cdot\cL$ draws a translation of $R'$ and has integer edge lengths, so by Lemmas~\ref{lem:reinforced-segment} and~\ref{lem:moser-spindle} it may be simulated by a unit-distance linkage, as required. By this reasoning, we may replace the compact set $R'$ by some small-enough $\frac{1}{n}\cdot R'$ and thereby assume that $R'$ lies in the box $[-1,1]^{2m}$.

  Write the scaled set $\frac{1}{D}\cdot R'$ as $Z(g_1,\ldots,g_s)$, where $g_j(\widevec{xy}) = f_j(D\cdot\widevec{xy})$. We wish to apply Construction~\ref{con:unit-main-theorem} to polynomials $g_1,\ldots,g_s$ so the resulting linkage draws precisely $R'$ (up to translation), but these coefficients may not be integers (or even algebraic numbers, as described above), so we will temporarily replace these coefficients with variables, as follows.
  By scaling the coefficients of each $g_j$ (which does not affect $Z(g_1,\dots,g_s)$), we may assume that the coefficients are in $[-1,1]$. For each nonzero monomial $c_{j,J}\widevec{xy}^J$ in each $g_j$ (where $J$ is a vector of exponents), create new variables $a_{j,J}$ and $b_{j,J}$, gather all of these new variables into a vector $\widevec{ab}$ with length $2r$, and define the new polynomials
  \begin{equation*}
    h_j(\widevec{xy},\widevec{ab}) := \sum_{J\text{ such that }c_{j,J}\ne 0} a_{j,J}\widevec{xy}^J,
  \end{equation*}
  for $1\le j\le s$. Polynomials $h_j$ now have integer coefficients: in fact, all coefficients equal~$1$.
  It remains to implement the equations
  $h_j(\widevec{xy},\widevec{ab})=0$ and $(a_{j,J},b_{j,J})=(c_{j,J},0)$,
  for each $1\le j\le s$ and $c_{j,J}\ne 0$,
  which exactly recover the solutions $\widevec{xy} \in Z(g_1,\dots,g_s)$.

  Applying Construction~\ref{con:unit-main-theorem} and Lemma~\ref{lem:moser-spindle} to the polynomials $h_j$, we may construct some unit-distance linkage $\cL$, a set of drawing vertices
  \begin{equation*}
    X = \{v_j\mid 1\le j\le s\}
    \cup\{v_{j,J} \mid 1\le j\le s\text{ and }c_{j,J}\ne 0\}
  \end{equation*}
  corresponding to variables $(x_j,y_j)$ and $(a_{j,J},b_{j,J})$ respectively, and some translation $T$ on $\bR^{2m+2r}$ such that $X$ draws a set between $T(D\cdot (Z(h_1,\ldots,h_s)\cap[-1,1]^{2m+2r}))$ and $T(D\cdot Z(h_1,\ldots,h_s))$. Finally, we pin all the vertices $v_{j,J}$ in the plane to force variables $(a_{j,J},b_{j,J})$ to take the values $(c_{j,J},0)$: specifically, if $T_{j,J}$ denotes the translation $T$ restricted to coordinates $(a_{j,J},b_{j,J})$, we pin $v_{j,J}$ to the point $T_{j,J}(c_{j,J},0)$. The trace of vertices $v_1,\ldots,v_s$ in this pinned linkage is a translation of $D\cdot Z(g_1,\ldots,g_s) = R'$, as required.  
\end{proof}

\section{Matchstick Graphs and Linkages}
\label{sec:matchstick}

In Section~\ref{sec:noncrossing} we discussed globally noncrossing linkages, which have no crossing configurations because they are required to be carefully designed to enforce this stringent property. By contrast, in this section we look at \emph{NX-constrained} linkages (short for non-crossing-constrained), which have no crossing configurations for a very different reason: we simply declare that crossing configurations should be ignored. \emph{Any} linkage can be made into an NX-constrained linkage by simply attaching a constraint (cf. Definition~\ref{def:constraint}), specifically an \emph{NX-constraint}, that redefines the set of valid configurations as follows:

\begin{definition}[NX-Constrained Linkages and Matchstick Linkages]
  If $\cL$ is a linkage, we define the \term{NX-constraint} on $\cL$, denoted $\NXCon_\cL$, as the set of configurations of $\cL$ that do not cross. In other words, if $\cL' := (\cL,\NXCon_\cL)$ is an \term{NX-constrained linkage}, then its configuration space is defined by $\Conf(\cL') := \NXConf(\cL)$.

  If $\cL$ is an unconstrained linkage all of whose edges have length~$1$, then the NX-constrained linkage $\cL' := (\cL,\NXCon_\cL)$ is called a \term{matchstick linkage}.
\end{definition}

In this section we prove analogous hardness results about matchstick linkages and graphs: matchstick graph realization is $\exists\bR$-complete, and matchstick graph rigidity and global rigidity are $\forall\bR$-complete. An analogous ``universality'' result---that matchstick linkages can draw all compact semialgebraic sets---can also be proved with arguments similar to the prior universality results. However, such a theorem would be incomplete, because matchstick linkages (more generally, NX-constrained linkages) can draw more than just compact semialgebraic sets! As a simple example, the NX-constrained two-bar linkage $\cA$ of Figure~\ref{fig:nx-conf-open} draws the half-open annulus $\{(x,y)\in\bR^2\mid 1 < x^2+y^2 \le 9\}$, because configurations of (the unconstrained linkage underlying) $\cA$ with $v$ on circle $x^2+y^2=1$ are crossing and are thus considered invalid by the constraint. In general, the traces of NX-constrained linkages are \emph{bounded and semialgebraic} sets (assuming at least one pin). NX-constrained linkages---in fact, matchstick linkages---can indeed draw all such sets, but our proof of this stronger result (Theorem~\ref{thm:matchstick-universality-full}) subtly breaks the abstraction barrier set up by the Main Theorem, so we postpone this proof until Section~\ref{sec:matchstick-universality-full}.

\begin{figure}[h]
  \centering
  \includegraphics{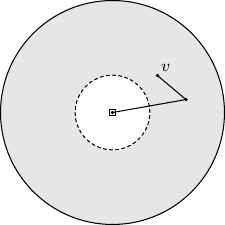}
  \caption{The trace of an NX-constrained linkage need not be closed.
    \label{fig:nx-conf-open}
  }
\end{figure}

\subsection{Simulation with Matchstick Linkages}

For an integer $a \ge 5$, we will simulate length-$a$ edges by \term{edge polyiamonds} as shown in Figure~\ref{fig:edge-polyiamonds}. Note that any polyiamond, considered as a matchstick graph, is globally rigid. As proven in the following lemma, two simple but effective \term{wing edges} suffice to fix the relative orientation and position of two adjacent edge polyiamonds (Figure~\ref{fig:edge-polyiamonds}). We may more easily describe the \emph{relative} positions of edge polyiamonds $P$ and $Q$ by temporarily pinning $P$ in place:

\begin{lemma}[Wing Edges]
  \label{lem:matchstick-wing-edges}
  Consider the matchstick linkage $\cL$ drawn in Figure~\ref{fig:edge-polyiamonds}, with two edge polyiamonds $P$ and $Q$ sharing a vertex, three pins in $P$, and two extra \term{wing edges} attached at vertices $a$ and $b$ as shown. Then in every configuration of this linkage, $Q$ has the same orientation as $P$, and its central axis is rotated from $P$'s axis by an angle $60^\circ < \theta < 240^\circ$. Every such $\theta$ corresponds to a unique configuration of $\cL$.
\end{lemma}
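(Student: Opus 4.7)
The plan is to treat $P$ as completely fixed, reduce the configuration of $Q$ to a discrete orientation choice plus one angular parameter, and then analyze which values of that parameter the wing edges and NX-constraint permit.

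First, I would argue that $P$ is rigidly fixed: as a polyiamond, $P$ is a globally rigid matchstick graph, and it carries three pins that must be non-collinear in the fixed planar configuration, so the matchstick constraint plus the three pins determine $P$ completely. Next, $Q$ too is a globally rigid polyiamond, and it meets $P$ only at the shared hinge vertex $h$. Therefore, forgetting the wing edges and the NX-constraint, every configuration of the underlying linkage is described by exactly two discrete data items (the handedness of $Q$ relative to $P$, i.e., whether $Q$ is reflected) and one continuous parameter $\theta$ (the rotation of $Q$'s central axis about $h$).

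Second, I would use the wing edges to eliminate the reflected orientation. Each wing edge connects a specific vertex of $P$ (namely $a$ or $b$) to a specific vertex of $Q$, and its unit length pins down the squared distance between those two vertices as a function of $\theta$ and the orientation. The two orientations give two different distance-versus-$\theta$ functions; one would need to verify from the figure's geometry that the reflected orientation either never achieves the required wing-edge length for any $\theta$, or only does so in configurations where $Q$ would have to pass through $P$, violating the NX-constraint. Either way, only the orientation-preserving case survives.

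Third, with orientation fixed, I would determine the admissible range of $\theta$. For the orientation-preserving configurations, the wing edges at $a$ and $b$ impose the constraint that certain planar distances equal~$1$. I would set up this distance as an explicit trigonometric function $f(\theta)$ using the known coordinates of $a$, $b$, and their partners in $Q$ (both computable once the edge-polyiamond shape is read off the figure), solve $f(\theta) = 1$, and check that this equation is automatically satisfied for every $\theta$ in some open arc — the wing edges should be ``length-exact'' hinges rather than true constraints in the interior — while at the endpoints $\theta = 60^\circ$ and $\theta = 240^\circ$ either the NX-constraint degenerates (an edge of $Q$ touches an edge of $P$) or a wing edge lies flat against the polyiamond. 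Outside this arc, either the wing edge cannot be drawn without forcing $Q$ into $P$, or the NX-constraint fails directly because $Q$ overlaps $P$. This step is the main obstacle: the exact value $60^\circ$ must come out of the wedge angles built into the polyiamonds $P$ and $Q$ at $h$, and the asymmetric range $(60^\circ, 240^\circ)$ reflects that the wing vertices $a$ and $b$ are not symmetric about the axis of $P$.

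Finally, for uniqueness and surjectivity, I would note that for each $\theta$ in the open arc $(60^\circ, 240^\circ)$, placing $Q$ as the unique rigid image determined by orientation-preserving rotation through $\theta$ yields a configuration in which all polyiamond edges of $P$ and $Q$ are separated (by step three), and both wing edges have the correct length (by the distance computation). So the map $\theta \mapsto \text{configuration}$ is a bijection between $(60^\circ, 240^\circ)$ and the constrained configuration space of $\cL$, completing the proof.
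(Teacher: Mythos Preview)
You have misread the wing-edge structure, and this misreading undermines the whole plan. The two wing edges do not connect vertices of $P$ to vertices of $Q$. Instead, $a$ is the vertex on $P$'s axis adjacent to the shared hinge $o$, $b$ is the vertex on $Q$'s axis adjacent to $o$, and both wing edges go to a single \emph{new} free vertex $c$, so that $o,a,c,b$ are the four corners of a unit rhombus. Because $c$ is free, the unit-length conditions $|ac|=|bc|=1$ do not impose any equation $f(\theta)=1$ on the angle between the axes; for every $\theta$ they can be satisfied by placing $c$ at the reflection of $o$ across line $ab$. Your hedge that the distance equation ``should be automatically satisfied for every $\theta$ in some open arc'' is exactly the symptom of this misunderstanding: a genuine unit-distance constraint between a fixed vertex of $P$ and a fixed vertex of $Q$ would pin $\theta$ to isolated values, not an arc.

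With the correct picture, the paper's argument is purely a noncrossing analysis. First, if $Q$ is drawn with reversed orientation, the wing edges are forced to cross the polyiamonds, so the NX-constraint rules this out. Second, among the four rhombus edges $oa,ob,bc,ca$, avoiding mutual crossings forces $c$ to sit at the convex-rhombus position $c=a+b-o$, so once $\theta$ is chosen everything is determined. Third, the bounds $60^\circ$ and $240^\circ$ come from the equilateral-triangle geometry of the polyiamonds near $o$: at those angles an edge of $Q$ (or a wing edge) first touches an edge of $P$, and strictly between them no crossings occur. None of these three steps involves solving a length equation; each is a direct check of when edges coincide or cross. Your outline would work once you replace the distance-equation step with this rhombus/NX argument.
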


\begin{proof}
  Polyiamond $Q$ must be drawn with the same orientation as $P$, or else the wing edges will create crossings. Likewise, $a$, $o$, $b$, and $c$ must form a rhombus to prevent these four edges from intersecting each other. So $Q$'s and $c$'s location are determined by the single angle~$\theta$, and it may be seen that no crossings occur precisely when $60^\circ < \theta < 240^\circ$.
\end{proof}

\begin{figure}[hbtp]
  \centering
  \includegraphics{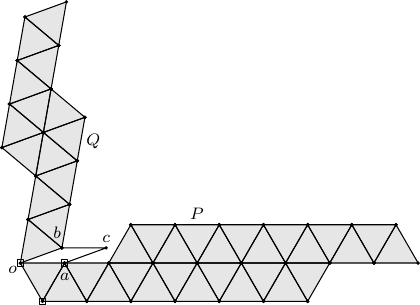}
  \hfill
  \includegraphics{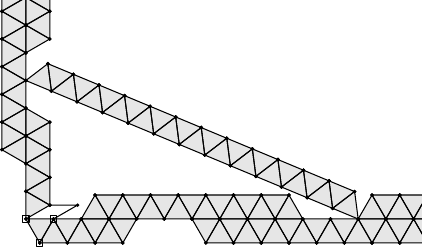}
  \caption{Left: Edge polyiamonds used to simulate edges of integer length. Right: Edge polyiamonds braced at $90^\circ$ based on a $5$-$12$-$13$ right triangle.}
  \label{fig:edge-polyiamonds}
\end{figure}

\begin{lemma}[Orthogonal Braces]
  \label{lem:matchstick-right-angles}
  By modifying the edge polyiamonds and adding a \term{hypotenuse polyiamond} as shown in Figure~\ref{fig:edge-polyiamonds}, the resulting matchstick linkage~$\cM$ is globally rigid, and its unique configuration has $\theta = 90^\circ$.
\end{lemma}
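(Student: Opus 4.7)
The plan is to use a Pythagorean-triple argument: the hypotenuse polyiamond, together with two marked vertices on $P$ and $Q$ lying at distances $5$ and $12$ from the shared vertex $o$, forces a triangle with side lengths $5$, $12$, and $13$, whose only congruence class contains a right angle at $o$. Since $90^\circ$ lies inside the allowed interval $(60^\circ, 240^\circ)$ from Lemma~\ref{lem:matchstick-wing-edges}, this will pin $\theta$ and, together with the global rigidity of each polyiamond, force global rigidity of $\cM$.

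First I would verify the rigidity of the individual pieces. Each of the three polyiamonds, being a triangulation by unit equilateral triangles, is globally rigid as a matchstick linkage: every triangle is individually rigid, and adjacent triangles share two vertices, so rigidity propagates along the strip. By Lemma~\ref{lem:matchstick-wing-edges}, the wing edges already force $Q$ to have the same orientation as $P$, and the remaining freedom in the sublinkage without the hypotenuse polyiamond is exactly one parameter: the rotation angle $\theta \in (60^\circ, 240^\circ)$ of $Q$'s axis relative to $P$'s axis, with each $\theta$ producing a unique configuration.

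Next I would apply the law of cosines. Let $p$ and $q$ denote the attachment vertices of the hypotenuse polyiamond on the modified $P$ and $Q$, placed (rigidly within each polyiamond) at distances $|op|=5$ and $|oq|=12$ from the shared vertex $o$, and note that the hypotenuse polyiamond rigidly enforces $|pq| = 13$. The triangle $opq$ then satisfies
\[
  13^2 \;=\; 5^2 + 12^2 - 2\cdot 5\cdot 12 \cdot \cos\theta,
\]
which reduces to $\cos\theta = 0$, so $\theta = 90^\circ$ is the unique angle in $(60^\circ, 240^\circ)$ compatible with the gadget.

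Finally I would verify that the resulting $90^\circ$ configuration is actually noncrossing, so that it survives the NX-constraint. At $\theta = 90^\circ$ the two edge polyiamonds meet orthogonally, the wing edges form a rhombus (as in Lemma~\ref{lem:matchstick-wing-edges}), and the hypotenuse polyiamond spans between $p$ and $q$ along the hypotenuse of the $5$-$12$-$13$ triangle, entirely exterior to both edge polyiamonds and to the wing rhombus, as drawn in Figure~\ref{fig:edge-polyiamonds}. The main obstacle is precisely this geometric check: one must confirm that the strip of triangular thickness forming the hypotenuse polyiamond lies off the edge polyiamonds and the wing edges in the exact geometry prescribed by the $5$-$12$-$13$ right triangle, so that a valid noncrossing realization actually exists. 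This is a routine but coordinate-level verification from the figure; once it is done, uniqueness of $\theta$ combined with the global rigidity of each polyiamond yields global rigidity of $\cM$.
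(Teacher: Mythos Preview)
Your approach matches the paper's: invoke the Wing Edges lemma to reduce to a single parameter $\theta \in (60^\circ, 240^\circ)$, then use the $5$--$12$--$13$ constraint to pin $\theta = 90^\circ$. There is one small gap in your final step: a globally rigid polyiamond attached at only two vertices $p$ and $q$ still admits reflection across the line $pq$, so ``uniqueness of $\theta$ combined with the global rigidity of each polyiamond'' does not by itself yield global rigidity of $\cM$. The paper closes this with the remark that ``the hypotenuse polyiamond then has only one crossing-free position,'' meaning the reflected placement would intersect an edge polyiamond and is therefore excluded by the NX-constraint; you should add that observation to finish the argument.
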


\begin{proof}
  The previous lemma and the $5$-$12$-$13$ right triangle force $Q$ to be drawn at a $+90^\circ$~angle from~$P$ and with the same orientation. The hypotenuse polyiamond then has only one crossing-free position.
\end{proof}

\begin{construction}
  \label{con:matchstick-main-theorem}
  Use notation as in Theorem~\ref{thm:main-theorem}: we are given a collection of polynomials $F = \{f_1,\ldots,f_s\}$ in $2m$ variables, each of total degree at most $d$, from which Theorem~\ref{thm:main-theorem} constructs a partially rigidified linkage $\cL = \cL(F)$.
  
  If the hypotheses of Part~II hold, so polynomials $f_j$ have integer coordinates with absolute values at most $M$, we may construct, in $O(\poly(m^d,d^d,s,M))$ time, an abstract matchstick linkage $\cM$ that perfectly simulates the scaled linkage $D\cdot\cL$.

  If the Part~III assumption also holds, i.e., each $f_j$ satisfies $f_j(\vec 0) = 0$ giving rise to configuration $C_0$ of $\cL$, then the corresponding configuration of $\cM$ has coordinates of the form $(a+b\sqrt{3})/c$, where $a$ is an integer of magnitude $O(\poly(m^d,d^d,s,M))$, and $b$ and $c$ are integers with magnitude at most $O(1)$.
\end{construction}

\begin{proof}
  We may assume that in each rigidified subtree $(H,C_H)$ of $\cL$, no internal node has degree exactly~$2$ with edges at~$180^\circ$ from each other in $C_H$, because such a node can be erased by merging its two edges together. For the remainder of this proof, replace $\cL$ with $40D\cdot\cL$: each rigidified tree has integer coordinates, each edge has integer length at least $40$, and each configuration has feature size at least $40$, by Theorem~\ref{thm:main-theorem}.

  We create a matchstick linkage~$\cM$ simulating $\cL$ using Lemmas~\ref{lem:matchstick-wing-edges} and~\ref{lem:matchstick-right-angles}, as follows. First, $\cM$ has a vertex corresponding to each vertex of~$\cL$. Each edge of~$\cL$ is replaced by an edge polyiamond of the appropriate length connecting the corresponding vertices.

  At any vertex $v$ internal to some (necessarily unique, by Property~\ref{thmpart:orthogonal-trees}) rigidified tree $(H,C_H)$, brace any right angles at~$v$ in $(H,C_H)$ as in Lemma~\ref{lem:matchstick-right-angles}, by modifying the corresponding edge polyiamonds and adding wings and a hypotenuse. Any $180^\circ$ or $270^\circ$ angles at $v$ in $(H,C_H)$ can be left alone.
  At any vertex $v$ of $\cL$ that is \emph{not} internal to any rigidified tree (and has degree at least $2$), simply attach wings to the incident edge polyiamonds as in Lemma~\ref{lem:matchstick-wing-edges} according to the cyclic order~$\sigma_v$ at~$v$.
  Finally, $\cL$ has exactly three pins in one of the rigidified trees $(H,C_H)$; in $\cM$, we instead place three non-collinear pins in the edge polyiamond corresponding to \emph{one} of $H$'s edges.

  We must show that $\cM$ perfectly simulates $\cL$. First, we claim no edge polyiamond may reverse its orientation: they all have the same orientation as in Figure~\ref{fig:edge-polyiamonds} in every configuration of $\cM$. This is certainly true for the pinned edge polyiamond. Lemma~\ref{lem:matchstick-wing-edges} shows that edge polyiamonds connected by wing edges must maintain their relative orientations. At each vertex $v$ of $\cL$, \emph{all} of the edge polyiamonds of $\cM$ incident to $v$ are connected to each other by wing edges: if $v$ is internal to a rigidified tree then at most one of its corners is missing wing edges (by the assumption above), and otherwise all of $v$'s corners have wing edges. So by connectivity, all edge polyiamonds maintain the same orientation, as claimed. This shows that each configuration $C$ of $\cL$ induces \emph{at most} one configuration of $\cM$: the edge polyiamond built for edge $u v$ must be rotated and translated (not reflected) to connect points $C(u)$ and $C(v)$ in the plane.

  Similarly, each vertex interior to a rigidified tree $(H,C_H)$ has hypotenuse polyiamonds on all or all but one of its corners, which enforce the constraint $\RigidCon(H,C_H)$ by Lemma~\ref{lem:matchstick-right-angles}. So each configuration of $\cM$ comes from a configuration of $\cL$, i.e., $\cM$ draws a \emph{subset} of $\Conf(\cL)$.

  It remains to show that each configuration $C$ of~$\cL$ induces a valid (i.e., noncrossing) configuration of $\cM$. For each rigidified tree $(H,C_H)$, the edge and hypotenuse polyiamonds simulating $(H,C_H)$ in $\cM$ may be configured to match $C$ without intersecting each other because the induced configuration $C|_H$ has the same orientation as $C_H$, by Property~\ref{thmpart:orthogonal-trees} of the Main Theorem. For each vertex $v$ of $\cL$ not interior to a rigidified tree, the angles of $v$'s corners in configuration $C$ lie strictly between $60^\circ$ and $240^\circ$ by Property~\ref{thmpart:corners} of the Main Theorem, so locally, wing edges around $v$ do not create crossings by Lemma~\ref{lem:matchstick-wing-edges}. Globally, configuration $C$ of~$\cL$ has feature size at least $40$, while the linkage~$\cM$ extends less than $20$~units away from the vertices and edges it simulates, so $C$ indeed induces a (noncrossing) configuration of $\cM$, as desired.

  Finally, if Part~III applies, let~$C_0'$ be the configuration of $\cM$ induced by initial configuration $C_0$ (after scaling $C_0$ by $40D$ as above), and note that (the scaled) $C_0$ has integer coordinates and integer edge lengths, both of polynomial magnitude. We must show that the coordinates of $C_0'$ have the required form $(a+b \sqrt{3})/c$ for integers $a,b,c$ bounded as claimed. For any edge $e$ of $C_0$ with integer endpoints $(a_1,b_1)$ and $(a_2,b_2)$ and integer length $q$, the vertices along the central axis of the corresponding edge polyiamond have rational coordinates of the form $p = (a_1,b_1) + \frac{h}{q}(a_2-a_1,b_2-b_1)$ for integers $0\le h\le q$. If $e$ is part of a rigidified tree then this edge is axis-aligned (by Property~\ref{thmpart:orthogonal-induced-trees}) and so $p$ has \emph{integer} coordinates; otherwise, $e$ has length $q \le 40D^2$ by Property~\ref{thmpart:rational-edge-lengths}, so the denominators of $p$ have magnitude $q = O(1)$. The rest of the vertices in the polyiamond are offset from these coordinates by the unit vector $\vec w = (a_2-a_1,b_2-b_1)/q$ rotated by some multiple of $60^\circ$; for the same reasons, denominators in $w$'s entries have constant size, so the same holds for the vertices in the edge polyiamond. The same computations holds for hypotenuse polyiamonds. The only vertices not yet accounted for are the wing vertices, and these have the form $A+B-O$ where $A$, $B$, and $O$ are rational-coordinate points along edge-polyiamond axes as described above.   
\end{proof}

\begin{note}
  A single edge on its own is \emph{not} perfectly simulated by an edge polyiamond, because the latter may reflect across its central axis. In the proof above, we were careful to argue that all edge polyiamonds maintain a fixed orientation (due to pins and wing edges) so that this ambiguity is impossible.
\end{note}

\subsection{Hardness of Matchstick Linkages}

\begin{theorem}[Hardness of Matchstick Graph Realizability]
  \label{thm:matchstick-realizability}
  Deciding whether a given abstract matchstick graph is realizable is $\forall\bR$-complete.
\end{theorem}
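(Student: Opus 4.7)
The theorem as stated asserts $\forall\bR$-completeness, but matchstick graph realizability—asking for the existence of a valid noncrossing unit-length configuration—is naturally an $\exists\bR$ question. I believe the class label is a typo for $\exists\bR$-completeness, matching Item~6 of the introduction and the corresponding entry of Table~\ref{tab:results}. I therefore plan the proof under that interpretation, following the template already used for Theorem~\ref{thm:noncrossing-realizability}; if the stated $\forall\bR$-completeness were literally intended, the outline below would only establish $\exists\bR$-hardness, and collapsing this to $\forall\bR$-hardness would require $\exists\bR = \forall\bR$, a major open problem.

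For membership in $\exists\bR$, an abstract matchstick graph $G=(V,E)$ is realizable iff there exist coordinates $\{p_v\}_{v\in V}\subset\bR^2$ satisfying $|p_u-p_v|^2=1$ for each $uv\in E$ together with the pairwise noncrossing conditions (encoded as polynomial disjunctions: two distinct edges either share an endpoint or admit a polynomial separation witness between their open interiors). This is a prenex ETR formula. For hardness, I reduce from \CommonZero{} as strengthened in Theorem~\ref{thm:common-zero}: inputs $F=\{f_1,\ldots,f_s\}$ with total degree at most~$4$, coefficients in $\{0,\pm1,\pm2\}$, and any common zero promised to lie in $[-1,1]^{2m}$. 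Apply Construction~\ref{con:matchstick-main-theorem} (Part~II) in polynomial time to obtain a matchstick linkage $\cM$ that \emph{perfectly} simulates $40D\cdot\cL(F)$, then strip the three pins (which by Property~\ref{thmpart:pins} of Theorem~\ref{thm:main-theorem} control only rigid motions, not realizability) to yield an abstract matchstick graph~$G$. The chain of equivalences is: $G$ is realizable iff $\cM$ is realizable iff $\cL(F)$ has some configuration iff $Z(F)\cap[-1,1]^{2m}\ne\emptyset$ iff $F\in\CommonZero$, where the middle equivalence uses perfect simulation and the last uses the promise that any zero lies in $[-1,1]^{2m}$.

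The conceptually nontrivial step has already been discharged by the Main Theorem and Construction~\ref{con:matchstick-main-theorem}: the construction operates purely on polynomial input and must produce a well-defined output even when $Z(F)$ turns out to be empty (Part~II makes no realizability hypothesis on~$F$). The remaining concern specific to the matchstick setting is that the wing edges and hypotenuse polyiamonds introduced during the simulation do not create spurious noncrossing configurations that do not correspond to configurations of $\cL(F)$; this is exactly what perfect simulation rules out, and it rests on the local orientation-preservation analysis of Lemma~\ref{lem:matchstick-wing-edges} and the right-angle rigidification of Lemma~\ref{lem:matchstick-right-angles}, both of which apply globally by the connectivity of the wing-edge network emanating from the pinned edge polyiamond. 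No further technical obstacle is anticipated.
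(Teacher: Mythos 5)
Your proposal is correct and takes essentially the same route as the paper's (omitted) proof: the paper declares these matchstick hardness proofs ``perfectly analogous'' to Section~\ref{sec:unit}, i.e., a reduction from \CommonZero{} (Theorem~\ref{thm:common-zero}) exactly as in Theorem~\ref{thm:noncrossing-realizability}, but using Construction~\ref{con:matchstick-main-theorem} and then unpinning, which is precisely your argument (including the reliance on perfect simulation and the fact that pins only fix the rigid motion). You are also right that the $\forall\bR$ label in the theorem statement is an inconsistency; the introduction and Table~\ref{tab:results} both assert $\exists\bR$-completeness for matchstick realizability, which is the correct class for this existential question.
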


\begin{theorem}[Hardness of Matchstick Graph Rigidity]
  \label{thm:matchstick-rigidity}
  Deciding whether a given configured matchstick graph with coordinates in $\bQ[\sqrt{3}]$ is rigid is $\forall\bR$-complete, even when all coordinates have the form $(a+b\sqrt{3})/c$ for integers $a,b,c$ where $b$ and $c$ have size $O(1)$.
\end{theorem}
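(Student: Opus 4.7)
The strategy is a direct analog of the proofs of Theorems~\ref{thm:noncrossing-rigidity} and~\ref{thm:unit-rigidity}, substituting Construction~\ref{con:matchstick-main-theorem} for the role played there by Constructions~\ref{con:noncrossing-main-theorem} and~\ref{con:unit-main-theorem}. Membership in $\forall\bR$ follows from the standard recipe: the complementary (non-rigidity) problem is in $\exists\bR$, since we may existentially quantify over a noncrossing configuration not congruent to~$C_0$ but at arbitrarily small distance from it, which is expressible with polynomial constraints (including the polynomial constraints encoding the non-intersection of pairs of edges).

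For hardness I would reduce from the complement of $\HtwoN$ as strengthened in Theorem~\ref{thm:htwon'}. Given a family $F = \{f_1,\ldots,f_s\}$ of homogeneous degree-$4$ polynomials in $2m$ variables with coefficients in $\{0,\pm 1,\pm 2\}$, note that homogeneity automatically gives $f_j(\vec 0) = 0$, so both Parts~II and~III of Construction~\ref{con:matchstick-main-theorem} apply. Apply them to produce, in polynomial time, a configured matchstick linkage $(\cM,C_0')$ that perfectly simulates the scaled partially rigidified linkage $40D\cdot\cL(F)$ from the Main Theorem. The output of the reduction is the configured matchstick graph obtained from $(\cM,C_0')$ by discarding its three pins. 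Part~III of Construction~\ref{con:matchstick-main-theorem} already supplies coordinates of exactly the form $(a+b\sqrt{3})/c$ with $a$ of polynomially bounded magnitude and $b,c$ of constant magnitude, yielding the claimed coordinate bounds.

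For correctness, by the Main Theorem the drawing-vertex tuple $X$ draws a translation of $40D\cdot(Z(F)\cap[-1,1]^{2m})$ both liftably and rigidly, and perfect simulation transfers these properties to~$\cM$. If $F$ has a nonzero common root~$\vec a$, then by homogeneity the segment $\{t\vec a : t\in[0,\epsilon]\}$ lies in $Z(F)\cap[-1,1]^{2m}$ for small~$\epsilon>0$, giving a nontrivial continuous path through~$T(\vec 0)$ in the trace; liftability converts this into a nontrivial continuous motion of~$\cM$ starting at~$C_0'$, which persists after the pins are removed and is evidently not a Euclidean motion (since it changes relative positions of vertices in~$X$). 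Conversely, if $Z(F)=\{\vec 0\}$ then the trace is the single point $T(\vec 0)$; Property~\ref{thmpart:uniqueness} forces $\cL$ to have only the configuration from Part~III, perfect simulation forces $\cM$ to have only~$C_0'$, and by Property~\ref{thmpart:pins} unpinning enlarges the configuration space by exactly the orbit of $C_0'$ under Euclidean motions, so the matchstick graph is rigid as a graph.

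The main subtlety I expect is keeping straight the interaction between the matchstick NX-constraint and the unpinning step: a priori one might worry that unpinning could expose crossing configurations that were previously forbidden. This is not a real obstacle, however, because Euclidean motions preserve noncrossingness, so the noncrossing configurations of the unpinned graph are exactly the Euclidean images of the noncrossing configurations of~$\cM$, and perfect simulation then finishes the argument. Ensuring that every continuous ``lifted'' motion of the linkage stays within the NX-constrained configuration space is also routine: the Main Theorem's construction is globally noncrossing with positive global minimum feature size (Property~\ref{thmpart:gmfs}), and Construction~\ref{con:matchstick-main-theorem} already verifies that the simulating matchstick linkage inherits noncrossingness along the entire configuration space.
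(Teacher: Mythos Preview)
Your proposal is correct and follows essentially the same route as the paper, which explicitly states that the proof is ``perfectly analogous'' to Theorems~\ref{thm:noncrossing-rigidity} and~\ref{thm:unit-rigidity} with Construction~\ref{con:matchstick-main-theorem} in place of the earlier constructions. One small caution: your membership sketch (``existentially quantify over a noncrossing configuration \ldots\ at arbitrarily small distance'') reads like a $\forall\exists$ statement rather than an $\exists\bR$ one; the paper sidesteps this by citing Schaefer's result that isolation in a semialgebraic set lies in $\forall\bR$, and you should do the same or invoke an effective-radius argument explicitly.
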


\begin{theorem}[Hardness of Matchstick Graph Global Rigidity]
  \label{thm:matchstick-global-rigidity}
  Deciding whether a given configured matchstick graph with coordinates in $\bQ[\sqrt{3}]$ is globally rigid is $\forall\bR$-complete, even when all coordinates have the form $(a+b\sqrt{3})/c$ for integers $a,b,c$ where $b$ and $c$ have size $O(1)$.
\end{theorem}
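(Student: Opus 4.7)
The plan is to reduce from the complement of $\HtwoN$, following the template of Theorems~\ref{thm:noncrossing-global-rigidity} and~\ref{thm:unit-global-rigidity} but using Construction~\ref{con:matchstick-main-theorem}. Membership in $\forall\bR$ follows exactly as in Theorem~\ref{thm:noncrossing-global-rigidity}: matchstick-graph global rigidity can be phrased as the universal statement that every configuration $C$ satisfying the unit-length equalities and the polynomial (in)equalities encoding non-crossing is congruent to $C_0$.

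For hardness, given an instance $F=\{f_1,\ldots,f_s\}$ of $\HtwoN$ with degree-$4$ polynomials whose coefficients lie in $\{0,\pm1,\pm2\}$ (Theorem~\ref{thm:htwon'}), homogeneity forces $f_j(\vec 0)=0$, so Part~III of Theorem~\ref{thm:main-theorem} applies. Invoke Construction~\ref{con:matchstick-main-theorem} to build, in time $O(\poly(m^d,d^d,s,M))=O(\poly(m,s))$, a matchstick linkage $\cM$ with three non-collinear pins on a single edge polyiamond, together with a configuration $C_0'$ that \emph{perfectly} simulates the scaled linkage $D\cdot\cL(F)$ with initial configuration $D\cdot C_0$. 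By the Part~III guarantee of the construction, the coordinates of $C_0'$ have the required form $(a+b\sqrt{3})/c$ with $|a|$ polynomially bounded and $|b|,|c|=O(1)$. The reduction outputs the configured matchstick graph $(\cG,C_0')$ obtained from $(\cM,C_0')$ by deleting the three pins.

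Correctness leans on \emph{perfect} simulation. If $Z(F)=\{\vec 0\}$, then $\pi_X(\Conf(\cM))\subseteq T(D\cdot Z(F))=\{T(\vec 0)\}$, and perfect simulation forces $\Conf(\cM)=\{C_0'\}$. To transfer linkage-global-rigidity to graph-global-rigidity, take any noncrossing configuration $C$ of $\cG$. The three pin-vertices lie inside a single rigid edge polyiamond, which has only two embeddings in the plane (mirror images), so there is a unique Euclidean transformation $T^*$ sending their positions in $C$ onto the prescribed pin positions $p_1,p_2,p_3$, chosen orientation-preserving or orientation-reversing according to whether the pinned polyiamond is in standard or reflected orientation in $C$. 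Either way $T^*(C)$ satisfies the pin constraints of $\cM$, so $T^*(C)\in\Conf(\cM)=\{C_0'\}$ and hence $C$ is congruent to $C_0'$. Conversely, if $F$ has a nonzero root $\vec a$, homogeneity yields $p\vec a\in Z(F)$ for every $p\ge 0$; for sufficiently small $p>0$ the point $p\vec a$ lies in $[-1,1]^{2m}$, so $T(Dp\vec a)\in T(D\cdot(Z(F)\cap[-1,1]^{2m}))\subseteq\pi_X(\Conf(\cM))$, lifting via perfect simulation to a configuration $C_1'\ne C_0'$ in $\Conf(\cM)$. Since $C_0'$ and $C_1'$ agree on the three non-collinear pinned vertices, any congruence between them would fix three non-collinear points and be the identity; hence $C_0'$ and $C_1'$ are noncongruent configurations of $\cG$, and $\cG$ is not globally rigid.

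The main obstacle is precisely the linkage-to-graph transition: one must confirm that the three pins rule out every Euclidean ambiguity, including reflection, once the pins are removed. This relies on the fact, established inside the proof of Construction~\ref{con:matchstick-main-theorem}, that each edge polyiamond has exactly two rigid placements that are reflections of one another, so that three non-collinear pins on the pinned polyiamond uniquely determine its position and orientation in both parities. Once this is in hand, both directions above are routine, and the reduction runs in polynomial time while preserving the prescribed $(a+b\sqrt{3})/c$ coordinate form.
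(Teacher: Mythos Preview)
Your proposal is correct and follows essentially the same approach as the paper, which explicitly omits the proof and declares it ``perfectly analogous'' to Theorems~\ref{thm:noncrossing-global-rigidity} and~\ref{thm:unit-global-rigidity} with Construction~\ref{con:matchstick-main-theorem} substituted in. Your write-up is in fact more detailed than the paper's analogous arguments, particularly in spelling out the linkage-to-graph transition via the three non-collinear pins on a globally rigid edge polyiamond.

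One small imprecision: when you write ``perfect simulation forces $\Conf(\cM)=\{C_0'\}$,'' perfect simulation only gives $\Conf(\cM)\simeq\Conf(D\cdot\cL)$; to conclude that $\Conf(D\cdot\cL)$ is a singleton you still need Property~\ref{thmpart:uniqueness} of Theorem~\ref{thm:main-theorem} (uniqueness of the configuration over $T(\vec 0)$), exactly as invoked in the proof of Theorem~\ref{thm:noncrossing-global-rigidity}. This is the step where mere liftable-and-rigid drawing of $\cL$ would not suffice, and it is worth citing explicitly.
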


These hardness proofs are perfectly analogous to those in Section~\ref{sec:unit}, using Construction~\ref{con:matchstick-main-theorem} instead of Construction~\ref{con:unit-main-theorem} and Lemma~\ref{lem:moser-spindle}, so we omit their proofs.

As described at the beginning of this section, \emph{universality} of drawing with matchstick linkages is more subtle than prior universality results, because matchstick linkages can draw traces that are bounded semialgebraic sets which need not be closed.  We will prove that matchstick linkages can indeed draw \emph{all} such sets, but our argument utilizes details from the \emph{proof} of the Main Theorem. The universality statement and proof may therefore be found at Theorem~\ref{thm:matchstick-universality-full}, at the end of Section~\ref{sec:main-construction}.

\section{Extended Linkages and the Main Construction}
\label{sec:main-construction}

In this section we finally prove the Main Theorem. We first define \emph{extended linkages}, a special type of constrained linkage designed specifically for this proof, in Section~\ref{sec:defining-extended-linkages}. Section~\ref{sec:detailed-overview} then presents a detailed outline of our proof strategy, and Section~\ref{sec:parameters} details our choices of parameters. The remainder of this section then covers the proof in full.

\subsection{Defining Extended Linkages}
\label{sec:defining-extended-linkages}

We define and use \term{extended linkages}, which are constrained linkages whose constraints are tailored for the specifics of our construction.
(More general constraints are possible but not explored in this paper.)

The first of these constraints, the angle constraint, specifies a preferred arrangement of edges around each vertex, and specifies that the angle at each corner may not change very much. Recall that a \term{combinatorial embedding} consists of a cyclic counterclockwise ordering $\sigma_v$ of the edges incident to each vertex $v$ (Definition~\ref{def:combinatorial-embedding}).

\begin{definition}[Angle Constraint]
  For a linkage $\cL$ and a combinatorial embedding $\sigma$ of $\cL$, an \term{angle constraint}, $\AngleCon_\cL(\sigma,A,\Delta)$, is specified by an assignment of an angle $0 \le A(\Lambda) \le 2\pi$ and an angle tolerance $\Delta(\Lambda)\ge 0$ to each corner $\Lambda$ of $\sigma$, with the condition that $A$ assigns a total of~$2\pi$ to the corners around each vertex.

  A configuration $C\in\Conf(\cL)$ satisfies the angle constraint $\AngleCon_\cL(A,\Delta)$ if, for each corner $\Lambda$, angle $\angle C(\Lambda)$ lies in the closed interval
  \begin{equation*}
    [A(\Lambda)-\Delta(\Lambda),A(\Lambda)+\Delta(\Lambda)].
  \end{equation*}
  In particular, any corner with $\Delta(\Lambda) = 0$ is rigid: its angle in $C$ must be exactly $A(\Lambda)$. We call such corners \term{frozen}.
\end{definition}

Note that edges incident to a vertex $v$ might not be configured in the cyclic order that $\sigma$ prefers, if tolerances $\Delta(\Lambda)$ are large enough to allow otherwise. But even if each vertex locally agrees with $\sigma$'s cyclic ordering in some configuration $C$, we need $C$ to be noncrossing before we can say that it \emph{agrees with} $\sigma$ as in Definition~\ref{def:combinatorial-embedding}. In an angle constraint, $\sigma$ has only one role: to identify which pairs of edges should be considered corners.

The other constraint allows for sliceforms:

\begin{definition}[Sliceform Constraint]
  For a constrained abstract linkage~$\cL$ and a combinatorial embedding $\sigma$, a \term{Sliceform Constraint}, $\SliceCon_\cL(\sigma,S)$, is specified by a subset $S\subset V(G)$ of (some or all of the) vertices of degree~$4$. A configuration $C\in\Conf(\cL)$ satisfies the sliceform constraint $\SliceCon_\cL(\sigma,S)$ if, for each \term{sliceform vertex} $v\in S$ with neighboring vertices $w,x,y,z$ in cyclic order according to $\sigma$, points $C(w),C(v),C(y)$ are collinear in this order, and $C(x),C(v),C(z)$ are collinear in this order.
\end{definition}

Sliceforms permit a limited form of ``nonplanar'' interaction while still being simulatable without crossings (cf.\ Lemma~\ref{lem:sliceform-gadget} on page~\pageref{lem:sliceform-gadget}), so they are our primary tool in circumventing the difficulties of planarity.

An extended linkage is simply a linkage with each type of constraint listed above, with a few convenient restrictions:

\begin{definition}[Extended Linkage]
  An \term{$(\eps,\delta)$-extended linkage} where $0 < \delta < \eps < \pi/4$ is defined as a constrained linkage $\cL$ whose constraints $K$ have the form  $$K = \{\AngleCon_\cL(\sigma,A,\Delta),\SliceCon_\cL(\sigma,S)\}$$ with the same combinatorial embedding $\sigma$, where at each corner $\Lambda$ of $\cL$,
  $A(\Lambda)$ is chosen from $\{90^\circ,180^\circ,270^\circ,360^\circ\}$, and
  $\Delta(\Lambda)$ lies in $\{0,\delta,\eps\}$.
  We will call~$\cL$ simply an~\term{extended linkage} when~$\eps$ and~$\delta$ are the global constants given later by Equations~\ref{eq:eps}--\ref{eq:delta}.
\end{definition}

\begin{convention}[Drawing Extended Linkages]
  \label{con:drawing-extended-linkages}
  In drawings of extended linkages, all corners $\Lambda$ are drawn at their ``base'' angle, $A(\Lambda)$. Corners marked with a solid gray sector are frozen ($\Delta(\Lambda) = 0$), and the rest have $\Delta(\Lambda) = \eps$ unless otherwise specified. Vertices surrounded by small squares are pinned, and those marked with an ``x'' are sliceform vertices.
  See Figure~\ref{fig:scale-invariant-gadgets} for examples.
\end{convention}

It will often be useful to describe extended linkages not by the positions of their vertices, as with the projection $\pi_X$ (cf.\ Definition~\ref{def:trace} of \emph{trace}), but by the angles of a chosen set of corners. We therefore define a function $\Offset$ that measures how these angles differ from their ``neutral'' values given by $A$ in the angle constraint $\AngleCon(\sigma,A,\Delta)$:
\begin{definition}[The $\Offset$ Function]
  If $\cL$ is an extended linkage and $Y = (\Lambda_1,\ldots,\Lambda_k)$ is a tuple of (some or all of) the corners of $\cL$, define the function $\Offset_Y:\Conf(\cL)\to \bR^{|X|}$ by 
  \begin{equation*}
    \Offset_Y(C) = (\angle C(\Lambda_1) - A(\Lambda_1), \ldots, \angle C(\Lambda_k) - A(\Lambda_k)).
  \end{equation*}
\end{definition}

We conclude this section with some useful facts about extended linkages.

\begin{lemma}
  If $\cL$ is any $(\eps,\delta)$-extended linkage, then each vertex of $\cL$ has degree at most $4$. If a configuration $C\in\Conf(\cL)$ has no crossings, then $C$ agrees with combinatorial embedding~$\sigma$.
\end{lemma}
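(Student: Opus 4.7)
The plan is to extract everything from three numerical facts built into the definition of an $(\eps,\delta)$-extended linkage: each base angle $A(\Lambda)$ is at least $\pi/2$ (since $A$ takes values in $\{\pi/2,\pi,3\pi/2,2\pi\}$), the base angles around each vertex sum to exactly $2\pi$, and the tolerance $\eps$ is strictly less than $\pi/4$.

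For the degree bound, I would first note that by Definition~\ref{def:combinatorial-embedding}, a vertex $v$ of degree $k \ge 2$ has exactly $k$ corners under $\sigma$ (a degree-$1$ vertex has a single corner, and the bound is trivial there). Summing $A(\Lambda) \ge \pi/2$ over the $k$ corners at $v$ and using $\sum_\Lambda A(\Lambda) = 2\pi$ from the definition of the angle constraint, I obtain $k \cdot \pi/2 \le 2\pi$, so $\deg(v) \le 4$.

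For the agreement claim, I would first fix the natural extension of $\angle C(\Lambda) = \angle C(u)C(v)C(w)$ to general (possibly crossing) configurations: interpret it as the counterclockwise angle in $(0,2\pi)$ from the ray $\widevec{C(v)C(u)}$ to the ray $\widevec{C(v)C(w)}$. This specializes to the definition already given for noncrossing-agreeing configurations, and the angle constraint guarantees $\angle C(\Lambda) \ge A(\Lambda) - \eps > 0$, so the two rays are never coincident at a corner. Now fix a noncrossing configuration $C \in \Conf(\cL)$ and a vertex $v$ of degree $k \ge 2$, with incident edges $vw_1,\ldots,vw_k$ listed in $\sigma_v$-order and corresponding corners $\Lambda_i = w_{i-1} v w_i$ (indices mod $k$). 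A standard winding-number fact says that $\sum_{i=1}^k \angle C(\Lambda_i)$ is always a positive integer multiple $2\pi N$, and equals $2\pi$ precisely when the $\sigma_v$-ordering of the rays $\widevec{C(v)C(w_i)}$ matches their counterclockwise planar ordering at $C(v)$. The angle constraint bounds this sum within $k\eps < 4 \cdot \pi/4 = \pi$ of $2\pi$, placing it in the open interval $(\pi,3\pi)$, whose only integer multiple of $2\pi$ is $2\pi$ itself. Hence $N = 1$, so $\sigma_v$ matches the planar counterclockwise order at $v$; since this holds at every vertex of degree $\ge 2$ (and trivially at degree-$1$ vertices) and $C$ is noncrossing, $C$ agrees with $\sigma$ in the sense of Definition~\ref{def:combinatorial-embedding}.

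The only step to write carefully is the winding-number identity---that ccw corner sums around $v$ are always integer multiples of $2\pi$, and equal $2\pi$ exactly when $\sigma_v$ coincides with the planar ccw cyclic order---which is a standard topological observation but worth stating explicitly since $\angle C(\Lambda)$ is only formally defined in the paper for noncrossing-agreeing configurations. Everything else is a one-line numerical check; the bound $\eps < \pi/4$ combined with the just-established $\deg(v) \le 4$ is exactly what is needed to exclude every multiple of $2\pi$ in $(\pi,3\pi)$ other than $2\pi$ itself.
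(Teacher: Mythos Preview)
Your proposal is correct and takes essentially the same approach as the paper: both arguments derive the degree bound from $\sum_\Lambda A(\Lambda) = 2\pi$ with each $A(\Lambda) \ge \pi/2$, and both use the winding-number observation that the ccw corner angles at a vertex always sum to a positive multiple of $2\pi$, then apply the angle constraint to force this multiple to be~$1$. The only cosmetic difference is that you use the two-sided bound $|\sum \angle C(\Lambda_i) - 2\pi| \le k\eps < \pi$ to land in $(\pi,3\pi)$, whereas the paper uses the looser one-sided bound $0 < \angle C(\Lambda_j) < A(\Lambda_j) + \pi/4$ to land in $(0,4\pi)$; both isolate $2\pi$ equally well.
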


\begin{proof}
  The $A(\Lambda)$ values around each vertex must sum to $2\pi$, and each $A(\Lambda)$ is at least $\pi/2$, so each vertex has degree at most $4$, verifying the first claim.

  For the second claim, pick a vertex $v$, which we may assume has degree at least $2$, for otherwise there is nothing to prove at $v$. Let $\Lambda_1,\ldots,\Lambda_{\deg(v)}$ be the corners around $v$, and let $C$ be \emph{any} configuration of $\cL$. Because each corner begins where the previous corner ends, the angles $\angle C(\Lambda_j)$ around $C(v)$ will always add to an integer multiple of $2\pi$; the edges surrounding $v$ are configured in the cyclic order specified by $\sigma$ precisely when each $\angle C(\Lambda_j)$ is positive and they all add to exactly $2\pi$.

  For each $\Lambda_j$ ($1\le j\le \deg(v)$), we have $0 < \angle C(\Lambda_j) < A(\Lambda_j) + \pi/4$ by our assumptions on $A$ and $\Delta$. Thus,
  \begin{equation*}
    0 < \sum_{j=1}^{\deg(v)} \angle C(\Lambda_j) <
    \deg(v)\cdot\frac{\pi}{4} + \sum_{j=1}^{\deg(v)} A(\Lambda_j) =
    \deg(v)\cdot\frac{\pi}{4} + 2\pi
    < 4\pi,
  \end{equation*}
  so the angles $\angle C(\Lambda_j)$ must indeed add to $2\pi$, as claimed. Thus, if $C$ has no crossings, it agrees with~$\sigma$.
\end{proof}

\begin{lemma}
  \label{lem:extended-linkage-conf}
  For any $(\eps,\delta)$-extended linkage $\cL$, the configuration space $\Conf(\cL)$ is closed and semialgebraic. If $\cL$ is connected and has at least one pin, then $\Conf(\cL)$ is compact.
\end{lemma}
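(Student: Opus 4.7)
The plan is to write $\Conf(\cL) = \Conf(\cL_0) \cap \AngleCon_\cL(\sigma,A,\Delta) \cap \SliceCon_\cL(\sigma,S)$ and to verify that each factor is closed and semialgebraic. The first factor $\Conf(\cL_0)$ is closed and algebraic by the definition of abstract linkages (Section~\ref{sec:linkage-definitions}), so it suffices to show the same for the two constraint sets.

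First I would dispatch the sliceform constraint. For each sliceform vertex $v \in S$ with cyclic neighbors $w,x,y,z$, the condition that $C(w),C(v),C(y)$ are collinear with $v$ between $w$ and $y$ is equivalent, given the fixed edge lengths $\ell(vw)$ and $\ell(vy)$, to the single polynomial equation $|C(w)-C(y)|^2 = (\ell(vw)+\ell(vy))^2$, and analogously for $x,v,z$. These equations carve out a closed algebraic subset.

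Next I would treat the angle constraint. For each corner $\Lambda = uvw$ at a vertex of degree $\ge 2$, the quantities
\begin{align*}
 p_\Lambda(C) &:= (C(u)-C(v))\cdot(C(w)-C(v)),\\
 q_\Lambda(C) &:= (C(u)-C(v))\times(C(w)-C(v))
\end{align*}
are polynomials in the coordinates and equal $\ell(uv)\ell(vw)\cos\angle C(\Lambda)$ and $\ell(uv)\ell(vw)\sin\angle C(\Lambda)$, respectively. Since $A(\Lambda)\in\{90^\circ,180^\circ,270^\circ\}$ and $\Delta(\Lambda)\le\eps<\pi/4$, the admissible interval $[A-\Delta,A+\Delta]$ lies strictly inside one of the open half-turns $(0^\circ,180^\circ)$, $(90^\circ,270^\circ)$, or $(180^\circ,360^\circ)$, and in particular avoids the branch point $0^\circ\equiv 360^\circ$. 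I can therefore encode ``$\angle C(\Lambda)\in[A-\Delta,A+\Delta]$'' as a conjunction of polynomial inequalities on the coordinates: a sign condition on $q_\Lambda$ to pin down the correct half-turn, together with two inequalities sandwiching $p_\Lambda$ between $\ell(uv)\ell(vw)\cos(A+\Delta)$ and $\ell(uv)\ell(vw)\cos(A-\Delta)$. Corners with $A(\Lambda)=360^\circ$ can occur only at degree-$1$ vertices and are vacuously satisfied by the convention $\angle C(u)C(v)C(u)=360^\circ$. Each constraint thus defines a closed semialgebraic subset, and so does the finite intersection.

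For the compactness claim, assume $\cL$ is connected with a pin fixing some vertex $w_0$ at $p\in\bR^2$, and set $L := \sum_{e\in E(G)}\ell(e)$. For every vertex $v$ and every $C\in\Conf(\cL)$, the image under $C$ of a path in $G$ from $w_0$ to $v$ has Euclidean length at most $L$, so $|C(v)-p|\le L$ by the triangle inequality. Hence $\Conf(\cL)$ is contained in $\overline{B(p,L)}^{|V(G)|}$ and is thus bounded as well as closed, so Heine--Borel gives compactness. The main subtlety is the piecewise-polynomial description of the angle constraint; the key observation making it clean is that the assumptions $\Delta<\pi/4$ and $A\in\{90^\circ,180^\circ,270^\circ\}$ keep each admissible interval inside a single branch of the $(\cos,\operatorname{sgn}\sin)$ chart, eliminating any case analysis around $0$/$2\pi$.
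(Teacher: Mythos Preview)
Your decomposition and the compactness argument match the paper's proof, and your sliceform encoding via $|C(w)-C(y)|^2=(\ell(vw)+\ell(vy))^2$ is equivalent to the paper's vector equality $(w-v)/\ell(vw)=-(y-v)/\ell(vy)$.

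Your angle-constraint encoding, however, fails as written for $A(\Lambda)=180^\circ$. On the half-turn $(90^\circ,270^\circ)$, the quantity $q_\Lambda\propto\sin\angle C(\Lambda)$ changes sign at $180^\circ$, so no sign condition on $q_\Lambda$ can pin down that half-turn; and since $\cos(180^\circ-\Delta)=\cos(180^\circ+\Delta)$, your ``sandwich'' on $p_\Lambda$ degenerates to the single equation $p_\Lambda=-\ell(uv)\ell(vw)\cos\Delta$, which selects only the two boundary angles $180^\circ\pm\Delta$ rather than the full interval. The repair is easy: for $A=180^\circ$ swap the roles of $p_\Lambda$ and $q_\Lambda$, imposing $p_\Lambda\le 0$ (which does characterize $\theta\in[90^\circ,270^\circ]$) and sandwiching $q_\Lambda$ between $\pm\ell(uv)\ell(vw)\sin\Delta$, using that $\sin$ is strictly monotone on $(90^\circ,270^\circ)$.

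The paper sidesteps this case analysis with a single-inequality encoding: rotate the unit vector $p=(u-v)/\ell(uv)$ by the base angle $A(\Lambda)$ to obtain $p'$, and impose $p'\cdot q\ge\cos\Delta(\Lambda)$ where $q=(w-v)/\ell(wv)$. Because $A(\Lambda)\in\{90^\circ,180^\circ,270^\circ,360^\circ\}$, the rotation matrix has entries in $\{0,\pm1\}$, so this is a genuine polynomial inequality in the coordinates, and it uniformly captures $|\angle C(\Lambda)-A(\Lambda)|\le\Delta(\Lambda)$ without splitting on which half-turn is involved.
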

\begin{proof}
  Let $\cL'$ be the underlying, unconstrained linkage. Then $\Conf(\cL')$ is closed and semialgebraic (in fact algebraic), so it is enough to show that each of the constraints specifies a closed, algebraic subset of $\Conf(\cL')$.

  For the angle constraint, suppose $\Lambda = u v w$ is a corner with $A(\Lambda) = \alpha$ and $\Delta(\Lambda) = \theta$. If $u = w$ then $v$ has degree~$1$ and there is no constraint, so we may assume $u$ and $w$ are distinct vertices. We claim the angle constraint at $\Lambda$ can be expressed by the dot product inequality $p'\cdot q \ge \cos\theta$, where $p = \frac{u-v}{\ell(u v)}$, $p' = \Rot_\alpha p$ is the rotation of $p$ by angle $\alpha$ around the origin, and $q = \frac{w-v}{\ell(w v)}$.
  To see this, first observe that in any configuration of $\cL'$, $p$ and $q$ (and hence also $p'$) will be unit vectors. The dot product then computes the cosine of the angle between $p'$ and $q$. The inequality bounds this angle to the interval $[-\theta,\theta]$, as desired. So the angle constraint is indeed a closed, semialgebraic subset of $\Conf(\cL')$.

  The sliceform constraint is simpler. If $v$ is a sliceform vertex with neighboring vertices $w$, $x$, $y$, $z$ in cyclic order, then the constraint may be expressed by the vector equalities
  \begin{equation*}
    \frac{w-v}{\ell(v w)} = -\frac{y-v}{\ell(v y)}
    \qquad\text{and}\qquad
    \frac{x-v}{\ell(v x)} = -\frac{z-v}{\ell(v z)},
  \end{equation*}
  which is an algebraic (and therefore closed semialgebraic) constraint on $\Conf(\cL')$. The configuration space $\Conf(\cL)$ is therefore closed and semialgebraic, as claimed.

  If $\cL$ is connected and has at least one pin, then $\Conf(\cL)$ is closed and \emph{bounded} and is therefore compact.
\end{proof}

\subsection{Detailed Overview of Strategy}
\label{sec:detailed-overview}

Our Main Construction is primarily concerned with showing, for a finite set~$F$ of polynomials in $\bR[x_1,y_1,\ldots,x_m,y_m]$, how to construct an extended linkage that draws a bounded portion of the common zero set $Z(F)$, i.e., a set between $Z(F)\cap[-1,1]^{2m}$ and $Z(F)$, up to a translation. This subsection outlines the key points of our approach, with full details to follow in subsequent subsections.

\begin{convention}
  For convenience, points in $\bR^2$ will be specified by cartesian coordinates or by complex numbers interchangeably, so $(3\cos\theta,3\sin\theta)$ and $3\exp(i\theta)$ are identical.
\end{convention}

Our construction will be able to more easily manipulate \emph{angles} than \emph{vectors}, so wherever possible we will encode a point using two angles $\alpha$ and $\beta$, which relate to the point's Cartesian coordinates as follows:
\begin{align}\label{eqn:angular-coordinates}
  \Rect(\alpha,\beta) := 
  (\cos\alpha,\sin\alpha) + (-\sin\beta,\cos\beta) - (1,1)
  = e^{i\alpha}+ i\cdot e^{i\beta} - (1+i).
\end{align}
Usually $\alpha$ and $\beta$ will be small angles, so geometrically, $\Rect(\alpha,\beta)$ is the result of starting at $(-1,-1)$, stepping one unit at angle $\alpha$ (nearly horizontal), and then one unit at angle $\beta+\pi/2$ (nearly vertical). In particular, $\Rect(0,0) = (0,0)$. We refer to these angles $\alpha$ and $\beta$ as \term{angular coordinates} for the corresponding point. Similar strategies are employed in~\cite{Kempe-1876,Abbott-2008}.

\begin{lemma}
  \label{lem:rect-contains-box}
  For any angle $0 < \theta < \pi/4$, the function $\Rect$ is a homeomorphism from the region $[-\theta,\theta]^2$ onto its image, which is a compact region (specifically, the Minkowski sum of two circular arcs) containing the box $[-\theta/2,\theta/2]^2$. %
\end{lemma}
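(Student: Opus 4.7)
The plan is to first rewrite $\Rect$ so that the Minkowski-sum description becomes transparent, then prove injectivity by a trigonometric identity, and finally verify containment of the box via a boundary argument.

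First I would observe the algebraic identity
\begin{equation*}
\Rect(\alpha,\beta) = e^{i\alpha} + i e^{i\beta} - (1+i) = (e^{i\alpha}-1) + i(e^{i\beta}-1).
\end{equation*}
Setting $A_1 = \{e^{i\alpha}-1 : \alpha\in[-\theta,\theta]\}$ and $A_2 = iA_1$, this presents the image as the Minkowski sum $A_1 + A_2$ of two congruent circular arcs, one rotated $90^\circ$ from the other.

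Next I would establish injectivity on $[-\theta,\theta]^2$. Suppose $\Rect(\alpha_1,\beta_1) = \Rect(\alpha_2,\beta_2)$, so that $e^{i\alpha_1} - e^{i\alpha_2} = i(e^{i\beta_2} - e^{i\beta_1})$. Applying the identity $e^{ix}-e^{iy} = 2i\sin(\tfrac{x-y}{2})e^{i(x+y)/2}$ to both sides and cancelling the common factor of $2i$ gives
\begin{equation*}
\sin\!\left(\tfrac{\alpha_1-\alpha_2}{2}\right) e^{i(\alpha_1+\alpha_2)/2} \;=\; i\sin\!\left(\tfrac{\beta_2-\beta_1}{2}\right) e^{i(\beta_1+\beta_2)/2}.
\end{equation*}
If either sine vanishes then taking absolute values shows both do, forcing $(\alpha_1,\beta_1)=(\alpha_2,\beta_2)$. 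Otherwise, dividing yields a real unit complex number $e^{i\phi}$, where $\phi = (\alpha_1+\alpha_2)/2 - (\beta_1+\beta_2)/2 - \pi/2$. Thus $\phi \in \pi\bZ$; but $|\phi+\pi/2| \le 2\theta < \pi/2$, so $\phi$ cannot be an integer multiple of $\pi$, a contradiction. Since $\Rect$ is then a continuous injection from the compact set $[-\theta,\theta]^2$ into the Hausdorff plane, it is a homeomorphism onto its image, which is a closed topological disk whose boundary is $\Rect(\partial[-\theta,\theta]^2)$.

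It remains to show that the image contains $B := [-\theta/2,\theta/2]^2$. I would do this by checking that each of the four boundary edges of $[-\theta,\theta]^2$ maps outside $B$. Writing $\Rect(\alpha,\beta) = (\cos\alpha - 1 - \sin\beta,\; \sin\alpha + \cos\beta - 1)$, on the edge $\alpha = \theta$ the $y$-coordinate satisfies $y = \sin\theta + \cos\beta - 1 \ge \sin\theta + \cos\theta - 1$, and the elementary inequality $\sin\theta + \cos\theta - 1 > \theta/2$ for $\theta\in(0,\pi/4]$ (verified by noting $f(\theta) = \sin\theta+\cos\theta-1-\theta/2$ satisfies $f(0)=0$, $f'(0)>0$, and $f(\pi/4) = \sqrt{2}-1-\pi/8>0$, with $f$ unimodal on the interval) puts this edge strictly above $B$. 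On the edge $\alpha = -\theta$ the $y$-coordinate is $-\sin\theta + \cos\beta - 1 \le -\sin\theta < -\theta/2$, using $\sin\theta > \theta/2$ on $(0,\pi/4]$. The two edges $\beta = \pm\theta$ are handled symmetrically. Thus the Jordan curve $\Rect(\partial[-\theta,\theta]^2)$ is disjoint from~$B$.

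Finally, $B$ is connected and contains $0 = \Rect(0,0)$, which lies in the interior of the image (since $(0,0)$ is interior to $[-\theta,\theta]^2$ and the homeomorphism $\Rect$ preserves boundary). By the Jordan curve theorem, the connected set $B$, disjoint from the boundary Jordan curve and meeting the bounded complementary component, must lie entirely in that bounded component, which coincides with the interior of the image. Hence $B \subseteq \Rect([-\theta,\theta]^2)$, as required.

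The main obstacle is the elementary inequality $\sin\theta + \cos\theta - 1 > \theta/2$ on $(0,\pi/4]$; everything else is either trigonometric bookkeeping or a standard application of invariance of domain / the Jordan curve theorem.
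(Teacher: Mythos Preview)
Your proof is correct and largely parallel to the paper's, but the two diverge in the injectivity step. The paper observes that the Jacobian determinant of $\Rect$ is $\cos(\alpha-\beta)\ne 0$ on $[-\theta,\theta]^2$, checks geometrically that $\Rect$ is injective on the boundary square, and then invokes a cited topological result of Massey to upgrade ``local homeomorphism injective on the boundary'' to global bijectivity. Your direct trigonometric argument---reducing to $e^{i\phi}\in\{\pm 1\}$ with $\phi\in(-\pi,0)$---is more elementary and self-contained, avoiding the external citation. For the containment of $[-\theta/2,\theta/2]^2$, the paper reads off from the geometry that the image contains the axis-aligned square $[-\sin\theta,\sin\theta+\cos\theta-1]^2$ and then compares that to $[-\theta/2,\theta/2]^2$; you instead show each boundary edge of $[-\theta,\theta]^2$ maps outside the target box and conclude via the Jordan curve theorem. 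Both routes come down to exactly the same two concavity inequalities $\sin\theta>\theta/2$ and $\sin\theta+\cos\theta-1>\theta/2$ on $(0,\pi/4]$, so the analytic core is identical.
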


\begin{proof}
  The region $M := \Rect([-\theta,\theta]^2)$ is the Minkowski sum of two orthogonal circular arcs, each with central angle $2\theta$, as shown in Figure~\ref{fig:rect-facts}.

  $\Rect$ is a local homeomorphism on $[-\theta,\theta]^2$, because the Jacobian determinant at $(\alpha,\beta)$ is $\cos(\alpha-\beta)\ne 0$. It may also be seen geometrically that $\Rect$ is injective on the boundary of $[-\theta,\theta]^2$, since the four circular arcs bounding~$M$ in Figure~\ref{fig:rect-facts} are disjoint away from their endpoints. This is enough to ensure that $\Rect$ is bijective and in fact homeomorphic with its image, by~\cite[Corollary 2.6]{massey1992sufficient}.

  Region $M$ contains the square $[-\sin\theta,\sin\theta+\cos\theta-1]^2$, as shown in Figure~\ref{fig:rect-facts}. If we could show that $p(\theta) := \sin\theta - \theta/2$ and $q(\theta) := (\sin\theta+\cos\theta-1) - \theta/2$ are both nonnegative, we would conclude that this square, and hence $M$ itself, contains $[-\theta/2,\theta/2]^2$. Both $p$ and $q$ are concave down functions of $\theta\in[0,\pi/4]$, so it is enough to verify nonnegativity only at the endpoints $0$ and $\pi/4$, which is straightforward: $p(0)=q(0)=0$, $p(\pi/4) = (4\sqrt{2}-\pi)/8 > 0.314 > 0$, and $q(\pi/4) = \sqrt{2}-1-\pi/8 > 0.0215 > 0$.
\end{proof}

\begin{figure}[h]
  \centering
  \includegraphics{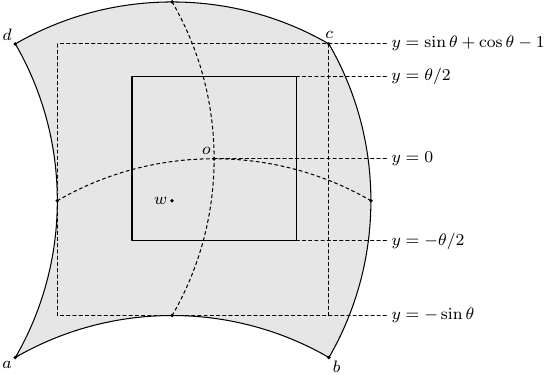}
  \caption{The region $M$ described in Lemma~\ref{lem:rect-contains-box} (gray) is bounded by four circular arcs and contains the square $[-\theta/2,\theta/2]^2$.}
  \label{fig:rect-facts}
\end{figure}

For the constructions below, in place of rectangular coordinates $(x_j,y_j)$ we will use angles $(\alpha_j,\beta_j)$ related by $(x_j,y_j) = 2r\cdot\Rect(\alpha_j,\beta_j)$, where the radius~$2r$ will be carefully chosen later. We may write this equivalently as
\begin{align}
  x_j &= r\left(e^{i\alpha_j} + e^{-i\alpha_j} + i e^{i\beta_j} - i e^{-i\beta_j} - 2\right),\nonumber \\
  y_j &= r\left(-i e^{i\alpha_j} + i e^{-i\alpha_j} + e^{i\beta_j} + e^{-i\beta_j} - 2\right).
        \label{eqn:change-of-variables-explicit}
\end{align}
Applying (\ref{eqn:change-of-variables-explicit}) for each $1 \leq j \leq m$,
write $\widevec{xy} := (x_1,y_1,\ldots,x_m,y_m)$ as a function
$\widevec{xy}(\widevec{\alpha\beta})$
where $\widevec{\alpha\beta} := (\alpha_1,\beta_1,\ldots,\allowbreak\alpha_m,\allowbreak\beta_m)$.
Make this substitution into each polynomial $f\in F$, expand fully, and combine like terms, resulting in an expression of the form
\begin{equation*}
  f(\widevec{xy}(\widevec{\alpha\beta})) = \sum_{u=0}^3 \sum_{I\in\Coeffs(2m,d)} i^u\cdot d_{u,I}\cdot e^{i\cdot(I\cdot\widevec{\alpha\beta})},
\end{equation*}
where each $d_{u,I}$ is a nonnegative real number, and
\begin{equation}
  \Coeffs(2m,d) := \{(a_1,b_1,\ldots,a_m,b_m)\in\bZ^{2m}\mid |a_1| + |b_1| + \cdots + |a_{m}| + |b_m| \le d\}.
  \label{eqn:polynomials-in-angular-intermediate}
\end{equation}
(The factors of $r$ from~\eqref{eqn:change-of-variables-explicit} have been incorporated into coefficients $d_{u,I}$.)
Only $d_{u,I}-d_{u+2,I}$ (with indices taken modulo $4$) affects the total, so by further cancellation we may assume one or both of these coefficients is $0$ (for each $u,I$ pair).
It will prove useful to change the form of this expression a bit: replace each term $e^{i\cdot(I\cdot\widevec{\alpha\beta})}$ from~\eqref{eqn:polynomials-in-angular-intermediate} with $e^{i\cdot(I\cdot\widevec{\alpha\beta})}-1$, and add a constant term to compensate for this change. The result is
\begin{equation}
  f(\widevec{xy}(\widevec{\alpha\beta})) = f(\vec 0) + \sum_{u=0}^3\sum_{I\in\Coeffs(2m,d)} i^u\cdot d_{u,I}\cdot \left(e^{i\cdot(I\cdot\widevec{\alpha\beta})} - 1\right);
  \label{eqn:polynomials-in-angular}
\end{equation}
by evaluating at $\widevec{\alpha\beta} = \vec 0$ (equivalently, $\widevec{xy} = 0$), it may be verified that the required constant term is in fact $f(\vec 0)$. Note that the four coefficients $d_{u,I}$ with $I=\vec 0$ can be discarded, as these terms in~\eqref{eqn:polynomials-in-angular} are identically $0$.

Even though $f(\widevec{xy}(\widevec{\alpha\beta}))$ computes real values, these complex representations of $f$ will prove more useful for turning polynomials into linkages. In particular, the modified form~\eqref{eqn:polynomials-in-angular} has the advantage that each term other than the $f(\vec 0)$ term evaluates to $0+0i$ at $\widevec{\alpha\beta} = \vec 0$, which is crucial for ensuring that our linkages come with nice, rational coordinates when $f(\vec 0) = 0$.

\begin{lemma}
  \label{lem:angular-coefficients}
  If~$f$ has $2m$ variables of total degree at most~$d$ and coefficients with magnitude at most~$M$, and~$r > 0$, then in representation~\eqref{eqn:polynomials-in-angular} there are at most $2(2m)^d(2d+1)^d$ nonzero coefficients $d_{u,I}$, and the absolute values of all of these coefficients add to at most $6^d \cdot r^d\cdot M\cdot \binom{2m+d}{d}$. If $f$ has \emph{integer} coefficients and $r$ is a positive integer, then entries $d_{u,I}$ are also integers, and they may be computed from $f$ in deterministic time~$O(\poly(m^d,d^d,r^d,M))$.
\end{lemma}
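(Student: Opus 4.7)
The plan is to track how each monomial of $f$ expands after the substitution~\eqref{eqn:change-of-variables-explicit}. Each factor $x_j$ or $y_j$ in~\eqref{eqn:change-of-variables-explicit} is a sum of exactly five terms of the form $\lambda\cdot e^{in\alpha_j}$, $\lambda\cdot e^{in\beta_j}$, or $\lambda$, with $n\in\{-1,0,1\}$; the five coefficient magnitudes are $r,r,r,r,2r$, summing to $6r$. So substituting into a monomial $c_{A,B}\prod_j x_j^{A_j}y_j^{B_j}$ of total degree $D=\sum_j(A_j+B_j)\le d$ and distributing yields a sum whose coefficient magnitudes add to at most $|c_{A,B}|(6r)^D\le M(6r)^D$, and whose exponent tuples $I=(a_1,b_1,\ldots,a_m,b_m)$ all satisfy $|a_j|+|b_j|\le A_j+B_j$, so in particular $I\in\Coeffs(2m,d)$.

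For the sum-of-absolute-values bound, I would sum the per-monomial estimate over all at most $\binom{2m+d}{d}$ monomials of $f$ and apply $(6r)^D\le(6r)^d$, which yields the target $6^d\cdot r^d\cdot M\cdot\binom{2m+d}{d}$. Discarding the $I=\vec 0$ constant terms in~\eqref{eqn:polynomials-in-angular} can only reduce this sum, and collecting like exponents does not inflate it either. For the count of nonzero $d_{u,I}$: every nonzero index lies in $\Coeffs(2m,d)$ by the above, and the identity $i^{u+2}=-i^u$ lets me cancel each pair $(d_{u,I},d_{u+2,I})$ down to a single nonzero entry, cutting the four choices of $u\in\{0,1,2,3\}$ down to two per fixed~$I$. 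So it remains to bound $|\Coeffs(2m,d)|\le(2m)^d(2d+1)^d$, which I would establish by choosing a support $S\subseteq\{1,\ldots,2m\}$ of size $k\le d$ and nonzero values from $\{-d,\ldots,d\}$ on $S$; the crude count $\sum_{k=0}^d\binom{2m}{k}(2d)^k\le(2m)^d(2d+1)^d$ suffices.

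Finally, when $f$ has integer coefficients and $r\in\bZ_{>0}$, every intermediate quantity in the expansion is an integer, so each $d_{u,I}\in\bZ$ after the sign adjustments that make the coefficients nonnegative. For the complexity, I would enumerate the at most $\binom{2m+d}{d}$ monomials of $f$, distribute each into its at most $5^d$ sub-terms, and accumulate into a hash table indexed by $(u,I)$; each table entry has unary magnitude at most $M\cdot(6r)^d$, and the number of table updates is bounded by $\binom{2m+d}{d}\cdot 5^d=O(\poly(m^d,d^d))$, so the total running time is $O(\poly(m^d,d^d,r^d,M))$ as claimed.

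The main obstacle here is really bookkeeping rather than a deep idea: the factor of $2$ in the count of nonzero coefficients exploits the cancellation $i^{u+2}=-i^u$, and the sum-of-absolute-values bound uses the happy coincidence that the five coefficient magnitudes in each factor of~\eqref{eqn:change-of-variables-explicit} add to exactly~$6r$. Both become routine once the substitution is carried out term by term.
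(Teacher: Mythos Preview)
Your proof is correct and follows essentially the same approach as the paper's: both arguments expand each monomial after the substitution~\eqref{eqn:change-of-variables-explicit}, use that the coefficient magnitudes in each factor sum to $6r$ (the paper phrases this as ``$6^k$ terms, interpreting $2$ as $1+1$''), bound $|\Coeffs(2m,d)|$ by a crude overcount, and invoke the $i^{u+2}=-i^u$ cancellation for the factor of~$2$. The only cosmetic difference is your overcount of $|\Coeffs(2m,d)|$ via supports of size $k$ rather than the paper's ``choose $d$ positions and assign values in $\{-d,\ldots,d\}$''; both land at $(2m)^d(2d+1)^d$.
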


\begin{proof}
  We may bound the size of $\Coeffs(2m,d)$ by overcounting as follows: construct an integer vector $I\in\bZ^{2m}$ by choosing $d$~entries (or choosing all of them if $d > 2m$), assigning each chosen entry a value from $\{-d,-d+1,\ldots,d\}$, and assigning $0$ for the remaining entries. The number of such $I$ is $\max(\binom{2m}{d},1)\cdot (2d+1)^d\le (2m)^d(2d+1)^d$. There are therefore at most $2\cdot (2m)^d(2d+1)^d$ nonzero coefficients $d_{u,I}$.

  For each monomial $c\cdot \widevec{xy}^J$ in $f$ of total degree $k\le d$, substitute each $x_j$ and $y_j$ as in Equation~\eqref{eqn:change-of-variables-explicit} and expand fully \emph{without collecting like terms}. There are at most $6^k$ terms in this expansion (interpreting $2$ as two separate terms $1+1$), each of the form $c\cdot r^k\cdot i^u\cdot e^{I\cdot\widevec{\alpha\beta}}$, so the sum of the magnitudes of these coefficients is at most $6^k\cdot r^k\cdot c \le 6^d\cdot r^d\cdot M$. Adding this quantity across all monomials in $f$, we obtain a sum no larger than $6^d \cdot r^d\cdot M\cdot \binom{2m+d}{d}$. Finally, collecting like terms (including cancelling $d_{u,I}$ and $d_{u+2,I}$ pairs as much as possible) can only decrease the sum of the magnitudes of the coefficients by the triangle inequality. The coefficients $d_{u,I}$ do not change when modifying Equation~\eqref{eqn:polynomials-in-angular-intermediate} to Equation~\eqref{eqn:polynomials-in-angular} except that the coefficients with $I=\vec 0$ can be discarded, as the corresponding terms are identically zero.

  If the coefficients of $f$ are integers, then every term in the above expansion also has integer coefficients. This expansion may be performed in time $O(\poly(m^d,d^d,r^d,M))$ with straightforward methods.
\end{proof}

We will use this angular representation as a template to compute each polynomial~$f$ in the linkage. Indeed, much like in Kempe's original strategy~\cite{Kempe-1876} and especially in Abbott et al.'s correction thereof~\cite{Abbott-2008}, we provide gadgets for the following tasks.
\begin{itemize}
\item The Start Gadget (Figure~\ref{fig:start-gadget}) converts from rectangular position $(x_j,y_j)$ to angles $(\alpha_j,\beta_j)$.
\item The Angle Sum Gadget (Figure~\ref{fig:angle-sum-gadget}), built from the Angle Average Gadget (Figure~\ref{fig:angle-average-gadget}), allows adding and subtracting angles to construct all of the $I\cdot\widevec{\alpha\beta}$ values.
\item The Vector Creation Gadget (Figure~\ref{fig:vector-create-gadget}) and Vector Rotation Gadget (Figure~\ref{fig:vector-rotate-gadget}) construct the vectors $i^u\cdot d_{u,I}\cdot e^{i(I\cdot\widevec{\alpha\beta})}$.
\item The Vector Sum Gadget (Figure~\ref{fig:vector-sum-gadget}), built from the Vector Average Gadget (Figure~\ref{fig:vector-average-gadget}), allows adding vectors to construct the values $f(\widevec{xy}(\widevec{\alpha\beta})) - f(\vec 0)$ for each $f\in F$.
\item The End Gadget (Figure~\ref{fig:end-gadget}) constrains these values to equal $-f(\vec 0)$.
\end{itemize}
We employ several new ideas to ensure the resulting extended linkage $\cE(F)$ is noncrossing.
First, we construct a rigid grid of large square cells. Each gadget is isolated in one or $O(1)$ of these cells, and information is passed between gadgets/cells only using sliceform vertices along grid edges. In this way, these modular gadgets may be analyzed individually, as there is no way for distinct gadgets to intersect each other.
Second, the linkage $\cE(F)$ is an $(\eps,\delta)$-extended linkage, where~$\eps$ and~$\delta$ (the angle tolerances in $\AngleCon_{\cE(F)}(\sigma,A,\Delta)$) are global constants that restrict movement enough to prevent crossings within each separate gadget. The values of constants $\eps$ and $\delta$ are discussed in the next subsection.
We rely on the Copy Gadget (Figure~\ref{fig:copy-gadget}) to copy angles and propagate them along paths of cells to distant gadgets in the grid. The Crossover Gadget (Figure~\ref{fig:crossover-gadget}) allows these paths to cross, so we are not restricted to planar communication between gadgets. All of our gadgets make frequent use of the Parallel Gadget in Figure~\ref{fig:parallel-gadget}, which (with pins removed) keeps segments parallel without otherwise restricting motion (within some neighborhood). Figure~\ref{fig:example-angle-sum} shows an example of the gadgets working together.

Finally, we simulate extended linkage $\cE(F)$ with a partially rigidified linkage $\cL(F)$, in two steps. First, by replacing a vicinity of each sliceform vertex in $\cE(F)$ with the Sliceform Gadget (Figure~\ref{fig:sliceform-gadget}), we construct an extended linkage $\cE'(F)$ that perfectly simulates $\cE(F)$ but has no sliceforms. Then, we replace each edge of $\cE'(F)$ with a rigidified orthogonal tree, connected to neighboring edges with the Angle Restrictor Gadget (Figure~\ref{fig:angle-restrictor-gadget}), which enforces the precise bounds in the angle constraints.

\subsection{Constants and Parameters}
\label{sec:parameters}

Our construction uses a few carefully chosen and interconnected parameters, which we explain and collect here for ease of reference.

Linkage $\cE(F)$ is an $(\eps,\delta)$-extended linkage, where the angle tolerances $\eps$ and $\delta$ are global constants tuned for specific purposes. The parameter~$\eps$ has a single, simple goal: to constrain bar movement enough to ensure uniqueness (but not necessarily existence) of configurations. For example, Kempe's fatal flaw was failing to prevent parallelograms from ``flipping'' into contraparallelograms, and vice versa. Abbott et al.~\cite{Abbott-2008} resolved this problem by using an additional \emph{brace} along the midline of the contraparallelogram. Our angle constraints provide an alternative solution\footnote{Angle constraints \emph{per se} only apply to \emph{constrained} linkages. It is more correct to say that we resolve parallelogram flipping via angle constraints \emph{combined} with a method of \emph{simulating} these constraints using classical linkages.}: if a rectangular linkage is given angle constraints with tolerance $\eps < \pi/2$, then it cannot flip into a contraparallelogram, since one of its angles, constrained to the interval $\pi/2\pm\eps$, would be forced to lie in the disjoint interval $3\pi/2 \pm \eps$.

The smaller parameter $\delta$ has a double purpose. First, $\delta$ further restricts movement enough to prevent intersections and to control the global minimum feature size. But simply upper bounding how much each gadget can move is not enough: at some point, we must be able to say that our linkage can move \emph{at least} enough to draw the loci we are interested in. This is $\delta$'s second purpose: in each of our gadgets, we ensure that every corner with tolerance~$\delta$ can in fact realize any angle offset in the entire interval $[-\delta,\delta]$. In other words, while $\eps$ ensures uniqueness, $\delta$ measures \emph{existence}.

The precise values we have chosen are
\begin{align}
  \label{eq:eps}
  \eps := &\cos\inv\left(1-\frac{3}{10} \cdot \frac{2n_\eps}{n_\eps^2+1}\right),
  \quad \text{where } n_\eps = 5\cdot 10^{3}; \quad \text{and}
  \\
  \label{eq:delta}
  \delta := &\cos\inv\left(1-\frac{3}{10} \cdot \frac{2n_\delta}{n_\delta^2+1}\right),
  \quad \text{where } n_\delta =  4\cdot 10^{14}.
\end{align}
Numerically, these come out to approximately $\eps \approx 0.0155$ and $\delta\approx 5.48\cdot 10^{-8}$. Most uses of $\eps$ and $\delta$ rely only on the facts that $\eps \le \pi/16$ and $\delta\cdot 2^{18}\le \eps$, which may be numerically verified for the choices above. Only our final gadget, the Angle Restrictor Gadget in Section~\ref{sec:sliceform-and-angle-restrictor-gadgets}, places additional constraints on these values, in terms of both size and form: the trigonometric forms allow the Angle Restrictor Gadget to have rational coordinates.

Henceforth, all $(\eps,\delta)$-extended linkages will use these global constants for $\eps$ and $\delta$, so we will refer to them simply as ``extended linkages''.

Recall that the input $F$ is a set of polynomials in variables $\widevec{xy} = (x_1,y_1,\ldots,x_m,y_m)$ each with total degree at most $d$ and with integer coefficients bounded in magnitude by $M$. There are $|F| = s$ polynomials. The recurring parameters used in our construction
are as follows:
\begin{itemize}
\item $r = \lceil d/\delta\rceil$ is half the radius used in the angular change of coordinates $(x,y) = 2r\cdot\Rect(\alpha,\beta)$ applied to each polynomial.
\item $Q = 40\lceil 6^d r^d M \binom{2m+d}{d}/(6\delta)\rceil$ is the side-length of each cell in the background grid, chosen large enough to accommodate the necessary constructions. It is a positive integer divisible by $40$.
\item $R = 3Q/10$ is the radius used in angular coordinate representations in most of the gadgets. It is a positive integer, and $R > 4r$ may be readily verified.
\end{itemize}
It may seem unnecessary to use two different radii ($2r$ and $R$) for angular coordinates, but there is a good reason to do so. We evaluate the angular representation~\eqref{eqn:polynomials-in-angular} of each polynomial term-by-term, constructing partial sums along the way. The size of these partial sums grows as a function of the radius $2r$ used in the polynomial change of coordinates, and they may be much larger than $2r$. We thus make $r$ as small as possible to control the size of the partial sums, while $R$ must be large enough to be able to compute the partial sums. As consolation, choosing a constant ratio $R/Q$ allows most gadgets to scale uniformly with $Q$, making them simpler to analyze.

We have not attempted to optimize the parameters $\eps,\delta,r,Q,R$ or the gadgets they depend on, instead choosing the most expedient values that allow the proofs to succeed.
In fact, for convenience, we round many constant factors to powers of~$2$.
Significant improvements are likely possible.

\subsection{Hook Linkage and Parallel Linkage}

To streamline the analysis of the gadgets in Sections~\ref{sec:gadgets} and~\ref{sec:sliceform-and-angle-restrictor-gadgets}, we collect here a few tools to provide explicit bounds on the displacement and rotation of certain linkages under small perturbation. We show that the \emph{Parallel Gadget} in Figure~\ref{fig:parallel-gadget}, which is used in every gadget in Section~\ref{sec:gadgets}, constrains two edges $a b$ and $e f$ to remain parallel without otherwise restricting their movement, so long as neither tries to move or rotate too far from how it started (Lemma~\ref{lem:parallel-gadget}). Along the way we analyze the \emph{hook linkage}, i.e., a path of two edges $a b$ and $b c$: if vertices $a$ and $c$ are each displaced by a small amount, we provide explicit bounds on how far $b$ must move to preserve the lengths of edges $a b$ and $b c$ (Lemma~\ref{lem:hook-linkage}). This hook linkage bound is helpful for the parallel linkage, but it is also directly invoked by a few of the gadgets in Sections~\ref{sec:gadgets} and~\ref{sec:sliceform-and-angle-restrictor-gadgets}.

\begin{lemma}\label{lem:sin-and-arcsin}
  For $0 < t < 1/2$, we have $\sin t > 0.9 t$ and $\sin\inv t < 1.1 t$.
\end{lemma}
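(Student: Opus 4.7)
The plan is to exploit the classical monotonicity fact that $t \mapsto (\sin t)/t$ is strictly decreasing on $(0,\pi/2)$, so that on any subinterval $(0,T]$ the ratio is bounded below by its value at the right endpoint~$T$. This reduces each of the two inequalities to a single numerical check at a convenient point.

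For the first inequality $\sin t > 0.9\, t$ on $(0,1/2)$, I would apply this monotonicity at $T = 1/2$: it suffices to verify $\sin(1/2) > 0.45$. The two-term Taylor lower bound $\sin(1/2) \ge 1/2 - (1/2)^3/6 = 23/48 \approx 0.4792$ (justified because the Maclaurin series of $\sin$ at $t = 1/2$ is alternating with terms of strictly decreasing magnitude) clears this with comfortable slack. Hence $(\sin t)/t \ge (\sin(1/2))/(1/2) > 0.958 > 0.9$ on $(0, 1/2]$, and in particular on $(0, 1/2)$.

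For the second inequality $\sin\inv t < 1.1\, t$, I would substitute $u = \sin\inv t$, so that $u$ ranges over $(0, \pi/6)$ as $t$ ranges over $(0, 1/2)$. The desired bound rewrites as $u < 1.1 \sin u$, i.e.\ $(\sin u)/u > 10/11$. Applying the same monotonicity at $u = \pi/6$ gives $(\sin u)/u \ge \sin(\pi/6)/(\pi/6) = 3/\pi \approx 0.9549$, which easily exceeds $10/11 \approx 0.9091$.

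The monotonicity of $(\sin t)/t$, if not taken as standard, follows by differentiating: the sign of the derivative is that of $f(t) := t\cos t - \sin t$, and $f'(t) = -t\sin t \le 0$ with $f(0) = 0$, so $f < 0$ on $(0,\pi/2)$. There is no real obstacle here; both target bounds hold with substantial margin, so any reasonable elementary estimate for $\sin(1/2)$ and $\sin(\pi/6)$ suffices.
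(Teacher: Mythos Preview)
Your proof is correct and takes a different route from the paper's. The paper argues each inequality by direct Taylor-style polynomial bounds: $\sin t > t - t^3/6$ for the first, and the asserted bound $\sin^{-1} t \le t + t^3/6$ for the second, each followed by the elementary check that the resulting cubic $\frac{1}{30}t(3-5t^2)$ is positive on $(0,1/2)$. Your approach instead unifies both inequalities via the monotonicity of $(\sin t)/t$, reducing everything to endpoint evaluations at $t=1/2$ and $u=\pi/6$. This is arguably cleaner, and in fact more robust: the paper's claimed bound $\sin^{-1}t \le t + t^3/6$ is actually false on $(0,1/2)$, since the Maclaurin series $\arcsin t = t + t^3/6 + 3t^5/40 + \cdots$ has all positive coefficients (a correct substitute such as $\arcsin t \le t + t^3/3$, coming from $1/\sqrt{1-s^2}\le 1+s^2$ for $|s|\le 1/2$, would repair the paper's computation). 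Your substitution $u=\sin^{-1}t$ sidesteps the need for any upper bound on $\arcsin$ entirely.
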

\begin{proof}
  For $0 < t < 1/2$ we may compute
  \begin{equation*}
    \sin t - \frac{9}{10}t
    > \left(t-\frac{1}{6}t^3\right)-\frac{9}{10}t
    = \frac{1}{30}t(3-5t^2)
    > 0
  \end{equation*}
  and
  \begin{equation*}
    \frac{11}{10} t - \sin\inv t
    \ge \frac{11}{10}t - \left(t + \frac{1}{6}t^3\right)
    = \frac{1}{30}t(3-5t^2)
    > 0.\qedhere
  \end{equation*}
\end{proof}

\begin{lemma}\label{lem:arg-of-quotient}
  If $A$ and $B$ are complex numbers with $|B| \le \frac{1}{2}|A|$, and if $\theta = \arg \frac{A+B}{A}$, then $|\theta|\le\sin\inv\frac{|B|}{|A|} \le 1.1\frac{|B|}{|A|}$.
\end{lemma}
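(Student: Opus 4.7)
My plan is to split the claim into its two inequalities. The right inequality $\sin^{-1}(|B|/|A|) \le 1.1 \cdot |B|/|A|$ is immediate from Lemma~\ref{lem:sin-and-arcsin}, applied with $t = |B|/|A|$, which lies in $(0,1/2]$ by hypothesis (the edge case $B=0$ being trivial). So essentially all the work is in establishing the left inequality $|\theta| \le \sin^{-1}(|B|/|A|)$.

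For the left inequality, the natural first move is to normalize: write $\theta = \arg\frac{A+B}{A} = \arg(1+w)$ where $w := B/A$ satisfies $|w| \le 1/2$. The problem then reduces to showing $|\arg(1+w)| \le \sin^{-1}|w|$ for any complex number $w$ with $|w| \le 1/2$.

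I would handle this geometrically. The point $1+w$ ranges over the closed disk $D$ of radius $|w|$ centered at the real point $1$. I want to bound the angle subtended at the origin by this disk. Since $|w| \le 1/2 < 1$, the origin lies strictly outside $D$, so the set of rays from $0$ meeting $D$ is a sector bounded by the two lines from $0$ tangent to the boundary circle $\partial D$. From the right triangle formed by $0$, the center $1$, and the tangent point, the tangent length is $\sqrt{1 - |w|^2}$ and the opposite side is $|w|$, so each tangent ray makes an angle of exactly $\sin^{-1}|w|$ with the positive real axis. Hence every point of $D$ has argument in $[-\sin^{-1}|w|,\,\sin^{-1}|w|]$, and in particular $|\arg(1+w)| \le \sin^{-1}|w|$, as desired.

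I do not anticipate a main obstacle — the proof is essentially a one-picture argument, and the only thing to double-check is that the hypothesis $|w| \le 1/2$ is used both to put the origin strictly outside the disk (so that the tangent angle $\sin^{-1}|w|$ makes sense) and to be able to invoke Lemma~\ref{lem:sin-and-arcsin} for the $1.1$ factor. An alternative, more computational route would replace the geometric argument by the law of sines in the triangle with vertices $0, 1, 1+w$, but the tangent-line picture is cleaner and avoids any case analysis on the sign of $\Im w$.
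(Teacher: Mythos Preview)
Your proposal is correct and follows essentially the same approach as the paper: normalize so that $A=1$, observe geometrically that $1+w$ lies on the circle of radius $|w|$ about $1$ and that the extremal arguments occur at the tangent points from the origin (giving angle $\sin^{-1}|w|$), then invoke Lemma~\ref{lem:sin-and-arcsin} for the factor $1.1$. Your write-up is slightly more explicit about why the origin is outside the disk and how the tangent triangle gives $\sin^{-1}|w|$, but the argument is the same.
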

\begin{proof}
  By replacing $(A,B)$ with $(1,B/A)$, we may assume $A=1$. Let $|B| = t$. Geometrically, $1+B$ lies on a circle with center $1$ and radius $t$, and the points $p$ on this circle where $\arg p$ is maximized or minimized are the points of contact of the lines through the origin that are \emph{tangent} to this circle. These angles are $\sin\inv t = \sin\inv\frac{|B|}{|A|}$, as claimed. Lemma~\ref{lem:sin-and-arcsin} shows that $\sin\inv t \le 1.1 t$ so long as $t\le 1/2$.
\end{proof}

\begin{lemma}\label{lem:rotating-bar-displacement}
  If an edge of length $\ell$ rotates around an endpoint by angle $\theta$, the other endpoint lands no farther than $\ell\cdot|\theta|$ from its initial position. More generally, if a \emph{path} of edges $v_1 v_2, v_2 v_3,\ldots,v_{k-1} v_k$ of total length $\ell$ reconfigures itself so that $v_1$ stays fixed while each edge changes its angle by at most $\pm\theta$, then vertex $v_k$ is displaced by at most $\ell\cdot|\theta|$.
\end{lemma}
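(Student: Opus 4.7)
The plan is to handle the single-edge case first and then extend to a path by induction on $k$. For one edge of length $\ell$ rotated by angle $\theta$ about a fixed endpoint, the moving endpoint traces a circular arc of radius $\ell$ subtending angle $|\theta|$, so its displacement is exactly the chord length $2\ell\sin(|\theta|/2)$. Since $\sin x \le x$ for $x \ge 0$, this is at most $\ell|\theta|$, giving the stated bound. This is the only genuinely geometric input; everything else is bookkeeping.

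For the general path, I would let $\ell_i$ denote the length of edge $v_i v_{i+1}$ and $\theta_i$ the angle by which this edge rotates (so $|\theta_i| \le \theta$ by hypothesis). If $v_i'$ and $v_{i+1}'$ are the new positions, then preservation of edge lengths gives
\[
v_{i+1}' - v_i' = R_{\theta_i}(v_{i+1} - v_i),
\]
where $R_{\theta_i}$ is rotation by $\theta_i$. Rearranging,
\[
v_{i+1}' - v_{i+1} = (v_i' - v_i) + \bigl(R_{\theta_i}(v_{i+1}-v_i) - (v_{i+1}-v_i)\bigr),
\]
so by the triangle inequality and the single-edge bound applied to the second summand,
\[
|v_{i+1}' - v_{i+1}| \le |v_i' - v_i| + \ell_i|\theta_i| \le |v_i' - v_i| + \ell_i|\theta|.
\]

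Starting from $|v_1' - v_1| = 0$ and iterating this recurrence from $i = 1$ up to $i = k-1$ gives
\[
|v_k' - v_k| \le |\theta|\sum_{i=1}^{k-1}\ell_i = \ell\cdot|\theta|,
\]
which is exactly the claim. There is no real obstacle here: the whole argument is a triangle-inequality telescoping, with the arc-versus-chord inequality $2\sin(x/2) \le x$ as the only nontrivial ingredient. The only thing worth being slightly careful about is that the rotation angles $\theta_i$ of the successive edges are measured in a common frame (not relative to the previous edge), which matches the natural reading of ``each edge changes its angle by at most $\pm\theta$''; the argument above uses exactly this interpretation and does not require any compatibility condition between consecutive $\theta_i$'s.
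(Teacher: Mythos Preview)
Your proof is correct and follows essentially the same approach as the paper: the single-edge bound via the chord-length inequality $|e^{i\theta}-1|=2|\sin(\theta/2)|\le|\theta|$, then summing the per-edge displacements via the triangle inequality. You have simply made the telescoping explicit where the paper leaves it as a one-line remark.
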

\begin{proof}
  In the case of a single edge of length $\ell$, the distance in question is
  \begin{equation*}
    \ell\cdot \left|e^{i\theta} - 1\right|
    = \ell\cdot \left|2\sin\frac{\theta}{2}\right|
    \le \ell\cdot \left|2\cdot\frac{\theta}{2}\right|
    = \ell\cdot|\theta|,
  \end{equation*}
  as claimed. The more general claim follows by applying the previous fact to each edge separately, noting that the displacements simply add together, and applying the triangle inequality to their sum.
\end{proof}

\begin{lemma}\label{lem:feature-size}
  Let $C$ be a noncrossing configuration of a (possibly constrained) linkage $\cL$ that is noncrossing with minimum feature size $f$. If $C'$ is another configuration such that $|C'(v) - C(v)| \le h$ for each vertex $v$, and if $h < f/2$, then $C'$ is noncrossing with minimum feature size $f-2h$.
\end{lemma}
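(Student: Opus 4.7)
The plan is to deform $C$ continuously to $C'$ and control how distances between vertices and non-incident segments evolve. Specifically, I would consider the straight-line homotopy $C_t(v) := (1-t) C(v) + t C'(v)$ for $t \in [0,1]$, which satisfies $|C_t(v) - C(v)| \le t h \le h$ for every vertex $v$.

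The first step is a distance estimate: for any edge $u v \in E(G)$, any vertex $w \in V(G) \setminus \{u,v\}$, and any $t \in [0,1]$,
\[
  \dist\bigl(C_t(w),\ C_t(u) C_t(v)\bigr) \;\ge\; f - 2 t h.
\]
Any point on the perturbed segment takes the form $p_t := (1-s) C_t(u) + s C_t(v)$ for some $s \in [0,1]$; its counterpart on the original segment is $p := (1-s) C(u) + s C(v)$, and the displacement $p_t - p = (1-s)(C_t(u) - C(u)) + s(C_t(v) - C(v))$ has norm at most $t h$. Combining this with $|C_t(w) - C(w)| \le t h$, the triangle inequality yields $|C_t(w) - p_t| \ge |C(w) - p| - 2 t h \ge f - 2 t h$, and taking the infimum over $s$ proves the claim. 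Applied at $t = 1$, this directly gives the stated feature-size bound for $C'$.

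It remains to prove $C'$ is noncrossing; the distance bound alone only forbids vertex-on-segment incidences. Suppose two edges $e_1, e_2$ improperly intersect in $C'$. If they share a vertex $u$, then intersecting beyond $C'(u)$ forces them to be collinear, placing one of the remaining endpoints on the opposite segment, contradicting the distance bound since that vertex is non-incident to the other edge. If $e_1, e_2$ are vertex-disjoint, then since they are disjoint in $C_0 = C$ and intersect in $C_1 = C'$, there is a first time $t^* \in (0, 1]$ at which they touch. At $t^*$ the first contact cannot be a proper transversal crossing in the interiors, since such crossings are stable under small perturbations and would contradict the minimality of $t^*$. Hence first contact must be realized by an endpoint of one edge lying on the other, again contradicting the distance bound at time $t^*$.

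The main subtlety I anticipate is rigorously justifying the ``first-contact requires an endpoint'' claim for straight line segments deforming along an affine homotopy: essentially an elementary observation about the combinatorics of segment intersection, but one that must be stated carefully. Everything else reduces to the triangle-inequality computation above.
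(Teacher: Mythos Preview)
Your proof is correct, but it takes a genuinely different route from the paper's. The paper does not use a homotopy at all: after establishing the same convexity bound you use (that general points on edges move by at most $h$), it argues \emph{directly} that if two non-incident edges meet in $C'$, then some point $p$ on one edge coincides with a point $q$ on the other, and it is enough to show $|C(p)-C(q)|\ge f$. This requires a separate geometric lemma: the minimum distance between two non-incident segments is always realized by a vertex of one and a point of the other, hence is bounded below by the feature size. The paper proves this by a short projection argument (drop a perpendicular, slide until you hit an endpoint).

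Your approach trades that geometric fact for a topological one: transversal crossings are open, so first contact along the straight-line homotopy must occur at an endpoint or along a collinear overlap, which again produces an endpoint on the opposite edge. This is a perfectly valid substitute, and the homotopy framework makes the incident-edge case and the disjoint-edge case feel more uniform. The paper's version is slightly more self-contained (pure metric geometry, no continuity argument), and it yields the marginally stronger intermediate fact that segment-to-segment distance in $C$ is already $\ge f$, which you never need to state. Either approach is fine here; the subtlety you flagged about first contact is indeed the only point requiring care, and your justification of it is correct.
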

\begin{proof}
  Let $p$ be a general point on $\cL$, by which we mean either a vertex $v$ or a fixed ratio $k$ along an edge $v w$; call the latter an \emph{edge point}. For an edge point $p$, $C(p) := (1-k)\cdot C(v) + k\cdot C(w)$, and likewise for $C'(p)$. We claim that $|C'(p) - C(p)| \le h$ for any general point $p$. The result is immediate if $p$ is a vertex, so consider the edge point case:
  \begin{equation*}
    \left|C'(p) - C(p)\right|
    \le (1-k)\cdot |C'(v) - C(v)| + k\cdot |C'(w) - C(w)|
    \le (1-k)\cdot h + k\cdot h = h,
  \end{equation*}
  as claimed. It follows that if $p$ and $q$ are two general points on $\cL$ that start with $|C(p)-C(q)| \ge f$, then $|C'(p) - C'(q)| \ge f-2h$ by the triangle inequality. This means $C'$ has minimum feature size $f-2h$, so long as we can show that $C'$ is noncrossing.

  So suppose $C'$ has a crossing. This means either a vertex $p=u$ coincides with a point $q$ on edge $v w$ not incident with $u$, or two edge points $p$ and $q$ from non-incident edges $u_1 u_2$ and $v_1 v_2$ coincide.
  Then $C'(p) = C'(q)$.
  On the other hand, $|C(p) - C(q)| \ge f$ because the smallest distance
  between non-incident edges is lower bounded by the smallest distance
  between a vertex and a non-incident edge,
  which is the minimum feature size~$f$ (Definition~\ref{def:noncrossing}).
  As noted above, $|C'(p)-C'(q)| \ge f-2h$,
  which is strictly positive because $h < f/2$, contradicting that
  $C'(p) = C'(q)$.
  For completeness, we include a proof of the second case.
  Consider triangle $C(p)C(v_1)C(v_2)$. If one of the triangle's angles at $C(v_1)$ or $C(v_2)$ is at least $\pi/2$, then $C(p)C(q)$ is at least as long as one of the features $C(p)C(v_j)$. So we may assume both these angles are $\le\pi/2$, meaning the projection of $C(p)$ onto line $C(v_1)C(v_2)$ lands on segment $C(v_1)C(v_2)$. Redefine $q$ as this projection, which does not lengthen $C(p)C(q)$. Now slide $p$ and $q$ while keeping $C(p)C(q)$ perpendicular to $C(v_1)C(v_2)$; there is some direction of sliding that does not lengthen $C(p)C(q)$, so continue until one of $p$ or $q$ hits an endpoint of its segment. Thus we obtain a distance between a vertex and a non-incident edge that is no longer than the original $C(p)C(q)$, as desired.
\end{proof}

\begin{lemma}[Hook Linkage]\label{lem:hook-linkage}
  Define $\cLhook = \cLhook(\ell_1,\ell_2,\theta)$ as the unconstrained linkage shown in Figure~\ref{fig:hook-linkage}, initially configured at $a_0 := C_0(a)=0$, $b_0 := C_0(b) = \ell_1$, and $c_0 := C_0(c) = \ell_1+\ell_2e^{i\theta}$, where $\theta\in[\pi/3,2\pi/3]$.

  \begin{enumerate}
  \item\label{thmpart:hook-pinned}
    For any point $c'$ in the plane, there is at most one configuration with vertex $c$ resting at point $c'$ such that $\angle cba \in (0,\pi)$. If $h\le\min(\ell_1,\ell_2)/2^4$ and $|c'-c_0|\le h$, this configuration indeed exists, and it has the additional properties that $b$ has been displaced at most $|b-b_0| \le 1.5h$ from its initial position, the angle of bar $ab$ has changed by at most $\pm 1.5h/\ell_1$, and likewise the angle of $bc$ has changed by at most $\pm 1.5h/\ell_2$.
    
  \item\label{thmpart:hook-unpinned}
    If the pin in $\cLhook$ is removed and any points $a'$ and $c'$ in the plane are chosen, there is at most one configuration placing vertex $a$ at $a'$ and vertex $c$ at $c'$ having $\angle cba\in(0,\pi)$. If  $h \le\min(\ell_1,\ell_2)/2^5$ and $|a'-a_0|,|c'-c_0|\le h$, then this configuration exists, and it has the additional properties that $b$ has moved at most $|b-b_0| \le 4h$ from its initial position, the angle of bar $ab$ has changed by at most $\pm 4h/\ell_1$, and the angle of $bc$ has changed by at most $\pm 4h/\ell_2$.
  \end{enumerate}
\end{lemma}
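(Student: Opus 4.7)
The plan is to parametrize configurations of $\cLhook$ by the two angles $\phi_1 := \arg(b - a)$ and $\phi_2 := \arg(c - b)$, with initial values $\phi_{1,0} = 0$ and $\phi_{2,0} = \theta$. The linkage equations then collapse into the single complex relation
\begin{equation*}
  \ell_1 e^{i\phi_1} + \ell_2 e^{i\phi_2} = c - a,
\end{equation*}
which, viewing the right-hand side as given data, is a system of two real equations in the two real unknowns $(\phi_1, \phi_2)$. In both parts, this system has at most two solutions modulo $2\pi$, corresponding geometrically to the two intersections of the circles of radii $\ell_1$ and $\ell_2$ centered at $a$ and $c$, i.e., the two reflections of triangle $abc$ across line $ac$. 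The signed-angle requirement $\angle cba \in (0, \pi)$ (with the counterclockwise convention of Definition~\ref{def:combinatorial-embedding}, consistent with the orientation in Figure~\ref{fig:hook-linkage}) selects exactly one, delivering uniqueness in both parts.

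For existence and the quantitative bounds, I would differentiate the relation above to obtain
\begin{equation*}
  i\ell_1 e^{i\phi_1}\,d\phi_1 + i\ell_2 e^{i\phi_2}\,d\phi_2 = d(c - a),
\end{equation*}
whose Jacobian (after splitting into real and imaginary parts) has determinant $\ell_1\ell_2\sin(\phi_2 - \phi_1)$. Cramer's rule then yields $|d\phi_k| \le |d(c-a)|/(\ell_k\,|\sin(\phi_2-\phi_1)|)$ for $k = 1, 2$. Since initially $\phi_2 - \phi_1 = \theta \in [\pi/3, 2\pi/3]$, $\sin(\phi_2 - \phi_1) \ge \sqrt{3}/2$ to start. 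For Part~\ref{thmpart:hook-pinned} (with $a = 0$ fixed), I would close a bootstrap along a straight-line path from $c_0$ to $c'$: assume inductively that $|\phi_k - \phi_{k,0}| \le 1.5h/\ell_k$ throughout, so that $|\phi_2 - \phi_1 - \theta| \le 1.5h/\ell_1 + 1.5h/\ell_2 \le 3/16$ under the hypothesis $h \le \min(\ell_1,\ell_2)/2^4$. This keeps $\sin(\phi_2-\phi_1)$ above $\sin(\pi/3 - 3/16) > 2/3$, so integrating the Cramer bound yields $|\phi_k - \phi_{k,0}| < 1.5 \cdot h/\ell_k$, closing the induction. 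The displacement estimate $|b - b_0| = \ell_1|e^{i\phi_1} - 1| \le \ell_1|\phi_1| \le 1.5 h$ then follows from Lemma~\ref{lem:rotating-bar-displacement}.

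For Part~\ref{thmpart:hook-unpinned}, I would reduce to Part~\ref{thmpart:hook-pinned} by passing to the relative frame: only $\tilde c := c - a$ enters the defining equation, and $|\tilde c - \tilde c_0| \le |c' - c_0| + |a' - a_0| \le 2h$. The hypothesis $h \le \min(\ell_1,\ell_2)/2^5$ guarantees $2h \le \min(\ell_1,\ell_2)/2^4$, so Part~\ref{thmpart:hook-pinned} applies to this relative displacement and yields angle changes at most $1.5 \cdot 2h/\ell_k = 3h/\ell_k \le 4h/\ell_k$, together with $|(b - a) - (b_0 - a_0)| \le 3h$. The triangle inequality then gives $|b - b_0| \le 3h + |a - a_0| \le 4h$, as required.

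The main obstacle will be calibrating constants so the bootstrap closes. The ``ideal'' infinitesimal Lipschitz constant is $1/\sin(\pi/3) = 2/\sqrt{3} \approx 1.155$, and the slack between this and the claimed $1.5$ is exactly what absorbs the degradation of $\sin(\phi_2 - \phi_1)$ as the path is traversed; verifying that the advertised constants $1.5$ (Part~\ref{thmpart:hook-pinned}) and $4$ (Part~\ref{thmpart:hook-unpinned}) are compatible with the hypotheses $h \le \min(\ell_1,\ell_2)/2^4$ and $h \le \min(\ell_1,\ell_2)/2^5$ is the principal numerical check.
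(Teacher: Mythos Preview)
Your proposal is correct and takes a genuinely different route from the paper for the existence step in Part~\ref{thmpart:hook-pinned}. You run a continuation argument: follow the solution $(\phi_1,\phi_2)$ along the straight-line path from $c_0$ to $c'$, bound $|d\phi_k|\le |dc|/(\ell_k|\sin(\phi_2-\phi_1)|)$ pointwise via Cramer's rule, and close a bootstrap on the angle deviations to keep the Jacobian nondegenerate. The paper instead uses a degree-theoretic argument: it sets $H(\alpha,\beta)=\ell_1 e^{i\alpha}+\ell_2 e^{i(\theta+\beta)}$, compares it on the boundary of the box $U=[-1.5h/\ell_1,1.5h/\ell_1]\times[-1.5h/\ell_2,1.5h/\ell_2]$ to the linear map $G(\alpha,\beta)=\ell_1(1+i\sin\alpha)+e^{i\theta}\ell_2(1+i\sin\beta)$, shows $|H-G|\le h/6$ there while $G(\partial U)$ stays at distance $>7h/6$ from $c'$, and concludes by a winding-number homotopy that $c'\in H(U)$. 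Your analytic approach produces the displacement and angle bounds as an automatic byproduct of the existence proof and would scale more naturally to longer chains; the paper's topological approach avoids any appeal to the implicit function theorem or path-following, at the cost of separate estimates for $G(\partial U)$ and for $|H-G|$. Uniqueness (two-circle intersection plus the orientation constraint) and the reduction of Part~\ref{thmpart:hook-unpinned} to Part~\ref{thmpart:hook-pinned} by translating to the relative frame are handled the same way in both; like the paper, you obtain the sharper constants $3h/\ell_k$ and $3h$ in Part~\ref{thmpart:hook-unpinned} before relaxing to~$4$.
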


\begin{proof}
  For Part~\ref{thmpart:hook-pinned}, we first prove uniqueness. Suppose $C$ some configuration has $C(a) = a_0$ and $C(c) = c'$. Observe that $C(b)$ must lie on the circle centered at $a_0$ with radius $\ell_1$, as well as the circle centered at $c'$ with radius $\ell_2$. These circles cannot be concentric: indeed, $C(a)=C(c)$ would force $\angle C(c)C(b)C(a) = 0$, which is outside the allowed bounds.
  It follows that there are at most two positions for $b$, namely at $C(b)$ or at the reflection of $C(b)$ across $C(a)C(c)$. If $C'$ is the configuration that uses the latter option, then $\angle C'(c)C'(b)C'(a) = 2\pi - \angle C(c)C(b)C(a) \notin (0,\pi)$. So $C$ is indeed unique, as claimed.

  We now prove existence, so suppose $|c'-c_0| \le h$ as above. If bars $ab$ and $bc$ rotate by angles $\alpha$ and $\beta$ respectively, point $c$ ends up at
  \begin{equation*}
    H(\alpha,\beta) := \ell_1 e^{i\alpha} + \ell_2 e^{i(\theta+\beta)}.
  \end{equation*}
  (This function $H$ is similar to the $\Rect$ function defined in Section~\ref{sec:detailed-overview}, except that (1) $H(0,0)$ has not been shifted to the origin, and (2) $H$ will not be used beyond this proof.)
  Let $U = [-1.5h/\ell_1, 1.5h/\ell_1]\times [-1.5h/\ell_2, 1.5h/\ell_2]$. We will first show that there are some angles $(\alpha,\beta)$ in $U$ with $H(\alpha,\beta) = c'$. To that end, we define a related function
  \begin{equation*}
    G(\alpha,\beta) := \ell_1 (1 + i\sin\alpha) + e^{i\theta}\cdot\ell_2(1+i\sin\beta).
  \end{equation*}
  When $|\alpha| \le 1.5h/\ell_1$ we may verify that
  \begin{gather*}
    \left|\ell_1 (1 + i\sin\alpha) - \ell_1 e^{i\alpha}\right|
    = \ell_1\cdot\left|1-\cos\alpha\right|
    = \ell_1\cdot 2\sin^2\frac{\alpha}{2}\\
    \le \ell_1\cdot\frac{1}{2}\alpha^2
    \le \ell_1\cdot \frac{1}{2}\left(\frac{3h}{2\ell_1}\right)^2
    = \frac{9h^2}{8\ell_1}
    \le \frac{h}{12},
  \end{gather*}
  where the last inequality follows from $h/\ell_1 \le 1/16$. By similar comparison between the other two terms of $H$ and $G$, we conclude that when $(\alpha,\beta)\in U$, $|H(\alpha,\beta) - G(\alpha,\beta)| \le h/6$.

\begin{figure}[hbt]
  \begin{minipage}[t]{0.42\textwidth}
    \centering
    \includegraphics{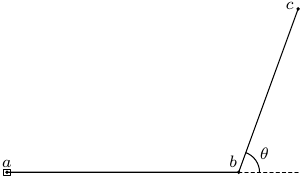}
    \caption{The Hook Linkage from Lemma~\ref{lem:hook-linkage} consists of two incident bars. }
    \label{fig:hook-linkage}
  \end{minipage}
  \hfill
  \begin{minipage}[t]{0.53\textwidth}
    \centering
    \includegraphics{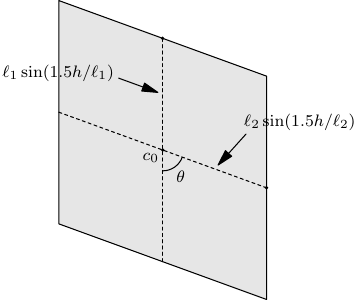}
    \caption{The region $G(U)$ in Lemma~\ref{lem:hook-linkage} is a parallelogram centered at $c_0$ with dimensions as shown.}
    \label{fig:hook-linkage-region}
  \end{minipage}
\end{figure}

  The linear map $G$ is easy to describe: region $G(U)$ is a parallelogram centered on $c_0$ with dimensions as shown in Figure~\ref{fig:hook-linkage-region}. The horizontal distance from the center to (the line containing) either vertical edge is
  \begin{equation*}
    \ell_1\cdot \sin\frac{3h}{2\ell_1}\cdot \sin\theta
    \ge \ell_1\cdot \frac{9}{10}\frac{3h}{2\ell_1}\cdot \frac{\sqrt{3}}{2}
    > \frac{7}{6} h,
  \end{equation*}
  where our use of Lemma~\ref{lem:sin-and-arcsin} is valid because $1.5h/\ell_1 \le 3/32 < 1/2$. The distance to the (line containing) each diagonal edge is likewise greater than $7h/6$, so the interior of $G(U)$ contains the disk centered at $c_0$ of radius $7h/6$. Restricting attention to the boundary of $U$, we find that the loop $G(\partial U)$ surrounds this disk and is disjoint from it. In particular, this loop has winding number $1$ around $c'$ and does not come within distance $h/6$ of $c'$.

  By our comparison of $H$ and $G$ above, the straight-line homotopy between loops $G(\partial U)$ and $H(\partial U)$ (which moves corresponding points along straight line segments) does not meet $c'$, so the latter loop also has winding number $1$ around $c'$. If $c'$ were not contained in region $H(U)$, then continuous function $H$ would map the contractible set $U$ into a non-contractible subset of $\bR^2\setminus c'$, which is impossible. So there indeed exists a pair of angles $(\alpha,\beta)\in U$ with $H(\alpha,\beta) = c'$.

  Using these angles $\alpha$ and $\beta$, we claim that the configuration $C$ that places $a$ at $a_0=0$, $b$ at $\ell_1 e^{i\alpha}$ and $c$ at $H(\alpha,\beta) = c'$ will exhibit all of the desired properties. The angle at $b$ will have changed by at most $1.5h/\ell_1 + 1.5h/\ell_2 \le 3/16 < \pi/3$, so it will still lie in $(0,\pi)$. Also, $b$ has moved by at most
  \begin{equation*}
    \left|b - b_0\right|
    = \ell_1\cdot\left|e^{i\alpha}-1\right|
    = 2\ell_1\cdot\left|\sin\frac{\alpha}{2}\right|
    \le \ell_1\cdot\left|\alpha\right|
    \le 1.5h,
  \end{equation*}
  as required.

  For Part~\ref{thmpart:hook-unpinned}, apply Part~\ref{thmpart:hook-pinned} with target position $\hat c := c'-a'$ and distance $2h$.
  Let $\hat a = 0, \hat b, \hat c$ be the resulting positions
  for vertices $a,b,c$.
  This configuration satisfies $|\hat b - b_0| \leq 1.5 \cdot 2h = 3h$,
  the angle of bar $ab$ is in the interval $0 \pm 1.5\cdot 2h/\ell_1 = 0 \pm3h/\ell_1$,
  and likewise the angle of bar $bc$ is in the interval
  $\theta \pm 1.5\cdot 2h/\ell_2 = \theta \pm 3h/\ell_2$.
  Now we translate this configuration by $a'$ to obtain
  vertices at $a' = \hat a + a', b' = \hat b + a', c' = \hat c + a'$.
  By the triangle inequality, $|b'-b_0| = |\hat b - b_0 + a'| \leq
  |\hat b - b_0| + |a'| \leq 3h + h = 4h$.
  The bar angles remain the same under translation;
  we increase the lead constant from $3$ to $4=2^2$ for future ease of use.
\end{proof}

We rely heavily on parallelograms to force edges to remain parallel. As described in Section~\ref{sec:parameters}, angle constraints can be used to ensure that parallelograms remain parallelograms, forbidding the contraparallelogram ``flip'' that Kempe failed to account for. We formalize this as follows:

\begin{figure}[hbtp]
  \centering
  \begin{subfigure}{.48\textwidth}
    \centering
    \includegraphics{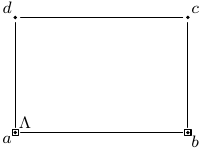}%
    \caption{Parallelogram linkage $\cP_1$, initially with opposite corners at $a=(0,0)$ and $c=(x,y)$ where $x,y\ge 1$.}
    \label{fig:parallelogram-linkage-rectangle}
  \end{subfigure}
  \hfill
  \begin{subfigure}{.48\textwidth}
    \centering
    \includegraphics{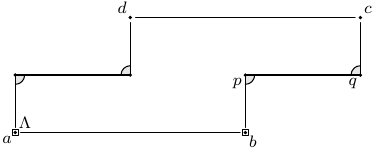}
    \caption{Parallelogram linkage $\cP_2$, with corners initially at $a=(0,0)$, $b=(4,0)$, $c=(6,2)$, and $d=(2,2)$.}
    \label{fig:parallelogram-linkage}
  \end{subfigure}
  \caption{Creating parallelograms with extended linkages. Corners marked with gray sectors are frozen, and vertices enclosed in squares are pinned, as in Convention~\ref{con:drawing-extended-linkages}.}
  \label{fig:parallelograms}
\end{figure}

\begin{lemma}[Parallelogram Linkage]
  \label{lem:parallelogram-linkage}
  Extended linkages $\cP_1$ and $\cP_2$ of Figures~\ref{fig:parallelogram-linkage-rectangle} and~\ref{fig:parallelogram-linkage}, whose corners are given $\eps$-tolerance angle constraints, are globally noncrossing with global minimum feature size at least $1/2$, and every configuration of either linkage has $\widevec{dc} = \widevec{ab}$. Furthermore, the configuration spaces $\Conf(\cP_1)$ and $\Conf(\cP_2)$ are perfectly described by the map $\Offset_{\Lambda}$, which provides homeomorphisms
  \begin{equation*}
    \Conf(\cP_1) \simeq [-\eps,\eps] \qquad\text{and}\qquad
    \Conf(\cP_2) \simeq [-\eps,\eps].
  \end{equation*}
  In every configuration of $\cP_2$, all vertices other than $a$ and $b$ are configured at least $1/2$ units away from the line through $a$ and $b$.
\end{lemma}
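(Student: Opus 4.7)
The plan is to analyze each linkage as a four-bar quadrilateral whose edge-length constraints alone admit exactly two families of configurations---parallelogram and contraparallelogram---and then to use the $\eps$-tolerance angle constraints to rule out the contraparallelogram flip that was Kempe's classical oversight. In each surviving parallelogram configuration, the relation $\widevec{dc}=\widevec{ab}$ is automatic, and a single corner angle parameterizes everything.

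First I would handle $\cP_2$, where vertices $a$ and $b$ are both pinned so edge $ab$ is fixed in space. Taking the distinguished corner $\Lambda$ at $a$ as the free parameter (the remaining corners being either frozen or rigidly determined by the loop), the four-bar analysis proceeds as follows: once $\angle dab$ is chosen, vertex $d$ is determined, and vertex $c$ must simultaneously lie on the circle of radius $|bc|$ about $b$ and on the circle of radius $|cd|$ about $d$, yielding exactly two candidates related by reflection across line $bd$. In the contraparallelogram candidate, the angles at $b$ and $c$ both deviate from their initial values of $90^\circ$ by nearly $180^\circ$, which lies far outside any interval of the form $90^\circ\pm\eps$ with $\eps\le\pi/16$. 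Hence every valid configuration is a genuine parallelogram, and the map $\theta\mapsto(\text{unique parallelogram with }\angle dab=90^\circ+\theta)$ is a continuous two-sided inverse to $\Offset_\Lambda$, giving the homeomorphism $\Conf(\cP_2)\simeq[-\eps,\eps]$. The same argument applies verbatim to $\cP_1$, with the pin at $a$ together with one frozen corner fixing the orientation of an adjacent edge; the hypothesis $x,y\ge 1$ keeps the rectangle non-degenerate so the two circles meet transversally.

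For global noncrossing and minimum feature size, I would observe that both initial configurations are honest rectangles or parallelograms with minimum feature size at least $1$ (using $x,y\ge 1$ for $\cP_1$, and the altitude $2$ of the initial parallelogram for $\cP_2$). As $\theta$ ranges over $[-\eps,\eps]$, Lemma~\ref{lem:rotating-bar-displacement} bounds each vertex displacement by (edge length)$\cdot\eps$, which is comfortably below $1/4$ since $\eps\le\pi/16$ and all edges have small constant length. Lemma~\ref{lem:feature-size} then yields a global minimum feature size of at least $1-2h>1/2$ and preserves noncrossing throughout. The final claim about $\cP_2$ follows directly from the pinned base: the perpendicular distance from $d$ (and hence from $c=d+\widevec{ab}$ and from proportional points along edges $ad$, $bc$) to the fixed line through $a$ and $b$ equals $|ad|\sin(\angle dab)$, which begins at $2$ and remains at least $2\cos\eps>1/2$.

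The main obstacle I expect is careful bookkeeping of the contraparallelogram exclusion: one must verify that at \emph{every} allowed value of the free parameter, the contraparallelogram alternative is ruled out simultaneously at both ``flipped'' corners, and that the $\eps$-tolerance intervals around $90^\circ$, $180^\circ$, and (where applicable) $270^\circ$ at distinct corners remain pairwise disjoint. Beyond this check, only the bound $\eps\le\pi/16$ is used; the precise trigonometric form of $\eps$ plays no role in this lemma.
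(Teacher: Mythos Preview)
Your overall strategy---rule out the contraparallelogram via the $\eps$-tolerance angle constraint and parametrize the surviving parallelogram family by $\Offset_\Lambda$---is exactly what the paper does, and your handling of the last claim (distance of $d,c,p,\ldots$ to the line $ab$ in $\cP_2$) is fine.

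There is, however, a genuine gap in your minimum-feature-size argument for $\cP_1$. You invoke Lemma~\ref{lem:rotating-bar-displacement} and assert that ``all edges have small constant length,'' but this is false for $\cP_1$: its sides have lengths $x$ and $y$, which are only assumed to satisfy $x,y\ge 1$ and may be arbitrarily large. The displacement of $d$ (and hence $c$) is therefore on the order of $y\cdot\eps$, which can swamp any fixed initial feature size, so Lemma~\ref{lem:feature-size} gives nothing. The paper avoids this by not using a perturbation bound at all: it simply observes that a parallelogram whose interior angles lie in $\pi/2\pm\eps$ (with $\eps<\pi/4$) has minimum feature size at least $\min(x,y)/2$, a scale-invariant geometric fact. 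You should replace the displacement argument for $\cP_1$ with a direct computation of this kind.

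For $\cP_2$ your displacement approach does work numerically, but note that $\cP_2$ is not the bare four-cycle you describe: the ``sides'' $ad$ and $bc$ are frozen orthogonal paths with intermediate vertices (in particular a vertex $p$ roughly at the midpoint), and it is $p$'s distance to edges $ab$ and $cd$ that governs the feature size. The paper handles this by computing the two critical angles ($\theta=60^\circ$ and $\theta\approx -13.64^\circ$) at which $p$ would come within $1/2$ of one of those edges, and checking that $\eps<\pi/16$ avoids both. Your Lemma~\ref{lem:feature-size} route is a legitimate alternative, but you need to start from the correct initial feature size (which is $1$, not the altitude $2$) once the intermediate vertices are taken into account.
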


Note that the pins in $\cP_1$ and $\cP_2$ merely serve to fix the reference coordinate frame. Without pins, the angle displacement at corner $\Lambda$ uniquely determines the locations of $c$ and $d$ \emph{relative to $a$ and $b$}.

\begin{proof}
  Begin with linkage $\cP_1$ in Figure~\ref{fig:parallelogram-linkage-rectangle}, which has initial configuration $a=(0,0)$ and $c=(x,y)$. In any configuration $C$, $$d = y\cdot(\cos(\pi/2+\theta),\sin(\pi/2+\theta))$$ where $\theta = \Offset_{\Lambda}(C) \in[-\eps,\eps]$. Based on edge lengths, vertex $c$ has two potential positions: $b+d-a$ (forming a parallelogram), or the reflection of this point across diagonal $bd$ (forming a contraparallelogram). The former case indeed satisfies all of its constraints and is noncrossing with feature size at least $\min(x,y)/2$ because $\eps < \pi/4$. In the latter case, corner $dcb$ would have angle $3\pi/2-\theta$, which is well outside the allowable range of $[\pi/2-\eps,\pi/2+\eps]$. So the parallelogram configuration exists uniquely.

  In linkage $\cP_2$, the vertices have initial positions $a=(0,0)$, $b=(4,0)$, $c=(6,2)$, and $d=(2,2)$. By similar arguments, each configuration $C$ has $c-d = b-a$ and is uniquely determined by $\theta = \Offset_{\Lambda}(C)$. If the minimum feature size were to drop below $1/2$, this would happen first when $p$ came that close to edge $a b$ or $c d$. We may calculate that these events correspond to respective angles $\theta = \pi/3 = 60^\circ$ and
  \begin{equation*}
    \theta = \sin\inv\frac{1/2}{\sqrt{5}} -  \sin\inv \frac{1}{\sqrt{5}} \approx -13.64^\circ.
  \end{equation*}
  Because $\eps < \pi/16 = 11.25^\circ$, the interval $[-\eps,\eps]$ keeps the minimum feature size above $1/2$, as required. We may also observe that, in this interval, no vertex comes closer than $1/2$ to the $x$-axis, as required.
\end{proof}

Parallelograms are especially useful in pairs, as with the linkage in Figure~\ref{fig:parallel-gadget}, which forces bars $ab$ and $ef$ to remain parallel while letting them move freely relative to each other (within some neighborhood). The classical counterpart of this linkage \cite{Kempe-1876} suffers from nonuniqueness: When positions for bars $ab$ and $ef$ are chosen (necessarily parallel), usually there are two possible locations for bar $cd$. Angle constraints again improve the situation: one of these two positions violates the angle constraints (at all six named vertices!) and is therefore invalid.

\begin{figure}[hbtp]
  \centering
  \includegraphics{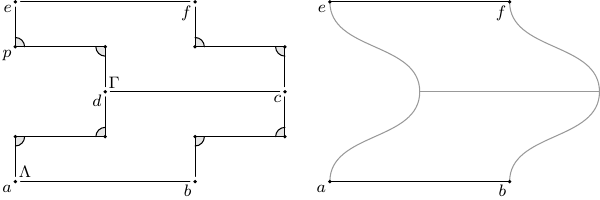}
  \caption{
    Left: The Parallel Gadget allows $e$ to move freely in a neighborhood of its initial position while forcing $ef$ to remain parallel to $ab$. Corners marked with gray sectors are frozen, as in Convention~\ref{con:drawing-extended-linkages}.
    Edge lengths are all in $\{1,2,4\}$ and are drawn to scale.
    Right: a schematic representation of the same gadget.
    \label{fig:parallel-gadget}
  }
\end{figure}

\begin{lemma}[Parallel Gadget]
  \label{lem:parallel-gadget}
  \ %
  \begin{enumerate}
  \item\label{thmpart:parallel-gadget-unique} The extended linkage $\cLparallel$ drawn in Figure~\ref{fig:parallel-gadget}, in which all corners are given an $\eps$ angle constraint, is globally noncrossing with minimum feature size at least $\frac12$. In every configuration, vectors $ab$, $dc$, and $ef$ are equal. Any choice of position for $a$, $b$, and $e$ can be completed to \emph{at most one} configuration of $\cLparallel$.

  \item\label{thmpart:parallel-gadget-exists} If three points $a'$, $b'$, and $e'$ are chosen at distance at most $\eps/2^7$ from their initial positions of $a_0 = (0,0)$, $b_0=(4,0)$, and $e_0=(0,4)$ respectively such that $|b'-a'| = 4$, then a unique configuration $C$ exists with $C(a) = a'$, $C(b) = b'$, and $C(e) = e'$. This configuration $C$ moves each vertex no farther than~$\eps/2$ from its initial position and rotates each edge by at most $\pm\eps/2$.
  \end{enumerate}

\end{lemma}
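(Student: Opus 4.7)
The Parallel Gadget in Figure~\ref{fig:parallel-gadget} decomposes into two parallelogram linkages in the sense of Lemma~\ref{lem:parallelogram-linkage}, joined along a common middle bar $dc$: a lower parallelogram with base $ab$ and top $dc$, and an upper parallelogram with base $dc$ and top $ef$. For Part~\ref{thmpart:parallel-gadget-unique}, I would apply Lemma~\ref{lem:parallelogram-linkage} to each component separately. The lemma forces $\widevec{dc} = \widevec{ab}$ in every configuration of the lower parallelogram and $\widevec{ef} = \widevec{dc}$ in the upper one, so composing gives $\widevec{ef} = \widevec{ab}$. Uniqueness given $(a,b,e)$ is then immediate: $f$ is forced by $f = e + (b-a)$, and each parallelogram's interior (the vertices $c$, $d$) is uniquely determined from its base by the uniqueness part of the same lemma. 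Minimum feature size $\geq 1/2$ within each parallelogram is already guaranteed by Lemma~\ref{lem:parallelogram-linkage}, and the auxiliary statement there that non-base vertices stay at distance $\geq 1/2$ from the base line ensures that the two parallelograms (which share only bar $dc$) do not interpenetrate.

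For Part~\ref{thmpart:parallel-gadget-exists}, given perturbations $a'$, $b'$, $e'$ each within $h_0 := \eps/2^7$ of their initial values and satisfying $|b' - a'| = 4$, I would set $f' := e' + (b' - a')$; by the triangle inequality, $|f' - f_0| \leq 3h_0$. I would then locate the interior vertices $c$, $d$ by applying the unpinned Hook Linkage Lemma (Lemma~\ref{lem:hook-linkage} Part~\ref{thmpart:hook-unpinned}) to the two-bar chains inside each parallelogram that connect a perturbed base endpoint to a perturbed top endpoint through the middle-bar vertex. The hypothesis $h \leq \min(\ell_1,\ell_2)/2^5$ is satisfied because the gadget's bar lengths are bounded below by a constant while $h_0$ is much smaller than $\eps < \pi/16$. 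The lemma then yields unique positions $c'$, $d'$ with vertex displacements $\leq 4h_0 \ll \eps/2$ and interior bar rotations $\leq \pm 4h_0/\ell_i \ll \pm\eps/2$, while the base bar $ab$ itself rotates by at most $\sin^{-1}(2h_0/4) \ll \eps/2$.

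The principal obstacle is verifying that the two independent hook-linkage solutions for $c'$ and $d'$ are mutually consistent with the parallelogram closure condition $\widevec{d'c'} = \widevec{a'b'}$, and that every angle constraint in the extended linkage (both $\eps$-tolerance and frozen corners, per Convention~\ref{con:drawing-extended-linkages} and the geometry of Figure~\ref{fig:parallel-gadget}) is satisfied. Consistency would follow by parametrizing each parallelogram via its internal angle offset $\theta_i \in [-\eps,\eps]$ from Lemma~\ref{lem:parallelogram-linkage}, viewing the position of $e$ as a smooth function of $(\theta_1,\theta_2)$ plus the rigid placement of $ab$, and checking that its linearization at the initial configuration is invertible. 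The Inverse Function Theorem then delivers a solution $(\theta_1,\theta_2)$ for every sufficiently small perturbation of $e$, and uniqueness from Part~\ref{thmpart:parallel-gadget-unique} guarantees this solution agrees with the one produced by the hook-linkage construction. The angle-constraint verification then reduces to the bar-rotation bounds above, all comfortably within the $\pm\eps$ tolerance window, and noncrossing plus the feature-size bound under perturbation follow from Lemma~\ref{lem:feature-size} applied with $h = \eps/2 < 1/4$ to the noncrossing initial configuration of feature size $\geq 1/2$.
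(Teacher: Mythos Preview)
Your decomposition into two copies of $\cP_2$ joined along the bar $dc$ is correct, and the feature-size argument is essentially the paper's. But your uniqueness claim in Part~\ref{thmpart:parallel-gadget-unique} does not follow from Lemma~\ref{lem:parallelogram-linkage}. That lemma says each $\cP_2$ has a \emph{one-parameter family} of configurations once its base is pinned---the parameter being $\Offset_\Lambda$---so knowing $(a,b)$ alone does not determine $d$ and $c$, and knowing $(e,f)$ alone does not determine them either. What actually pins down $d$ is the requirement that it lie simultaneously at distance $2\sqrt{2}$ from $a$ (via the lower frozen path) and from $e$ (via the upper frozen path), with $\angle e d a\in(0,\pi)$ forced by the angle constraint at~$d$. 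That is precisely the Hook Linkage (Lemma~\ref{lem:hook-linkage}), and the paper invokes it already for uniqueness in Part~\ref{thmpart:parallel-gadget-unique}, not only for the bounds in Part~\ref{thmpart:parallel-gadget-exists}. Your Part~2 argument then leans on this flawed Part~1 uniqueness to reconcile the IFT solution with the hook solution, so the gap propagates.

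For Part~\ref{thmpart:parallel-gadget-exists}, the ``principal obstacle'' you raise dissolves on inspection: if $d'$ solves the hook $a'd'e'$ and $c'$ solves the hook $b'c'f'$ with $f'=e'+b'-a'$, then translating the second problem by $-(b'-a')$ turns it into the first, so the uniqueness clause of the hook lemma gives $c'=d'+(b'-a')$ automatically. No IFT is needed. The paper's route is more direct still: apply a rigid motion to reduce to the case $a'=a_0$, $b'=b_0$ (at the cost of enlarging the perturbation of $e$ to $6k$ with $k=\eps/2^7$), use the pinned hook once to locate $d'$, set $c'$ and $f'$ by translation, verify the angle constraints from the explicit rotation bounds, and then undo the rigid motion. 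Finally, you omit the vertices interior to the four frozen paths (for instance $p=(0,3)$): these are not endpoints of any hook, and the paper handles them separately by writing each as a fixed affine combination of two of $a,b,c,d,e,f$ and bounding the displacement that way to reach the claimed $\eps/2$ bound for \emph{every} vertex.
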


\begin{note}
  In this Parallel Gadget, once $a$ and $b$ have been configured, angles $\Offset_{(\Lambda,\Gamma)}(C) = (\alpha,\beta)\in[-\eps,\eps]^2$ uniquely determine the rest of the configuration $C$, but it is worth mentioning that not all such pairs $(\alpha,\beta)$ actually give rise to valid configurations. Indeed, the angle to the left of vertex $d$ becomes $\pi+\alpha-\beta$, so the $\eps$-tolerance forces $|\alpha-\beta|\le\eps$ as an additional constraint (in fact, the only other constraint). This lemma does not need to worry about this fact because it does not attempt to explore the entire configuration space of $\cLparallel$, instead restricting attention to $|\alpha|,|\beta| \le \eps/2$.
\end{note}

\begin{proof}
  Let $C_0$ be the initial configuration of $\cLparallel$ shown in Figure~\ref{fig:parallel-gadget}, and for each vertex $v$, we use the shorthand $v_0 = C_0(v)$.
  
  This linkage is made from two instances of $\cP_2$ (one reflected) attached along a bar. In any configuration $C$ of $\cLparallel$, by Lemma~\ref{lem:parallelogram-linkage}, vectors $ab$ and $dc$ must be equal, as must $dc$ and $ef$. Each vertex other than $c$ and $d$ has distance at least $1/2$ from the line containing $c$ and $d$, so the smallest feature of $C$ comes from one of the two instances of $\cP_2$ and is thus at least $1/2$.

  For uniqueness, suppose positions $a'$, $b'$, and $e'$ have been chosen for $a$, $b$, and $e$. First observe that the position of $d$ is forced: treating the paths from $a$ to $d$ and from $d$ to $e$ as bars $a d, d e$ of fixed lengths $\ell(a d)=\ell(d e)=2\sqrt{2}$, and thus $a d e$ as a \emph{hook linkage}, Lemma~\ref{lem:hook-linkage} shows that there exists a unique point $d'$ at distance $2\sqrt{2}$ from each of $a'$ and $e'$ with $\angle e'd'a' < \pi$, which is required by one of the angle constraints at vertex $d$. The previous paragraph then shows that $c$ must be placed at $c' = d'+(b'-a')$ and $f$ must be placed at $e'+(b'-a')$. Finally, the rigidified paths (e.g., from $a$ to $d$) are uniquely determined by their endpoints, so all vertex locations are forced, as claimed.

  Define $k := \eps/2^7$. To prove Part~\ref{thmpart:parallel-gadget-exists}, first consider the easier case where $a'=a_0$, $b'=b_0$, and $|e'-e_0| \le 6k$. We will show that a (necessarily unique) configuration $C$ with $C(a)=a'=a_0$, $C(b) = b'=b_0$, and $C(e) = e'$ exists, moves each vertex by at most $28k$, and rotates each edge by at most $9k$.

  As described above, Lemma~\ref{lem:hook-linkage} (Hook Linkage) allows us to construct the unique positions $b'$, $c'$, and $f'$ where vertices $b,c,f$ must lie, which then determines the entire configuration; we must show that this forced configuration $C$ is actually valid, i.e., satisfies its angle constraints. Because $6k < \min(\ell(a d),\ell(d e))/32 = (2\sqrt{2})/32$, Lemma~\ref{lem:hook-linkage} proves that $a'd'$ and $d'e'$ are each rotated from $a_0d_0$ and $d_0e_0$ by angle at most $\pm 4\cdot 6k/(2\sqrt{2}) = 6k\sqrt{2}\le 9k$, so the angles at $C$'s corners differ from those of $C_0$ by at most $18k < \eps$ and therefore satisfy their $\eps$-tolerance angle constraints. The same lemma also shows that $|c'-c_0| = |d'-d_0| \le 24k$. The bound $|f'-f_0| = |e'-e_0| \le 6k\le 24k$ follows by assumption.

  It remains to show that $|C(v)-v_0| \le 28k$ for each vertex $v$ of $\cLparallel$ other than $a,b,c,d,e,f$.
  Focus on vertex $p$ first, and let $p' = C(p)$.
  Because the path from $e$ to $d$ through $p$ is frozen
  in Figure~\ref{fig:parallel-gadget},
  triangles $e_0 p_0 d_0$ and $e' p' d'$ are congruent.
  Thus $p_0$ and $p'$ can be written as the same affine linear combination of
  $e_0,d_0$ and $e',d'$ respectively.
  Because $d_0 = (2,2) = 2+2i$, $e_0 = (0,4) = 4i$, and $p_0 = (0,3) = 3i$,
  we have the (unique) solution
  $$p_0 = \left(\frac{1-i}{4}\right) d_0 + \left(\frac{3+i}{4}\right) e_0,$$
  and thus the same for $p'$.  Therefore
  \begin{equation*}
    p'-p_0 = \left(\frac{1-i}{4}\right)(d'-d_0) + \left(\frac{3+i}{4}\right)(e'-e_0).
  \end{equation*}
  Because $|d'-d_0|$ and $|e'-e_0|$ are upper bounded by $24k$
  (as argued above), the triangle inequality gives
  \begin{equation*}
    \left| p' - p_0 \right| \le \left|\frac{1-i}{4}\right|\cdot 24k + \left|\frac{3+i}{4}\right|\cdot 24k
    \le 28k.
  \end{equation*}
  The remaining vertices other than $a,b,c,d,e,f$ follow the same analysis
  as $p$
  because the four frozen paths are isometric, leading to the same affine
  weights or their conjugates, and all six points $a,b,c,d,e,f$ are displaced
  by at most $24k$ (as argued above).

  Now we tackle Part~\ref{thmpart:parallel-gadget-exists} in general; as was done for the Hook Linkage in Lemma~\ref{lem:hook-linkage}, we will reduce this general case to the special case where $a$ and $b$ are fixed. So suppose $a'$, $b'$, and $e'$ are chosen at distance at most $k$ from $a_0,b_0,e_0$ respectively and with $|a'-b'| = 4$. Uniqueness is guaranteed as above, so we focus on existence.

  Let $\theta = \arg (b'-a')$, i.e., $b'-a' = 4\exp(i\theta)$, and observe that $|\theta| \le 1.1\cdot \frac{2k}{4} \le k$ by Lemma~\ref{lem:arg-of-quotient}, using $A = 4$ and $B = b'-b_0 + a'-a_0$, which satisfy $|B| \le 2k < \frac{1}{2}|A|$.
  Define $a'' = a_0$, $b'' = b_0$, and $e'' = (e'-a')\cdot \exp(-i\theta)$, which may be obtained from $a',b',e'$ by translating by $-a'$ and then rotating by angle $-\theta$. We have
  \begin{align*}
    \left|e'' - e_0\right|
    &= \left|(e'-a')\cdot \exp(-i\theta) - e_0\right|\\
    &\le \left|(e'-a')\cdot \exp(-i\theta) - e_0\cdot \exp(-i\theta)\right| + \left|e_0\cdot \exp(-i\theta) - e_0\right|\\
    &= \left|(e'-a') - e_0\right| + \left|e_0\cdot \exp(-i\theta) - e_0\right|\\
    &\le |e'-e_0| + |a'| + \left|e_0\cdot \exp(-i\theta) - e_0\right|\\
    &\le k + k + 4|\theta|\\
    &\le 6k.
  \end{align*}
  By the special case discussed above, there exists a unique configuration $C$ with $C(a) = a_0$, $C(b) = b_0$, and $C(e) = e''$ where each vertex is at most $28k$ away from its initial position, and where each bar is rotated by at most $9k$ from its initial direction. Now define $C'$ as the configuration formed by rotating configuration $C$ through angle $\theta$ (around the origin) and then translating the result by $a'$. This $C'$ is a valid configuration of $\cLparallel$ because $C$ itself was, and furthermore, $C'(a) = a'$, $C'(b) = b'$, and $C'(e) = e'$. So $C'$ is the promised unique configuration.

  Finally, we measure how much $C'$ perturbs the vertices and edge rotations. For each vertex $v$, we have $|C(v)| \le |C_0(v)| + |C(v) - C_0(v)| \le 3\sqrt{5}+6k\le 7k$, so rotating by angle $\theta$ moves $C(v)$ by at most $7|\theta| \le 7k$ by Lemma~\ref{lem:rotating-bar-displacement}. Translating by $a'$ moves each vertex by another $|a'|\le k$, so in total,
  \begin{equation*}
    \left|C'(v) - v_0\right| \le 28k + 7k + k = 35k < \frac{\eps}{2}
  \end{equation*}
  for each vertex $v$. Likewise, $C$ has rotated each bar by at most $9k$, so $C'$ rotates each bar by at most $9k+|\theta|\le 10k < \eps/2$. This finishes the proof.
\end{proof}

\subsection{The Cell Gadgets}
\label{sec:gadgets}

With the above tools in place, we now present the modular gadgets themselves. Each will inhabit one or a constant number of $Q\times Q$ square cells, where $Q$ was defined in Section~\ref{sec:parameters}. Each cell has pairs of \term{transmission edges} attached at \term{transmission vertices}, which are sliceform vertices at the midpoints of cell edges (usually labelled $b_j$), and the corners around each transmission vertex will have tolerance~$\delta$ or~$0$---never~$\eps$---as required by the gadget. All other vertices will have tolerance~$\eps$ or~$0$, never $\delta$. If $v$ is a transmission vertex with transmission edge $v w$, the corner $u v w$ is the corresponding \term{transmission corner}; the angles at these transmission corners are the gadgets' only means of communicating with each other.

We use a few naming conventions for corners and their angle offsets. Corners are usually labelled by $\Lambda_j$ or $\Gamma_j$, chosen because these letters visually resemble corners, and their corresponding angle offsets are $\alpha_j = \alpha_j(C) :=  \Offset_{\Lambda_j}(C)$ and $\beta_j = \beta_j(C) := \Offset_{\Gamma_j}(C) = \beta_j$, chosen because `$\Lambda$` resembles a capital `A' and `$\Gamma$' resembles part of a capital `B'. These corners should be considered a pair: $\alpha_j$ and $\beta_j$ together will form the angular coordinates for some meaningful vector. Corners that are \emph{not} paired in this way will be labelled $\Theta_j$, corresponding to angle offsets $\theta_j = \theta_j(C) := \Offset_{\Theta_j}(C)$.

The analyses in this section do not depend on the precise values of $\eps$ and $\delta$; only on the facts that $\eps \le \pi/16$ and $\delta \le \eps/2^{18}$. For each gadget $\cL$ below, we will prove a statement of the following form: $\phi:\Conf(\cL)\to U$ is a homeomorphism, where $\phi$ is a function defined in terms of angle displacements $\Offset_Y$ and possibly vertex positions $\pi_X$, and $U$ is some subset of some $\bR^k$. This provides an explicit parametrization of $\cL$'s configuration space: each configuration is uniquely and continuously determined by just the corners in $Y$ and vertices in $X$, which are related by the shape of $U$. For example, the Angle Average Gadget $\cLangleavg$ is exactly parametrized by
\begin{equation*}
  \Uangleavg = \left\{(\theta_1,\theta_2,\theta_3)\in[-\delta,\delta]^3\;\middle|\; \theta_2 = \frac{\theta_1+\theta_3}{2}\right\}.
\end{equation*}
This means the angle offsets $\theta_j$ at corners $\Theta_j$ must satisfy this average condition, and any triple of angles in $[-\delta,\delta]$ that does satisfy it gives rise to a unique configuration of $\cLangleavg$. Furthermore, this configuration depends continuously on the angles $\theta_j$. Because $\Conf(\cL)$ is compact by Lemma~\ref{lem:extended-linkage-conf}, the following well-known fact says that this continuity comes for free once we know $\phi$ to be a bijection:

\begin{lemma}[{\cite[Theorem 4.17]{rudin}}]
  If $\phi: A\to B$ is a continuous bijection of metric spaces and $A$ is compact, then $\phi\inv$ is continuous, i.e., $\phi$ is a homeomorphism.
\end{lemma}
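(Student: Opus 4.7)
The plan is to use the standard argument that a continuous bijection from a compact space to a Hausdorff space (metric spaces are Hausdorff) is automatically closed, hence a homeomorphism. Concretely, to prove $\phi^{-1}$ is continuous it suffices to show $\phi$ is a closed map, since for any closed $F\subseteq A$, $(\phi^{-1})^{-1}(F) = \phi(F)$, and continuity of $\phi^{-1}$ is equivalent to preimages of closed sets under $\phi^{-1}$ being closed.

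So the key steps are: (1) let $F\subseteq A$ be closed; (2) observe that closed subsets of the compact space $A$ are compact; (3) apply the standard fact that the continuous image of a compact set is compact, to conclude $\phi(F)$ is compact in $B$; (4) use that compact subsets of a metric space are closed (being sequentially compact hence complete and totally bounded, in particular closed). Together with the assumption that $\phi$ is a bijection, this yields that $\phi^{-1}$ is continuous.

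As an alternative, one could argue directly with sequences, which may be more natural in the metric-space setting. Suppose $b_n\to b$ in $B$ and set $a_n := \phi^{-1}(b_n)$, $a := \phi^{-1}(b)$. If $a_n\not\to a$, pick a subsequence staying bounded away from $a$; by compactness of $A$ it has a further subsequence converging to some $a^*\neq a$. Continuity of $\phi$ gives $\phi(a^*) = \lim b_{n_k} = b = \phi(a)$, contradicting injectivity. Hence $a_n\to a$, so $\phi^{-1}$ is continuous.

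There is essentially no obstacle; this is a textbook fact (cited here directly from Rudin, Theorem 4.17). The only thing worth double-checking is that the target space being a metric space—and hence Hausdorff—is what makes compact sets closed in step (4); the argument would fail in a non-Hausdorff target. Since both $A$ and $B$ are metric spaces by hypothesis, nothing further is needed.
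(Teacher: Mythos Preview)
Your proposal is correct; both the closed-map argument and the sequential argument you sketch are standard and valid proofs of this fact. The paper itself does not prove this lemma at all---it simply cites it as Theorem~4.17 of Rudin---so your write-up actually supplies more than the paper does.
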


When using some of these gadgets later on,
we will not need to work with the exact configuration space $U$,
and it will instead suffice to know that the gadgets have sufficient mobility,
i.e., $U$ contains a large enough neighborhood of the initial configuration.
The lemma below also specifies these \term{useful neighborhoods} when appropriate.

For clarity, Figures \ref{fig:scale-invariant-gadgets},
\ref{fig:angle-average-gadget}, \ref{fig:unscaled-gadgets},
\ref{fig:vector-average-gadget}, and \ref{fig:crossing-end-gadget}
draw Parallel Gadgets using the schematic shorthand
from Figure~\ref{fig:parallel-gadget} (right).
All other vertices in these figures are explicitly marked by a dot or an `x',
and we call these vertices \term{marked}.
We will analyze marked vertices separately from the vertices internal to Parallel Gadgets.

We now analyze each gadget in turn. They are organized by similarity of \emph{analysis}, not of \emph{function}. This first set of gadgets all exhibit \emph{scale invariance}: all coordinates are constant multiples of the cell edge length $Q$, so increasing $Q$ serves only to increase the global minimum feature size of each gadget without otherwise affecting the gadget. This greatly simplifies the analysis.

\def\gadgetWidth{.48\textwidth}

\begin{figure}[p]
  \begin{subfigure}{\gadgetWidth}
    \centering
    \includegraphics{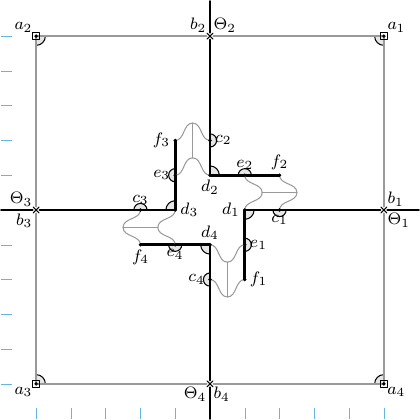}
    \caption{The Copy Gadget forces $\theta_1=\theta_2=\theta_3=\theta_4$. See this gadget's movement in Figure~\ref{fig:copy-gadget-snapshots}.}
    \label{fig:copy-gadget}
  \end{subfigure}
  \hfill
  \begin{subfigure}{\gadgetWidth}
    \centering
    \includegraphics{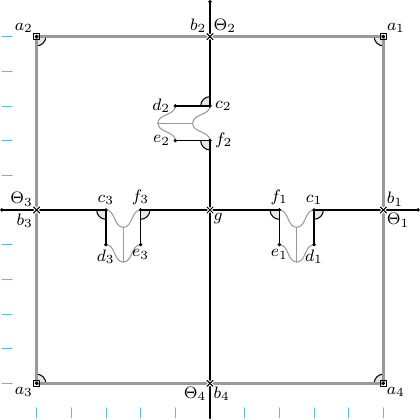}
    \caption{The Crossover Gadget forces $\theta_1=\theta_3$ and $\theta_2=\theta_4$.}
    \label{fig:crossover-gadget}
  \end{subfigure}
  \\
  \begin{subfigure}{\gadgetWidth}
    \centering
    \includegraphics{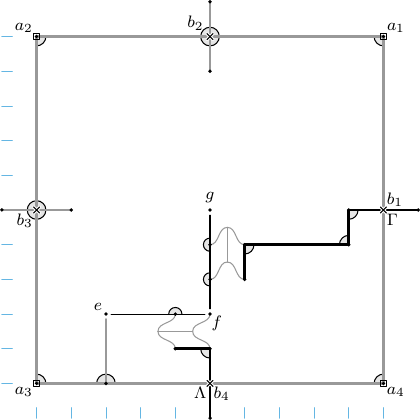}
    \caption{The Angular Gadget forces $g = (Q/2,Q/2)+R\cdot\Rect(\alpha,\beta)$.}
    \label{fig:angular-gadget}
  \end{subfigure}
  \hfill
  \begin{subfigure}{\gadgetWidth} %
    \centering
    \includegraphics{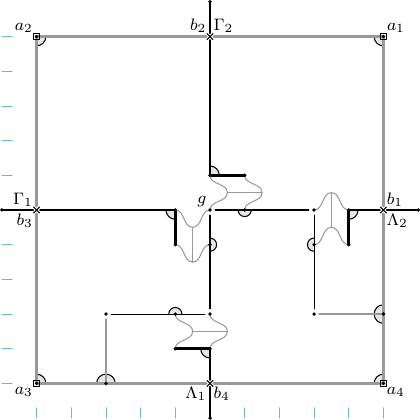}
    \caption{The Vector Rotation Gadget forces $ \Rect(\alpha_1,\beta_1)={i\cdot\Rect(\alpha_2,\beta_2)}$.}
    \label{fig:vector-rotate-gadget}
  \end{subfigure}
  \caption{
    A selection of extended linkage gadgets for manipulating angles and vectors. These scale-invariant gadgets are analyzed in Lemma~\ref{lem:scale-invariant-gadgets}. In these figures, $\Offset_{\Lambda}(C) = \alpha$, $\Offset_\Gamma(C) = \beta$, $\Offset_\Theta(C) = \theta$, and similarly with subscripts.
    As described in Convention~\ref{con:drawing-extended-linkages}, vertices surrounded by squares are pinned; those marked with an ``x'' are sliceform vertices; and corners marked with a solid gray sector are frozen, i.e., have tolerance $0$.
    Unfrozen sliceform vertices $b_j$ are assigned tolerance~$\delta$, while all remaining unfrozen corners have tolerance~$\eps$. The pins at the vertices $a_j$ are for clarification only; in the overall construction, these nodes $a_j$ are forbidden from moving by other means, so these explicit pins are unnecessary.
    The thick edges emphasize the rigidified background grid (in gray) and
    the transmission edges (in black), along with everything rigidly attached to those edges.
    Each cell has dimensions $Q \times Q$, and all marked vertices have coordinates at integer multiples of $Q/10$, indicated by cyan axis notches.}
  \label{fig:scale-invariant-gadgets}
\end{figure}

\begin{figure}[!b]
  \includegraphics{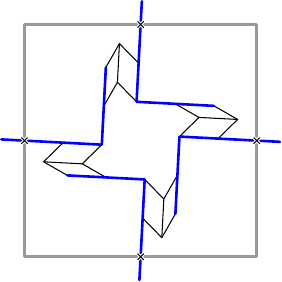}%
  \hfill%
  \includegraphics{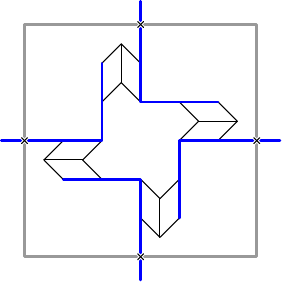}%
  \hfill%
  \includegraphics{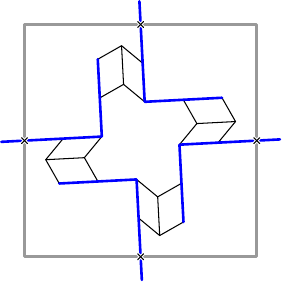}
  \caption{Snapshots of the configuration space of the Copy Gadget (cf.\ Figure~\ref{fig:copy-gadget}, Lemma~\ref{lem:scale-invariant-gadgets}), not drawn to scale. Angle offsets shown are $-3^\circ$, $0^\circ$ and $3^\circ$ respectively. The gadget has been slightly modified and simplified to emphasize movement.
  The thick blue edges emphasize the transmission edges.}
  \label{fig:copy-gadget-snapshots}
\end{figure}

\begin{lemma}[Scale Invariant Cell Gadgets]
  \label{lem:scale-invariant-gadgets}
  The abstract extended linkages described below are globally noncrossing with global minimum feature size at least $1/2$, and with their configuration spaces $U$ exactly parameterized as specified in each case.  Each gadget is described by an initial configuration where $a_3=(0,0)$ and $a_1=(Q,Q)$, all marked vertices have coordinates that are integer multiples of $Q/10$, and so all vertices have coordinates that are integer multiples of $Q/40$ (an integer).
  For some gadgets, we prove that the configuration space $U$ contains a conveniently shaped subset, identified below as a \term{useful neighborhood}.

  \begin{descriptionflush}
  \item[Copy Gadget.]
    The Copy Gadget $\cLcopy$ drawn in Figures~\ref{fig:copy-gadget} and~\ref{fig:copy-gadget-snapshots} constrains its four sliceform angles to remain equal. Specifically, $\Offset_{(\Theta_1,\Theta_2,\Theta_3,\Theta_4)}$ is a homeomorphism from $\Conf(\cLcopy)$ to $$\Ucopy = \{(\theta,\theta,\theta,\theta)\mid\theta\in[-\delta,\delta]\}.$$

  \item[Crossover Gadget.]
    The Crossover Gadget $\cLcross$ drawn in Figure~\ref{fig:crossover-gadget} constrains opposite sliceform angles to remain equal. Specifically, $\Offset_{(\Theta_1,\Theta_2,\Theta_3,\Theta_4)}$ is a homeomorphism from $\Conf(\cLcross)$ to
    $$
    \Ucross = \left\{
      (\theta_1,\theta_2,\theta_1,\theta_2)
      \;\middle|\;
      \theta_1,\theta_2 \in [-\delta,\delta]
    \right\}.
    $$

  \item[Angular Gadget.]
    The Angular Gadget $\cLangular$ drawn in Figure~\ref{fig:angular-gadget}
    relates the rectangular coordinates of vertex $g$ with the angles $\alpha,\beta$ in its angular representation. Specifically, the function $\pi_{g}\times \Offset_{\Lambda,\Gamma}$ is a homeomorphism from $\Conf(\cLangular)$ to
    \begin{equation*}
      \Uangular = \left\{((x,y),(\alpha,\beta))\mid \alpha,\beta\in[-\delta,\delta]\text{ and }(x,y) = (Q/2,Q/2)+R\cdot\Rect(\alpha,\beta)\right\}.
    \end{equation*}
    Useful Neighborhood: if $(x,y)$ is any point at most $R\delta/2$ away from the center of the cell, a unique configuration exists placing $g$ at $(x,y)$.

  \item[Vector Rotation Gadget.]
    The Vector Rotation Gadget $\cLvecrot$ (Figure~\ref{fig:vector-rotate-gadget}) enables constructing the $90^\circ$ rotation of a given vector.  Specifically, $\Offset_{(\Lambda_1,\Gamma_1,\Lambda_2,\Gamma_2)}$ is a homeomorphism from $\Conf(\cLvecrot)$ to
    \begin{equation*}
      \Uvecrot = \left\{
        (\alpha_1,\beta_1, \alpha_2,\beta_2)\in[-\delta,\delta]^4
        \;\middle|\;
        \Rect(\alpha_1,\beta_1) =
        i\cdot\Rect(\alpha_2,\beta_2)
      \right\}.
    \end{equation*}
    Useful Neighborhood: if $\alpha_1$ and $\beta_1$ are such that $|R\cdot\Rect(\alpha_1,\beta_1)| \le R\delta/2$, then a (necessarily unique) configuration $C$ exists with $\Offset_{(\Lambda_1,\Gamma_1)}(C) = (\alpha_1,\beta_1)$.

  \item[Angle Average Gadget.]
    The Angle Average Gadget $\cLangleavg$ drawn in Figures~\ref{fig:angle-average-gadget} and~\ref{fig:angle-average-gadget-snapshots} constrains one sliceform angle to equal the average of the other two. Specifically, $\Offset_{(\Theta_1,\Theta_2,\Theta_3)}$ is a homeomorphism from $\Conf(\cLangleavg)$ to
    $$\Uangleavg = \left\{
      (\theta_1,\theta_2,\theta_3)
      \;\middle|\;
      \theta_1,\theta_3 \in [-\delta,\delta]
      \text{ and } \theta_2 = (\theta_1+\theta_3)/2
    \right\}.
    $$

  \end{descriptionflush}
\end{lemma}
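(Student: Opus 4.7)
The plan is to handle all five gadgets in parallel, since they share a common architecture: each lives in a constant number of $Q\times Q$ cells whose boundaries are part of the rigidified background grid, so the only nontrivial degrees of freedom come from (a) the sliceform transmission vertices $b_j$ at cell-edge midpoints, (b) the few marked mobile vertices inside the cell, and (c) the Parallel Gadgets used to link them. My first move in every case is to apply Lemma~\ref{lem:parallel-gadget} to each schematic Parallel-Gadget block: this tells me, for any admissible configuration, that the two outer bars of the block are translated copies of each other (so the direction vector is transported unchanged), and that the block is uniquely determined by the positions of its three marked endpoints.

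Next I would carry out the per-gadget analysis, verifying in each case that (i) the parallel-gadget chains force exactly the claimed algebraic relationship between the angle offsets, and (ii) any tuple in the advertised set $U$ gives rise to a (necessarily unique) configuration. For the Copy and Crossover Gadgets this is a direct parallel-transport chase: each pair of sliceform angles declared equal is connected by a single Parallel-Gadget path, so the direction of one transmission bar is propagated unchanged to the other. For the Angular Gadget, the two orthogonal parallel chains emanating from $a_3$ implement the two unit steps in the formula $\Rect(\alpha,\beta) = e^{i\alpha}+ie^{i\beta}-(1+i)$ scaled by $R$, placing $g$ at $(Q/2,Q/2)+R\cdot\Rect(\alpha,\beta)$; the Vector Rotation Gadget is then essentially two Angular Gadgets joined through a rigid $90^{\circ}$ turn, so that $\Rect(\alpha_1,\beta_1)=i\cdot\Rect(\alpha_2,\beta_2)$. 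The Angle Average Gadget uses a symmetric parallelogram construction whose diagonal carries the bisector of two incoming sliceform directions, forcing $\theta_2=(\theta_1+\theta_3)/2$. Bijectivity of the parametrizing map $\phi$ in each case follows by combining this derived relation with parallel-gadget uniqueness and grid rigidity; continuity of $\phi^{-1}$ is free because $\Conf(\cL)$ is compact by Lemma~\ref{lem:extended-linkage-conf}, so the continuous bijection $\phi$ is automatically a homeomorphism.

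For the minimum-feature-size claim I would exploit scale invariance. All marked coordinates are integer multiples of $Q/40$, so initial pairwise feature distances between non-incident features are at least $Q/40$. Since all unfrozen sliceform corners have tolerance $\delta\le\eps/2^{18}$ and the remaining unfrozen corners have tolerance $\eps\le\pi/16$, Lemma~\ref{lem:rotating-bar-displacement} gives that every vertex is displaced by at most $O(Q\cdot\eps)$ from its initial position; combined with the internal feature-size bound $\tfrac{1}{2}$ provided by Lemma~\ref{lem:parallel-gadget} (which scales up with $Q$), Lemma~\ref{lem:feature-size} yields global minimum feature size $\gg 1/2$ for our (very large) choice of $Q$.

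Finally, for the useful neighborhoods, I would invoke Lemma~\ref{lem:rect-contains-box}: the box $R\cdot[-\delta/2,\delta/2]^2$ lies inside $R\cdot\Rect([-\delta,\delta]^2)$, so every target within distance $R\delta/2$ of the cell centre is attained by some $(\alpha,\beta)\in[-\delta,\delta]^2$, and Part~\ref{thmpart:parallel-gadget-exists} of the Parallel Gadget lemma then produces the matching configuration; the Vector Rotation case reduces to the Angular case by composition. The main obstacle I expect is bookkeeping: ensuring that in each gadget the \emph{orientation} of every Parallel-Gadget block is set so that the claimed identity (equality, $90^{\circ}$-rotation, or averaging) comes out with the correct sign, and that the angle tolerances $\delta$ and $\eps$ at intermediate corners compose to give exactly the $\delta$-tolerance claimed at the transmission corners, without any overshoot or shortfall that would spoil the equality $\Offset_Y(\Conf(\cL))=U$.
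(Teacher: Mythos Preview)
Your outline matches the paper's approach for the Copy, Crossover, Angular, and Vector Rotation Gadgets reasonably well, including the use of scale invariance, the parallel-transport chase, and the appeal to Lemma~\ref{lem:rect-contains-box} for the useful neighborhoods. The displacement/feature-size bookkeeping is looser than the paper's (the paper rescales to $Q=40$ first and then tracks constants explicitly, whereas you keep $Q$ general and rely on $O(Q\cdot\eps)$ estimates), but that could be tightened.

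The genuine gap is the Angle Average Gadget. You describe it as ``a symmetric parallelogram construction whose diagonal carries the bisector of two incoming sliceform directions,'' but that is not the mechanism. The gadget is built from two \emph{kites} $he_1f_1e_2$ and $f_3e_3he_2$ sharing the vertex $e_2$, and the constraint $\theta_2=(\theta_1+\theta_3)/2$ is equivalent to these kites remaining \emph{similar}. Proving they remain similar is the heart of the argument: the Parallel Gadgets only force the sum $\alpha_1+\alpha_2=\beta_1+\beta_2$ of kite angles, and one then needs the strictly-increasing function
\[
f(x)=\sin^{-1}x+\cos^{-1}\frac{x}{\sqrt{5}}
\]
to conclude that equal angle-sums force equal individual angles. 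Existence is also nontrivial: given $\theta_2=(\theta_1+\theta_3)/2$, one must check that the two candidate positions for $e_2$ (as reflections of $e_1$ across $hf_1$ and of $e_3$ across $hf_3$) actually coincide, which is a separate geometric computation. None of this is a parallelogram-diagonal argument, and skipping it leaves the most substantial case unproved.
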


\newcommand\proofPar[1]{\noindent\emph{#1.}}
\newcommand\proofParNo[1]{\emph{#1.}}

\begin{proof}
  Each of these gadgets is \emph{scale invariant}: scaling down by a factor of $40/Q$ results in the same gadgets with the value of $Q$ replaced by $40$. Showing that each of these scaled down gadgets has global minimum feature size at least $1/2$ will show that the original gadgets have global minimum feature size at least $\frac{Q}{40}\cdot\frac{1}{2} \ge \frac{1}{2}$, so we may proceed as if $Q=40$ instead of being defined as in Section~\ref{sec:parameters}.
  
  \begin{descriptionflush}
      
  \item[Copy Gadget.]
    Consider first the Copy Gadget $\cLcopy$ of Figure~\ref{fig:copy-gadget}, whose movement is illustrated in Figure~\ref{fig:copy-gadget-snapshots}. The idea of the proof is as follows: $\eps$ enforces uniqueness and the claimed angle relationships, while $\delta$ ensures existence.

    \proofPar{Constraint and Uniqueness}
    Relying only on our choice of $\eps$, we first verify that angles $\theta_1,\theta_2,\theta_3,\theta_4$ must be equal, and that there is \emph{at most} one configuration having $\theta_1 = \theta_2 = \theta_3 = \theta_4 = \theta$ for each value $\theta$. Path $b_1 c_1 d_1 e_1 f_1$ must move as a rigid subassembly because of its frozen corners. So angle offset $\theta_1$ determines the locations of vertices $c_1,d_1,e_1,f_1$: they are simply rotated around $b_1$ from their initial positions by angle $\theta_1$. Positions for the rest of the marked vertices in Figure~\ref{fig:copy-gadget} are likewise determined by $\theta_2$, $\theta_3$, and $\theta_4$. By Lemma~\ref{lem:parallel-gadget}, the Parallel Gadgets force $d_jc_j \parallel e_{j+1}f_{j+1}$ for $1\le j\le 4$ (with indices taken cyclically) in any configuration of $\cLcopy$. It follows that $\theta_1 = \theta_2 = \theta_3 = \theta_4$ in any configuration; let $\theta$ be this common value. By the same lemma, the positions of $d_j$, $c_j$, and $e_{j+1}$ uniquely determine the positions of the other vertices within their Parallel Gadget, whenever such a configuration exists. So each configuration is indeed uniquely characterized by its single angle $\theta$. Said differently, the map $\phi = \Offset_{(\Theta_1,\Theta_2,\Theta_3,\Theta_4)}$ injectively maps $\Conf(\cLcopy)$ to~$\Ucopy$.

    \proofPar{Existence}
    On its own, the above paragraph does not prove that \emph{every} angle $\theta\in[-\delta,\delta]$ gives rise to a configuration of $\cLcopy$, or said differently, that $\phi$ is a bijection. Indeed, too large a $\theta$ would stretch or compress the Parallel Gadgets too far, resulting in broken angle constraints or worse. However, we next show that our choice of $\delta$ is small enough to guarantee $\phi$ is indeed a bijection. By Lemma~\ref{lem:rotating-bar-displacement}, because the path $b_j,c_j,d_j,e_j,f_j$ has length $\frac{6}{10} \cdot 40 = 24$, the rotations by $\theta$ around each $b_j$ move each vertex $c_j,d_j,e_j,f_j$ at most $24\delta < \eps/2^7$ from its starting position, and guarantee that $e_jc_j\parallel e_{j+1}f_{j+1}$ for each $j=1,2,3,4$. Then Lemma~\ref{lem:parallel-gadget} (Parallel Gadget) shows that each Parallel Gadget can configure itself at these new endpoints, making sure that each vertex moves by no more that $\eps/2$ from its starting point and each edge rotates by no more than $\eps/2$ from its original direction. This is enough to guarantee that all angle constraints are satisfied, so the configuration indeed exists. By Lemma~\ref{lem:feature-size}, this configuration is noncrossing with minimum feature size at least $1-2\cdot\eps/2 > 1/2$.    

  \item[Crossover Gadget.]
    The proof proceeds similarly to that of $\cLcopy$ above.

    \proofPar{Constraint and Uniqueness}
    In any configuration of $\cLcross$, the Parallel Gadgets and the sliceform vertex~$g$ enforce $b_3c_3 \parallel f_3g\parallel gf_1\parallel c_1b_1$ and likewise $b_4g\parallel gf_2\parallel c_2b_2$. This shows that $\theta_1=\theta_3$ and $\theta_2=\theta_4$ throughout $\Conf(\cLcross)$, and that these two angles uniquely determine the entire configuration (when such a configuration exists): each bold rigid assembly $b_jc_jd_j$ ($1\le j\le 3$) has rotated by angle $\theta_j$ around its (stationary) pivot $b_j$, edge $b_4 g$ has rotated by angle $\theta_4$ around $b_4$, each smaller rigid assembly $gf_je_j$ has likewise rotated by angle $\theta_j$ around the \emph{new} location of $g$, and by Lemma~\ref{lem:parallel-gadget} each Parallel Gadget can be drawn to connect its endpoints in at most one way.

    \proofPar{Existence}
    Consider a given pair of angles $\theta_1,\theta_2\in[-\delta,\delta]$.
    Because every marked vertex has a path of length at most $\frac{8}{10} \cdot 40 = 32$ to a transmission vertex $b_i$ (avoiding Parallel Gadgets),by Lemma~\ref{lem:rotating-bar-displacement}, the forced location of each marked vertex is at most $32\delta < \eps/2^7$ units away from its initial position. The opposite ends of each Parallel Gadget are indeed parallel, allowing the Parallel Gadgets to configure themselves by Lemma~\ref{lem:parallel-gadget}. As before, vertices have been displaced by at most $\eps/2$ and edges have been rotated by at most $\eps/2$, guaranteeing the configuration's validity and minimum feature size of at least $1-\eps \ge 1/2$. 

  \item[Angular Gadget.] \proofParNo{Constraint and Uniqueness}
    This is straightforward: the Parallel Gadgets force $ef$ and assembly $b_4c_4d_4$ to have rotated by $\alpha$, and likewise $fg$ and the assembly containing $b_1,c_1,d_1$ to have rotated by $\beta$ around the new position of $f$, so $g = (Q/2,Q/2) + R\cdot\Rect(\alpha,\beta)$, so everything is uniquely specified.

    \proofPar{Existence}
    This is also easy: because every marked vertex has a path of length at most $\frac{8}{10} \cdot 40 = 32$ to a vertex of the rigidified background grid (avoiding Parallel Gadgets), Lemma~\ref{lem:rotating-bar-displacement} shows that each marked vertex has moved at most $32\delta\le \eps/2^7$ from its initial position, so the Parallel Gadgets can be configured to connect their opposite edges, which can be checked to be parallel. As above, we may also conclude that each edge has rotated at most $\pm\eps/2$ from its initial direction. We conclude the validity, noncrossing, and minimum feature size of the resulting configuration as before.

    \proofPar{Useful Neighborhood}
    Suppose $(x,y)=g'$ is some point at distance at most $R\delta/2$ from the center of the cell. By Lemma~\ref{lem:rect-contains-box} with $\theta=\delta$, there is exactly one pair $(\alpha,\beta)\in[-\delta,\delta]^2$ such that $R\cdot\Rect(\alpha,\beta) = g'-(Q/2,Q/2)$, i.e., such that $(g',(\alpha,\beta))\in\Uangular$.
    
  \item[Vector Rotation Gadget.]
    \proofParNo{Constraint and Uniqueness}
    This is perfectly analogous to previous analyses---in fact, this gadget is essentially two conjoined Angular Gadgets---so we have omitted this analysis.

    \proofPar{Existence}
    This is also analogous and therefore omitted.

    \proofPar{Useful Neighborhood}
    Vertex $g$ must land in the region $(Q/2,Q/2) + R\cdot\Rect([-\delta,\delta]^2)$ (because of $\Lambda_1,\Gamma_1$) as well as in this same region rotated by $\pi/2$ (because of $\Lambda_2,\Gamma_2$). These two regions are not identical, but by Lemma~\ref{lem:rect-contains-box} with $\theta = \delta$, they both contain the square $(Q/2,Q/2) + [-R\delta/2,R\delta/2]^2$. Any point within this square therefore gives rise to a valid configuration of $\cLvecrot$.

  \item[Angle Average Gadget.]
    This analysis is more involved than the previous ones, but it follows the same outline. The gadget $\cLangleavg$ is depicted in Figure~\ref{fig:angle-average-gadget} (drawn to scale). We choose $f_1$, $f_3$, and $e_1$ such that $f_3e_3he_2$ and $he_1f_1e_2$ are kites which are initially similar. We will show that these kites remain similar throughout $\Conf(\cLangleavg)$, which is pivotal (so to speak) to how the gadget works. This gadget is based on the \emph{Reflector Gadget} from~\cite{popups}, but as a proof of the Reflector Gadget's movement was not included in that extended abstract, we provide a self-contained proof of our Angle Average Gadget here. The motion of this gadget has been illustrated in Figure~\ref{fig:angle-average-gadget-snapshots}, which is \emph{not} drawn to scale to emphasize the movements.

    \proofPar{Constraint and Uniqueness}
    We first argue that angles $\theta_1,\theta_2,\theta_3$ uniquely determine their configuration, when one exists. Arguments like those above show that all marked vertices \emph{other} than $e_2$, $k_1$, and $k_2$ depend straightforwardly on angles $\theta_1,\theta_2,\theta_3$, by rotations and translations. Treating $he_2$ as a single fixed-length bar, Lemma~\ref{lem:hook-linkage} (Hook Linkage) on $he_2f_3$ (which has angle $\cos\inv(-{1}/{\sqrt{5}}) < 2\pi/3$) shows that $e_2$ has at most one valid position, and the same lemma shows that $k_1$ and $k_2$ are likewise uniquely determined.

    We next argue that, for a configuration to exist, $\theta_2 = (\theta_1+\theta_3)/2$ must hold. Focus for now on kites $he_1f_1e_2$ and $f_3e_3he_2$. Define $\alpha_1 = \angle e_1he_2$ and $\alpha_2 = \angle f_1e_1h$, and similarly, $\beta_1 = \angle e_3f_3e_2$ and $\beta_2 = h e_3f_3$. Considering path $m_1e_1he_2f_3m_3$, segments $m_1e_1$ and $f_3m_3$ being parallel (which is guaranteed by the two upper Parallel Gadgets) implies that
    \begin{equation*}
      \alpha_2 + \alpha_1 + (2\pi-\beta_2) + (\pi-\beta_1) = 3\pi,
    \end{equation*}
    i.e., $\alpha_1+\alpha_2 = \beta_1+\beta_2$. We will show that this forces the kites to be similar.

    Kite $f_3e_3he_2$ has edge lengths $4$ and $4\sqrt{5}$. We may compute that $|e_2-e_3| = 8\sin\frac{\beta_1}{2}$, $\angle e_2e_3f_3 = (\pi-\beta_1)/2$, and
    \begin{equation*}
      \angle he_3e_2 = \cos\inv\frac{|e_2-e_3|/2}{4\sqrt{5}}
      = \cos\inv\frac{t}{\sqrt{5}},
    \end{equation*}
    where $t = \sin\frac{\beta_1}{2}$. So, in terms of $t$,
    \begin{equation*}
      \beta_1 + \beta_2 =
      \frac{\pi}{2} + \frac{1}{2}\beta_1 + \angle e_2e_3h
      = \frac{\pi}{2} + \sin\inv t + \cos\inv\frac{t}{\sqrt{5}}.
    \end{equation*}
    Analogously, if $s = \sin\frac{\alpha_1}{2}$, we have
    \begin{equation*}
      \alpha_1 + \alpha_2 =
      \frac{\pi}{2} + \sin\inv s + \cos\inv\frac{s}{\sqrt{5}}.
    \end{equation*}
    But the function $f(x) := \sin\inv x + \cos\inv \frac{x}{\sqrt{5}}$ is strictly monotonically increasing on $x\in[0,1]$, because its derivative is $(1-x^2)^{-1/2} - (5-x^2)^{-1/2} > 0$. So $f(s) = f(t)$ implies $s = t$, meaning $\alpha_1=\beta_1$ and $\alpha_2 = \beta_2$.

    Revisiting Figure~\ref{fig:angle-average-gadget}, let $\kappa = \cos\inv\frac{-1}{\sqrt{5}}$ be the initial value of angles $\alpha_2 = \angle f_1e_1h$ and $\beta_2 = \angle h e_3f_3$. The new values for these angles are $\kappa + \theta_1 - \theta_2$ and $\kappa + \theta_2 - \theta_3$ respectively. But these values must be equal, so indeed $\theta_2 = (\theta_1+\theta_3)/2$.

    \proofPar{Existence}    
    Conversely, we must show that if angles $\theta_j\in[-\delta,\delta]$ are chosen from the set $\Uangleavg$ (i.e., satisfying $\theta_2 = (\theta_1+\theta_3)/2$), then a configuration indeed exists. Use the $\theta_j$ to configure all marked vertices other than $e_2$, $k_1$, and $k_2$ by rotations as above.
    Because every marked vertex has a path of length at most $\frac{11}{10} \cdot 40 = 44$ to a transmission vertex $b_i$ (avoiding Parallel Gadgets),by Lemma~\ref{lem:rotating-bar-displacement}, each marked vertex is displaced by at most $44\delta < 2^6\delta$.

    Let $p$ be the reflection of $e_1$ across $hf_1$, and $q$ the reflection of $e_3$ across $hf_3$; we will verify that $p=q$, so that $e_2$ has a well-defined position at the correct distances from $h$, $f_1$, and $f_2$ simultaneously. Equality $\theta_2 = (\theta_1+\theta_3)/2$ implies that $\alpha_2 = \beta_2$ and therefore kites $he_1f_1p$ and $f_3e_3hq$ are similar. Label their angles as $\alpha_1' := \angle e_1hp = \angle e_3f_3q$ and $\alpha_3' := \angle pf_1e_1 = \angle qhe_3$. We have $2\pi = \angle f_1e_1h + \angle e_1he_3 + \angle h e_3f_3 = \angle e_1he_3 + 2\alpha_2$ because $f_1e_1$ and $e_3f_3$ are parallel, so we may compute
    \begin{align*}
      \angle phq
      &= \angle e_1he_3 - \angle e_1hp - \angle qhe_3\\
      &= (2\pi - 2\alpha_2) - \alpha_1' - \alpha_3'\\
      &= 0,
    \end{align*}
    meaning $p = q$, as claimed. 

    Each of $h$ and $f_3$ has been displaced by at most $2^6\delta$ as above, which is less than $\frac{1}{32}\min(|h-e_2|,|e_2-f_3|) = 1/8$, so the Hook Linkage Lemma shows that $e_2$'s forced location of $p=q$ is at most $2^8\delta$ from its initial position and that $he_2$ and $e_2f_3$ have each rotated by at most $2^6\delta$ from their original directions (since $|h-e_2|$ and $|e_2-f_3|$ are at least $4$). Triangle $hk_2e_2$ must be rotated by this same amount to match its hypotenuse, so $k_2$ is displaced by at most $|h-k_2|\cdot 2^6\delta = 2^9\delta$.

    Similarly, the Hook Linkage Lemma applied to $he_2f_1$ shows that edge $e_2 f_1$ has been rotated by at most $2^6\delta$, so the same is true for edges $k_1 f_1$ and $k_1 e_2$. So $k_1$'s total displacement may be bounded by $e_2$'s displacement plus $|e_2-k_1|\cdot 2^6\delta$, namely $(2^{8}+2^{10})\delta\le 2^{11}\delta$.

    We have thus displaced all marked vertices no farther than $2^{11}\delta \le \eps/2^7$ and rotated each edge (other than those in Parallel Gadgets) by no more than $2^6\delta < \eps/2$. We may thus configure the Parallel Gadgets and then verify angle constraints and minimum feature size exactly as for prior gadgets, which completes the analysis.\qedhere

  \end{descriptionflush}
  
\end{proof}

\begin{figure}[hbt]
  \centering
  \includegraphics{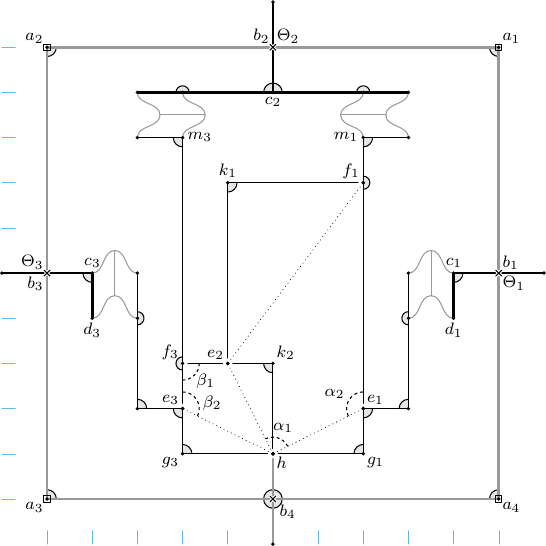}
  \caption{The Angle Average Gadget forces $\theta_2=(\theta_1+\theta_3)/2$.
  This gadget is analyzed in Lemma~\ref{lem:scale-invariant-gadgets}. See this gadget's movement in Figure~\ref{fig:angle-average-gadget-snapshots}.
  The drawing follows Convention~\ref{con:drawing-extended-linkages}.
  Unfrozen sliceform vertices $b_j$ are assigned tolerance~$\delta$, while all remaining unfrozen corners have tolerance~$\eps$.
  The pins at the vertices $a_j$ are for clarification only; in the overall construction, these nodes $a_j$ are forbidden from moving by other means, so these explicit pins are unnecessary.
  The thick edges emphasize the rigidified background grid (in gray) and
  the transmission edges (in black), along with everything rigidly attached to those edges.
  Dotted segments and dashed angles are referred to in the proof.
  The cell has dimensions $Q \times Q$, and all marked vertices have coordinates at integer multiples of $Q/10$, indicated by cyan axis notches.
  }
  \label{fig:angle-average-gadget}
\end{figure}

\begin{figure}
  \includegraphics{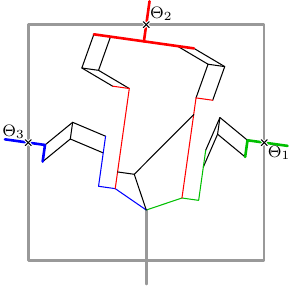}%
  \hfill%
  \includegraphics{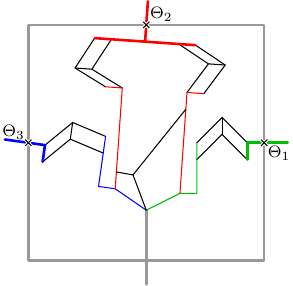}%
  \hfill%
  \includegraphics{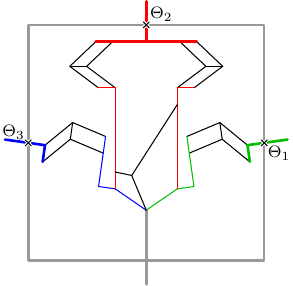}
  \\[.1in]
  \includegraphics{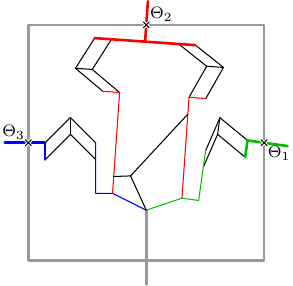}%
  \hfill%
  \includegraphics{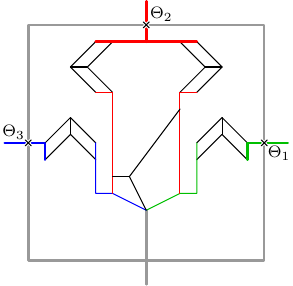}%
  \hfill%
  \includegraphics{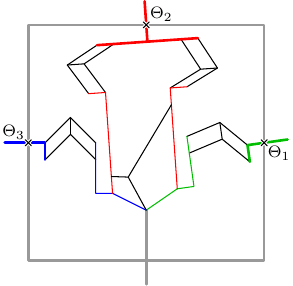}
  \\[.1in]
  \bigskip
  \includegraphics{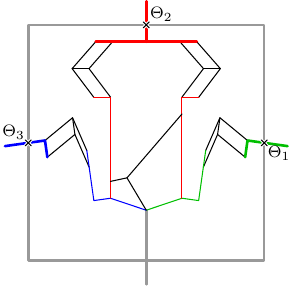}%
  \hfill%
  \includegraphics{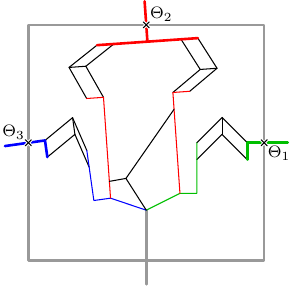}%
  \hfill%
  \includegraphics{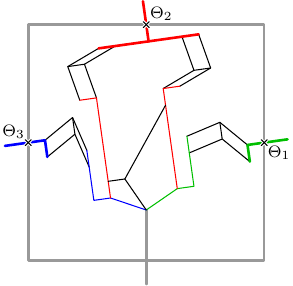}%
  \caption{Snapshots of the Angle Average Gadget's configuration space (cf.\ Lemma~\ref{lem:scale-invariant-gadgets}, Figure~\ref{fig:angle-average-gadget}), shown for each pair of values $\theta_1,\theta_3\in\{-8^\circ, 0^\circ, 8^\circ\}$ (where $\theta_j = \Offset_{\Theta_j}$). In each configuration, it may be observed that $\theta_2 = (\theta_1+\theta_3)/2$. Edge lengths have been altered from those in Figure~\ref{fig:angle-average-gadget} to exaggerate the gadget's movement.
  The thick colored edges emphasize the transmission edges, and non-black thin edges are colored to indicate which transmission edge controls their angle.}
  \label{fig:angle-average-gadget-snapshots}
\end{figure}

The next batch of cell gadgets are not scale invariant: they may have minimum feature size less than $1/2$ when scaled down by $40/Q$, unlike the previous gadgets, so we cannot assume $Q=40$ as above. We must therefore skip the simplifying scaling step, work instead with the value of $Q$ given in Section~\ref{sec:parameters}, and be more careful when arguing about feature size.

\begin{lemma}[Cell Gadgets with Unscaled Measurements]
  \label{lem:unscaled-gadgets}
  The abstract extended linkages described below are globally noncrossing with global minimum feature size at least $1/2$, and their configuration spaces are exactly parameterized as described.  Each gadget is described by an initial configuration having $a_3=(0,0)$ and $a_1=(Q,Q)$, and where all marked vertices have coordinates that are integer multiples of $Q/10$, \emph{with a few exceptions} listed below.
  
  \begin{descriptionflush}
  \item[Start Gadget.]
    The Start Gadget $\cLstart$ (Figure~\ref{fig:start-gadget})
    relates rectangular and angular coordinates just like the Angular Gadget, but this time the angular coordinates use radius $2r$ instead of $R$. Recall $r = \lceil d/\delta\rceil$ is half the radius used in the angular change of coordinates $(x,y) = 2r\cdot\Rect(\alpha,\beta)$ applied to each polynomial and that $r \leq Q/20$. The function $\pi_v\times\Offset_{\Lambda,\Gamma}$ is a homeomorphism from $\Conf(\cLstart)$ to
    \begin{equation*}
      \Ustart = \left\{((x,y),(\alpha,\beta)) \mid \alpha,\beta\in[-\delta,\delta]\text{ and } (x,y) = (Q/5+2r,Q/5+2r) + 2r\Rect(\alpha,\beta)\right\}.
    \end{equation*}
    Exceptional vertices are initially configured at points $u = (Q/5,Q/5) + (2r,0)$, $v = (Q/5,Q/5) + (2r,2r)$, and $w = (Q/2,Q/5) + (0,2r)$, which all have integer coordinates.

  \item[Vector Creation Gadget.]
    For a positive number $1\le w\le R\delta/2$ (not necessarily an integer), the Vector Creation Gadget $\cLveccreate(w)$ (Figure~\ref{fig:vector-create-gadget}) enables constructing angular coordinates for the vector with complex representation $w\cdot(e^{i\theta}-1)$ for all $\theta \in [-\delta,\delta]$. Specifically, $\Offset_{(\Theta,\Lambda,\Gamma)}$ provides a homeomorphism from $\Conf(\cLveccreate(c))$ to
    \begin{equation*}
      \Uveccreate = \left\{
        (\theta,\alpha,\beta)\in[-\delta,\delta]^3
        \;\middle|\;
        w\cdot (\cos\theta,\sin\theta) - (w,0) =
        R\cdot \Rect(\alpha,\beta)
      \right\}.
    \end{equation*}
    In the initial configuration shown in Figure~\ref{fig:vector-create-gadget}, the only exceptional vertex has position $c_3=(Q/2-w,Q/2)$, which has integer coordinates when $w$ itself is an integer.

  \item[End Gadget.]
    For any chosen real number $w$ with magnitude at most $R\delta/2$,
    the End Gadget $\cLend = \cLend(w)$ (Figure~\ref{fig:end-gadget}) constrains a vector to the position $(Q/2+w,Q/2)$. Specifically, $\Conf(\cL)$ has exactly one configuration, and this configuration has $R \cdot \Rect(\alpha,\beta) = (w,0)$. When $w=0$, there are no exceptional vertices. (When $w\ne 0$, we make no promises about vertices having integer coordinates other than those frozen to the background grid.)

  \end{descriptionflush}

\end{lemma}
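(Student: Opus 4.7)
The proof plan is to handle each of the three gadgets in the same style as Lemma~\ref{lem:scale-invariant-gadgets}, but carefully tracking the scale-dependent components separately from the cell-scale ones, since we can no longer reduce to the case $Q=40$ by uniform scaling. For each gadget I will (a) identify the rigid subassemblies and pins that determine ``most'' marked vertices as explicit rotations/translations of their initial position, (b) use Lemma~\ref{lem:parallel-gadget} and (where needed) Lemma~\ref{lem:hook-linkage} to handle the remaining vertices and to enforce the parallelism relations among transmission edges, and (c) invoke Lemma~\ref{lem:feature-size} to conclude noncrossingness and the minimum feature size bound. In every case, $\Offset$ at the named corners (together with $\pi_v$ where needed) will be a continuous bijection onto a compact subset of a Euclidean space, so it is automatically a homeomorphism.

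For the \textbf{Start Gadget} $\cLstart$, the analysis parallels that of the Angular Gadget, except the inner ``rect arm'' now has length $2r$ rather than $R$. Since $2r \le Q/10$, the two bars of length $2r$ fit inside the cell and are separated from the background grid by distances that are constant fractions of $Q/10$; because $Q \gg 2r$ and $Q/10$ far exceeds $1/2$, the scale-dependent features do not cause crossings. The Parallel Gadgets force the arms of length $2r$ to rotate by $\alpha$ and $\beta$ about the appropriate hinges, giving $(x,y) = (Q/5+2r,Q/5+2r) + 2r\cdot\Rect(\alpha,\beta)$. Existence for any $\alpha,\beta\in[-\delta,\delta]$ follows exactly as before: each marked vertex is displaced by at most $O(Q\delta)$, which is well within the $\eps/2^7$ threshold of Lemma~\ref{lem:parallel-gadget}.

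For the \textbf{Vector Creation Gadget} $\cLveccreate(w)$, the controlling corner $\Theta$ rotates a bar of length $w$ through angle $\theta$, producing the tip at $(Q/2,Q/2) + w(\cos\theta-1,\sin\theta)$. A subsequent copy of the Angular-style subassembly, with radius $R$, then realizes this tip as $R\cdot \Rect(\alpha,\beta)$ relative to $(Q/2,Q/2)$, yielding the stated equation. Existence requires $|w(\cos\theta-1,\sin\theta)| \le R\delta/2$: indeed $|w(e^{i\theta}-1)| \le w|\theta| \le (R\delta/2)\delta \ll R\delta/2$, so the Useful Neighborhood of the Angular Gadget analysis applies and a configuration exists for every $\theta\in[-\delta,\delta]$. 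The only scale-dependent geometric concern is that the length-$w$ bar might be very short (as small as $1$), but because $w\ge 1$ and the only features within distance $w$ of this bar are its own endpoints, Lemma~\ref{lem:feature-size} combined with bar rotation by at most $\eps/2$ still yields feature size $\ge 1 - \eps > 1/2$ near this bar; elsewhere the analysis is scale-invariant as in Lemma~\ref{lem:scale-invariant-gadgets}.

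For the \textbf{End Gadget} $\cLend(w)$, the goal is the opposite: having the Angular-style subassembly pinned or constrained so that its tip lies at $(Q/2+w, Q/2)$ forces the angular coordinates to satisfy $R\cdot\Rect(\alpha,\beta) = (w,0)$. The construction pins enough vertices around the tip to make the subassembly rigid; by the injectivity half of Lemma~\ref{lem:rect-contains-box} (applied with $\theta=\delta$, which applies because $|w|\le R\delta/2$ puts the target inside the image of $\Rect$), there is exactly one $(\alpha,\beta)\in[-\delta,\delta]^2$ achieving the required tip, and this uniquely determines all remaining vertices via the usual rigid-assembly + Parallel-Gadget argument. I expect the \emph{main obstacle} to be the feature-size bookkeeping when $w$ is close to $0$ or close to $R\delta/2$, and in the Vector Creation Gadget when $w$ is close to $1$: the short bar of length $w$ and the point $c_3 = (Q/2-w,Q/2)$ can be nearly coincident with grid features or with other marked vertices, and these boundary cases must be ruled out by verifying directly that every relevant pair of non-incident features remains at distance at least $1/2 + \eps$ in the initial configuration, so that the $\pm\eps/2$ vertex perturbations guaranteed by Lemma~\ref{lem:parallel-gadget} (plus a safety margin from Lemma~\ref{lem:feature-size}) never drop this below $1/2$.
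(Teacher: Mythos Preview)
Your approach matches the paper's: handle each gadget by separating the scale-invariant portion (which reduces to the Angular Gadget analysis of Lemma~\ref{lem:scale-invariant-gadgets}) from the one scale-dependent feature, then check that the latter contributes feature size at least $1/2$.

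A few places where your picture of the gadgets diverges from the paper's constructions, which affects the details of the feature-size bookkeeping:

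\textbf{Start Gadget.} The gadget is not an Angular Gadget with its arms shortened to $2r$; it is the full Angular Gadget (arms of length $R$) with an extra parallelogram of side $2r$ attached so that $u,v,w$ form a parallelogram with $f$. Thus the only new feature-size concern is within this small parallelogram, and the paper simply computes the shortest new feature as $|u-v|\cdot|\sin\angle vue|\ge 2r\sin(\pi/2-\eps)\ge r>1/2$. Your displacement bound $O(Q\delta)$ is fine, but the feature-size argument you sketch (``separated from the background grid by constant fractions of $Q/10$'') is not the relevant one.

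\textbf{Vector Creation Gadget.} Your existence argument via $|w(e^{i\theta}-1)|\le w\delta\le R\delta/2$ is correct. For feature size, the issue is not the $w$-bar rotating but rather the distance from vertex $c_3$ to the nearby edge $fg$: the paper bounds this by $|c_3-g|\sin(\pi/2-\eps)\ge w/2\ge 1/2$. All other features are at cell scale and handled as in Lemma~\ref{lem:scale-invariant-gadgets}.

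\textbf{End Gadget.} There are no extra pins and no new feature-size concern: the gadget is just the Angular Gadget plus a single edge of length $Q/2+w$ from the cell boundary to $g$, holding $g$ at $(Q/2+w,Q/2)$. Uniqueness and existence then follow directly from the Angular Gadget's Useful Neighborhood in Lemma~\ref{lem:scale-invariant-gadgets}, since $|w|\le R\delta/2$ places the target in range. Your anticipated difficulty when $w$ is near $0$ or $R\delta/2$ does not arise.
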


\begin{figure}[phbt]
  \begin{subfigure}[t]{\gadgetWidth}
    \centering
    \includegraphics{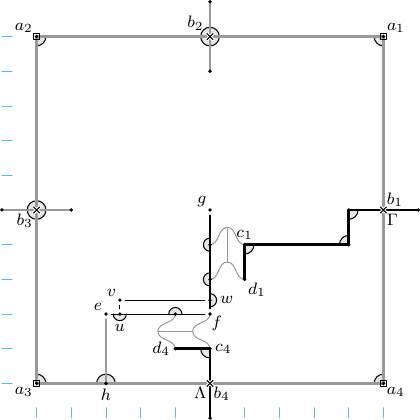}
    \caption{The Start Gadget forces $v=2r\cdot\Rect(\alpha,\beta)$.}
    \label{fig:start-gadget}
  \end{subfigure}
  \hfill
  \begin{subfigure}[t]{\gadgetWidth}
    \centering
    \includegraphics{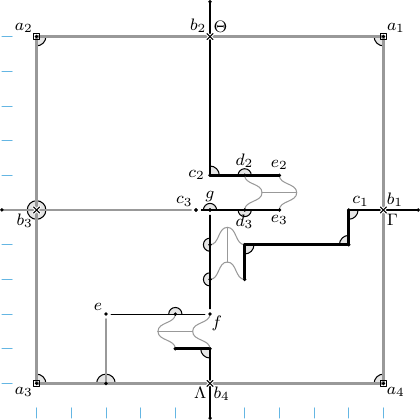}
    \caption{The Vector Creation Gadget forces $R\cdot\Rect(\alpha,\beta) = {w\cdot (e^{i\theta}-1)}$, where $\ell(c_3 g) = w$.}
    \label{fig:vector-create-gadget}
  \end{subfigure}
  \\
  \hfill
  \begin{subfigure}{\textwidth} %
    \centering
    \includegraphics{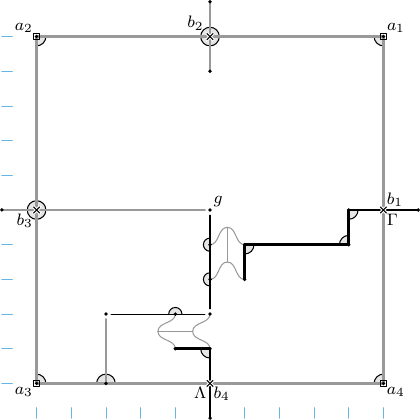}
    \caption{The End Gadget forces $R\cdot\Rect(\alpha,\beta) = (w,0)$.
      The edge length $\ell(b_3 g)$ is $Q/2+w$.}
    \label{fig:end-gadget}
  \end{subfigure}
  \hfill
  \caption{
    A selection of extended linkage gadgets for manipulating angles and vectors. These gadgets do not scale uniformly with $Q$, and they are analyzed in Lemma~\ref{lem:unscaled-gadgets}.
    In these figures, $\Offset_{\Lambda}(C) = \alpha$ and $\Offset_\Gamma(C) = \beta$.
    As described in Convention~\ref{con:drawing-extended-linkages}, vertices surrounded by squares are pinned; those marked with an ``x'' are sliceform vertices; and corners marked with a solid gray sector are frozen, i.e., have tolerance $0$.
    However, unfrozen sliceform vertices $b_j$ are assigned tolerance~$\delta$, while all remaining corners have tolerance~$\eps$. The pins at the vertices $a_j$ are for clarification only; in the overall construction, these nodes $a_j$ are forbidden from moving by other means, so these explicit pins are unnecessary.
    Each cell has dimensions $Q \times Q$, and all vertices except those specified in Lemma~\ref{fig:unscaled-gadgets} have coordinates at integer multiples of $Q/10$, indicated by cyan axis notches.}
  \label{fig:unscaled-gadgets}
\end{figure}

\begin{proof}
  We consider each gadget in turn.
  \begin{descriptionflush}
      
  \item[Start Gadget.]
    This gadget is simply the Angular Gadget from Lemma~\ref{lem:scale-invariant-gadgets} amended with three vertices $u,v,w$ forming a parallelogram with $f$.

    \proofPar{Constraint and Uniqueness} Vertices $u$ and $w$ must move rigidly with assemblies $ef$ and $fg$, so their positions are determined by rotations by $\alpha$ and $\beta$. By Lemma~\ref{lem:parallelogram-linkage}, vertex $v$ must complete the parallelogram, so its position is likewise uniquely determined as $e + 2r\exp(i\alpha) + 2r\cdot i\cdot \exp(i\beta) = e+(2r,2r)+2r\cdot \Rect(\alpha,\beta)$.

    \proofPar{Existence} As in the Angular Gadget, each bar is rotated by at most $\pm\eps/2$, so all angle constraints are satisfied. It remains to show that the linkage is noncrossing with minimum feature size at least $1/2$. The only differences from the Angular Gadget are vertices $u,v,w$ and edges $u v, v w$, so if a crossing or smaller feature size occurs, it must involve one of these five items. Because $2r \le R/2$, the smallest feature introduced by this parallelogram is at least
    \begin{equation*}
      |u-v|\cdot\left|\sin\angle vue\right|
      \ge 2r\cdot\sin\left(\frac{\pi}{2} - \eps\right)
      \ge r > \frac{1}{2},
    \end{equation*}
    as required.

  \item[Vector Creation Gadget.]
    \proofParNo{Constraint and Uniqueness}
    Rigid assemblies $b_2c_2d_2e_2$ and $c_3gd_3e_3$ must have been rotated by angle $\theta$, meaning $g = c_3 + w\cdot(\cos\theta,\sin\theta)$. The rest is an instance of the Angular Gadget, which uniquely positions all other marked vertices and locates $g$ at $(Q/2,Q/2) + R\cdot\Rect(\alpha,\beta)$. These two positions of $g$ must agree, proving that
    \begin{equation*}
      R\cdot\Rect(\alpha,\beta) = w\cdot(\cos\theta,\sin\theta) - w\cdot(1,0),
    \end{equation*}
    as required.

    \proofPar{Existence}
    Each marked vertex is determined by a rigidified path (from a vertex rigidly attached to the rigidifed background grid) of total length at most $\frac{6}{10} Q < Q$ where each edge rotates by at most $\pm\delta$, so each marked vertex is displaced by at most $\frac{6}{10} Q \delta < \eps/2^7 \cdot Q/40$ by Lemma~\ref{lem:rotating-bar-displacement}. Then Lemma~\ref{lem:parallel-gadget} (Parallel Gadget) shows that each Parallel Gadget (scaled up by $Q/40$) can connect its opposite edges such that each vertex moves by at most $\eps/2\cdot Q/40$ and each edge rotates by at most $\pm\eps/2$. Any feature not involving $c_3$ has length at least $\frac{1}{2}Q/40$ by Lemma~\ref{lem:feature-size}, and $c_3$'s smallest feature is the distance from $c_3$ to $fg$, which is at least $|c_3-g|\cdot \sin(\pi/2-\eps) \ge w/2 \ge 1/2$.
    
  \item[End Gadget.]
    This is an Angular Gadget with one additional edge keeping $g$ stationary at $g=(Q/2+w,Q)$. Because this chosen point is at most $R\delta/2$ from the cell's center, the Angular Gadget case of Lemma~\ref{lem:scale-invariant-gadgets} guarantees that the required configuration exists and is unique. When $w=0$, this unique configuration agrees with the Angular Gadget`s initial configuration.\qedhere
    
  \end{descriptionflush}
  
\end{proof}

Finally, we demonstrate a few \emph{compound} gadgets, each inhabiting a constant number of grid cells instead of just one. As described in Section~\ref{sec:detailed-overview}, the background grid is pinned in place and consists of $Q\times Q$ cells with \term{transmission edges} incident to \term{transmission sliceform vertices} at the midpoints of the sides of each cell. Neighboring cells communicate through these sliceform vertices, and similarly, gadgets that are \emph{not} intended to interact must be placed in non-adjacent cells so as to not share a transmission vertex. Each transmission vertex will either be adjacent to one or more gadgets (meaning the angle at its transmission edges is determined by those gadgets) or it will be frozen, so there is no possibility for movement away from the gadgets. In the case of the Vector Average Gadget, a multi-cell cavity is carved into the background grid to make way for a larger assembly: cell edges that are not drawn are indeed missing from the grid, but all remaining cell edges have a transmission vertex and transmission edges at their midpoint, as usual.

\begin{figure}[phbt]
  \begin{subfigure}{\textwidth}
    \centering
    \includegraphics{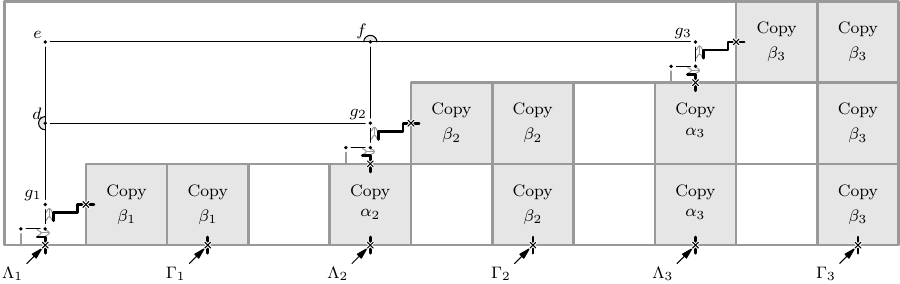}
    \caption{
      The Vector Average Gadget forces $g_2 = (g_1+g_3)/2$, i.e., $\Rect(\alpha_2,\beta_2) = (\Rect(\alpha_1,\beta_1)+\Rect(\alpha_3,\beta_3))/2$.
    }
    \label{fig:vector-average-gadget}
  \end{subfigure}
  \\
  \phantom{abc}\\
  \begin{subfigure}[t]{\gadgetWidth}
    \centering
    \includegraphics{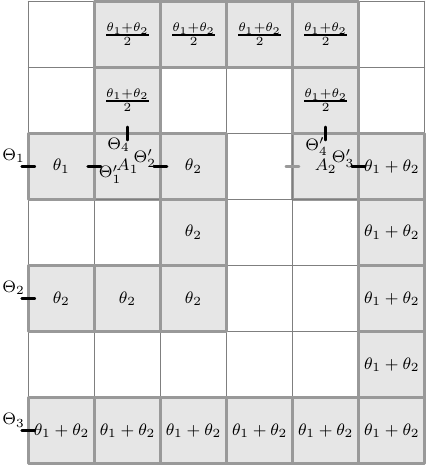}
    \caption{The Angle Sum Gadget forces $\theta_3 = \theta_1+\theta_2$.}
    \label{fig:angle-sum-gadget}
  \end{subfigure}
  \hfill
  \begin{subfigure}[t]{\gadgetWidth}
    \centering
    \includegraphics{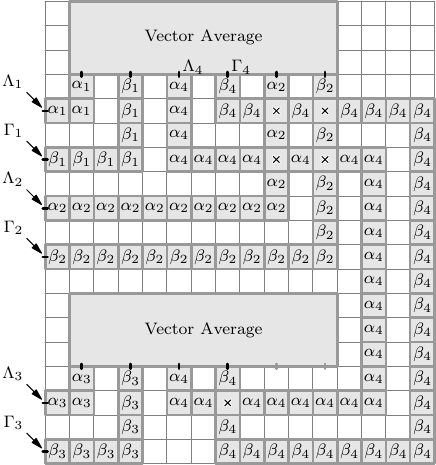}
    \caption{The Vector Sum Gadget forces $\Rect(\alpha_2,\beta_2) = \Rect(\alpha_1,\beta_1)+\Rect(\alpha_3,\beta_3)$}
    \label{fig:vector-sum-gadget}
  \end{subfigure}
  \caption{
    Compound gadgets for manipulating vectors and angles, built from earlier, more primitive gadgets. In these figures, $\Offset_{\Lambda}(C) = \alpha$, $\Offset_\Gamma(C) = \beta$, $\Offset_\Theta(C)=\theta$, and similarly with indices. Cells labelled with an angle are Copy Gadgets, those marked with an ``x'' are Crossover Gadgets, and cells labelled $A_j$ are Angle Average Gadgets. All grid edges (where present) have a transmission sliceform and transmission edges, but only important ones are shown here.
  }
  \label{fig:compound-gadgets}
\end{figure}

\begin{lemma}[Compound Gadgets]
  \label{lem:compound-gadgets}
  The abstract extended linkages described below are globally noncrossing with global minimum feature size at least $1/2$, and with their configuration spaces $U$ exactly parameterized as specified in each case.  Each gadget is described by an initial configuration where the lower-left corner is configured at $(0,0)$, all marked vertices have coordinates that are integer multiples of $Q/10$, and so all vertices have coordinates that are integer multiples of $Q/40$ (an integer).
  As in Lemma~\ref{lem:scale-invariant-gadgets},
  for some gadgets, we prove that the configuration space $U$ contains a conveniently shaped subset, identified below as a \term{useful neighborhood}.
  
  \begin{descriptionflush}

  \item[Vector Average Gadget.]
    The Vector Average Gadget $\cLvecavg$ (Figure~\ref{fig:vector-average-gadget}) constrains one vector to be the average of two others. Specifically, the function $\Offset_{(\Lambda_1,\Gamma_1,\Lambda_2,\Gamma_2,\Lambda_3,\Gamma_3)}$ is a homeomorphism from $\Conf(\cLvecavg)$ to
    \begin{equation*}
      \Uvecavg = \left\{
        (\alpha_1,\beta_1, \alpha_2,\beta_2,\alpha_3,\beta_3)\in[-\delta,\delta]^6
        \;\middle|\;
        2\Rect(\alpha_2,\beta_2) =
        \Rect(\alpha_1,\beta_1) + 
        \Rect(\alpha_3,\beta_3)
      \right\}.
    \end{equation*}
    Useful Neighborhood: whenever three vectors satisfying $v_1+v_3=2v_2$ are given with magnitudes at most $|v_j|\le \delta/2$ for $j=1,2,3$, there exists a (necessarily unique) configuration of $\cLvecsum$ with $\Rect(\alpha_j,\beta_j) = v_j$ for $j=1,2,3$.

  \item[Angle Sum Gadget.]
    The Angle Sum Gadget $\cLanglesum$ drawn in Figure~\ref{fig:angle-sum-gadget} constrains one sliceform angle to equal the sum of the other two: $\Offset_{(\Lambda_1,\Lambda_2,\Lambda_3)}$ is a homeomorphism from $\Conf(\cLanglesum)$ to
    \begin{equation*}
      \Uanglesum = 
      \left\{
        (\theta_1,\theta_2,\theta_3)
        \;\middle|\;
        \theta_1,\theta_2,\theta_3 \in [-\delta,\delta]
        \text{ and } \theta_3 = \theta_1 + \theta_2
      \right\}.
    \end{equation*}
    
  \item[Vector Sum Gadget.]
    The Vector Sum Gadget $\cLvecsum$ (Figure~\ref{fig:vector-sum-gadget}) constrains one vector to be the sum of two others. Specifically, $\Offset_{(\Lambda_1,\Gamma_1,\Lambda_2,\Gamma_2,\Lambda_3,\Gamma_3)}$ is a homeomorphism from $\Conf(\cLvecsum)$ to
    \begin{equation*}
      \Uvecsum = \left\{
        (\alpha_1,\beta_1, \alpha_2,\beta_2,\alpha_3,\beta_3)\in[-\delta,\delta]^6
        \;\middle|\;
        \Rect(\alpha_3,\beta_3) =
        \Rect(\alpha_1,\beta_1) + 
        \Rect(\alpha_2,\beta_2)
      \right\}.
    \end{equation*}
    Useful Neighborhood: whenever three vectors satisfying $v_1+v_2=v_3$ are given with magnitudes at most $|v_j|\le \delta/2$ for $j=1,2,3$, there exists a (necessarily unique) configuration of $\cLvecsum$ with $\Rect(\alpha_j,\beta_j) = v_j$ for $j=1,2,3$.

  \end{descriptionflush}
\end{lemma}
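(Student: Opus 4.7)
The plan is to build each compound gadget by composing the primitive cell gadgets from Lemmas~\ref{lem:scale-invariant-gadgets} and~\ref{lem:unscaled-gadgets}, using transmission corners along the shared edges of the rigidified background grid to route angles and vectors between cells. Because every constituent gadget is perfectly parametrized on its own configuration space, the bijection between $\Conf(\cL)$ and the claimed set $U$ for each compound gadget will follow formally from composing those parametrizations and imposing the shared-corner identifications; continuity of the inverse comes for free from compactness of $\Conf(\cL)$ (Lemma~\ref{lem:extended-linkage-conf}) via the compact-to-Hausdorff continuous-bijection lemma quoted in Section~\ref{sec:gadgets}. Global noncrossingness and the $\tfrac12$ minimum-feature bound will be inherited from the constituent gadgets since distinct cells are separated by rigidified grid edges and transmission assemblies whose clearance is controlled by Lemma~\ref{lem:feature-size}.

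For the Angle Sum and Vector Sum Gadgets, I would exploit the algebraic identity
\[
  a+b=c \iff \tfrac{a+b}{2}=\tfrac{0+c}{2}.
\]
Concretely, for the Angle Sum Gadget, place two Angle Average Gadgets in adjacent cells whose outputs meet at a shared transmission corner: the first takes inputs $\theta_1,\theta_2$ and produces $(\theta_1+\theta_2)/2$, while the second takes inputs $\theta_3$ and a constant $0$ (the latter routed in from a transmission edge that is frozen against the background grid), producing $\theta_3/2$; equating the outputs forces $\theta_3=\theta_1+\theta_2$. Copy and Crossover Gadgets route $\theta_1,\theta_2,\theta_3$ to the correct cells. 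Admissibility of every intermediate angle in $[-\delta,\delta]$ is automatic because averages of values in $[-\delta,\delta]$ stay in $[-\delta,\delta]$, so the useful-neighborhood hypotheses of the primitive gadgets are satisfied throughout. The Vector Sum Gadget is entirely analogous, with Vector Average Gadgets in place of Angle Average Gadgets and the useful-neighborhood clause inherited verbatim.

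The main obstacle is the Vector Average Gadget itself, because unlike the angle case the relation $2\Rect(\alpha_2,\beta_2)=\Rect(\alpha_1,\beta_1)+\Rect(\alpha_3,\beta_3)$ is \emph{not} equivalent to averaging the $\alpha_j$'s and the $\beta_j$'s separately: $\Rect$ is trigonometric, not affine, in its arguments. My plan is to instantiate three Angular Gadgets, one per index, producing vertices $g_j=(x_0,y_0)+R\cdot\Rect(\alpha_j,\beta_j)$ at a common reference point $(x_0,y_0)$, carving a multi-cell cavity out of the background grid to accommodate the assembly, and then inserting a rigid midpoint mechanism built from Parallel Gadgets that forces $g_2=(g_1+g_3)/2$ directly as a planar condition. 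The vector equation on the $\Rect$'s then falls out of the Angular Gadget parametrizations. I expect three things to require care: (i) showing that the midpoint mechanism, together with the three Angular Gadgets, perfectly parametrizes its configuration space on $\Uvecavg$ (uniqueness uses the $\eps$-angle constraints to rule out mirror-image assemblies in each Parallel Gadget, exactly as in Lemma~\ref{lem:parallel-gadget}); (ii) checking the useful-neighborhood claim by invoking the existence half of Lemma~\ref{lem:parallel-gadget} at each Parallel Gadget once $|v_j|\le\delta/2$ guarantees that every anchor in the midpoint mechanism is displaced by at most $O(R\delta)$, well within the $\eps/2^7$ tolerance; and (iii) the global $\tfrac12$ feature-size bound, which follows because the constituent gadgets each have feature size $\tfrac12$ in isolation and the overall perturbation of any vertex is bounded by $O(R\delta)\le \eps/2$, so Lemma~\ref{lem:feature-size} applies.
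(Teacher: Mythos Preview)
Your plan for the Angle Sum and Vector Sum Gadgets matches the paper exactly: two Average Gadgets chained via the identity $a+b=c\iff (a+b)/2=(0+c)/2$, with a frozen transmission corner supplying the~$0$, and Copy/Crossover Gadgets for routing. Your remarks about compactness giving continuity of the inverse, and about inheriting the $\tfrac12$ feature-size bound cell-by-cell, are also how the paper argues.

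For the Vector Average Gadget your high-level picture---three Angular Gadgets producing points $g_1,g_2,g_3$, plus a planar mechanism forcing $g_2=(g_1+g_3)/2$---is correct, but two details need adjustment. First, the three Angular Gadgets cannot share a common reference point $(x_0,y_0)$ without overlapping; the paper places them in separate cells with lower-left corners at $(0,0)$, $(4Q,Q)$, $(8Q,2Q)$, so that the \emph{centers} themselves satisfy the midpoint relation and the equation on the $\Rect$'s still drops out. Second, the midpoint mechanism is not built from Parallel Gadgets (which only enforce parallelism, not a ratio along a line). The paper uses the classical \emph{pantograph}: a parallelogram $defg_2$ with $d$ on segment $g_1e$ and $f$ on segment $g_3e$, so that triangles $g_1dg_2$ and $g_1eg_3$ stay similar and $g_2$ is forced to the midpoint of $g_1g_3$. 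Uniqueness and existence for the extra vertices $d,e,f$ are then handled by the Hook Linkage Lemma (Lemma~\ref{lem:hook-linkage}) and the Parallelogram Lemma (Lemma~\ref{lem:parallelogram-linkage}), not Lemma~\ref{lem:parallel-gadget}; the displacement bookkeeping (each $g_j$ moves at most $\eps/2$, so $d,e,f$ move at most $2\eps$ and the feature size stays above $1-4\eps\ge 1/2$) goes through as you anticipated once those lemmas are invoked. With those two fixes your plan becomes the paper's proof.
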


\begin{note}
  The Vector Average Gadget uses the well-known \emph{pantograph linkage} to keep $g_2$ at the midpoint of $g_1$ and $g_3$.
\end{note}

\begin{proof} We consider each gadget in turn.
  \begin{descriptionflush}

  \item[Vector Average Gadget.]
    The Vector Average Gadget, $\cLvecavg$, illustrated in Figure~\ref{fig:vector-average-gadget}, is formed within a $3\times 11$ rectangle of grid cells, where some grid edges have been removed as shown. Three Angular Gadgets (missing their top and left cell edges) are placed with lower-left corners at $(0,0)$, $(4Q,Q)$, and $(8Q,2Q)$, and their vertices $g_1,g_2,g_3$ are connected to parallelogram $defg_2$ as shown. The angles $\alpha_j$ and $\beta_j$ ($j=1,2,3$) at the three Angular Gadgets are duplicated by Copy Gadgets and output along the bottom edge. Observe that this gadget is scale invariant, like the gadgets from Lemma~\ref{lem:scale-invariant-gadgets}, so we may assume $Q=40$ in our analysis.

    \proofPar{Constraint and Uniqueness}
    Temporarily ignoring $d$, $e$, $f$, and their edges, the Angular and Copy Gadgets ensure that the six angles $\alpha_j$ and $\beta_j$ are each free to move within $[-\delta,\delta]$, and together they uniquely determine the positions of all vertices (other than $d$, $e$, and $f$). Lemma~\ref{lem:hook-linkage} (Hook Linkage) shows that the positions of $d$, $e$, and $f$ are determined by $g_1,g_2,g_3$. Furthermore, parallelogram $defg_2$ (which remains a parallelogram by Lemma~\ref{lem:parallelogram-linkage}) ensures that triangle $g_1dg_2$ remains similar to $g_1eg_3$, so $g_2$ must lie at the midpoint of $g_1g_3$. This is equivalent to $\Rect(\alpha_2,\beta_2)$ being the average of $\Rect(\alpha_1,\beta_1)$ and $\Rect(\alpha_3,\beta_3)$, as required. 

    \proofPar{Existence}
    Suppose angles $\alpha_j$ and $\beta_j$ are chosen for $j=1,2,3$ satisfying this average constraint, i.e., lying in $\Uvecavg$. As in Lemma~\ref{lem:scale-invariant-gadgets} (the Copy Gadget and Angular Gadget cases), all vertices other than $d,e,f$ may be configured at most $\eps/2$ from their initial positions such that all edges other than those incident to $d,e,f$ have rotated by at most $\pm\eps/2$ from their initial directions. Membership in $\Uvecavg$ guarantees that $g_2$ is the midpoint of $g_1g_3$. Lemma~\ref{lem:hook-linkage} (Hook Linkage) applied to $g_1dg_2$, which is applicable because $\max((\eps/2)/|g_1-d|,(\eps/2)/|d-g_2|) = \eps/80 < 1/32$, shows that $d$ may be configured at most $4\cdot\eps/2 = 2\eps$ from its initial position such that edges $g_1 d$ and $d g_2$ have each rotated by at most $\max\left((\eps/2)/|g_1-d|,(\eps/2)/|d-g_2|\right)\le\eps/20 < \eps/2$ from their initial directions. Similar computations apply to hooks $h_2fg_3$ and $g_1eg_3$, and because $g_2$ is the midpoint of $g_1g_3$, similar triangles ensure that $d$ and $f$ land at the midpoints of $g_1e$ and $g_3e$, respectively. The result is thus a valid configuration of $\cLvecavg$ with minimum feature size at least $1-4\eps \ge 1/2$.

    \proofPar{Useful Neighborhood}
    This final claim follows from the Angular Gadget case of Lemma~\ref{lem:scale-invariant-gadgets}.
    
  \item[Angle Sum Gadget.]
    The Angle Sum Gadget in Figure~\ref{fig:angle-sum-gadget} is constructed in a $6\times 7$ grid of $Q\times Q$ cells, with transmission sliceforms and transmission edges as described immediately preceding this Lemma (most not depicted). The cells labeled $A_1$ and $A_2$ are Angle Average Gadgets ($\cLangleavg$), where one of $A_2$'s transmission sliceforms (drawn in gray) has been frozen, fixing that input to~$0$. The other gray cells are Copy Gadgets, $\cLcopy$. This analysis follows straightforwardly from the characterizations of the Angle Average and Copy Gadgets from Lemma~\ref{lem:scale-invariant-gadgets}, as we now show. Define $\theta_j' = \theta_j'(C) = \Offset_{\Theta_j'}(C)$ for a given configuration $C$, analogous to the definitions of $\theta_j$.

    \proofPar{Constraint and Uniqueness}
    Suppose $\theta_1,\theta_2,\theta_3\in[-\delta,\delta]$ are given. We must show that they give rise to at most one configuration of $\cLanglesum$, and that $\theta_3 = \theta_1+\theta_2$ must hold for a configuration to exist. The Copy Gadgets labeled $\theta_1$ and $\theta_2$ show that $\theta_1' = \theta_1$ and $\theta_2' = \theta_2$. Then gadget $A_1$ forces $\theta_4 = (\theta_1+\theta_2)/2$. Copy Gadgets transfer this same value to $\theta_4'=\theta_4=(\theta_1+\theta_2)/2$, and then gadget $A_2$ and the remaining Copy Gadgets ensure that $\theta_4'=(\theta_1+\theta_2)/2$ is the average of $0$ and $\theta_3$, or in other words, $\theta_3 = \theta_1+\theta_2$. All copy and Angle Average Gadget configurations are uniquely determined by these values. Because transmission edges not part of a gadget on either side are frozen, angles $\theta_1,\theta_2,\theta_3$ uniquely determine the entire configuration of $\cLanglesum$ (when it exists), as required.

    \proofPar{Existence}
    The conditions $\theta_j\in[-\delta,\delta]$ for $j=1,2,3$ show that $\theta_4 = (\theta_1+\theta_2)/2$ also lies in $[-\delta,\delta]$, and the condition $\theta_3=\theta_1+\theta_2$ ensures that the inputs at $A_1$ and $A_2$ satisfy the average constraint required by each Angle Average Gadget, so the desired configuration indeed exists by Lemma~\ref{lem:scale-invariant-gadgets}.
    
  \item[Vector Sum Gadget.]
    In this gadget illustrated in Figure~\ref{fig:vector-sum-gadget}, each $Q\times Q$ cell labeled with an angle $\alpha_j$ or $\beta_j$ is a Copy Gadget, and each cell with an ``x'' is a Crossover Gadget. As with the previous compound gadgets, all unused transmission edges are frozen, as are the two transmission sliceforms illustrated in gray instead of black at the right of the bottom-most Vector Average Gadget. The analysis is similar to that of the Angle Sum Gadget, so we omit the details. \qedhere

  \end{descriptionflush}
\end{proof}

\subsection{Combining the Gadgets}
\label{sec:combining-gadgets}

We are now prepared to use these gadgets to construct an extended linkage that draws a piece of $Z(F)$, up to a translation. Recall that $F = \{f_1,\ldots,f_s\}$ is a family of polynomials in $\bR[x_1,y_1,\ldots,x_m,y_m]$, each with total degree at most~$d$. We apply the change of coordinates $(x_k,y_k) = 2r\cdot\Rect(\alpha_k,\beta_k)$ (for $1\le k\le m$) to write each polynomial $f_j$ (for $1\le j\le s$) in angular form as in Lemma~\ref{lem:angular-coefficients}:
\begin{equation}\label{eq:angular-form-in-proof-at-end}
  f_j(\widevec{xy}(\widevec{\alpha\beta})) - f_j(\vec 0) = \sum_{u=0}^3 \sum_{I\in\Coeffs(2m,d)} i^u\cdot d_{j,u,I}\cdot \left(\exp(i\cdot(I\cdot\widevec{\alpha\beta})) - 1\right),
\end{equation}
where the numbers $d_{j,u,I}$ are nonnegative. If the polynomials in $F$ do not have integer coefficients, we may need to modify the input slightly: scale up $F$ uniformly (which scales the $d_{j,u,I}$ coefficients by the same amount) until each nonzero term $d_{j,u,I}$ is at least $1$, and let $M$ be an upper bound on the scaled up coefficients in $F$ (as standard polynomials in $\widevec{xy}$). If $F$ originally had integer coefficients, the $d_{j,u,I}$ were already integers, so this scaling is unnecessary.

For each $1\le j\le s$, the numbers $d_{j,u,I}$ are nonnegative and add to at most $6^d \cdot r^d\cdot M\cdot \binom{2m+d}{d} \le R\delta/2$, by Lemma~\ref{lem:angular-coefficients} and by the choice of $R$ in Section~\ref{sec:parameters}. This shows that every partial sum of the right-hand side of~\eqref{eq:angular-form-in-proof-at-end} has magnitude at most $R\delta/2$. In particular, the Vector Creation, Vector Rotation, Vector Average, and Vector Sum Gadgets can all safely construct these terms and partial sums, according to the gadget specifications in Lemmas~\ref{lem:scale-invariant-gadgets} through~\ref{lem:compound-gadgets}. The same holds for the value $f_j(\vec 0)$ and the End Gadget, because $|f_j(\vec 0)|\le M \le R\delta/2$.

The extended linkage $\cE = \cE(F)$ that will be built in this section will draw a translation of $Z(F)\cap \left(2r\cdot \Rect([-\delta/d,\delta/d]^2)\right)^m$, which by Lemma~\ref{lem:rect-contains-box} contains $Z(F)\cap[-1,1]^2$ because $2r\cdot \delta/(2d) \ge 1$ by choice of $r$.

\paragraph{Step 1: Grid.}

To begin, define extended linkage $\cE$ as a rectangular grid of $Q\times Q$ cells, with $O(\poly(m^d,d^d,s))$ cells on each side. Each cell is given tolerance-$0$ corners and \term{transmission edges} attached at cell edge midpoints. The transmission vertices (which are sliceforms) initially have frozen corners. Add pins to three noncollinear vertices of this grid, so the entire grid $\cE$ is globally rigid. There will be no other pins anywhere in $\cE$; pins illustrated in the gadgets above are removed before use here, though they are \emph{effectively} still present because the grid $\cE$ is globally rigid.

In the rest of the construction, gadgets will be added to this grid, with their transmission vertices' corners upgraded to tolerance-$\delta$ as required by the gadgets.

\paragraph{\boldmath Step 2: Start Gadgets to set up $\alpha_1,\beta_1,\ldots,\alpha_m,\beta_m$.}

Add $m$ Start Gadgets $\cLstart(k), 1\le k\le m$ to $\cE$ as illustrated at the top of Figure~\ref{fig:example-angle-sum}, and let $v_k$ be the vertex in $\cLstart(k)$ corresponding to vertex~$v$ in Figure~\ref{fig:start-gadget}; these vertices $X = \{v_1,\ldots,v_m\}$ will be the drawing vertices of $\cE$. Likewise define corners $\Lambda_k$ and $\Gamma_k$ as the corners from $\cLstart(k)$ that are labelled $\Lambda$ and $\Gamma$ in Figure~\ref{fig:start-gadget}, and set $(\alpha_k,\beta_k) = (\alpha_k(C),\beta_k(C)) = \Offset_{(\Lambda_k,\Gamma_k)}(C)$ for a configuration $C$, as usual.

With $\cE$ in this state, by Lemma~\ref{lem:unscaled-gadgets} (Start Gadget), $\Conf(\cE)$ is perfectly described by corners $Y = (\Lambda_1,\Gamma_1,\ldots,\Lambda_m,\Gamma_m)$: the map $\Offset_Y$ is a homeomorphism of $\Conf(\cE)$ with $[-\delta,\delta]^{2m}$.

\begin{figure}[hbt]
  \centering
  \includegraphics{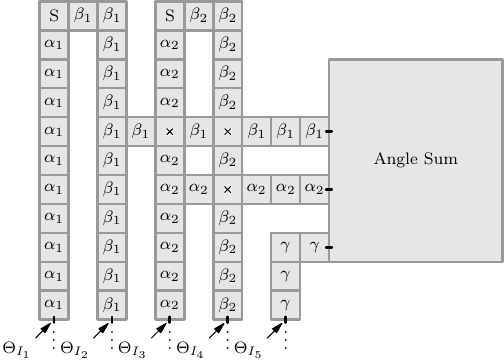}
  \caption{Constructing the sum $\gamma = \beta_1+\alpha_2$. Cells with ``S'' are Start Gadgets; those with ``x'' are Crossover Gadgets; and those with $\alpha_k$, $\beta_k$, or $\gamma$ are Copy Gadgets.}
  \label{fig:example-angle-sum}
\end{figure}

\paragraph{\boldmath Step 3: Construct all linear combinations $\theta_I$.}

In the next step, we modify $\cE$ further to construct all the angles $\theta_I := I\cdot\widevec{\alpha\beta}$, where $\widevec{\alpha\beta} = (\alpha_1,\beta_1,\ldots,\alpha_m,\beta_m)$ and $I$ ranges over all vectors of $\Coeffs(2m,d)$.

We start with base cases of the angle $0$ (corresponding to $I=(0,\ldots,0)$) and the angles $\alpha_k$ and $\beta_k$ (corresponding to $I = (0,\ldots,0,1,0,\ldots,0)$).
The angle $0$ can be constructed by placing a $0$ angle constraint on any transmission edge.
The angles $\alpha_k$ and $\beta_k$ are already constructed by the Start Gadgets.
Use Copy Gadgets to build \term{wires} that copy all of these base cases vertically along columns, as shown in Figure~\ref{fig:example-angle-sum}. We call this collection of wires the \term{wire column}, and it will accumulate other wires constructing other angles as the construction continues.

By ordering all the remaining vectors $I\in\Coeffs(2m,d)$ according to the sum of the absolute values of entries in $I$, each successive $\theta_I$ may then be computed as $\theta_I = \theta_{I'}\pm\alpha_k$ or $\theta_I = \theta_{I'}\pm\beta_k$, which can be constructed using a single Addition Gadget. To do this, as illustrated in Figure~\ref{fig:example-angle-sum}, we use Crossover Gadgets to transmit the desired input angles to the Addition Gadget, and then the constructed $\theta_I$ joins the wire column. Repeat this for each remaining $I\in\Coeffs(2m,d)$.

For each $I$, let $\Theta_I$ be any transmission corner in the wire coming from the construction of $\theta_I$ above. We claim that $\Conf(\cE)$ is still perfectly described by the corners $Y = (\Lambda_1,\Gamma_1,\ldots,\Lambda_m,\Gamma_m)$, but this time on a smaller range: we will show that $\Offset_Y$ is a homeomorphism of $\Conf(\cE)$ with $[-\delta/d,\delta/d]^{2m}$, and furthermore that the corners $\Theta_I$ indeed have angle $\Offset_{\Theta_I}(C) = \theta_I(C)$. The latter claim follows from induction on $I$. To see that the $\alpha_k$ and $\beta_k$ are each precisely (and independently) constrained to the interval $[-\delta/d,\delta/d]$, first consider the corner $\Theta_{I}$ where $I = (d,0,0,\ldots,0)$. We have $\theta_I = d\cdot \alpha_1$, and the $\delta$-tolerance at corner $\Theta_I$ requires this value to remain in $[-\delta,\delta]$, so $\alpha_1$ must remain in $[-\delta/d,\delta/d]$. The other entries in $\widevec{\alpha\beta}$ are likewise constrained by analogous arguments. In the other direction, for any vector $\widevec{\alpha\beta} \in [-\delta/d,\delta/d]^{2m}$ and any $I\in\Coeffs(2m,d)$, the value $I\cdot\widevec{\alpha\beta}$ lies in the interval $[-\delta,\delta]$ by triangle inequality. Thus, by our gadget analyses (Lemmas~\ref{lem:scale-invariant-gadgets}, \ref{lem:unscaled-gadgets}, and \ref{lem:compound-gadgets}), any such vector $\widevec{\alpha\beta}$ may be realized by a unique configuration of~$\cE$.\looseness=-1 %

\paragraph{\boldmath Step 4: Construct the terms in the angular expansions of $f_1,\ldots,f_s$.}

For each $1\le j\le s$ and each nonzero coefficient $d_{j,u,I}$ in the angular representation of $f_j(\widevec{xy}(\widevec{\alpha\beta}))$, insert a Vector Creation Gadget using angle $\theta_I$ and length $w = d_{j,u,I}$ (which indeed satisfies $1\le |w|\le R\delta/2$), and if $u\ne 0$, chain this with $u$~Vector Rotation Gadgets. The result constructs the desired vector $i^u\cdot d_{j,u,I}\cdot (\exp(i\cdot\theta_I) - 1)$: specifically, the result is a pair of wires with transmission corners $\Lambda'_{j,u,I}$ and $\Gamma'_{j,u,I}$ whose angles $(\alpha'_{j,u,I}(C),\beta'_{j,u,I}(C)) = \Offset_{\Lambda'_{j,u,I},\Gamma'_{j,u,I})}(C)$ satisfy
\begin{equation*}
  R\cdot\Rect(\alpha'_{j,u,I},\beta'_{j,u,I}) = 
  i^u\cdot d_{j,u,I}\cdot (\exp(i\cdot\theta_I) - 1).
\end{equation*}
These additional gadgets do not constrain $\cE$'s movement in any way, so the map $\Offset_Y$ is still a homeomorphism of $\Conf(\cE)$ with $[-\delta/d,\delta/d]^{2m}$. 

\paragraph{\boldmath Step 5: Add the vectors to construct $f_j(\protect\widevec{\alpha\beta}) - f_j(\vec 0)$.}
\label{step:5}

For each $1\le j\le s$, use Vector Addition Gadgets to successively construct the sum of the nonzero vectors among ${R\cdot\Rect(\alpha'_{j,u,I},\beta'_{j,u,I})}$, resulting in a single pair of wires with transmission corners $\Lambda'_j,\Gamma'_j$ whose angle offsets $\alpha'_j,\beta'_j$ satisfy
\begin{equation*}
  R\cdot\Rect(\alpha'_j,\beta'_j) = \left(f_j(\widevec{\alpha\beta}) - f_j(\vec 0), 0\right).
\end{equation*}
(Join these wires to the wire column, as usual.) As above, these additional gadgets do not constrain $\cE$'s movement, so $\Offset_Y:\Conf(\cE)\to[-\delta/d,\delta/d]^{2m}$ is still a homeomorphism.

\paragraph{\boldmath Step 6: Conclude with End Gadgets}

Finally, for each $1\le j\le s$, feed the $\alpha'_j$ and $\beta'_j$ wires into an End Gadget that enforces the constraints
\begin{equation*}
  R\cdot\Rect(\alpha'_j,\beta'_j) = \left(-f_j(\vec 0), 0\right).
\end{equation*}
The configuration space of the resulting linkage~$\cE$ is then homeomorphic to the subset of $\widevec{\alpha\beta}\in[-\delta/d,\delta/d]^{2m}$ that satisfies the constraints of these End Gadgets, i.e.,
\begin{equation*}
  \Conf(\cE) =
  \left\{
    \widevec{\alpha\beta} \in [-\delta/d,\delta/d]^{2m} \mid
    f_j(\widevec{\alpha\beta}) = 0, 1\le j\le s
  \right\}.
\end{equation*}
And because the offset of each vertex $v_k$ from the lower-left corner of its cell is precisely $(Q/5,Q/5) + (2r,2r) + 2r\cdot\Rect(\alpha_k,\beta_k) = (Q/5,Q/5) + (2r,2r) + (x_k,y_k)$, these vertices $\{v_1,\ldots,v_m\}$ indeed draw a translation of
\begin{equation*}
  Z(F)\cap\left(2r\cdot \Rect([-\delta/d,\delta/d],[-\delta/d,\delta/d])\right)^m,
\end{equation*}
as required.

\subsection{Sliceform and Angle Restrictor Gadgets}
\label{sec:sliceform-and-angle-restrictor-gadgets}

In Section~\ref{sec:implementing-with-partially-rigidified}, we will convert the extended linkage $\cE(F)$ constructed above into a partially rigidified linkage (with no other constraints).
In this section, we describe two gadgets to help with this process: the \emph{Sliceform Gadget} obviates the need for sliceform vertices, and the \emph{Angle Restrictor Gadget} enforces $\cE$'s angle constraint, after replacing each edge with a rigidified tree.

\begin{lemma}[Sliceform Gadget]
  \label{lem:sliceform-gadget}
  The extended linkage that has four edges meeting at a sliceform vertex with tolerance $\theta$ (where $\theta$ is in $\{0,\delta,\eps\}$) is perfectly simulated by the extended linkage $\cLslice(\theta)$, where all non-frozen corners are given tolerance $\theta$ (see Figure~\ref{fig:sliceform-gadget}).
  Furthermore, the global minimum feature size is at least $|v-w_j|/8$ (which is the same for all $j=1,2,3,4$ and in all configurations),
  and all vertices lie within distance $\frac{3}{2} |v-w_j|$ of~$v$,
  in all configurations.
\end{lemma}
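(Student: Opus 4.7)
The plan is to construct $\cLslice(\theta)$ as a small, noncrossing planar assembly placed in a neighborhood of the sliceform vertex $v$ that enforces the two collinearities $w_1 v w_3$ and $w_2 v w_4$ using only parallelogram-based constraints, without ever routing two bars through a common interior point. The central difficulty is that the sliceform constraint encodes two ``lines'' passing through a single point $v$, and any planar replacement must realize both collinearity conditions while routing its interior structure so that no pair of bars actually crosses.

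The construction begins by shortening each of the four input bars $v w_j$ so that the bar terminates at an auxiliary \emph{stub} vertex $w_j^*$ on the ray from $v$ toward $w_j$, with $|v - w_j^*|$ a fixed fraction (say $\tfrac14$) of $|v - w_j|$. A short rigidified subassembly reconnects each $w_j^*$ to $w_j$, preserving the direction of the original bar. This reduces the task to enforcing collinearity of $w_1^*, v, w_3^*$ and of $w_2^*, v, w_4^*$. For each such pair, I would route a parallelogram assembly (Lemma~\ref{lem:parallelogram-linkage}, together with the Parallel Gadget of Lemma~\ref{lem:parallel-gadget}) that propagates a direction vector from one side of $v$ to the other and reverses it, thereby forcing $\overrightarrow{v w_1^*} = -\overrightarrow{v w_3^*}$. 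The two parallelogram assemblies for the two pairs are placed in disjoint angular sectors around $v$, so their routes never meet. Angle tolerances $\theta$ at the transmission corners pass through unchanged, since parallelogram linkages transmit angles exactly.

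Perfect simulation then follows from Lemma~\ref{lem:parallelogram-linkage}: each parallelogram is uniquely determined by the positions of its independent corners once the correct orientation is fixed, so the whole gadget is uniquely determined by the four outer positions $w_1,\dots,w_4$, precisely matching the degrees of freedom of the original sliceform vertex. The feature-size bound $|v - w_j|/8$ and the diameter bound $\tfrac32 |v - w_j|$ follow by direct geometric estimates once the parallelogram dimensions are fixed as constant multiples of $|v - w_j|$; the noncrossing property in the initial configuration is immediate by placing the two assemblies in opposite half-planes of the dividing bisector.

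The main obstacle I anticipate is verifying that the two parallelogram assemblies remain noncrossing (with the claimed feature size) in \emph{every} configuration allowed by the tolerance $\theta$, not only in the initial one. Because the routes necessarily weave tightly around $v$ to avoid an actual crossing, their mutual clearance could in principle shrink as each parallelogram swings under $\theta \le \eps$. Controlling this will require the displacement bounds of Lemma~\ref{lem:parallel-gadget}, the bound $\eps < \pi/16$, and a careful choice of which sectors host which assembly so that the two chains stay bounded away from each other in every valid configuration; this is where the particular constants $1/8$ and $3/2$ in the statement will be fixed.
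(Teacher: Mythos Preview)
Your overall plan—enforcing the two collinearities $w_1vw_3$ and $w_2vw_4$ by routing parallelogram chains that detour around one another near~$v$—is the same idea the paper uses. But the execution you sketch diverges in a way that leaves a real gap.

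The paper's gadget (Figure~\ref{fig:sliceform-gadget}) is built entirely from axis-aligned \emph{rectangles} of the type~$\cP_1$ in Lemma~\ref{lem:parallelogram-linkage}, linked by frozen corners along straight runs. After scaling so that $|v-w_j|=4$, every marked vertex sits on the unit grid and every rectangle has integer sides~$\ge 1$; Lemma~\ref{lem:parallelogram-linkage} then gives minimum feature size~$\ge 1/2=|v-w_j|/8$ for each rectangle, uniformly over all configurations, and the diameter bound $6=\tfrac32|v-w_j|$ is a one-line triangle inequality. That is the whole proof: the anxious ``main obstacle'' you identify simply does not arise, because the feature-size bound is packaged inside the Parallelogram Lemma rather than recovered from displacement estimates.

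Your route through the Parallel Gadget of Lemma~\ref{lem:parallel-gadget} is the wrong tool. That gadget carries tolerance~$\eps$ on all its corners and its existence clause requires endpoint displacements at most~$\eps/2^7$; here the tolerance is~$\theta$, and when $\theta=\eps$ the $w_j$ swing far more than that, while when $\theta\in\{0,\delta\}$ the gadget's internal $\eps$-corners no longer match the statement ``all non-frozen corners are given tolerance~$\theta$''. More fundamentally, you never actually exhibit the planar layout that avoids a crossing at~$v$. ``Place the two assemblies in opposite half-planes of the dividing bisector'' cannot work as stated, since each assembly must connect $w_j$ on one side of~$v$ to $w_{j+2}$ on the other and hence must cross any line through~$v$. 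The concrete nested routing of rectangles is the entire content of the construction; without it, neither perfect simulation nor the feature-size bound can be checked.
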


\begin{figure}
  \centering
  \includegraphics{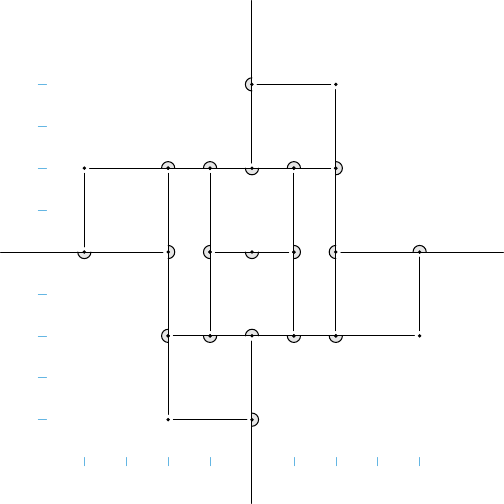}
  \caption{The Sliceform Gadget keeps $w_1,v,w_3$ and $w_2,v,w_4$ collinear.
    All marked vertices have coordinates at integer multiples of $|v-w_j|/4$, indicated by cyan axis notches.}
  \label{fig:sliceform-gadget}
\end{figure}

\begin{proof}
  For convenience, scale so that $|v-w_1|$ is $4$ in the initial configuration, so that the cyan notches in Figure~\ref{fig:sliceform-gadget} have unit distance between them, and all edge lengths are integers.
  By Lemma~\ref{lem:parallelogram-linkage} applied to the many rectangles of Figure~\ref{fig:sliceform-gadget}, all parallelograms in the figure remain parallelograms in all configurations. This, paired with the frozen angles keeping collinear edges collinear, ensures the desired collinearity and that the distances $|v-w_j|$ remain constant, namely~$4$. The gadget admits the desired motion by the simultaneous motion of all the parallelograms.
  All edge lengths are at least $1$, so by Lemma~\ref{lem:parallelogram-linkage}, the global minimum feature size is at least $1/2 = |v-w_j|/8$.
  By triangle inequality, all vertices remain at distance at most $6 = \frac{3}{2} |v-w_j|$ from~$v$.
\end{proof}

The last gadget is built from \emph{partially rigidified} linkages instead of \emph{extended} linkages. Recall that a \term{partially rigidified} linkage is a constrained linkage where all constraints are \term{rigid constraints}, $\RigidCon(H,C_H)$, each forcing a certain subgraph $H$ to maintain a chosen, rigid shape given by a specified configuration $C_H$.

\begin{figure}[t]
  \centering
  \includegraphics{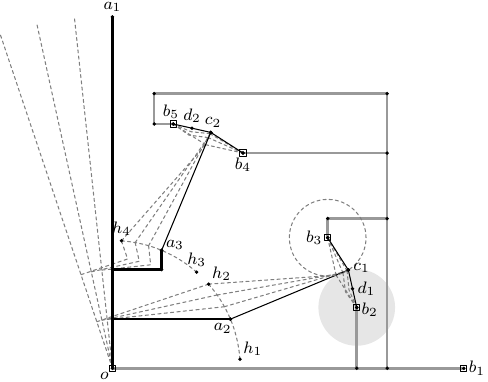}
  \hfill
  \includegraphics{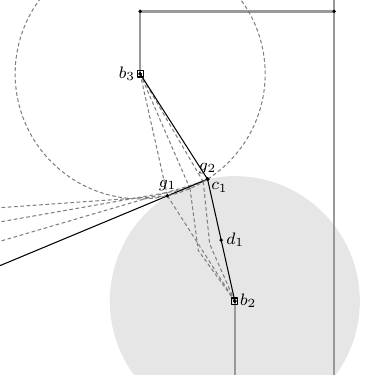}
  \caption{Angle Restrictor Gadget, $\cLanglerestrictor(n)$, shown in full (left) and closeup (right). Tree $A$ is bold and black, with leaves at $o$ and $a_j$, $j=1,2,3$; $B$ is bold and gray, with leaves at $o$ and $b_j$, $j=1,2,3,4,5$.
  Coordinates depend on input integer $n$ such that
  the range of motion of the gadget approaches zero as $n$ increases.
  Dashed segments show alternative configurations for edges,
  with vertices constrained to dashed circles.}
  \label{fig:angle-restrictor-gadget}
\end{figure}

\begin{lemma}[Angle Restrictor Gadget]
  \label{lem:angle-restrictor-gadget}
  For an integer $n \ge n_\eps = 5000$ (from Equation~\ref{eq:eps}), let $\cLanglerestrictor = \cLanglerestrictor(n)$ be the partially rigidified linkage shown in Figure~\ref{fig:angle-restrictor-gadget}, with two rigid constraints fixing orthogonal trees $A$ and $B$ in the configurations shown (but not governing their orientations or relative position). Tree $A$ is drawn with bold black lines, while $B$ has bold gray lines. Tree $B$ is also pinned to the plane for convenience, with $o$ at $(0,0)$.

  Then $\cLanglerestrictor$ is globally noncrossing, and vertex $a_1$ draws the locus $$\{\ell(o a_1)\cdot(\cos\theta,\sin\theta)\mid \pi/2-\gamma\le \theta \le \pi/2+\gamma\}$$
  liftably and rigidly, where $\gamma = \gamma(n)$ is given by
  \begin{equation*}
    \gamma(n) = \cos\inv\left(1 - \frac{3}{10}\cdot\frac{2n}{n^2+1}\right).
  \end{equation*}
  Furthermore,

  \begin{enumerate}
  \item The initial configuration shown in the figure has rational coordinates.

  \item\label{thmpart:angle-restrictor-gadget-combinatorial-embedding} All configurations are noncrossing and agree with the same combinatorial embedding. In particular, tree $A$ is always configured with the same orientation as in Figure~\ref{fig:angle-restrictor-gadget}.

  \item\label{thmpart:angle-restrictor-gadget-60-240} At every vertex that is not interior to one of the rigidified trees (in other words, every vertex given a name in Figure~\ref{fig:angle-restrictor-gadget}), every pair of consecutive edges forms an angle strictly between $60^\circ$ and $240^\circ$ in every configuration, except for the external angle $\angle a_1ob_1 \in [3\pi/2-\gamma,3\pi/2+\gamma]$.

  \item\label{thmpart:angle-restrictor-gadget-unique-90}
    There is a \emph{unique} configuration of $\cLanglerestrictor$ having $\angle b_1oa_1 = \pi/2$.

  \item Finally, the limiting case $\cLanglerestrictor(\infty)$ is well-defined, satisfies all of the above, and is globally rigid.
  \end{enumerate}
\end{lemma}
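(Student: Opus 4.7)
The plan is to parameterize $\Conf(\cLanglerestrictor(n))$ by a single angle $\theta = \angle b_1 o a_1 - \pi/2$, measuring how far the rigid tree $A$ has rotated from its initial orientation around the shared vertex $o$ with the pinned rigid tree $B$. Since $B$ is fully pinned in its initial configuration and $A$ may only rotate (Item~\ref{thmpart:angle-restrictor-gadget-combinatorial-embedding}'s non-flipping claim, which we will establish by the noncrossing analysis, rules out reflections), every configuration is determined by~$\theta$. The remaining edges of $\cLanglerestrictor$---those connecting leaves of $A$ to leaves of $B$ (non-bold in the figure)---impose distance constraints on~$\theta$, and the main task is to show these constraints together carve out exactly the interval $\theta\in[-\gamma(n),\gamma(n)]$.

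First I would verify the rational-coordinate claim by direct construction: the coordinates of all named vertices in the initial configuration are polynomial expressions in $n$ divided by $10(n^2+1)$ or a divisor thereof, arising naturally from the Pythagorean triple $(n^2-1,\,2n,\,n^2+1)$ scaled by the factor $\tfrac{3}{10}$---precisely the factor appearing in $\cos\gamma = 1-\tfrac{3}{10}\cdot\tfrac{2n}{n^2+1}$. Next, for each connecting edge $uv$ (with $u$ a leaf of $A$ and $v$ a leaf of $B$), I would write the constraint $|R_\theta u^{(0)}-v^{(0)}|^2 = \ell(uv)^2$ as an equation in~$\theta$, noting that under rotation $|R_\theta u^{(0)}-v^{(0)}|^2$ depends on $\theta$ only through the inner product $(R_\theta u^{(0)})\cdot v^{(0)}$. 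Some connecting edges are set up as hook linkages with two possible configurations (the ``dashed segments'' in Figure~\ref{fig:angle-restrictor-gadget}), yielding a continuous range of $\theta$ by slack in an alternate branch, while one designated pair of edges enforces tight chord-length bounds at the two endpoints of the allowed interval.

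The main obstacle is identifying the specific bounding edge whose length-constraint gives the explicit formula for~$\gamma(n)$. The expected mechanism is that the key edge has endpoints whose distance, as $\theta$ varies, varies monotonically to a critical value exactly when the Pythagorean-structured offset equals $\tfrac{3}{10}$ of $\tfrac{2n}{n^2+1}$ times the relevant unit. Using the chord identity $|R_\theta p - p|^2 = 2|p|^2(1-\cos\theta)$ on the appropriate endpoint positions and matching to the hook linkage analysis of Lemma~\ref{lem:hook-linkage} for nearby edges should yield $\cos\gamma = 1-\tfrac{3}{10}\cdot\tfrac{2n}{n^2+1}$. The remaining connecting edges must then be verified to leave strict slack throughout the open interval $(-\gamma,\gamma)$ (so they do not over-restrict motion) and to avoid crossings throughout the closed interval $[-\gamma,\gamma]$.

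Finally, the remaining properties follow: liftability and rigidity of the drawing are immediate from the one-parameter, injective description of $\Conf(\cLanglerestrictor)$; the trace of $a_1$ equals $\{\ell(oa_1)(\cos\theta',\sin\theta'):\theta'\in[\pi/2-\gamma,\pi/2+\gamma]\}$ by pure rotation of $a_1^{(0)}$ around $o$; noncrossing and the common combinatorial embedding hold because $\gamma(n)\le\gamma(n_\eps)=\eps<\pi/16$ keeps all local cyclic orderings and geometric separations stable under perturbation of $\theta$; the $60^\circ$-$240^\circ$ angle bound at each named vertex is checked directly, with the lone exception at $o$ being the external angle $\angle a_1ob_1=3\pi/2\pm\gamma$ that the gadget is explicitly designed to produce; uniqueness of the configuration with $\angle b_1oa_1=\pi/2$ is the $\theta=0$ case of the injective parameterization; and the limit $n\to\infty$ sends $\gamma(n)\to 0$, collapsing $\Conf(\cLanglerestrictor(\infty))$ to the single initial configuration, hence global rigidity.
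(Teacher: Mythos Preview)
Your proposal has a genuine structural gap: the configuration space is \emph{not} injectively parameterized by~$\theta$. The gadget has intermediate vertices $c_1,c_2,d_1,d_2$ that are neither in tree $A$ nor in tree $B$, so the non-rigid part is not just a collection of edges going directly between leaves of the two trees. For a generic $\theta$ in the open interval $(-\gamma,\gamma)$, each of $d_1$ and $d_2$ has \emph{two} valid positions (the two intersections of a pair of circles), giving four configurations in total. The drawing is therefore liftable and rigid but not perfect; your claim that ``liftability and rigidity of the drawing are immediate from the one-parameter, injective description'' is incorrect as stated, and more importantly, your explanation of Item~\ref{thmpart:angle-restrictor-gadget-unique-90} (``uniqueness \dots\ is the $\theta=0$ case of the injective parameterization'') is wrong. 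Uniqueness at $\theta=0$ is a \emph{special} design feature: the gadget is arranged so that precisely at $\theta=0$ (and at $\theta=\pm\gamma$), the circles determining $d_1$ and $d_2$ become tangent, collapsing the four configurations to one. This is the whole reason the gadget is as complicated as it is---see the Note following the lemma pointing to the simpler Figure~\ref{fig:easy-angle-restrictor-gadget} that lacks this uniqueness property.

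Your derivation sketch for $\gamma(n)$ is also off-target. The mechanism is not a direct chord-length bound on a single connecting edge via $|R_\theta p - p|^2 = 2|p|^2(1-\cos\theta)$. Rather, the vertex $c_1$ is confined (by bar $b_3c_1$ together with the two-bar chain $b_2d_1c_1$) to a short circular arc of angular width $2\phi(n)$, where $\phi(n)=\sin^{-1}\frac{2n}{n^2+1}$; then the two unit bars $oa_2$ and $a_2c_1$ transmit this confinement to $a_2$, and a rhombus argument yields $\gamma=\cos^{-1}(1-2t\sin\phi)$ with $t=3/20$. The Pythagorean structure you anticipate is present, but it enters through $\phi$ and an additional Pythagorean angle $\kappa=\sin^{-1}(5/13)$ fixing the initial direction of $oa_2$, not through a single rotated-chord constraint.
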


\begin{note}
  The assumption of pinning $B$ to the plane is equivalent to choosing a
  coordinate frame relative to $B$'s configuration.
  Thus, if we do not pin $B$, then this lemma still describes tree $A$'s position relative to that of~$B$.
  In particular, if either tree is known to have orientation matching that of Figure~\ref{fig:angle-restrictor-gadget}, then they both must---one cannot ``flip'' relative to the other---and $\angle b_1oa_1$ must lie in $[\pi/2 - \gamma, \pi/2+\gamma]$.
\end{note}

\begin{figure}[hbt]
  \centering
  \includegraphics{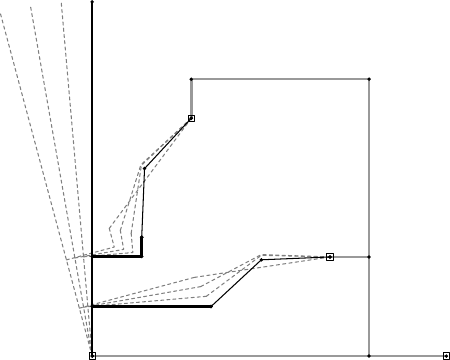}
  \caption{A simpler Angle Restrictor Gadget is possible if uniqueness at $\theta=\pi/2$ is not required. This is shown only for contrast with Figure~\ref{fig:angle-restrictor-gadget}; this simpler version is not formally analyzed or used in this paper.}
  \label{fig:easy-angle-restrictor-gadget}
\end{figure}

\begin{note}
  The uniqueness property of Part~\ref{thmpart:angle-restrictor-gadget-unique-90} is the crucial step in proving hardness of \emph{global} rigidity, but it also complicates the gadget. Without this requirement, a much simpler construction is possible, such as that shown in Figure~\ref{fig:easy-angle-restrictor-gadget}.
\end{note}

\begin{proof}
  First we specify the initial configuration $C_0$, drawn solid in Figure~\ref{fig:angle-restrictor-gadget}, more precisely. Vertex $o$ rests at the origin, $(0,0)$. Let $\kappa = \sin\inv(5/13)$, which is a \term{Pythagorean angle}, meaning $\sin(\kappa) = 5/13$ and $\cos(\kappa) = 12/13$ are both rational, and define another Pythagorean angle $\phi = \phi(n) = \sin\inv(2n/(n^2+1))$ (its cosine is $(n^2-1)/(n^2+1)\in\bQ$), where $\phi(n) \le \phi(n_\eps) < 0.0004$. In the initial configuration, $a_2 = (\cos\kappa,\sin\kappa)$, $c_1 = 2a_2$, $\angle b_3c_1o = \angle oc_1b_2 =  \pi/2-\phi$, and $d_1$ is the midpoint of $c_1b_2$. Edges $c_1 d_1$, $d_1 b_2$, and $c_1b_3$ have length $t,t,2t$ respectively, where $t = 3/20$. (This arbitrary choice of $t$ allowed for a prettier figure than others we tried.)
  Vertices $a_3,c_2,b_4,b_5,d_2$ are initially configured at the reflections through the line $y=x$ of the initial positions of vertices $a_2,c_1,b_3,b_2,d_1$ respectively.
  The upper right vertex of $B$ has coordinates $(2+t,2+t)$.
  The initial configuration has minimum feature size realized by point $d_1$ to segment $a_2 c_1$ (and $d_2$ to $a_3 c_2$), namely $t \sin \angle a_2 c_1 d_1 = t \cos \phi \geq t \cos \phi(n_\eps) > 0.14999$.
  Because we chose Pythagorean angles, all coordinates are initially rational. We will proceed as if $|o-a_1|=|o-b_1|=2.75$ (as drawn), but they may be longer without affecting the global minimum feature size.
  
  We now investigate a general configuration $C$ of $\cLanglerestrictor$. Bar $b_3c_1$ restricts $c_1$ to the dashed circle with radius $2t$, and bars $b_2d_1$ and $d_1c_1$ further restrict vertex $c_1$ to the disk with radius $2t$, as shown. As a result, $c_1$ must lie on the small circular arc $g_1g_2$ (see Figure~\ref{fig:angle-restrictor-gadget}), centered at $b_3$ with angle $2\phi$. (Note that $g_1$ and $g_2$ are fixed points in the plane, not linkage vertices.) In turn, because $|o-a_2| = |a_2-c_1| = 1$, vertex $a_2$ is confined to the circular arc $h_1h_2$ centered at $o$ with radius $1$, where the endpoints of confinement, $h_1$ and $h_2$, are determined by rhombus $oh_1g_1h_2$ of side length~$1$. Because $o$, $g_1$, and $g_2$ are collinear, we may compute that $|o-g_1| = 2 - 4t\sin\phi$, and so $\gamma := \frac{1}{2}\angle h_1oh_2$ is given by $\gamma = \cos\inv(1-2t\sin\phi)$. We have $\gamma(n) \le \gamma(n_\eps) < 0.0155$, which is less than the fixed value $|\angle a_2oa_3| = \pi/2 - 2\kappa > 0.78$, so arcs $h_1h_2$ and $h_3h_4$ are disjoint. This forces $\angle a_2oa_3$ to have counterclockwise orientation, and so tree $A$ must maintain its orientation as claimed, and $\theta = \angle b_1oa_1$ must remain in the closed interval between $\pi/2\pm\gamma$.

  We must show that $\cLanglerestrictor$ cannot move far enough to intersect itself or to violate the angle conditions in Part~\ref{thmpart:angle-restrictor-gadget-60-240}. Toward this goal, let us first argue that the angle of each edge in configuration $C$ differs from that in $C_0$ by at most $\pm 1/60$ radians. The edges in tree $A$ can rotate by at most $\pm\gamma$, and likewise edges $a_2c_1$ and $a_3c_2$ can change their direction by at most $\pm\gamma$, where $\gamma < 1/60$ as above. Edges $b_3 c_1$ and $b_4 c_2$ can rotate by at most $2\phi \le 2\phi(n_\eps) < 0.0008 < 1/60$. Only the edges incident to $d_1$ and $d_2$ remain, and by symmetry, we may consider only those at $d_1$. We will measure these edges relative to the vector $b_2c_1$, which is not an edge of the linkage but is nevertheless a useful reference. This vector $b_2c_1$ can change its direction by at most $2\phi \le 2\phi(n_\eps) < 1/120$. Its length is at least $|b_2-b_3| - 2t = 2t(2\cos(\phi) - 1)\ge 2t(2\cos(\phi(n_\eps))-1)$, which may be checked to be at least $2t\cdot\cos\frac{1}{120}$. Isosceles triangle $b_2d_1c_1$ then shows that $|\angle d_1b_2c_1| = |\angle d_1c_1b_2| \le 1/120$ radians. So each of $d_1b_2$ and $d_1c_1$ can rotate by at most $1/120+1/120=1/60$, as claimed.

  Each vertex is connected to a vertex in the stationary tree $B$ by a path of length at most $3$, so by Lemma~\ref{lem:rotating-bar-displacement}, each vertex has been offset from its initial position by at most $3\cdot 1/60 = 1/20$. Because $C_0$ has minimum feature size $> 0.14999$, we conclude by Lemma~\ref{lem:feature-size} that $C$ is noncrossing (with minimum feature size at least $0.14999 - 2/20 > 0.04999 > 0$). Each corner at each named vertex initially had an angle between $\pi/2 - \phi$ and $\pi + \kappa + \phi$, and these angles are at most $2\cdot 1/60 = 1/30$ different in configuration $C$, which confirms that the corresponding angles in $C$ lie safely between $\pi/3$ and $4\pi/3$: indeed,
  \begin{equation*}
    \frac{\pi}{2} - \phi - \frac{1}{30} > 88.06^\circ > 60^\circ
    \qquad\text{and}\qquad
    \pi+\kappa+\phi+\frac{1}{30} < 204.56^\circ < 240^\circ,
  \end{equation*}
  confirming Part~\ref{thmpart:angle-restrictor-gadget-60-240}. It also proves that $C$ and $C_0$ agree with the same combinatorial embedding (Part~\ref{thmpart:angle-restrictor-gadget-combinatorial-embedding}).

  Conversely, given any angle $\theta\in[\pi/2-\gamma,\pi/2+\gamma]$, we will argue there are either $1$ or $4$ configurations $C$ having $\angle b_1oa_1 = \theta$. This angle determines the position of $a_2$ along arc $h_1h_2$, and then there is a unique point $c_1$ on arc $g_1g_2$ with $|a_2-c_1| = 1$. If $c_1$ lies at $g_2$ (i.e., $\theta = \pi/2$), then because $|b_2-c_1| = 2t$, vertex $d_1$ must be configured at the midpoint of $b_2c_1$. The case is similar if $c_1$ is at $g_1$, i.e., $\theta = \pi/2 \pm \gamma$. Otherwise, $d_1$ lies in the interior of the solid gray disk, so there are two possible choices for the location of $d_1$. The same is true for the other assembly anchored at $a_3$, $b_4$, and $b_5$, and furthermore, vertices $d_1$ and $d_2$ may vary continuously with $\theta$. This proves that vertex $a_1$ draws liftably and rigidly.

  In the limiting case $n = \infty$, the angles $\phi$ and $\gamma$ are $0$, so the resulting linkage $\cLanglerestrictor(\infty)$ is globally rigid.
\end{proof}

\subsection{Implementing Extended Linkages with Partially Rigidified Linkages}
\label{sec:implementing-with-partially-rigidified}

We now use the gadgets from Section~\ref{sec:sliceform-and-angle-restrictor-gadgets}
to transform the extended linkage $\cE$, constructed in Section~\ref{sec:combining-gadgets}, into a partially rigidified linkage $\cL = \cL(F)$ that simulates $\cE$---not perfectly, but at least liftably and rigidly.

Extended linkage $\cE$ has all edge lengths at least $1$ and has global minimal feature size at least $1/2$, because each gadget individually has these properties.
Construct a new linkage $\cE' = \cE'(F)$ by replacing each sliceform vertex $v$ in $\cE$ with a Sliceform Gadget $\cLslice$ of Lemma~\ref{lem:sliceform-gadget} with the same angle tolerance, scaled so that $|v-w_j| = 1/8$.
By Lemma~\ref{lem:sliceform-gadget}, each sliceform gadget individually has minimum feature size at least $|v-w_j|/8 = 1/64$ and lives within a disk of radius $\frac{3}{2} |v-w_j| = 3/16$ centered at the replaced vertex.
Thus the feature size between different sliceform gadgets is at least $\frac{1}{2} - 2 \cdot \frac{3}{16} = \frac{1}{8}$.
Therefore $\cE'$ perfectly simulates $\cE$, is globally noncrossing, and has global minimum feature size at least $1/64$.

\newcommand\vcentertrick[1]{\raisebox{-0.5\height}{#1}}
\begin{figure}[phbt]
  \centering
  \vcentertrick{\includegraphics{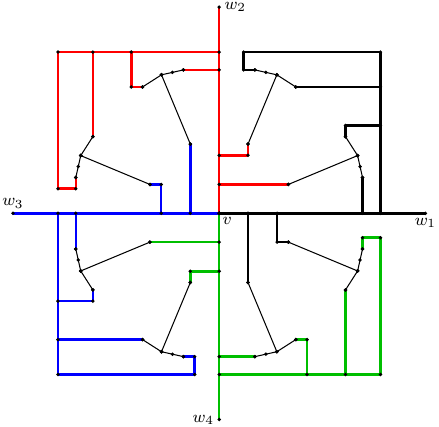}}
  \hfil
  \vcentertrick{\includegraphics{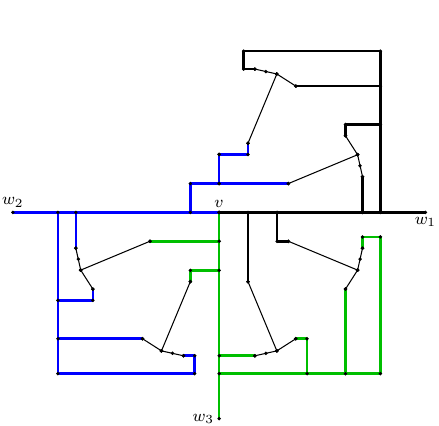}}
  \\
  \vcentertrick{\includegraphics{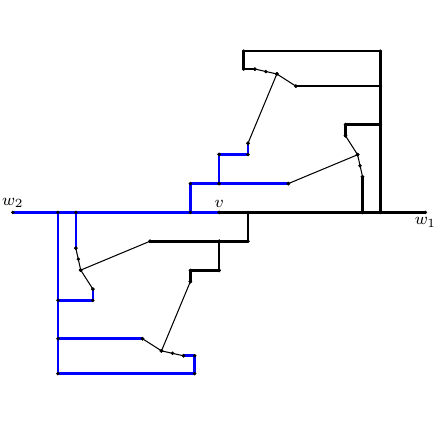}}
  \hfil
  \vcentertrick{\includegraphics{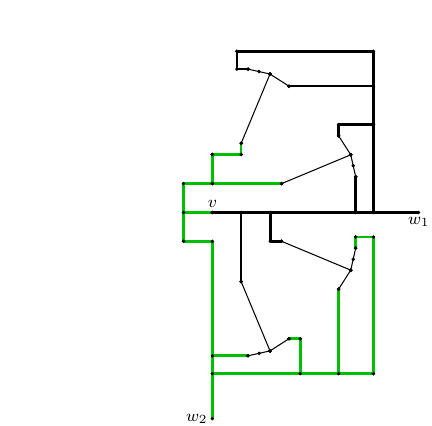}}
  \caption{To turn extended linkage $\cE'$ into $\cL$, a neighborhood of each vertex is replaced with an \emph{Angle Restrictor Gadget neighborhood} tailored to the base angles of its corners, as shown. The possible vertex types are $(90^\circ,90^\circ,90^\circ,90^\circ)$ (top left), $(180^\circ,90^\circ,90^\circ)$ (top right), $(180^\circ,180^\circ)$ (bottom left), $(270^\circ,90^\circ)$ (bottom right), and $(360^\circ)$ (not shown).
  In the bottom two cases of degree $2$, a single Angle Restrictor Gadget would suffice, but we draw both for consistency.
  Each rigidified tree is a different color of thick edges.}
  \label{fig:angle-restrictor-gadget-neighborhood}
\end{figure}

We transform $\cE'$ into a partially rigidified linkage $\cL = \cL(F)$ as follows: first, consider each edge $u v$ of $\cE'$ as a rigidified tree $T_{u v}$ initially containing just one edge. For each vertex $v$ with neighbors $w_j$ (for $1\le j\le \deg v$), we will modify the trees $T_{v w_j}$ in a small neighborhood of $v$ with (scaled down) Angle Restrictor Gadgets (from Lemma~\ref{lem:angle-restrictor-gadget}) adapted as necessary to $v$'s structure, to faithfully implement $v$'s angle constraints. Specifically, look at the angles $A(\Lambda)$ for corners $\Lambda$ around $v$: the $5$ possibilities, up to cyclic reordering, are  $(90^\circ,90^\circ,90^\circ,90^\circ)$ $(180^\circ,90^\circ,90^\circ)$, $(180^\circ,180^\circ)$, $(270^\circ,90^\circ)$, or $(360^\circ)$. The last case corresponds to $\deg v = 1$, in which case its angle constraint does nothing and may be ignored. In each of the first four cases, we modify the trees $T_{v w_j}$ in a small region around $v$ with the appropriate \term{Angle Restrictor Gadget neighborhood} from Figure~\ref{fig:angle-restrictor-gadget-neighborhood}. Each Angle Restrictor Gadget within this neighborhood is given the shape of $\cLanglerestrictor(n_\eps)$, $\cLanglerestrictor(n_\delta)$, or $\cLanglerestrictor(\infty)$ depending on whether the corner $\Lambda$ has $\Delta(\Lambda) = \eps$, $\delta$, or $0$ respectively.
(Recall $n_\eps$ and $n_\delta$ are given by Equations~\ref{eq:eps}--\ref{eq:delta}.)
Finally, remove the three pins of $\cE'$ (that serve only to fix the background grid in place), and replace them in $\cL$ with three noncollinear pins in one of the rigidified trees $T_{u_1 u_2}$ built from an edge $u_1 u_2$ in the background grid.

We claim that $\cL$ simulates $\cE'$ (and hence $\cE$) liftably and rigidly. For each rigidified tree $T_{u v}$, at least one of $u$ or $v$ has degree greater than $1$ in $\cE'$ because $\cE'$ is connected, so $T_{u v}$ participates in at least one Angle Restrictor Gadget neighborhood. As shown in Lemma~\ref{lem:angle-restrictor-gadget} (Angle Restrictor Gadget), rigidified trees connected to each other by an Angle Restrictor Gadget cannot change orientation relative to each other. Because tree $T_{u_1 u_2}$ is pinned rigidly in place and all rigidified trees in $\cL$ are connected to each other by a sequence of Angle Restrictor Gadgets, none of the rigidified trees in $\cL$ can change orientation at all. This same lemma guarantees that all angle constraints are enforced by the Angle Restrictor Gadgets. In particular, the entire background grid in $\cL$ is fixed rigidly in place by the pins in $T_{u_1 u_2}$. The Angle Restrictor Gadgets maintain their constraints in a liftable and rigid fashion, so $\cL$ indeed simulates $\cE'$ liftably and rigidly.

Below we will show that this linkage $\cL = \cL(F)$ satisfies all requirements of the Main Theorem.

\begin{theorem}
  \label{thm:main-construction-works}
  This linkage $\cL = \cL(F)$, built from $F = \{f_1,\ldots,f_s\}$, satisfies all of the requirements of Theorem~\ref{thm:main-theorem}, with
  \begin{equation*}
    D = 2^{12}\cdot 13\cdot 20\cdot \left(n_\eps^2+1\right)\cdot \left(n_\delta^2+1\right) \approx 4.26\cdot 10^{42}.
  \end{equation*}
\end{theorem}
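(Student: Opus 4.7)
The plan is to verify each of the twelve properties of Theorem~\ref{thm:main-theorem} in turn for $\cL(F)$, and then back-solve for the advertised constant~$D$. Most items will follow by chaining together the gadget analyses already established in Lemmas~\ref{lem:scale-invariant-gadgets}--\ref{lem:angle-restrictor-gadget} with the explicit assembly in Sections~\ref{sec:combining-gadgets}--\ref{sec:implementing-with-partially-rigidified}; the substance lies in careful bookkeeping.

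For Part~I, I would argue the liftable/rigid drawing property (item~\ref{thmpart:rigid-and-liftable}) by chaining three simulation steps: $\cE$ exactly parameterizes a translated copy of $Z(F)\cap\bigl(2r\,\Rect([-\delta/d,\delta/d]^2)\bigr)^m$ via the bijective $\Offset_Y$ analyzed in Section~\ref{sec:combining-gadgets}; $\cE'$ perfectly simulates $\cE$ by Lemma~\ref{lem:sliceform-gadget}; and $\cL$ liftably and rigidly simulates $\cE'$ via Lemma~\ref{lem:angle-restrictor-gadget}, whose gadgets have finite preimages over each interior angle. The polynomial-size bound (item~\ref{thmpart:num-vertices}) counts one constant-size gadget per base angle $\theta_I$ for $I\in\Coeffs(2m,d)$, per nonzero term $d_{j,u,I}$, and per routing cell in the rectilinear wire layout, all bounded by $O(\poly(m^d,d^d,s))$. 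The structural requirements (items~\ref{thmpart:orthogonal-trees}, \ref{thmpart:corners}, \ref{thmpart:pins}) follow from the design itself: each rigidified tree is either a grid-edge thickening or one of the orthogonal trees $A,B$ inside an Angle Restrictor Gadget, attaching to the rest only at leaves, with at least three noncollinear vertices; corner angles at non-internal vertices lie strictly in $(60^\circ,240^\circ)$ by Part~\ref{thmpart:angle-restrictor-gadget-60-240} of Lemma~\ref{lem:angle-restrictor-gadget}; and the three pins are placed in one rigidified grid-edge tree by construction.

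Items~\ref{thmpart:rational-edge-lengths}--\ref{thmpart:poly-time} of Part~II and items~\ref{thmpart:rational-coords}--\ref{thmpart:poly-time-config} of Part~III follow from two observations. First, when $F$ has integer coefficients bounded by~$M$, Lemma~\ref{lem:angular-coefficients} gives integer coefficients $d_{j,u,I}$ of polynomial size computable in $O(\poly(m^d,d^d,s,M))$ time; these are the only input-dependent edge lengths, entering through Vector Creation and End Gadgets. Second, every other length or coordinate in $\cL$ is a Pythagorean rational with denominator dividing $13\cdot(n_\eps^2+1)\cdot(n_\delta^2+1)$, coming from the angle $\kappa=\sin^{-1}(5/13)$ and the angles $\phi(n)=\sin^{-1}(2n/(n^2+1))$ in Lemma~\ref{lem:angle-restrictor-gadget}, so scaling by~$D$ clears all denominators and yields integers of polynomial magnitude. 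The axis-alignment of~$C_0$ on rigidified trees (item~\ref{thmpart:orthogonal-induced-trees}) is immediate: when $f_j(\vec 0)=0$ for all~$j$, the zero-offset configuration is valid throughout $\cE$, and the induced configurations of every Sliceform and Angle Restrictor sub-assembly are the initial configurations from Lemmas~\ref{lem:sliceform-gadget} and~\ref{lem:angle-restrictor-gadget}, which are axis-aligned on rigidified trees.

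The hard part will be item~\ref{thmpart:uniqueness} (uniqueness at $T(\vec 0)$). Because $\cL$ only \emph{liftably and rigidly} --- not perfectly --- simulates $\cE'$, \emph{a priori} several configurations of $\cL$ could project onto the same $\cE'$-configuration, and in particular the zero-offset configuration could split into multiple preimages in~$\cL$. Ruling this out is precisely the role of Part~\ref{thmpart:angle-restrictor-gadget-unique-90} of Lemma~\ref{lem:angle-restrictor-gadget}: at $\theta=\pi/2$ the auxiliary vertices $c_1,d_1$ (and their mirrors $c_2,d_2$) are forced to the extremal points of their permitted arcs, yielding a unique completion --- which is the entire reason the elaborate Figure~\ref{fig:angle-restrictor-gadget} was preferred to the simpler Figure~\ref{fig:easy-angle-restrictor-gadget}. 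Applying this at every vertex of $\cE'$ propagates uniqueness to~$\cL$. The remaining task is extracting the numerical value of~$D$: I would trace the feature size $\geq 1/2$ guaranteed by each cell gadget through Sliceform insertion (which shrinks by a factor of~$2^6$), through the Angle Restrictor Gadget neighborhoods (each scaled down to fit disjointly around its vertex and contributing the factor $20$ from $t=3/20$), and through the Pythagorean denominators $13\cdot(n_\eps^2+1)(n_\delta^2+1)$, collecting the residual power-of-two factors into~$2^{12}$ to arrive at $D=2^{12}\cdot 13\cdot 20\cdot(n_\eps^2+1)(n_\delta^2+1)$.
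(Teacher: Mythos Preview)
Your proposal is correct and follows essentially the same approach as the paper: a property-by-property verification that chains the gadget lemmas through the $\cE\to\cE'\to\cL$ simulation steps, with the uniqueness at $T(\vec 0)$ pinned on Part~\ref{thmpart:angle-restrictor-gadget-unique-90} of Lemma~\ref{lem:angle-restrictor-gadget} and the constant~$D$ extracted from the feature-size scaling ($2^{-12}\cdot 1/20$) together with the Pythagorean denominators $13\cdot(n_\eps^2+1)(n_\delta^2+1)$. One small inaccuracy: the rigidified trees of $\cL$ are not ``either a grid-edge thickening or one of the trees $A,B$''---each $T_{uv}$ is the single edge $uv$ of $\cE'$ with the relevant $A$- or $B$-shaped appendages from the Angle Restrictor neighborhoods grafted onto its endpoints (see Figure~\ref{fig:angle-restrictor-gadget-neighborhood})---but this does not affect your argument.
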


\begin{proof}

  \begin{descriptionflush}
  \item[Part I.]
    \begin{enumerate}
    \item %
      Define $X\subset V(\cL)$ as the set of vertices of $\cL$ that simulate vertices $v_1,\ldots,v_m$ of $\cE$. Define translation $T(\widevec{xy}) = \widevec{xy}+(a_1,b_1,\ldots,a_m,b_m)$ where for each $1\le k\le m$, $(a_k,b_k)-(Q/5+2r,Q/5+2r)$ are the coordinates of the lower left corner of the cell containing Start Gadget $\cLstart(k)$; in other words, $(a_k,b_k)$ is the position of vertex $v_k$ that would correspond to $(\alpha_k,\beta_k)=(0,0)$. We showed in Section~\ref{sec:combining-gadgets} that $\cE$ perfectly draws the region $T(Z(F)\cap U)$, where $U = (2r\cdot\Rect([-d/\delta,d/\delta]))^m$. Because $\cL$ liftably and rigidly simulates $\cE$, $\cL$ indeed liftably and rigidly draws the same trace.
      
    \item %
      Following the construction above, there are $m$ Start Gadgets,  $O(|\Coeffs(2m,d)|) = O(\poly(m^d,d^d))$ Angle Addition, Vector Creation, Vector Rotation, and Vector Addition Gadgets, and $s$ End Gadgets. Stacked one on top of the other to the side of the wire column (other than the Start Gadgets above the wire column), these gadgets require a height of $O(\poly(m^d,d^d,s))$. The wire column likewise has no more than $O(\poly(m^d,d^d,s))$ wires, so the entire construction fits in an $O(\poly(m^d,d^d,s))\times O(\poly(m^d,d^d,s))$ grid of cells. Each cell contains $O(1)$ vertices and edges, proving the claim.
      
    \item %
      We already concluded that $\cE'$ is globally noncrossing with global minimum feature size at least $2^{-8}$. In the insertion of Angle Restrictor Gadget neighborhoods to transform $\cE'$ to $\cL$, scale each gadget by a factor of $2^{-12}$, so that it fits within a $2^{-10}$ neighborhood of its vertex and (on its own) has a global minimum feature size at least $2^{-12}\cdot 1/20$. $\cE'$ ensures that each Angle Restrictor Gadget neighborhood stays well separated from other edges or Angle Restrictor Gadgets, so $2^{-12}\cdot 1/20 > 1/D$ is a lower bound on the global minimum feature size of $\cL$, as desired.

    \item %
      Each rigidified tree $T_{u v}$ in $\cL$ is constructed by attaching a portion of an Angle Restrictor Gadget neighborhood (as illustrated in Figure~\ref{fig:angle-restrictor-gadget-neighborhood}) to one or both endpoints of edge $u v$. These rigidified trees are indeed orthogonal and edge disjoint, possess at least three noncollinear vertices, and only connect to other trees or edges via leaves.

    \item %
      The combinatorial embedding $\sigma$ of $\cL$ follows from the combinatorial embedding of extended linkage $\cE'$ (which is unique by Lemma~\ref{lem:sliceform-gadget}), augmented with the unique combinatorial embedding of Angle Restrictor Gadgets (Lemma~\ref{lem:angle-restrictor-gadget}, Part~\ref{thmpart:angle-restrictor-gadget-combinatorial-embedding}). Any vertex $u$ of $\cL$ that is \emph{not} internal to a rigidified tree must either belong to an Angle Restrictor Gadget neighborhood, or come from a vertex of $\cE'$ of degree $1$. In the latter case, $u$ still has degree $1$ in $\cL$, so we need not consider it. If instead $u$ belongs to the Angle Restrictor Gadget neighborhood of some vertex $v\in\cE'$, it may be seen in Figure~\ref{fig:angle-restrictor-gadget-neighborhood} that $u$'s angles remain within $(60^\circ,240^\circ)$: indeed, by Lemma~\ref{lem:angle-restrictor-gadget} (Angle Restrictor Gadget), the only place this condition might be violated is at the center of the neighborhood, $u=v$, but Figure~\ref{fig:angle-restrictor-gadget-neighborhood} shows that the angles at $v$'s corners remain in $\pi/2\pm\gamma$ or $\pi\pm\gamma$.
      
    \item %
      This is true by construction.

    \end{enumerate}

  \item[Part II.]
    \begin{enumerate}[resume]
    \item %
      When the coefficients of $f_1,\ldots,f_s$ are integers, all coefficients $d_{j,u,I}$ are also integers bounded by $O(\poly(m^d,d^d,s,M))$ (by Lemma~\ref{lem:angular-coefficients}). Parameters $r,Q,R$ are also integers bounded by $O(\poly(m^d,d^d,s,M))$, so it may be seen that all edge lengths of $\cE$ are \emph{integers} no greater than $Q$ (including the Vector Creation Gadgets with integers $w=d_{j,u,I}$ and End Gadgets with integers $w=-f_j(0)$).
      The edges of $\cE'$ have lengths in $\frac{1}{32}\bZ$, because the scaled Sliceform Gadgets $\cLslice$ that were inserted into $\cE$ to form $\cE'$ have these edge lengths. Finally, the scaled Angle Restrictor Gadget neighborhoods (with each Angle Restrictor Gadget having shape $\cLanglerestrictor(n_\eps)$, $\cLanglerestrictor(n_\delta)$, or $\cLanglerestrictor(\infty)$) may be checked to have edge lengths in $\frac{1}{D}\bZ$, so the same can be said for $\cL$.

      The only edges of $\cL$ not contained in a rigidified tree correspond to those edges in Figure~\ref{fig:angle-restrictor-gadget} (Angle Restrictor Gadget) that are drawn with thin lines. In $\cL$, these edges all have lengths $2^{-12}$, $2^{-12}\cdot 3/20$, and $2^{-12}\cdot 3/10$, which are less than $D$.

    \item %
      Angular forms $g_j(\widevec{\alpha\beta})$ may be computed from polynomials $f_j(\widevec{xy})$ in deterministic time $O(\poly(m^d,d^d,s,M))$ by Lemma~\ref{lem:angular-coefficients}, and the magnitudes of the coefficients never exceed $O(\poly(m^d,d^d,s,M))$. Our transformations from $f_1,\ldots,f_m$ to $\cL(F)$ are explicit and deterministic.

    \end{enumerate}

  \item[Part III.]
    
    \begin{enumerate}[resume]
    \item %
      When the given polynomials $f_j$ satisfy $f_j(\widevec{0}) = 0$, our construction of $\cE(F)$ uses End Gadgets only with input $w=0$, which come with  initial configurations with coordinates that are integer multiples of $Q/40$ and are therefore integers, as in Lemma~\ref{lem:unscaled-gadgets}. The Vector Creation Gadgets, with integer input $w=d_{j,u,I}$, likewise have integer coordinates initially. All other gadgets start at integer coordinates unconditionally. Together, these specify an initial configuration of $\cE(F)$ corresponding to $\widevec{\alpha\beta}=\vec{0}$, i.e., $\widevec{xy} = \vec{0}$, which indeed has integer coordinates.

      The initial configurations of the Sliceform Gadgets and Angle Restrictor Gadgets were also illustrated in Figures~\ref{fig:sliceform-gadget} and~\ref{fig:angle-restrictor-gadget} and may be checked to have rational coordinates in $\frac{1}{D}\bZ$. These induce the desired initial configuration $C_0$ of $\cL(F)$.

    \item %
      Linkage $\cE'$ perfectly draws $T(Z(F)\cap U)$, meaning there is only one configuration of $\cE'$ mapping to $T(\vec{0})$. In this unique configuration, each Angle Restrictor Gadget must be in its initial state (corresponding to $\theta=\pi/2$ in Lemma~\ref{lem:angle-restrictor-gadget}), and there is only one such configuration for each gadget. So $C_0$ is indeed unique.      

    \item %
      All edges of $\cE'$ are initially axis aligned, so the rigidified trees built from these edges are likewise axis aligned.
      
    \item %
      As described above, our construction of $C_0$ is explicit and deterministic.

    \end{enumerate}

  \end{descriptionflush}

  \noindent All properties have been verified, so this concludes the proof.
\end{proof}

\subsection{Modifications for Strong Matchstick Universality}
\label{sec:matchstick-universality-full}

We may subtly modify the above proof of Theorem~\ref{thm:main-theorem} (Main Theorem) to prove that the proper subsets of $\bR^2$ drawn by matchstick linkages are \emph{exactly} the bounded semialgebraic sets. We will use one extra cell gadget when constructing extended linkage $\cE(F)$, the Crossing End Gadget (Figure~\ref{fig:crossing-end-gadget}, Lemma~\ref{lem:crossing-end-gadget}), which creates a crossing precisely when $g(\widevec{xy}) = 0$ for a given polynomial $g$. When linkage $\cE(F)$ is simulated by a matchstick linkage $\cM(F)$ as described in Section~\ref{sec:matchstick}, all of $\cE(F)$'s noncrossing configurations transfer to $\cM(F)$, i.e., thickening does not introduce unintended crossings. This allows us to draw semialgebraic sets of the form
\begin{equation*}
  \{\vec x\in\bR^k\in\bR^2\mid f_1(\vec x) = \cdots = f_s(\vec x) = 0, g_1(\vec x) \ne 0, \ldots,g_r(\vec x)\ne 0\},
\end{equation*}
as well as coordinate projections thereof. This is sufficient to draw any bounded semialgebraic set in the plane. Details follow.

\begin{figure}
  \centering
  \includegraphics{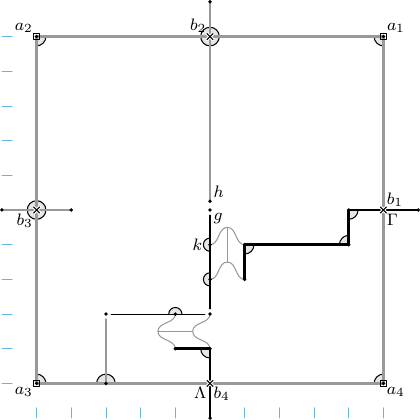}
  \caption{The Crossing End Gadget creates a crossing at $g=h$ precisely when $\alpha=\beta=0$, assuming $g$ remains on the line $y=Q/2$.}
  \label{fig:crossing-end-gadget}
\end{figure}

\begin{theorem}[Universality of Matchstick Linkages]
  \label{thm:matchstick-universality-full}
  The proper subsets $R\subsetneq\bR^2$ that are drawable by a matchstick linkage are exactly the bounded semialgebraic sets.
\end{theorem}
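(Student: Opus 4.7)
The forward direction is easy and parallels Theorem~\ref{thm:noncrossing-universality}: a matchstick linkage $\cL'=(\cL,\NXCon_\cL)$ is a constrained linkage whose configuration space $\NXConf(\cL)$ is a semialgebraic subset of the (compact) algebraic set $\Conf(\cL)$, and the trace of any tuple of vertices is therefore a semialgebraic subset of a compact set, hence bounded semialgebraic. The assumption that $R\ne\bR^2$ forces the linkage to have at least one pin (otherwise $\cL$ would draw all translates of some configuration and the trace would either be empty, $\bR^2$, or be neither bounded nor semialgebraic in a degenerate way).

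For the reverse direction, by Lemma~\ref{lem:semialg-projection-ne0} I may write any bounded semialgebraic set $R\subset\bR^2$ as the projection onto the first two coordinates of a bounded set of the form $R'=\{\widevec{xy}\in\bR^{2m}\mid f_1(\widevec{xy})=\cdots=f_s(\widevec{xy})=0,\; g_1(\widevec{xy})\ne 0,\ldots,g_r(\widevec{xy})\ne 0\}$. After scaling, I may assume $R'\subseteq[-1,1]^{2m}$. The strategy is to build an extended linkage $\cE^+(F,G)$ that draws a translation of $R'$ by augmenting the main construction (Section~\ref{sec:combining-gadgets}) with a \emph{Crossing End Gadget} for each $g_j$: these gadgets enforce no equality constraint, but they introduce a geometric crossing \emph{exactly} on the locus $g_j(\widevec{xy})=0$.

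Concretely, I would proceed as follows. First, expand each $f_i$ and each $g_j$ in angular form using $(x_k,y_k)=2r\cdot\Rect(\alpha_k,\beta_k)$ as in Lemma~\ref{lem:angular-coefficients} (adjusting $R,Q$ upward if necessary so the vector-sum partial sums for every $g_j$ also fit comfortably). Build the wire column and the vector-sum subnetworks for every $f_i$ and every $g_j$ just as in Section~\ref{sec:combining-gadgets}. Terminate the $f_i$ wires with ordinary End Gadgets (Lemma~\ref{lem:unscaled-gadgets}) enforcing the equations $f_i=0$, but terminate each $g_j$ wire with the Crossing End Gadget of Figure~\ref{fig:crossing-end-gadget} (whose promised behavior is given in Lemma~\ref{lem:crossing-end-gadget}): as the figure indicates, the gadget's output vertex~$g$ is forced onto the horizontal line $y=Q/2$ (say by a parallel linkage), and an additional edge from a fixed pinned vertex~$h$ produces a crossing precisely when $g=h$, i.e., when $g_j(\widevec{xy})=0$. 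Crucially, the Crossing End Gadget does \emph{not} constrain any angles (no End Gadget-style equality), so it does not prune the extended-linkage configuration space; it only inserts a crossing to be later filtered by the NX-constraint. Call the resulting extended linkage $\cE^+$: its (unconstrained) trace at the $m$ marked Start-Gadget vertices is a translation of $Z(f_1,\ldots,f_s)\cap[-1,1]^{2m}$, and its noncrossing configurations project exactly to a translation of $R'$.

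Now apply Construction~\ref{con:matchstick-main-theorem} (together with the sliceform and Angle Restrictor simulations of Section~\ref{sec:implementing-with-partially-rigidified}) to turn $\cE^+$ into a matchstick linkage $\cM^+$. The existing analysis of that construction shows that $\cM^+$ perfectly simulates the underlying partially rigidified linkage on the \emph{noncrossing} configurations of $\cE^+$: the thickening never introduces crossings \emph{inside} a gadget (feature-size arguments), and it never introduces crossings between gadgets (cells are well separated and wing edges keep orientations fixed). Thus a configuration of $\cM^+$ is valid iff the corresponding configuration of $\cE^+$ avoids every crossing; equivalently, iff every Crossing End Gadget avoids its singular crossing, iff every $g_j\ne 0$. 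Projecting onto the two coordinate vertices of the first Start Gadget therefore draws precisely a translation of $R$, as required. Finally, the first two drawing coordinates may be isolated by restricting $X$ to a single vertex, and a trivial translation at the end recenters the trace.

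The main obstacle is verifying that the Crossing End Gadget \emph{only} creates its intended singular crossing: I must check that (i) the gadget is globally noncrossing on its complement $g_j\ne 0$ with global minimum feature size bounded below by a positive constant, so that subsequent thickening into a matchstick linkage does not accidentally seal off the crossing locus or produce spurious crossings, and (ii) the gadget's construction is compatible with scaling and with the grid layout used in Section~\ref{sec:combining-gadgets}. Lemma~\ref{lem:crossing-end-gadget} (referenced in the excerpt) should handle (i); the placement issue (ii) is routine because the gadget occupies $O(1)$ cells, just like the ordinary End Gadget.
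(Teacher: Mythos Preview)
Your outline captures the overall strategy—augment the Main Construction with a Crossing End Gadget for each $g_j$ and let the NX-constraint filter out the zero locus—but there are three genuine gaps.

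\textbf{Integer coefficients.} Construction~\ref{con:matchstick-main-theorem} (the matchstick simulation) requires Part~II of the Main Theorem, which in turn requires the input polynomials to have \emph{integer} coefficients. A general bounded semialgebraic set need not be cut out by such polynomials (the coefficients may be transcendental), so ``just as in Section~\ref{sec:combining-gadgets}'' does not apply. The paper resolves this exactly as in Theorem~\ref{thm:unit-universality}: replace each real coefficient by a new pair of variables $(a_{P,J},b_{P,J})$, run the construction on the resulting integer-coefficient polynomials, and finally \emph{pin} the corresponding drawing vertices at the true coefficient values. You omitted this step.

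\textbf{The normalization $g_j(\vec 0)=0$.} The vector-sum network of Step~5 computes $g_j(\widevec{xy})-g_j(\vec 0)$, not $g_j(\widevec{xy})$; the Crossing End Gadget of Figure~\ref{fig:crossing-end-gadget} crosses precisely when that computed vector is $(0,0)$. So you need each $g_j(\vec 0)=0$, which the paper arranges by first translating $U$ off the origin and then replacing each $g_j$ by $g_j\cdot|\vec x|^2$. Without this, your Crossing End Gadget fires at the wrong level set.

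\textbf{The crossing transfer to the matchstick linkage.} Your obstacle~(i) asks for a positive lower bound on the feature size of the Crossing End Gadget over the set $\{g_j\ne 0\}$, but no such bound exists: as $g_j\to 0$ the feature size tends to~$0$, so the usual ``thickening preserves noncrossing because feature size $\ge$ twice the thickening radius'' argument (Lemma~\ref{lem:feature-size}) breaks down exactly where it matters. The paper's argument is different and geometric: because vertex $g$ is constrained to the line $y=Q/2$ and edge $kg$ rotates by at most $\pm\delta$, the edge polyiamond built from $kg$ lies entirely in the half-plane $y\le Q/2$ except at its tip; the polyiamond from $b_2h$ lies entirely in $y\ge Q/2$ except at the fixed point $h=(Q/2,Q/2)$. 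Hence the two polyiamonds can meet only when $g=h$, i.e., only on the original crossing locus. This is the step that genuinely ``breaks the abstraction barrier'' the paper warns about, and it cannot be replaced by a feature-size bound.

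Finally, your ``after scaling, assume $R'\subseteq[-1,1]^{2m}$'' discards information: a matchstick linkage has unit edges, so you cannot simply scale the linkage back up. The paper handles this by lengthening every edge polyiamond by a uniform integer factor at the very end.
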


\begin{proof}
  Let $\cM = (\cL,\NXCon_{\cL})$ be a matchstick linkage that is connected and has at least one pin. For the underlying, unconstrained linkage $\cL$, the set $\Conf(\cL)$ is described by closed polynomial conditions and is therefore compact and algebraic. Noncrossing can be described by the nonvanishing of polynomials, so $\Conf(\cM) = \NXConf(\cL) \subset \Conf(\cL)$ is bounded and semialgebraic (not necessarily closed). Then, for any vertex $v$, the trace $\pi_{v}(\Conf(\cM))$ is also bounded and semialgebraic.

  Conversely, to show that every bounded, semialgebraic set can be drawn with a matchstick linkage, it suffices (by Lemma~\ref{lem:semialg-projection-ne0}) to show that every bounded, basic semialgebraic set of the form
  \begin{equation*}
    U = \{\vec x\in\bR^{2m} \mid f_1(\vec x) = \cdots = f_s(\vec x) = 0, g_1(\vec x) \ne 0, \ldots, g_r(\vec x)\ne 0\}
  \end{equation*}
  can be drawn by a matchstick linkage, up to translation. We may assume $U$ does not contain the origin (by translating if necessary), and by replacing each $g_j(\vec x)$ by $g_j(\vec x)\cdot|\vec x|^2$ (which does not modify $U$), we may further assume that each $g_j$ satisfies $g_j(\vec 0) = 0$. We first prove the special case where $U\subset[-1,1]^{2m}$; the general case is discussed at the end.

  By defining $F_j(\widevec{xy}) := f_j(D\cdot\widevec{xy})$ and $G_j(\widevec{xy}) := g_j(D\cdot\widevec{xy})$, we may write
  \begin{equation} \label{eq:semialg-solution-set}
    \frac{1}{D}\cdot U = \{\widevec{xy}\mid F_j(\widevec{xy})=0\text{ for $1\le j\le s$ and } G_j(\widevec{xy})\ne 0\text{ for $1\le j\le r$}\}.
  \end{equation}
  By scaling the polynomials $F_j$ and $G_j$ (which does not affect the solution set in \eqref{eq:semialg-solution-set}), we may further assume that their coefficients are in $[-1,1]$.

  Now we will use the ``coefficients as variables'' trick from the proof of Theorem~\ref{thm:unit-universality}.
  Namely, for each polynomial $P \in \{F_1,\dots,F_s,G_1,\dots,G_r\}$,
  and for each monomial $c_{P,J} \widevec{xy}^J$ in $P$,
  we create new variables $a_{P,J}$ and $b_{P,J}$, gather all of these new variables into a vector $\widevec{ab}$, and define the new polynomials
  \begin{equation*}
    P'(\widevec{xy},\widevec{ab}) := \sum_{J\text{ such that }c_{P,J}\ne 0} a_{P,J}\widevec{xy}^J.
  \end{equation*}
  Polynomials $F'_1,\dots,F'_s,G'_1,\dots,G'_r$ now have integer coefficients: in fact, all coefficients equal~$1$.
  It remains to implement the equations $F'_j(\widevec{xy},\widevec{ab}) = 0$, $G'_j(\widevec{xy},\widevec{ab}) \neq 0$, $(a_{F_j,J},b_{F_j,J}) = (c_{F_j,J},0)$, and $(a_{G_j,J},b_{G_j,J}) = (c_{G_j,J},0)$, for all $j$ and $J$,
  which exactly recover the solutions $\widevec{xy}$ from \eqref{eq:semialg-solution-set}.

  To implement each non-equality $G'_j(\widevec{xy},\widevec{ab}) \neq 0$,
  we create a custom extended linkage cell gadget, the \term{Crossing End Gadget} $\cLendcrossing$, that is \emph{not} globally noncrossing, but instead takes as input a real number in some range (namely, the horizontal offset of vertex~$g$) and induces a crossing precisely when that input is zero:

  \begin{lemma}[Crossing End Gadget]
    \label{lem:crossing-end-gadget}
    The Crossing End Gadget, $\cLendcrossing$, is a copy of $\cLangular$ with one additional edge $b_2 h$ of length $Q/2$. (It is drawn short to emphasize that $g$ and $h$ are distinct and do not share an edge, but in the initial configuration depicted, $g$ and $h$ actually overlap.) Let $H\subset\Conf(\cLendcrossing)$ consist of those configurations in which $g$ has $y$ coordinate equal to $Q/2$. Then the only configuration in $H$ with a crossing has $g = (Q/2,Q/2)$. All configurations in $H$ have minimum feature size at least $1/2$ if the distance between edges $b_2 h$ and $k g$ is ignored.
  \end{lemma}

  \begin{proof}
    This follows from Lemma~\ref{lem:scale-invariant-gadgets} (Angular Gadget).
  \end{proof}

  We may now proceed as in the proof of the Main Theorem (Theorem~\ref{thm:main-theorem}) using polynomials $F'_1,\ldots,F'_s$ and $G'_1,\ldots,G'_r$, and concluding each $G'_j$ with a Crossing End Gadget instead of an End Gadget. The rest of the linkage is already designed to ensure that the angles $\alpha,\beta$ fed into each Crossing End Gadget satisfy $R\cdot\Rect(\alpha,\beta) = (G'_j(\widevec{xy})-G'_j(\vec 0),0)$ (see Step~5 of Section~\ref{step:5}), so only configurations in the set $H$ from Lemma~\ref{lem:crossing-end-gadget} are possible.
  Then, as in Theorem~\ref{thm:unit-universality}, we pin all the vertices $v_{P,J}$ in the plane to force variables $(a_{P,J},b_{P,J})$ to take the values $(c_{P,J},0)$.
  For the resulting extended linkage $\cE$ with drawing vertices $X$, we conclude that the map $\pi_X$ is a homeomorphism between $\NXConf(\cE)$ and a translation of $\frac{1}{D}\cdot U$, and furthermore, the only crossings in any configurations of $\cE$ come from the Crossing End Gadgets.

  As in Section~\ref{sec:implementing-with-partially-rigidified}, eliminate sliceforms in $\cE$ to form $\cE'$ and process this further (as before) into partially rigidified linkage $\cL$. Finally, modify $\cL$ into a matchstick linkage $\cM$ as in Construction~\ref{con:matchstick-main-theorem}.

  We claim that $\cM$ simulates $D\cdot\cE$. This means that each noncrossing configuration $C$ of $\cE$ gives rise to a noncrossing configuration of $\cM$, i.e., no \emph{unintended} crossings arise in $\cM$'s underlying linkage. Supposing the contrary, if $C$ is a crossing configuration, the crossing must happen in the vicinity of a Crossing End Gadget by Lemma~\ref{lem:crossing-end-gadget}: the edge polyiamonds ultimately built from edges $k g$ and $b_2 h$ of $\cLendcrossing$ must intersect. But the former edge polyiamond lies entirely in the lower half of the gadget, except for its endpoint on the line $y=Q/2$: indeed, in the Crossing End Gadget, vertex $g$ remains on the line $y=Q/2$ and the direction of edge $k g$ changes by at most $\pm\delta$, far less than the $30^\circ$ required for any other part of the edge polyiamond of $k g$ to reach the line $y=Q/2$. For similar reasons, the edge polyiamond for $b_2h$ remains above $y=Q/2$ except for its stationary endpoint at $h=(Q/2,Q/2)$, so they could intersect only at~$h$. But then $C$ is a crossing configuration of $\cE$, contrary to assumption. So $\cM$ indeed simulates $D\cdot\cE$ and thus draws the set $D\cdot\frac{1}{D}\cdot U = U$.

  Finally, we consider the general case where $U$ does not necessarily reside in $[-1,1]^{2m}$. First choose an integer $n$ large enough so that $\frac{1}{n}\cdot U$ lies in $[-1,1]^{2m}$, and construct a matchstick linkage $\cL$ that draws $\frac{1}{n}\cdot U$ as detailed above. Build a new matchstick linkage $\cL'$ that copies $\cL$ except that each edge polyiamond is $n$-times longer. This linkage $\cL'$ will indeed draw $U$, as required.
\end{proof}

\section{Open Problems}

Table~\ref{tab:results} settles most problems in this area, but a few
interesting open problems remain.
The one combination in the table that remains unsolved is
the complexity of deciding global rigidity in graphs
with unit edge lengths, allowing crossings.
In particular, are there \emph{any} such graphs larger than a triangle?
(If not, the decision problem has an easy algorithm!)

We could also consider additional graph types (rows) in
Table~\ref{tab:results}.  For example, we considered unit edge lengths
and edge lengths in $\{1,2\}$, both when allowing crossings and when
forbidding crossings, but we did not consider globally noncrossing graphs
with edge lengths restricted to $\{1\}$ or $\{1,2\}$.  Do these linkages
remain universal for compact semialgebraic sets?  Is realizing them
$\exists\bR$-complete?  Is testing their rigidity and global rigidity
$\forall\bR$-complete?  Our results for globally noncrossing graphs
use only integer edge lengths bounded by a universal constant~$D$,
so we are ``only'' a constant factor away from a bound of $1$ or~$2$.

On the practical side, $D \approx 10^{42}$ is rather large.
We have been loose with our constants to keep the analysis
as simple as possible, but it would be interesting to tune the constants
and see how practical the whole construction can be made.
After all, a primary motivation for avoiding crossings was to make it
possible to physically construct universal linkages.

We also introduced the class of globally noncrossing graphs.
Is it $\forall\bR$-complete to determine whether a graph with edge-length
constraints is globally noncrossing, that is, whether all its realizations
are noncrossing?
Our $\exists\bR$-completeness proof for realizing globally noncrossing graphs
shows that distinguishing between unrealizable graphs and realizable
globally noncrossing graphs is $\exists\bR$-complete, but both cases
are technically ``globally noncrossing''.

What if we relax our edge-length constraints to allow approximate
solutions, modelling a small amount of pliancy in the bars or
tolerance at the hinges? Approximate realizations where each edge can
be stretched by at most some $\alpha$ factor are considered
extensively in the field of metric embedding, with many interesting
approximation algorithms.  Saxe \cite{Saxe-1979} proved that it is
strongly NP-complete to distinguish between graphs realizable with
stretch $\alpha = 1 + {1 \over 18}$ from graphs realizable with
stretch $\alpha = 1 + {1 \over 9}$, when embedding into one dimension.
What about embedding into two dimensions?  What about when restricted
to globally noncrossing graphs, matchstick graphs, graphs with unit
edge lengths, or graphs with edge lengths in $\{1,2\}$?  (Saxe's proof
uses edge lengths in $\{1,2,3,4\}$.)

\section*{Acknowledgments}

We are grateful to the anonymous referees whose thorough suggestions greatly improved the presentation of this paper.

\bibliography{kempe}
\bibliographystyle{plainurl}

\end{document}